\documentclass[12pt]{article}
\usepackage{amsmath,amssymb,amsthm}
\usepackage{graphics,epsfig}
\usepackage{hyperref}
\usepackage{natbib}
\usepackage{color}
\usepackage{graphicx}
\usepackage{caption}
\usepackage{subcaption}
\usepackage{float}
\usepackage{mathrsfs}
\usepackage{booktabs}
\usepackage{multirow}
\usepackage{comment}
\usepackage{setspace}
\usepackage{bbm}
\usepackage{bm}
\usepackage{amsfonts}
\usepackage{xcolor}
\usepackage{pslatex}
\usepackage{setspace}
\usepackage{bbm}
\usepackage{listings}
\usepackage[margin=1.25in]{geometry}
\usepackage{titling}
\newtheorem{definition}{\bf Definition}

\newtheorem{remark}{\bf Remark}
\newtheorem{rmk}[remark]{\bf Remark}

\newtheorem{theorem}{\bf Theorem}
\newtheorem{prop}[theorem]{\bf Proposition}
\newtheorem{lem}[theorem]{\bf Lemma}

\newtheorem{as}[theorem]{\bf Assumption}

\begin{document}
	
	\title{\bfseries
		Stochastic Choice with Distribution-Dependent Preferences}
	
	\author{
		Paramahansa Pramanik$^{1}$
	}
	
	\date{
		\small
		$^{1}$ Department of Mathematics and Statistics, University of South Alabama, Mobile, AL 36688, United States.\\
		 email: {\color{blue}\texttt{ppramanik@southalabama.edu}}
	}
	
	%\date{\today}
	\maketitle

\begin{abstract}
	We develop a continuous-time stochastic choice theory with endogenous preference evolution. Unlike dynamic random utility, observed behavior affects future preferences through the conditional distribution of latent preference states, generating endogenous distributional feedback. We show that this feedback has observable behavioral implications and characterize stochastic choice by a behavioral representation consisting of contemporaneous choice and continuation behavior. This representation is identified from stochastic choice, yields a rigidity result linking structural preference dynamics to observable behavior, and characterizes exactly when distribution dependent utility is behaviorally reducible to dynamic random utility. We further prove a behavioral impossibility theorem: stochastic choice arrays exhibiting behavioral distributional feedback admit no dynamic random utility representation. On the probabilistic side, we establish existence and weak uniqueness for the underlying conditional McKean-Vlasov system with conditional law feedback. The structure unifies endogenous information, latent preference dynamics, behavioral identification, and stochastic choice within a single continuous-time model.
\end{abstract}

\noindent\textbf{Keywords:} stochastic choice; distribution-dependent utility; dynamic random utility; conditional McKean-Vlasov dynamics.

\section{Introduction} \label{sec:introduction}
\subsection{Motivation}

Random utility models provide a fundamental framework for analyzing stochastic choice under unobserved preference heterogeneity. Their applications span consumer demand, industrial organization, labor economics, marketing, political economy, and dynamic discrete choice, where stochastic behavior reflects informational asymmetry between decision makers, who observe their realized preferences, and the analyst, who observes only realized choices \citep{mcfadden1974conditional,manski1977structure}. In continuous-time dynamic random utility, this asymmetry is represented by an evolving but exogenous latent preference, so history dependence arises because past choices reveal information about latent preferences rather than affect
their evolution. Consequently, learning and preference dynamics remain separated, a feature underlying the behavioral foundations and econometric applications of dynamic random utility \citep{frick2019dynamic}.

This paper develops the first continuous-time theory of endogenous preference evolution. We allow the analyst's conditional distribution of latent preferences, inferred from observed behavior, to enter both current utility and future preference dynamics. Our first main result shows that this feedback generates a new form of stochastic choice behavior that cannot arise under dynamic random utility. Past choices
not only reveal latent preferences but also alter their future
evolution, so history dependence reflects both information revelation and endogenous preference change. We characterize this behavioral feedback and show that it gives rise to observable restrictions on stochastic choice.

Our second main result characterizes the observable content of
endogenous preference evolution. We identify stochastic choice through a behavioral representation consisting of contemporaneous choice and continuation behavior. This representation is identified directly from stochastic-choice data and yields a rigidity theorem linking observable
behavior to structural preference dynamics. We further show that dynamic random utility is behaviorally equivalent to distribution dependent utility if and only if both behavioral components are invariant, providing a complete behavioral characterization of reducibility.

Our third main result establishes a sharp behavioral separation between distribution dependent utility and dynamic random utility. We prove that behavioral distributional feedback is impossible under any dynamic random utility representation. Consequently, stochastic-choice arrays generated by endogenous preference evolution belong to a strictly larger behavioral class than those generated by exogenous preference dynamics.
This impossibility theorem identifies endogenous preference evolution as a genuinely new behavioral phenomenon rather than an alternative parameterization of existing dynamic random utility models.

Finally, we provide structural foundations for the model through a conditional McKean-Vlasov system of endogenous preference dynamics. We establish existence and weak uniqueness for the underlying conditional-law, identify the conditional preference distribution as the unique structural state, and characterize its equilibrium fixed-point dynamics. This links endogenous information, preference evolution, and observable stochastic choice within a unified continuous-time.

Our results are complementary to the literatures on dynamic random utility \citep{frick2019dynamic}, dynamic discrete choice \citep{rust1987optimal,aguirregabiria2010dynamic}, and mean-field economics \citep{lasry2007mean,lacker2016general,CarmonaDelarue2018}. The first studies the behavioral and decision-theoretic foundations of stochastic choice under exogenous preference dynamics, the second focuses on the identification and estimation of dynamic decision problems with unobserved heterogeneity, and the third analyzes equilibrium interactions through distribution-dependent state dynamics. This paper connects these literatures by developing a behavioral theory of endogenous preference evolution, characterizing its observable implications through a behavioral representation, establishing its behavioral separation from dynamic random utility, and providing structural foundations through conditional McKean-Vlasov dynamics \citep{mckean1966class,huang2003individual,sznitman2006topics}. We hope that this framework provides a foundation for future theoretical and empirical work on stochastic choice in environments where information, beliefs, and preferences evolve jointly.

\subsection{Overview}

Section~\ref{sec:dynamicrandom} develops the continuous-time 
distribution-dependent utility (DDU) model. The economy consists of a latent preference process $X$, an observable information process $Y$, and the conditional preference distribution $\mu_t=\mathcal L(X_t\mid\mathcal F_t^Y)$.
Unlike dynamic random utility (DRU), $\mu_t$ enters both instantaneous utility and the dynamics of $X$, so observed behavior influences future preferences through the information it generates. The analyst observes only stochastic choices and the induced observation filtration, from which the conditional law is inferred. This leads to a conditional McKean-Vlasov system whose solution simultaneously determines latent
preferences, observable behavior, and the endogenous preference
distribution. We establish existence and weak uniqueness and show that the model generates well-defined contemporaneous and continuation stochastic choice operators, providing a continuous-time extension of DRU with endogenous preference.

Section~\ref{sec:behavior} develops the behavioral theory of DDU. The central object is the behavioral representation $\Phi=(\Phi_u,\Phi_P)$,
where $\Phi_u$ summarizes contemporaneous stochastic choice and
$\Phi_P$ summarizes continuation behavior generated by endogenous preference evolution. We show that stochastic choice data identify this behavior, characterize its observable image, and prove that it is invariant to observationally equivalent structural specifications. We further establish exact behavioral conditions under which DDU reduces to DRU and prove a behavioral impossibility theorem showing that stochastic-choice arrays exhibiting behavioral distributional feedback admit no dynamic random utility
representation.

Section~\ref{sec:cmkv} develops the structural theory of DDU through its conditional McKean-Vlasov system. We show that the conditional distribution of latent preferences is the unique endogenous state governing equilibrium preference dynamics and derive an equivalent structure in which all feedback operates through this conditional law. The resulting fixed-point yields a
rigidity theorem linking structural preference evolution to the
behavioral representation, thereby providing structural foundations for behavioral identification, equilibrium analysis, and comparative statics.

A central implication of the behavioral characterization is the
distinction between endogenous and exogenous preference evolution. DRU generates history dependence solely through
learning about exogenous latent preferences, whereas DDU additionally allows observed behavior to alter future preference dynamics through the conditional preference distribution. The behavioral impossibility theorem
shows that these two classes coincide only when behavioral
distributional feedback vanishes.

\subsection{Illustrative Example: Social Learning and Investment Decisions}
Consider a financial market in which investors repeatedly choose between a safe asset and a risky asset. Each investor privately observes a latent preference state reflecting their attitude toward risk, expected returns, or private information about market conditions, while the analyst observes only the sequence of portfolio allocations. Under standard DRU, observed investment histories reveal information about latent preferences, but the evolution of those preferences is exogenous. Although investment decisions may display persistence, past choices influence future decision only through Bayesian updating about an independently evolving preference process. In contrast, DDU permits observed investment decision to affect future preferences through the conditional distribution of latent preference states. As investors observe aggregate trading activity, the inferred distribution of market sentiment changes, and this distribution enters both current utility and subsequent preference dynamics. The resulting feedback generates endogenous persistence: today's investment decisions alter tomorrow's preference distribution, which in turn influences future investment decision. Section~\ref{sec:behavior} characterizes the observable implications of this feedback, while Section~\ref{sec:cmkv} shows that the resulting equilibrium dynamics admit a conditional MVSDE.
	
\section{Dynamic Random Utility vs. Distribution-Dependent Utility}\label{sec:dynamicrandom}

\subsection{Dynamic Random Utility}\label{sec:dru}

We begin with a continuous-time version of DRU that will serve as the benchmark for the DDU in subsection \ref{sec:ddu}. The construction preserves the informational structure emphasized by \citet{frick2019dynamic}; the decision maker privately observes the realization of their preferences, while the analyst observes choices and therefore treats behavior as stochastic. Throughout the paper, \(0\) denotes the initial time, \(t>0\) the terminal time, and \(s\in[0,t]\) an arbitrary time. Let \(Z\) be a finite set of instantaneous consumption alternatives and let \(K(Z)\) denote the collection of nonempty finite subsets of \(Z\). At time \(s\), the decision maker faces a choice set \(A_s\in K(Z)\) and selects \(z_s\in A_s\). The resulting consumption path is denoted by \(Z_{\cdot}=(Z_s)_{0\leq s\leq t}\). When useful, \(z_{\cdot}\) denotes a deterministic consumption path and \(Z_{\cdot}\) its stochastic counterpart. Randomization can be incorporated by letting \(\Delta(Z)\) denote the set of probability distributions over \(Z\). A lottery is denoted by \(p\in\Delta(Z)\), and a finite menu of lotteries by \(A_s\in K(\Delta(Z))\). A deterministic alternative \(z\in Z\) is identified with the degenerate lottery placing probability one on \(z\).

Let \((\Omega,\mathcal{F},\mathbb{P})\) be a complete probability space equipped with a filtration \(\mathbb{F}=(\mathcal{F}_s)_{0\leq s\leq t}\) satisfying the usual conditions. The filtration represents the payoff-relevant information available to the decision maker. At each \(s\in[0,t]\), let \(u_s:\Omega\rightarrow\mathbb{R}^{Z}\) be an \(\mathcal{F}_s\)-measurable random instantaneous utility index. Thus \(u_s(z)\) is an \(\mathcal{F}_s\)-measurable random variable for every \(z\in Z\). We write \(u=(u_s)_{0\leq s\leq t}\) for the random felicity process. For a lottery \(p\in\Delta(Z)\), expected utility is \(u_s(p)=\sum_{z\in Z}p(z)u_s(z)\). Let \(U_s\) be the continuation utility at time \(s\). Whereas \(u_s\) evaluates instantaneous consumption, \(U_s\) evaluates consumption opportunities over the remaining interval \([s,t]\). This distinction between instantaneous felicity and continuation utility will be maintained throughout the paper. Assume \(r=(r_s)_{0\leq s\leq t}\) be a progressively measurable discount-rate process with \(r_s\geq0\), and define \(D_{s,v}:=\exp\{-\int_s^v r_\tau\,d\tau\}\) for \(s\leq v\leq t\). For an admissible consumption process \(Z_{\cdot}\), continuation utility is
\begin{equation}
	U_s
	=
	\mathbb{E}\!\left[
	\left.
	\int_s^t D_{s,v}u_v(Z_v)\,dv
	+
	D_{s,t}G(Z_t)
	\right|
	\mathcal{F}_s
	\right],
	\qquad 0\leq s\leq t,
	\label{eq:continuous-dru-value}
\end{equation}
where \(G:Z\rightarrow\mathbb{R}\) is terminal utility. Let
\(\mathcal C_s\)
denote the collection of admissible consumption processes on
\([s,t]\).
The associated value process is
\[
V_s
:=
\operatorname*{ess\,sup}_{Z_{\cdot}\in\mathcal C_s}
\mathbb E
\!\left[
\left.
\int_s^t
D_{s,v}u_v(Z_v)\,dv
+
D_{s,t}G(Z_t)
\,\right|\,
\mathcal F_s
\right].
\]
Hence,
\(U_s=V_s\)
under optimal consumption.
The felicity process
\(u=(u_s)_{0\le s\le t}\)
is otherwise unrestricted and may exhibit arbitrary serial dependence, so the conditional law of
\((u_v)_{v\ge s}\)
given
\(\mathcal F_s\)
need not be Markov or independent across time. Thus,
\(\mathcal F_s\)
may contain information about future felicities, while the law of
\(u\)
remains exogenous.

An important specialization arises when variation in instantaneous utility reflects learning about a fixed but initially unknown felicity index. Let \(\widetilde{u}:\Omega\rightarrow\mathbb{R}^{Z}\) be an \(\mathcal{F}\)-measurable random felicity function. Bayesian evolving felicity requires
$u_s= \mathbb{E}\!\left[
	\left.
	\widetilde{u}
	\right|
	\mathcal{F}_s
	\right],$ for all $s\in[0,t]$. Hence, for every \(z\in Z\), the process \((u_s(z))_{0\leq s\leq t}\) is an \(\mathbb{F}\)-martingale. Equivalently, for \(0\leq s\leq v\leq t\),
\(\mathbb{E}[u_v(z)\mid\mathcal{F}_s]=u_s(z)\).
This restriction distinguishes changes in beliefs about a fixed latent felicity from genuine changes in the underlying preference state. The decision maker observes \(\mathcal{F}_s\), whereas the analyst observes the history of choices. Let \(Y=(Y_s)_{0\leq s\leq t}\) denote the observable choice process and define the analyst's observation filtration by
$\mathcal{F}_s^{Y}:=\sigma(Y_v:0\leq v\leq s)\vee\mathcal{N},$
where \(\mathcal{N}\) is the collection of \(\mathbb{P}\)-null sets. We assume \(\mathcal{F}_s^{Y}\subseteq\mathcal{F}_s\), so the analyst generally has less information than the decision maker. For a finite menu \(A\subseteq Z\), let \(M_s(A):=\arg\max_{z\in A}u_s(z)\). Ignoring ties for the moment, the analyst's conditional probability of observing \(z\in A\) at time \(s\) is
\begin{equation}
	\rho_s(z;A)
	:=
	\mathbb{P}\!\left(
	z\in M_s(A)
	\,\middle|\,
	\mathcal{F}_s^{Y}
	\right).
	\label{eq:continuous-stochastic-choice-rule}
\end{equation}
Thus \(\rho_s(\cdot;A)\) is an \(\mathcal{F}_s^{Y}\)-measurable random probability vector. Its dependence on observed history reflects the informational content of past choices, when felicities are persistent, observations prior to \(s\) alter the analyst's conditional assessment of current preferences.
The distinction between \(\mathcal{F}_s^{Y}\) and \(\mathcal{F}_s\) will be central below. In the benchmark model, the conditional distribution induced by \(\mathcal{F}_s^{Y}\) summarizes what the analyst has learned about an otherwise exogenous preference process. In the distribution-dependent model, that conditional distribution becomes an endogenous state variable and enters the evolution of preferences themselves.

\subsection{Distribution-Dependent Utility}\label{sec:ddu}

The DRU benchmark in Section~\ref{sec:dru} treats the felicity process
\(u=(u_s)_{0\le s\le t}\)
as exogenous, allowing arbitrary serial dependence while excluding dependence on the analyst's information
\(\mathbb F^Y=(\mathcal F_s^Y)_{0\le s\le t}\).
We replace this specification by a latent preference process
\(X=(X_s)_{0\le s\le t}\)
and its conditional law
\(\mu_s=\mathcal L(X_s\mid\mathcal F_s^Y)\).
The process
\((\mu_s)_{0\le s\le t}\)
enters both the felicity functional
\(u(s,\cdot,X_s,\mu_s)\)
and the state dynamics of
\(X\),
thereby inducing the endogenous feedback
\(X\rightarrow Y\rightarrow\mu\rightarrow X\).
Accordingly,
\(\mathcal F^Y\)
determines both posterior beliefs about
\(X\)
and the subsequent evolution of
\((X_s,\mu_s)_{0\le s\le t}\). Let $X=(X_s)_{0\leq s\leq t}$ be an $\mathbb{R}^d$-valued latent preference process defined on the complete probability space $(\Omega,\mathcal{F},\mathbb{P})$. The process $X$ is privately observed by the decision maker but not by the analyst.Assume
$\mathbb{E}\left[\sup_{0\leq s\leq t}|X_s|^2\right]<\infty$. Let
$\mathcal{P}_2(\mathbb{R}^d)$ denote the set of Borel probability measures
$\mu$ on $\mathbb{R}^d$ satisfying
$\int_{\mathbb{R}^d}|x|^2\mu(dx)<\infty$, endowed with the
$2$-Wasserstein metric
\[
\mathcal{W}_2(\mu,\nu)
:=
\left[
\inf_{\pi\in\Pi(\mu,\nu)}
\int_{\mathbb{R}^d\times\mathbb{R}^d}|x-x'|^2
\,\pi(dx,dx')
\right]^{1/2},
\]
where $\Pi(\mu,\nu)$ is the set of couplings of $\mu$ and $\nu$.

\begin{definition}\label{def:conditional-preference-distribution}
For every $s\in[0,t]$, the \emph{conditional preference distribution} (CPD) is the random probability measure
$\mu_s:=\mathcal{L}(X_s\mid\mathcal{F}_s^Y)\in\mathcal{P}_2(\mathbb{R}^d).$ Equivalently, for every bounded Borel function
	$\varphi:\mathbb{R}^d\rightarrow\mathbb{R}$ we have
	$
	\int_{\mathbb{R}^d}\varphi(x)\mu_s(dx)
	=
	\mathbb{E}\!\left[
	\varphi(X_s)\mid\mathcal{F}_s^Y
	\right],$ such that $ \mathbb{P}\text{-a.s.}$
\end{definition}

\noindent
The CPD
\(\mu=(\mu_s)_{0\le s\le t}\)
is an
\(\mathbb F^Y\)-adapted
\(\mathcal P_2(\mathbb R^d)\)-valued process satisfying
\(\mu_s=\mathcal L(X_s\mid\mathcal F_s^Y)\),
whereas
\(\mathcal L(X_s)\)
denotes the unconditional law.
Hence,
\(\mu_s\)
depends on the observation filtration
\(\mathbb F^Y\)
and, in general,
\(\mu_s\neq\mathcal L(X_s)\)
\(\mathbb P\)-a.s.
The pair
\((X_s,\mu_s)\)
constitutes the endogenous preference state, with
\(X_s\)
representing the latent realization and
\(\mu_s\)
its conditional distribution.
Accordingly,
\(\mu_s\)
may enter both the felicity functional
\(u(s,z,X_s,\mu_s)\)
and the coefficients of the latent dynamics
\((b,\sigma,\sigma_0)\),
so that the mappings
\((X_s,\mu_s)\mapsto u_s\)
and
\((X_s,\mu_s)\mapsto(b,\sigma,\sigma_0)\)
capture, respectively, static distribution dependence and dynamic distributional feedback.

\begin{definition}\label{def:ddu-felicity}
A \emph{distribution-dependent felicity index} (DDFI) is a measurable function
$u:[0,t]\times Z\times\mathbb{R}^d\times\mathcal{P}_2(\mathbb{R}^d)
	\rightarrow\mathbb{R}.$ Given $(X_s,\mu_s)$, instantaneous utility from $z\in Z$ at time $s$ is $u(s,z,X_s,\mu_s)$. For a lottery $p\in\Delta(Z)$, its instantaneous utility is
	$u(s,p,X_s,\mu_s):=
	\sum_{z\in Z}p(z)u(s,z,X_s,\mu_s).
	$
\end{definition}
\noindent
The DDFI
\(u:[0,t]\times Z\times\mathbb R^d\times\mathcal P_2(\mathbb R^d)\rightarrow\mathbb R\)
depends jointly on the latent state
\(X_s\)
and the CPD
\(\mu_s\),
so that
\((X_s,\mu_s)\mapsto u(s,\cdot,X_s,\mu_s)\)
defines the instantaneous preference ordering at time
\(s\).
Hence,
\(u(s,\cdot,x,\mu)\neq u(s,\cdot,x,\tilde\mu)\)
may hold for
\(\mu\neq\tilde\mu\),
even when
\(x\in\mathbb R^d\)
is fixed.
Let
\(W=(W_s)_{0\le s\le t}\)
and
\(B=(B_s)_{0\le s\le t}\)
be independent Brownian motions of dimensions
\(d_W\)
and
\(d_Y\),
respectively.
The filtration
\(\mathbb F=(\mathcal F_s)_{0\le s\le t}\)
is generated by
\((W,B)\),
with
\(W\)
driving the latent preference state
\(X\)
and
\(B\)
the observable information process
\(Y\).
Consequently,
\((X,\mu,Y)\)
forms the endogenous state system, where
\(X\)
is privately observed,
\(Y\)
is publicly observed,
and
\(\mu_s=\mathcal L(X_s\mid\mathcal F_s^Y)\)
couples the latent and observable components. Consider the coupled system
\begin{equation}
	\left\{
	\begin{aligned}
		dX_s
		&=
		b(s,X_s,Y_s,\mu_s)\,ds
		+
		\sigma(s,X_s,Y_s,\mu_s)\,dW_s
		+
		\sigma_0(s,X_s,Y_s,\mu_s)\,dB_s,\\
		dY_s
		&=
		h(s,X_s,Y_s,\mu_s)\,ds
		+
		\Sigma_Y(s,Y_s,\mu_s)\,dB_s,\\
		\mu_s
		&=
		\mathcal{L}(X_s\mid\mathcal{F}_s^Y),
	\end{aligned}
	\right.
	\qquad s\in[0,t],
	\label{eq:ddu-system}
\end{equation}
with initial condition
\((X_0,Y_0)\sim\lambda_0\).
The coefficient tuple
\((b,\sigma,\sigma_0,h,\Sigma_Y)\)
is evaluated at
\((s,X_s,Y_s,\mu_s)\),
so that
\eqref{eq:ddu-system}
is a conditional McKean-Vlasov system driven by the state
\((X_s,Y_s,\mu_s)\).
The common noise
\(B\)
enters both the
\(X\)- and
\(Y\)-equations through
\(\sigma_0\)
and
\(\Sigma_Y\),
respectively, whereas
\(W\)
acts only on
\(X\).
The specification
\(\sigma_0\equiv0\)
reduces
\eqref{eq:ddu-system}
to the conditionally independent-noise case.
Unlike classical MVSDEs, the dependence
\((b,\sigma,\sigma_0,h,\Sigma_Y)
=
(b,\sigma,\sigma_0,h,\Sigma_Y)(s,X_s,Y_s,\mu_s)\)
is through the conditional law
\(\mu_s=\mathcal L(X_s\mid\mathcal F_s^Y)\),
so that the endogenous feedback
\(X\rightarrow Y\rightarrow\mu\rightarrow X\)
is encoded directly in the coefficients of
\eqref{eq:ddu-system}.

\begin{definition} \label{def:ddu}
	A stochastic choice environment admits a DDU if there exist a latent preference process
	$X$, an observable process $Y$, a conditional preference distribution
	$\mu$, and primitives $(u,b,\sigma,\sigma_0,h,\Sigma_Y)$ such that
	definition \ref{def:conditional-preference-distribution} and Condition \eqref{eq:ddu-system} hold and,
	conditional on the decision maker's information at time $s$, choice from
	every finite menu $A\subseteq Z$ maximizes
	$u(s,\cdot,X_s,\mu_s)$.
\end{definition}
\noindent For \(A\subseteq Z\),
define the DDU choice correspondence
\(M_s^{\mathrm{DDU}}(A):=\arg\max_{z\in A}u(s,z,X_s,\mu_s)\),
and, ignoring ties, the conditional stochastic choice rule
\(\rho_s(z;A)=\mathbb P\!\left(z\in M_s^{\mathrm{DDU}}(A)\mid\mathcal F_s^Y\right)\).
Relative to
\eqref{eq:continuous-stochastic-choice-rule},
\(\rho_s(\cdot;A)\)
depends jointly on
\(X_s\)
and
\(\mu_s=\mathcal L(X_s\mid\mathcal F_s^Y)\),
with
\(X_s\)
determining the realized latent preference state and
\(\mu_s\)
the conditional preference distribution induced by
\(\mathbb F^Y\).
Accordingly,
\((A,\mathcal F_s^Y)\mapsto(X_s,\mu_s)\mapsto M_s^{\mathrm{DDU}}(A)\mapsto\rho_s(\cdot;A)\),
so that
\(\mathcal F_s^Y\mapsto\mu_s\)
enters both the felicity index
\(u(s,\cdot,X_s,\mu_s)\)
and the conditional dynamics
\eqref{eq:ddu-system},
rather than affecting choice solely through
\(X_s\)
as in
\eqref{eq:continuous-stochastic-choice-rule}.

\begin{lem}\label{lem:conditional-sufficiency}
Let $\varphi:\mathbb{R}^d\rightarrow\mathbb{R}$ be Borel measurable with
$\mathbb{E}|\varphi(X_s)|<\infty$. Then
$\mathbb{E}[\varphi(X_s)\mid\mathcal{F}_s^Y]
	=
	\int_{\mathbb{R}^d}\varphi(x)\mu_s(dx),
	\ \mathbb{P}\text{-a.s.}$
In particular, conditional expectations of all integrable functions of the contemporaneous latent preference state depend on the observed history only through $\mu_s$.
\end{lem}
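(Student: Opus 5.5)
The plan is to extend the defining identity of Definition~\ref{def:conditional-preference-distribution}, which holds for bounded Borel $\varphi$, to integrable $\varphi$ by a monotone-class and truncation argument. The one structural point to settle first is that $\mu_s$ is a \emph{regular} conditional distribution. Since $X_s$ takes values in the Polish space $\mathbb{R}^d$, a regular version of $\mathcal{L}(X_s\mid\mathcal{F}_s^Y)$ exists. This version is a kernel $\mu_s(\omega,dx)$ that is a probability measure for each $\omega$ and is $\mathcal{F}_s^Y$-measurable in $\omega$ for each Borel set. We fix this version, so that $\omega\mapsto\int\varphi\,d\mu_s(\omega)$ is $\mathcal{F}_s^Y$-measurable whenever the integral is defined. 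This holds first for indicators, then for simple functions, and then for nonnegative Borel functions by monotone limits.

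\textbf{Step 1 (nonnegative functions).} Let $\varphi\ge 0$ be Borel and set $\varphi_n:=\varphi\wedge n$, which is bounded. By Definition~\ref{def:conditional-preference-distribution}, $\mathbb{E}[\varphi_n(X_s)\mid\mathcal{F}_s^Y]=\int\varphi_n\,d\mu_s$ $\mathbb{P}$-a.s. for each $n$. As $n\to\infty$, the left side increases to $\mathbb{E}[\varphi(X_s)\mid\mathcal{F}_s^Y]$ a.s. by the conditional monotone convergence theorem. For each fixed $\omega$, the right side increases to $\int\varphi\,d\mu_s(\omega)$ by ordinary monotone convergence under the probability measure $\mu_s(\omega)$. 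Intersecting the countably many null sets gives the identity a.s. for $\varphi\ge0$, with values in $[0,\infty]$.

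\textbf{Step 2 (integrable functions).} Write $\varphi=\varphi^+-\varphi^-$. Take expectations in Step 1, using the tower property: $\mathbb{E}\int\varphi^\pm\,d\mu_s=\mathbb{E}\varphi^\pm(X_s)<\infty$. Hence $\int|\varphi|\,d\mu_s<\infty$ $\mathbb{P}$-a.s., so $\int\varphi\,d\mu_s$ is well defined a.s. Subtracting the two identities from Step 1 on this full-measure set then yields the claim. The ``in particular'' assertion follows because the right-hand side is a measurable functional of $\mu_s$ alone. It is the composition of $\omega\mapsto\mu_s(\omega)$ with the map $\nu\mapsto\int\varphi\,d\nu$, and the latter is measurable on $\mathcal{P}_2(\mathbb{R}^d)$ with its Borel $\sigma$-field. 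That measurability follows again from the monotone class theorem, starting from bounded continuous $\varphi$.

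\textbf{Main obstacle.} The only delicate point is the choice of version. The definition of $\mu_s$ as a random measure must be read as a regular conditional distribution, so that the pathwise integrals in Step 1 make sense and the exceptional null sets remain countable. I would make this explicit at the start, citing existence of regular conditional distributions on Polish spaces. I would also note that $\mu_s\in\mathcal{P}_2$ a.s. follows from Step 1 with $\varphi(x)=|x|^2$ and the standing assumption $\mathbb{E}\sup_s|X_s|^2<\infty$. The remaining steps are routine.
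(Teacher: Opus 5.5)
Your proof is correct and follows essentially the same route as the paper. Both arguments read $\mu_s$ as a regular conditional distribution, pass to nonnegative Borel $\varphi$ by conditional and kernel-wise monotone convergence, and finish with $\varphi=\varphi^+-\varphi^-$. The differences are minor: you approximate by the bounded truncations $\varphi\wedge n$, which Definition~\ref{def:conditional-preference-distribution} covers directly, rather than by simple functions, and you make explicit two points the paper leaves implicit, namely the a.s.\ finiteness of $\int\varphi^\pm\,d\mu_s$ and the measurability of $\nu\mapsto\int\varphi\,d\nu$ behind the ``in particular'' clause.
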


\noindent
Proof is in Appendix~\ref{app:conditional-sufficiency}. Lemma~\ref{lem:conditional-sufficiency} identifies
\(\mu_s=\mathcal L(X_s\mid\mathcal F_s^Y)\)
as the
\(\mathcal P_2(\mathbb R^d)\)-valued sufficient statistic for
\(X_s\),
since
\(\mathbb E[\varphi(X_s)\mid\mathcal F_s^Y]\)
depends on
\(X_s\)
only through
\(\mu_s\)
for every integrable
\(\varphi\).
No corresponding sufficiency is asserted for
\((X_r)_{r\ge s}\),
whose conditional law is determined by
\eqref{eq:ddu-system}
through
\((b,\sigma,\sigma_0,h,\Sigma_Y,\mu)\).
The distinction between contemporaneous sufficiency
\(X_s\mapsto\mu_s\)
and dynamic feedback
\(\mu\mapsto(X,Y)\)
is formalized in Definition~\ref{def:distributional-feedback}.

\begin{definition}
	\label{def:distributional-feedback}
	DDU exhibits
	\emph{felicity feedback}
	if
	\(u(\cdot,\mu)\neq u(\cdot,\tilde\mu)\)
	for some
	\(\mu,\tilde\mu\in\mathcal P_2(\mathbb R^d)\)
	with
	\(\mu\neq\tilde\mu\);
	\emph{preference feedback}
	if
	\((b,\sigma,\sigma_0)(\cdot,\mu)\neq(b,\sigma,\sigma_0)(\cdot,\tilde\mu)\)
	for some
	\(\mu\neq\tilde\mu\);
	and
	\emph{distributional feedback}
	if either
	\(u(\cdot,\mu)\neq u(\cdot,\tilde\mu)\)
	or
	\((b,\sigma,\sigma_0)(\cdot,\mu)\neq(b,\sigma,\sigma_0)(\cdot,\tilde\mu)\)
	for some
	\(\mu\neq\tilde\mu\).
\end{definition}
\noindent Definition~\ref{def:distributional-feedback}
decomposes the measure dependence of
\((u,b,\sigma,\sigma_0)\)
into the mappings
\(\mu\mapsto u\)
and
\(\mu\mapsto(b,\sigma,\sigma_0)\).
The former determines the contemporaneous preference ordering
\(u(s,\cdot,X_s,\mu_s)\),
whereas the latter determines the conditional evolution of
\((X_s,\mu_s)\)
through
\eqref{eq:ddu-system}.
Hence,
\(\partial_\mu u\neq0\)
with
\((\partial_\mu b,\partial_\mu\sigma,\partial_\mu\sigma_0)=0\)
corresponds to static distribution dependence,
while
\((\partial_\mu b,\partial_\mu\sigma,\partial_\mu\sigma_0)\neq(0,0,0)\)
induces dynamic feedback
\(X\rightarrow Y\rightarrow\mu\rightarrow X\).
The subsequent assumptions impose regularity on
\((u,b,\sigma,\sigma_0,h,\Sigma_Y)\)
under these two channels.

\noindent\textbf{Assumption 1.}
The coefficients $b$, $\sigma$, $\sigma_0$, $h$, and $\Sigma_Y$ are
Borel measurable in all arguments. There exists $L>0$ such that, for
all $s\in[0,t]$, $x,x'\in\mathbb{R}^d$, $y,y'$ in the state space of
$Y$, and $\mu,\nu\in\mathcal{P}_2(\mathbb{R}^d)$,
\[
\begin{aligned}
	&|b(s,x,y,\mu)-b(s,x',y',\nu)|
	+\|\sigma(s,x,y,\mu)-\sigma(s,x',y',\nu)\|\\
	&\quad
	+\|\sigma_0(s,x,y,\mu)-\sigma_0(s,x',y',\nu)\|
	+|h(s,x,y,\mu)-h(s,x',y',\nu)|\\
	&\quad
	+\|\Sigma_Y(s,y,\mu)-\Sigma_Y(s,y',\nu)\|
	\leq
	L\bigl(|x-x'|+|y-y'|+\mathcal{W}_2(\mu,\nu)\bigr).
\end{aligned}
\]
The coefficients have at most linear growth, $\Sigma_Y\Sigma_Y^\top$
is uniformly positive definite, and
$\mathbb{E}|X_0|^2+\mathbb{E}|Y_0|^2<\infty$.

\medskip

\noindent
Assumption~1 ensures that
\((b,\sigma,\sigma_0,h,\Sigma_Y)\)
satisfy the standard Lipschitz and linear-growth conditions in
\((x,y,\mu)\),
with
\(\mu\in\mathcal P_2(\mathbb R^d)\)
equipped with
\(\mathcal W_2\).
Unlike a classical SDE, however,
\eqref{eq:ddu-system}
contains the endogenous constraint
\(\mu_s=\mathcal L(X_s\mid\mathcal F_s^Y)\),
where
\(\mathcal F_s^Y=\sigma(Y_r:0\le r\le s)\),
so that the mapping
\((X,Y)\mapsto\mathcal F^Y\mapsto\mu\mapsto(X,Y)\)
defines a conditional McKean-Vlasov fixed point.
Under Assumption~1,
\((X,Y,\mu)\)
is well posed in the conditional-law sense developed by
\citet{buckdahn2023general},
thereby providing a unique admissible solution of
\eqref{eq:ddu-system}.

\begin{prop} \label{prop:ddu-wellposed}
	Suppose Assumption~1 holds and the observation volatility in
	\eqref{eq:ddu-system} satisfies
	$
	\Sigma_Y(s,y,\mu)\equiv\widehat{\Sigma}_Y,
	$ where $\widehat{\Sigma}_Y\in\mathbb{R}^{d_Y\times d_Y}$ is deterministic and invertible. Suppose, in addition, that the conditional-law regularity conditions in Supplementary Appendix~SA.1 hold. Then, for every admissible initial distribution $\lambda_0$, the system \eqref{eq:ddu-system} admits a weak solution $(X,Y,\mu)$ satisfying
	$
	\mu_s=\mathcal{L}(X_s\mid\mathcal{F}_s^Y),
	$ for all $ 0\leq s\leq t.$ Under the strengthened uniqueness conditions in Supplementary Appendix~SA.1, the law of $(X,Y,\mu)$, and hence, the law of the conditional preference distribution $\mu$, is unique in the stated solution class.
\end{prop}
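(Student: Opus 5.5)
The plan is to construct a weak solution of \eqref{eq:ddu-system} by a change of measure that decouples the observation from the latent state, and then close the conditional-law loop with a fixed-point argument on the space of $\mathbb F^Y$-adapted measure flows. The constant, invertible observation volatility $\widehat{\Sigma}_Y$ is what makes the decoupling work.

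\emph{Step 1 (reference measure).} On a filtered space we take independent Brownian motions $W$ and $\bar B$. We set $Y_s=Y_0+\widehat{\Sigma}_Y\bar B_s$, so that $\mathbb F^Y=\mathbb F^{\bar B}\vee\sigma(Y_0)$ is fixed in advance and no longer depends on the solution.

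\emph{Step 2 (freeze the measure flow).} Fix an $\mathbb F^Y$-progressive flow $\nu=(\nu_s)$ with values in $\mathcal P_2(\mathbb R^d)$ and $\mathbb E\sup_s\int|x|^2\nu_s(dx)<\infty$. By Assumption~1, the equation
\[
dX^\nu_s=b(s,X^\nu_s,Y_s,\nu_s)\,ds+\sigma(s,X^\nu_s,Y_s,\nu_s)\,dW_s+\sigma_0(s,X^\nu_s,Y_s,\nu_s)\,dB^\nu_s
\]
is a classical Lipschitz SDE and has a unique strong solution. Here $B^\nu$ is obtained from $\bar B$ by Girsanov's theorem: define $d\mathbb Q^\nu/d\mathbb P=\mathcal E\bigl(\int\widehat{\Sigma}_Y^{-1}h(s,X^\nu_s,Y_s,\nu_s)\cdot d\bar B_s\bigr)_t$ and $B^\nu=\bar B-\int\widehat{\Sigma}_Y^{-1}h\,ds$. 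Linear growth of $h$ together with a Bene\v{s}-type argument makes the exponential a true martingale. Under $\mathbb Q^\nu$, the triple $(X^\nu,Y,B^\nu)$ solves the first two equations of \eqref{eq:ddu-system} with $\nu$ in place of $\mu$.

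\emph{Step 3 (Kallianpur--Striebel map).} We then set
\[
\Gamma(\nu)_s:=\mathcal L^{\mathbb Q^\nu}\bigl(X^\nu_s\mid\mathcal F^Y_s\bigr),
\]
computed from the formula $\mathbb E^{\mathbb P}[\varphi(X^\nu_s)Z^\nu_s\mid\mathcal F^Y_s]/\mathbb E^{\mathbb P}[Z^\nu_s\mid\mathcal F^Y_s]$. A fixed point $\nu=\Gamma(\nu)$ is exactly a weak solution satisfying $\mu_s=\mathcal L(X_s\mid\mathcal F^Y_s)$. Lemma~\ref{lem:conditional-sufficiency} identifies this object with the CPD of Definition~\ref{def:conditional-preference-distribution}.

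\emph{Step 4 (existence of a fixed point).} This is the main obstacle. The unnormalized filter is Lipschitz in $\nu$, but the normalization and the Girsanov density destroy global contractivity in $\mathcal W_2$.

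The first thing to try is a Schauder--Tychonoff argument on laws, following \citet{buckdahn2023general} and Carmona--Delarue. On each $\mathcal F^Y$-atom after discretizing time, or on the space of joint laws of $(\nu,Y)$, one shows that $\Gamma$ maps a convex set into itself. This uses uniform second-moment bounds from Assumption~1 and Gronwall. One then shows that this set is relatively compact, via tightness from Kolmogorov increments of $X^\nu$ and the $\mathcal W_2$ continuity of $s\mapsto\Gamma(\nu)_s$. Continuity of $\Gamma$ should follow from stability of SDEs and of Girsanov densities in $\nu$. The SA.1 regularity conditions are invoked precisely to secure these compactness and continuity estimates.

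A Schauder fixed point gives existence only in the weak sense, since the resulting flow is adapted to the observation filtration of a possibly enlarged space. That is why the statement asserts a weak solution.

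\emph{Step 5 (uniqueness).} Under the strengthened SA.1 conditions, which we take to give a contraction of $\Gamma$ on a short interval in a weighted $L^2(\mathbb P;\mathcal W_2)$ norm with the density bounded, we prove pathwise uniqueness of the fixed point on the reference space. We iterate over subintervals to cover $[0,t]$.

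A Yamada--Watanabe argument adapted to conditional McKean--Vlasov systems then converts pathwise uniqueness plus weak existence into uniqueness in law of $(X,Y,\mu)$. In particular the law of $\mu$ is unique.
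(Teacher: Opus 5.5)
Your architecture matches the paper's proof. Both use a reference measure under which $\widehat\Sigma_Y^{-1}Y$ is Brownian, a frozen flow, the Kallianpur--Striebel map $\Gamma$, and a Schauder fixed point whose compactness and continuity are exactly what (GV5)--(GV6) postulate. Both then take uniqueness from the strengthened conditions.

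The step that fails as written is Step~2. Both $\mathbb Q^\nu$ and $B^\nu$ are defined through $X^\nu$, so you cannot first solve an equation driven by $B^\nu$. After eliminating $B^\nu$, the equation you must solve on the reference space is
\[
dX^\nu_s=(b-\sigma_0\widehat\Sigma_Y^{-1}h)(s,X^\nu_s,Y_s,\nu_s)\,ds+\sigma(s,X^\nu_s,Y_s,\nu_s)\,dW_s+\sigma_0(s,X^\nu_s,Y_s,\nu_s)\,d\bar B_s,
\]
which is the equation the paper writes. Its drift contains the product of two Lipschitz maps of linear growth, so it is not a classical Lipschitz SDE, and when $\sigma_0\not\equiv0$ it can explode.

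For example, take $d=d_Y=1$, $b=\sigma=0$, $\sigma_0(x)=h(x)=x$, $\widehat\Sigma_Y=1$ and $X_0=-1$. The equation becomes $dX=-X^2\,ds+X\,d\bar B$, and $V=1/X$ solves $dV=(1+V)\,ds-V\,d\bar B$. Hence $V_s=e^{s/2-\bar B_s}(-1+\int_0^s e^{\bar B_r-r/2}\,dr)$, which reaches $0$ before $t$ with positive probability, i.e.\ $X$ explodes. The original frozen system here, $dX=X\,dB$, $dY=X\,ds+dB$, is harmless. It is the reference-measure device that breaks, and no Bene\v{s}-type argument based on linear growth of $h$ can repair that.

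The paper does not obtain this from Assumption~1 alone either. It appeals to the SA.1 conditions and localization, both for solvability of this equation and for $\mathbb E[L^\nu_t]=1$. You should likewise state these as hypotheses rather than derive them from Assumption~1 and Bene\v{s}; they hold, for instance, when $h$ is bounded.

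In Step~4, drop the alternative of a fixed point on joint laws of $(\nu,Y)$ over an enlarged space. Such a fixed point in general yields only $\mu_s=\mathcal L(X_s\mid\mathcal F^Y_s\vee\mathcal F^\mu_s)$, the weak notion of Carmona--Delarue--Lacker. The statement instead requires $\mu$ to be $\mathbb F^Y$-adapted with $\mu_s=\mathcal L(X_s\mid\mathcal F^Y_s)$. The paper applies Schauder directly on $\mathfrak M_R$, the $\mathbb F^Y$-adapted flows on the fixed reference space, so the fixed point is automatically $\mathbb F^Y$-adapted. In that scheme the solution is weak only because $B$ is manufactured from $Y$ by Girsanov, not for the reason you give.

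For uniqueness your route genuinely differs from the paper's. The paper uses (GU2) as a Gronwall inequality on all of $[0,t]$ to get $\mu^1=\mu^2$, so no short-interval iteration is needed. It then uses frozen weak uniqueness (GU1) for the law of $(X,Y)$ and canonical compatibility (GU3) for the law of $(X,Y,\mu)$.

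Your pathwise-uniqueness-plus-Yamada--Watanabe argument must instead prove a transfer lemma. It has to show that every weak solution can be moved to the reference space, which again requires the reverse Girsanov density to be a true martingale. There $\mu_s$ must be written as a measurable functional of $Y_{\cdot\wedge s}$, and the transported solution must be a fixed point of $\Gamma$. This makes explicit the comparison of solutions living on different probability spaces. The paper's shorter argument leaves that comparison implicit in (GU1) and (GU3).
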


\noindent
Proof is in Supplementary Appendix~SA.2.
Proposition~\ref{prop:ddu-wellposed}
imposes
$
\Sigma_Y(s,y,\mu)\equiv\widehat\Sigma_Y,
$
with
$
\widehat\Sigma_Y
$
deterministic and invertible, so that
$
\bar Y_s:=\widehat\Sigma_Y^{-1}Y_s
$
admits a reference-measure representation with Brownian innovation.
Theorem~\ref{thm:ddu-general-volatility}
instead allows
$
\Sigma_Y=\Sigma_Y(s,y,\mu)
$
in
\eqref{eq:ddu-system},
including state and law dependence, and establishes well-posedness without the reduction
$
Y\mapsto\bar Y
$
or the associated change of measure.

\begin{theorem} \label{thm:ddu-general-volatility}
Consider the system \ref{eq:ddu-system}. Suppose Assumption~1 and Conditions \textnormal{(GV1)}-\textnormal{(GV6)} in Supplementary Appendix~SA.1 hold. Then, for every
admissible initial distribution $\lambda_0\in
	\mathcal{P}_{2+\varepsilon}
	(\mathbb{R}^d\times\mathbb{R}^{d_Y})$
	for some $\varepsilon>0$, system
	\eqref{eq:ddu-system} admits a weak solution
	$
	(\Omega,\mathcal{F},\mathbb{F},\mathbb{P};
	X,Y,W,B,\mu)
	$
	such that
	\[
	\mathbb{E}^{\mathbb{P}}
	\left[
	\sup_{0\leq s\leq t}|X_s|^2
	+
	\sup_{0\leq s\leq t}|Y_s|^2
	+
	\sup_{0\leq s\leq t}m_2(\mu_s)
	\right]
	<\infty.
	\]
	Moreover, $\mu$ admits an
	$\mathbb{F}^Y$-progressively measurable version with continuous paths in
	$\mathcal{P}_2(\mathbb{R}^d)$, and
	$
	\langle\mu_s,\varphi\rangle
	=
	\mathbb{E}^{\mathbb{P}}
	\left[
	\varphi(X_s)\mid\mathcal{F}_s^Y
	\right]
	$
	for every bounded Borel function
	$\varphi:\mathbb{R}^d\rightarrow\mathbb{R}$ and every $s\in[0,t]$. Moreover, if Conditions \textnormal{(GU1)}-\textnormal{(GU3)} in Supplementary Appendix~SA.1 hold, then weak uniqueness holds. For any two weak solutions with the same initial distribution induce the same law on
	$
	C([0,t];\mathbb{R}^d)
	\times
	C([0,t];\mathbb{R}^{d_Y})
	\times
	C([0,t];\mathcal{P}_2(\mathbb{R}^d)).
	$
\end{theorem}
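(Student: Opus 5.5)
The plan is a fixed-point/compactness scheme for existence, using the conditional-law framework of \citet{buckdahn2023general}, followed by a Yamada--Watanabe type argument on the filtering equation for uniqueness. Since $\Sigma_Y$ may depend on $(y,\mu)$, the change of measure that reduces $Y$ to a Brownian motion (used in Proposition~\ref{prop:ddu-wellposed}) is not available. We therefore never decouple the observation filtration from the solution. Instead we work on the canonical space
\[
C([0,t];\mathbb{R}^d)\times C([0,t];\mathbb{R}^{d_Y})\times C([0,t];\mathcal{P}_2(\mathbb{R}^d)),
\]
together with the driving noises, and construct the conditional law as part of the weak solution.

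\textbf{Existence.} First, we discretize time on a mesh $\pi_n$ of $[0,t]$. On each subinterval we freeze the measure argument at $\mu^n_{s_k}:=\mathcal{L}(X^n_{s_k}\mid\mathcal{F}^{Y^n}_{s_k})$. Given a frozen measure, the coefficients are Lipschitz by Assumption~1, so a strong solution $(X^n,Y^n)$ exists subinterval by subinterval. This defines a progressively measurable piecewise-constant $\mu^n$.

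Second, we derive uniform estimates. By the Burkholder--Davis--Gundy and Gronwall inequalities, linear growth gives $\sup_n\mathbb{E}[\sup_s|X^n_s|^2+\sup_s|Y^n_s|^2]<\infty$. Jensen's inequality for conditional expectations gives $\mathbb{E}\sup_s m_2(\mu^n_s)\le \mathbb{E}\sup_s|X^n_s|^2$ up to a constant, via Doob's inequality applied to the martingale $\mathbb{E}[\sup|X^n|^2\mid\mathcal F^{Y^n}_s]$. The $(2+\varepsilon)$-moment of $\lambda_0$ propagates to a uniform bound on $\mathbb{E}\sup_s|X^n_s|^{2+\varepsilon}$. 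This is what makes $\{\mu^n_s\}$ uniformly $2$-integrable, and hence relatively compact in $\mathcal{W}_2$ rather than merely weakly.

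Third, we prove tightness. Kolmogorov increment bounds give tightness of $(X^n,Y^n,W,B)$. For $\mu^n$ we use the criterion of tightness of random measures through the laws of their intensity measures plus $\mathcal{W}_2$-modulus estimates, with $\mathbb{E}\,\mathcal{W}_2^2(\mu^n_s,\mu^n_r)\le\mathbb{E}|X^n_s-X^n_r|^2\lesssim|s-r|$ up to discretization error.

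Fourth, we identify limits. By Skorokhod representation we pass to a limit $(X,Y,\mu,W,B)$. The SDEs pass to the limit by Lipschitz continuity of the coefficients in $\mathcal{W}_2$ and stochastic-integral convergence (Kurtz--Protter). Conditions (GV1)--(GV6) enter at this stage.

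\textbf{Main obstacle: the conditional-law constraint in the limit.} We must show $\langle\mu_s,\varphi\rangle=\mathbb{E}[\varphi(X_s)\mid\mathcal{F}^Y_s]$. The difficulty is that filtrations do not pass to weak limits, and $\sigma(Y)$ may be strictly smaller than the limit of $\sigma(Y^n)$. We will proceed as follows. For bounded continuous $\varphi$ and bounded continuous functionals $\Psi$ of $(Y_r)_{r\le s}$, the identity $\mathbb{E}[\Psi(Y^n)\langle\mu^n_s,\varphi\rangle]=\mathbb{E}[\Psi(Y^n)\varphi(X^n_s)]$ passes to the limit. This yields $\langle\mu_s,\varphi\rangle=\mathbb{E}[\varphi(X_s)\mid\mathcal{F}^Y_s]$ as soon as $\mu_s$ is $\mathcal{F}^Y_s$-measurable.

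Measurability is the crux. We will first try to obtain it from the compatibility (immersion) condition: we show that $\mu^n$ is a measurable functional of $Y^n$ that is stable in the limit, via a Zakai/Kushner--Stratonovich representation whose coefficients depend continuously on $(Y,\mu)$ under the nondegeneracy of $\Sigma_Y\Sigma_Y^\top$. Given the stated conclusion, we expect this to be precisely what (GV1)--(GV6) are designed to guarantee. Continuity of the paths of $\mu$ in $\mathcal{P}_2$ follows from the modulus estimate together with uniform integrability. A progressively measurable version is obtained by standard optional-projection arguments.

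\textbf{Uniqueness.} Under (GU1)--(GU3), we show that any weak solution satisfies the Kushner--Stratonovich equation for $\mu$ driven by the innovation $\Sigma_Y^{-1}(dY-\langle\mu,h\rangle ds)$. We then establish pathwise uniqueness for the joint system $(X,Y,\mu)$ given $(X_0,Y_0,W,B)$ via a $\mathcal{W}_2$-Gronwall estimate. For two solutions on a common space we couple $X$ and $X'$ synchronously and bound $\mathcal{W}_2^2(\mu_s,\mu'_s)\le\mathbb{E}[|X_s-X'_s|^2\mid\mathcal{F}^{Y}_s\vee\mathcal{F}^{Y'}_s]$. A Yamada--Watanabe argument for conditional McKean--Vlasov equations then upgrades pathwise uniqueness to uniqueness in law on the stated product space.
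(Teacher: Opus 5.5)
Both halves of your argument have a genuine gap. You take a different route from the paper for existence (discretize, prove tightness, pass to a Skorokhod limit), and the step you yourself call the crux is left open. Test against functionals of $(Y^n,\mu^n)$ up to time $s$; this is legitimate because $\mu^n$ is $\mathbb F^{Y^n}$-adapted. The limit identity then only gives $\mu_s=\mathcal L(X_s\mid\mathcal F^{Y,\mu}_s)$, a weak solution in the Carmona--Delarue--Lacker sense. $\mathcal F^Y_s$-measurability of $\mu_s$ is exactly what compactness does not deliver, and it is typically recovered only through a uniqueness (Yamada--Watanabe) argument, which is not available in the existence half of the theorem.

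Appealing to (GV1)--(GV6) does not fill this gap. (GV6) is a continuity statement for $\Gamma$ on flows that already lie in $\mathfrak M_R$, i.e.\ are already adapted to the observation filtration. Applying it to your Skorokhod limit therefore presupposes the adaptedness you need. The paper never takes limits of filtrations. It sets $\Gamma(\nu)_s=\mathcal L(X^\nu_s\mid\mathcal F^{Y^\nu}_s)$ for each frozen flow, takes relative compactness of $\Gamma(\mathfrak M_R)$ from (GV5) and continuity from (GV6), and applies Schauder--Tychonoff on $\overline{\operatorname{co}}(\Gamma(\mathfrak M_R))$. The conditional-law identity then holds at the fixed point by construction.

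Separately, your modulus bound $\mathbb E\,\mathcal W_2^2(\mu^n_s,\mu^n_r)\le\mathbb E|X^n_s-X^n_r|^2$ is false, because $\mu_s$ and $\mu_r$ condition on different $\sigma$-fields. Take $b=\sigma=\sigma_0=0$, $X\equiv X_0\sim N(0,1)$, $dY_s=X\,ds+dB_s$, $Y_0=0$. Then $X_s-X_0=0$, but $\mu_s=N\bigl(Y_s/(1+s),1/(1+s)\bigr)$, so $\mathbb E\,\mathcal W_2^2(\mu_0,\mu_s)\ge s/(1+s)$. Time regularity of the filter has to come from its innovation-martingale part, or, as in the paper, be assumed through (GV5).

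The uniqueness step fails at the inequality $\mathcal W_2^2(\mu_s,\mu'_s)\le\mathbb E[|X_s-X'_s|^2\mid\mathcal F^Y_s\vee\mathcal F^{Y'}_s]$. Write $\mathcal G_s:=\mathcal F^Y_s\vee\mathcal F^{Y'}_s$. The conditional law of $(X_s,X'_s)$ given $\mathcal G_s$ couples $\mathcal L(X_s\mid\mathcal G_s)$ and $\mathcal L(X'_s\mid\mathcal G_s)$. These equal $\mu_s$ and $\mu'_s$ only if $\mathcal F^{Y'}_s$ carries no information about $X_s$ beyond $\mathcal F^Y_s$, and symmetrically. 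That fails under synchronous coupling, since $Y'$ is driven by $X'$, which is correlated with $X$.

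The common-noise argument you are imitating works because both solutions condition on the same $\mathcal F^B$. Here the conditioning filtration depends on the solution. The neglected terms $\mathcal W_2(\mu_s,\mathcal L(X_s\mid\mathcal G_s))$ measure the discrepancy between the two observation filtrations, which requires a filter-stability estimate that nothing in your sketch supplies. The paper does not derive such an estimate; it imposes it as (GU2). It closes $\Delta(s)\le C\int_0^s\Delta(v)\,dv$ for two fixed points by Gronwall. It then applies (GU1) to the disintegration of $\mathcal L(X^i,Y^i,\mu^i)$ over the common law of $\mu$, and uses (GU3) to identify the joint laws.

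Your argument never uses (GU1)--(GU3). The Kushner--Stratonovich/Yamada--Watanabe route would also need pathwise uniqueness for a measure-valued filter equation with $\mu$-dependent, correlated ($\sigma_0\neq0$) coefficients, and that is not among the hypotheses.
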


\noindent Proposition~\ref{prop:ddu-wellposed} treats the constant-volatility benchmark. The general case is stated in
Theorem~\ref{thm:ddu-general-volatility} and proved in the
Supplementary Appendix (SA1-SA9).

\begin{prop} \label{prop:ddu-nesting}
	Suppose,
	$
	u(s,z,x,\mu)=\bar u(s,z,x),
	$
	and
	$
	(b,\sigma,\sigma_0,h,\Sigma_Y)
	=
	(\bar b,\bar\sigma,\bar\sigma_0,\bar h,\bar\Sigma_Y),
	$
	where the latter coefficients are independent of $\mu$. Then $\mu$ is not a state variable of either the utility functional or the state dynamics. Hence, DDU reduces to the continuous-time DRU specification.
\end{prop}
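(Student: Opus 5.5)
The plan is to verify directly that, under the stated restrictions, every object entering the DDU definition can be rewritten without reference to $\mu$, and then to exhibit the resulting primitives as a DRU specification in the sense of Section~\ref{sec:dru}. I will proceed in three steps: decouple the state dynamics, construct an exogenous felicity process, and identify the choice rules.

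\emph{Decoupling the dynamics.} Substituting $(\bar b,\bar\sigma,\bar\sigma_0,\bar h,\bar\Sigma_Y)$ into \eqref{eq:ddu-system}, the first two equations form a classical SDE for $(X,Y)$ driven by $(W,B)$ whose coefficients are independent of the measure argument. Assumption~1, restricted to $\mu=\nu$, gives Lipschitz continuity and linear growth in $(x,y)$. Standard It\^o theory then yields a pathwise unique strong solution $(X,Y)$ adapted to the filtration generated by $(X_0,Y_0,W,B)$.

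The third equation is now a definition rather than a constraint: $\mu_s:=\mathcal L(X_s\mid\mathcal F_s^Y)$ is computed a posteriori from the already determined pair $(X,Y)$. The fixed-point loop $X\to Y\to\mu\to X$ is therefore cut, and $\mu$ is a functional of the exogenous process $(X,Y)$ rather than a state variable. This is consistent with Proposition~\ref{prop:ddu-wellposed} and Theorem~\ref{thm:ddu-general-volatility}, since any weak solution must coincide in law with this strong solution.

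\emph{Constructing the felicity process.} Set $u_s(z):=\bar u(s,z,X_s)$. This is $\mathcal F_s$-measurable, and its law is fixed by the exogenous SDE, with no dependence on the analyst's filtration $\mathbb F^Y$. This is exactly the requirement on the felicity process in Section~\ref{sec:dru}, where arbitrary serial dependence is permitted.

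\emph{Identifying the choice rules.} For every menu $A$, the choice correspondences agree:
\[
M_s^{\mathrm{DDU}}(A)=\arg\max_{z\in A}\bar u(s,z,X_s)=M_s(A).
\]
Consequently the conditional rule $\rho_s(z;A)=\mathbb P(z\in M_s(A)\mid\mathcal F_s^Y)$ coincides with \eqref{eq:continuous-stochastic-choice-rule}. The continuation and value processes also match \eqref{eq:continuous-dru-value} with $u_v(Z_v)=\bar u(v,Z_v,X_v)$.

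By Lemma~\ref{lem:conditional-sufficiency}, $\mu_s$ still summarizes the analyst's beliefs about $X_s$. It plays only this informational role, exactly as in the DRU benchmark, which is what the statement claims.

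The main obstacle is conceptual rather than technical. One must be precise that the question is not whether $\mu_s$ is trivial, since it still varies with the observed history. The point is that no coefficient reads $\mu_s$, so the law of $u$ is fixed before $\mathbb F^Y$ is formed.

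A secondary check concerns the DRU requirement $\mathcal F_s^Y\subseteq\mathcal F_s$. This holds because $Y$ is adapted to the filtration generated by $(W,B)$ and the initial condition, which I will take as $\mathbb F$, augmented if necessary by $\sigma(X_0,Y_0)$.
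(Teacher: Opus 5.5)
Your proposal is correct and follows essentially the paper's argument. You substitute the $\mu$-free coefficients so that $(X,Y)$ solves a classical SDE with $\mu_s=\mathcal L(X_s\mid\mathcal F_s^Y)$ generated a posteriori, and then identify $M_s^{\mathrm{DDU}}(A)=\arg\max_{z\in A}\bar u(s,z,X_s)=M_s^{\mathrm{DRU}}(A)$, so the conditional choice rules coincide. The differences are minor: the paper states the dynamic part as equality of transition laws, since the martingale problem depends only on $(\bar b,\bar\sigma,\bar\sigma_0,\bar h,\bar\Sigma_Y)$, whereas you obtain it from strong well-posedness, and you also spell out the DRU felicity process $u_s(z)=\bar u(s,z,X_s)$ and the inclusion $\mathcal F_s^Y\subseteq\mathcal F_s$, which the paper leaves implicit.
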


\noindent
Proof is in Supplementary Appendix~SA.2.
Proposition~\ref{prop:ddu-nesting}
imposes
\((u,b,\sigma,\sigma_0,h,\\ \Sigma_Y)
=
(\bar u,\bar b,\bar\sigma,\bar\sigma_0,\bar h,\bar\Sigma_Y)\),
where
\((\bar u,\bar b,\bar\sigma,\bar\sigma_0,\bar h,\bar\Sigma_Y)\)
is independent of
\(\mu\).
Hence,
\(\mu_s=\mathcal L(X_s\mid\mathcal F_s^Y)\)
is generated by
\((X,Y)\)
but does not enter either
\(\bar u(s,z,X_s)\)
or
\((\bar b,\bar\sigma,\bar\sigma_0,\bar h,\bar\Sigma_Y)(s,X_s,Y_s)\).
DDU permits the dependence
\((u,b,\sigma,\sigma_0,h,\Sigma_Y)
=
(u,b,\sigma,\sigma_0,h,\Sigma_Y)(s,X_s,Y_s,\mu_s)\),
so that
\(\mu_s\)
is both a conditional law and an endogenous state variable through the mappings
\(\mu\mapsto u\)
and
\(\mu\mapsto(b,\sigma,\sigma_0,h,\Sigma_Y)\).

\subsection{Discussion and Interpretation}\label{sec:discuss}
\paragraph{Conditional distribution as an endogenous state variable.}
The defining distinction between DRU and DDU is the role of the conditional preference distribution
\(\mu_s=\mathcal L(X_s\mid\mathcal F_s^Y)\).
Under DRU,
\((u,b,\sigma,\sigma_0,h,\Sigma_Y)\)
is independent of
\(\mu\),
so that
\(\mu_s\)
is generated by
\((X,Y)\)
through
\(\mathcal F_s^Y\)
but does not enter either the felicity functional or the state dynamics.
Hence,
\((X,Y)\mapsto\mathcal F^Y\mapsto\mu\),
where
\(\mu\)
is a posterior process determined by Bayesian updating of the exogenous latent state.
Under DDU,
\((u,b,\sigma,\sigma_0,h,\Sigma_Y)
=
(u,b,\sigma,\sigma_0,h,\Sigma_Y)(s,X_s,Y_s,\mu_s)\),
so that
\(\mu_s\)
enters both
\(u(s,\cdot,X_s,\mu_s)\)
and
\((b,\sigma,\sigma_0,h,\Sigma_Y)(s,X_s,Y_s,\mu_s)\).
Accordingly,
\((X,Y)\mapsto\mathcal F^Y\mapsto\mu\mapsto(u,b,\sigma,\sigma_0,h,\Sigma_Y)\mapsto(X,Y)\),
and the conditional law becomes an endogenous state variable rather than a passive posterior distribution.
Observed choice therefore affects subsequent behavior through both
\(\mathcal F_s^Y\mapsto\mu_s\)
and
\(\mu_s\mapsto(u,b,\sigma,\sigma_0,h,\Sigma_Y)\),
thereby coupling Bayesian learning with endogenous distributional feedback. Consequently, DDU combines the informational structure of DRU with the conditional MVSDE of \citet{CarmonaDelarue2018}.

\paragraph{Interpretation of the observation process.}
The observable process
\(Y=(Y_s)_{0\le s\le t}\)
is an arbitrary progressively measurable signal generating
\(\mathbb F^Y=(\mathcal F_s^Y)_{0\le s\le t}\),
with no restriction that
\(Y\)
coincide with realized choices.
Depending on the application,
\(Y\)
may represent purchase histories, revealed actions, transaction prices, portfolio allocations, or other observable signals, provided
\(\mathcal F_s^Y=\sigma(Y_r:0\le r\le s)\).
Accordingly,
\(\mathbb F^Y\)
is the analyst's information filtration, whereas the decision maker acts on
\(\mathbb F=(\mathcal F_s)_{0\le s\le t}\),
with
\(\mathbb F^Y\subseteq\mathbb F\).
The CPD
\(\mu_s=\mathcal L(X_s\mid\mathcal F_s^Y)\)
is therefore determined by
\((X,Y)\)
through
\((X,Y)\mapsto\mathcal F^Y\mapsto\mu\),
rather than by direct observation of
\(X\).
Consequently,
\(\mu_s\)
summarizes the analyst's posterior distribution of the latent preference state and, under DDU,
\((u,b,\sigma,\sigma_0,h,\Sigma_Y)
=
(u,b,\sigma,\sigma_0,h,\Sigma_Y)(s,X_s,Y_s,\mu_s)\),
so that distribution dependence operates through the conditional law
\(\mu_s\)
instead of complete observation of
\(X_s\).

\paragraph{Relation to existing models.}
DDU contains continuous-time DRU as the subclass satisfying
\((u,b,\sigma,\sigma_0,h,\Sigma_Y)
=
(u,b,\sigma,\sigma_0,h,\Sigma_Y)(s,x,y)\),
that is,
\((u,b,\sigma,\sigma_0,h,\Sigma_Y)\)
is independent of
\(\mu\).
It also differs from classical MVSDEs, where the coefficient dependence is through the unconditional law
\(\mathcal L(X_s)\),
whereas DDU replaces
\(\mathcal L(X_s)\)
by the conditional law
\(\mu_s=\mathcal L(X_s\mid\mathcal F_s^Y)\),
with
\(\mathcal F_s^Y=\sigma(Y_r:0\le r\le s)\).
Accordingly,
\((u,b,\sigma,\sigma_0,h,\Sigma_Y)
=
(u,b,\sigma,\sigma_0,h,\Sigma_Y)(s,X_s,Y_s,\mu_s)\),
and the induced evolution is
\((X,Y)\mapsto\mathcal F^Y\mapsto\mu\mapsto(X,Y)\),
rather than
\(X\mapsto\mathcal L(X)\mapsto X\)
as in classical MVSDEs \citep{pramanik2025construction}. Thus the endogenous state variable is the filter-valued process
\(\mu=(\mu_s)_{0\le s\le t}\in C([0,t];\mathcal P_2(\mathbb R^d))\),
instead of the unconditional distribution process
\((\mathcal L(X_s))_{0\le s\le t}\), consistent with the conditional MVSDE of \citet{CarmonaDelarueLacker2016,CarmonaDelarue2018,buckdahn2023general}.

\section{Behavioral Implications of DDU}\label{sec:behavior}

Section~\ref{sec:dynamicrandom} established the DDU model
\((X,Y,\mu)\), with \(\mu_s=\mathcal L(X_s\mid\mathcal F_s^Y)\),
its well-posedness, and the coefficient tuple
\((u,b,\sigma,\sigma_0,h,\Sigma_Y)\).
This section studies the induced behavioral mapping
\((X,Y,\mu)\mapsto\rho\),
where
\(\rho=(\rho_s)_{0\le s\le t}\)
denotes the stochastic-choice process generated by DDU.
We first characterize distributional feedback, then derive the behavioral representation of distribution-dependent preferences, establish the associated identification results, and conclude with special cases and comparative statics.

\subsection{Distributional Feedback}
\label{sec:feedback}

DDU generates stochastic choice through
\(\mu_s=\mathcal L(X_s\mid\mathcal F_s^Y)\)
and the coefficient dependence
\((u,b,\sigma,\sigma_0,h,\Sigma_Y)
=
(u,b,\sigma,\sigma_0,h,\Sigma_Y)(s,X_s,Y_s,\mu_s)\).
Accordingly,
\((X,Y)\mapsto\mathcal F^Y\mapsto\mu\mapsto(u,b,\sigma,\sigma_0,h,\Sigma_Y)\mapsto\rho\),
where
\(\rho=(\rho_s)_{0\le s\le t}\)
denotes the induced stochastic-choice process.
This subsection discusses the behavioral restrictions generated by the mapping
\(\mu\mapsto(u,b,\sigma,\sigma_0,h,\Sigma_Y)\),
separating variation induced by the conditional law
\(\mu_s=\mathcal L(X_s\mid\mathcal F_s^Y)\)
from variation induced by the latent state
\(X_s\),
thereby distinguishing composition effects from distributional feedback. For a finite menu $A\in\mathcal K(Z)$, $z\in A$, and
$(s,x,\mu)\in[0,t]\times\mathbb R^d\times\mathcal P_2(\mathbb R^d)$,
define
\[
\chi_s(z;A,x,\mu)
:=
\mathbf 1
\left\{
u(s,z,x,\mu)>
\max_{a\in A\setminus\{z\}}u(s,a,x,\mu)
\right\}.
\]
Thus $\chi_s(z;A,x,\mu)$ indicates that $z$ is the unique maximizer when the privately observed state is $x$ and the distributional state is $\mu$. Ties will be ruled out below.

\begin{definition}\label{def:decoupled-choice}
	For $\nu,\mu\in\mathcal P_2(\mathbb R^d)$, define
	$
	\mathsf C_s(z;A\mid\nu,\mu)
	:=
	\int_{\mathbb R^d}
	\chi_s(z;A,x,\mu)\,\nu(dx).
	$
\end{definition}
\noindent
The kernel
\(\mathsf C_s(\cdot;\cdot\mid\nu,\mu)\)
depends separately on
\(\nu,\mu\in\mathcal P_2(\mathbb R^d)\),
with
\(\nu\mapsto\chi_s(\cdot;\cdot,\cdot,\mu)\)
determining the composition of latent states and
\(\mu\mapsto\chi_s(\cdot;\cdot,\cdot,\mu)\)
the measure dependence of the felicity index.
Hence,
\(\mathsf C_s(\cdot;\cdot\mid\nu,\mu)\)
decouples the mappings
\(\nu\mapsto X_s\)
and
\(\mu\mapsto u(s,\cdot,X_s,\mu)\).
The cases
\(\nu\neq\mu\)
and
\(\nu=\mu\)
correspond, respectively, to counterfactual and equilibrium evaluations.
Along every equilibrium path,
\(\nu=\mu=\mu_s=\mathcal L(X_s\mid\mathcal F_s^Y)\),
so that
\(\mathsf C_s(\cdot;\cdot\mid\mu_s,\mu_s)\)
coincides with the observed stochastic-choice kernel in
\eqref{eq:equilibrium-choice-kernel}
\begin{equation}
	\rho_s(z;A)
	=
	\mathsf C_s(z;A\mid\mu_s,\mu_s),
	\qquad z\in A.
	\label{eq:equilibrium-choice-kernel}
\end{equation}

\begin{as}
	\label{ass:no-ties}
	For every $s\in[0,t]$, $A\in\mathcal K(Z)$, $z\ne a$ in $A$, and $\nu,\mu\in\mathcal P_2(\mathbb R^d)$ that arise under an admissible history,
	$
	\nu\left(
	\left\{
	x\in\mathbb R^d:
	u(s,z,x,\mu)=u(s,a,x,\mu)
	\right\}
	\right)=0.
	$
\end{as}

\begin{rmk}
Assumption~\ref{ass:no-ties} implies that
$\mathsf C_s(\cdot;A\mid\nu,\mu)$ is a probability distribution on $A$. It can be replaced by an explicit measurable tie-breaking rule without changing the results below.
\end{rmk}

\begin{definition}\label{def:choice-relevant-felicity-feedback}
DDU exhibits choice-relevant felicity feedback at time $s$ if there exist a finite menu $A$, an alternative $z\in A$, and
	$\nu,\mu,\mu'\in\mathcal P_2(\mathbb R^d)$ such that
	$
	\mathsf C_s(z;A\mid\nu,\mu)
	\ne
	\mathsf C_s(z;A\mid\nu,\mu').
	$ It is \emph{behaviorally felicity-neutra}l if the preceding equality fails for no such collection.
\end{definition}

\noindent
Definition~\ref{def:choice-relevant-felicity-feedback}
depends only on the ordering induced by
$u(s,\cdot,x,\mu)$,
not on its cardinal representation.
Hence,
$u(s,\cdot,x,\mu)$
and
$u(s,\cdot,x,\mu')$
are behaviorally equivalent whenever they induce the same ordering on
$Z$,
or, equivalently,
$\operatorname{sgn}\!\bigl(\Delta_{za}(s,x,\mu)\bigr)
=
\operatorname{sgn}\!\bigl(\Delta_{za}(s,x,\mu')\bigr)$
for every
$z,a\in Z$.
Accordingly, only changes in the collection
$\{\Delta_{za}(s,x,\mu):z,a\in Z\}$
can alter
$\mathsf C_s(\cdot;\cdot\mid\nu,\mu)$.
For
$z,a\in Z$,
define
$
\Delta_{za}(s,x,\mu)
:=
u(s,z,x,\mu)
-
u(s,a,x,\mu).
$

\begin{lem} \label{lem:common-distributional-term}
Suppose
	$
	u(s,z,x,\mu)
	=
	\bar u(s,z,x)+\alpha(s,x,\mu)
	$
	for every $(s,z,x,\mu)$, where $\alpha$ is independent of $z$. Then DDU is behaviorally felicity-neutral. More generally, felicity feedback is behaviorally neutral if and only if, for every admissible $s,x,\mu,\mu'$, the orderings of $Z$ induced by $u(s,\cdot,x,\mu)$ and $u(s,\cdot,x,\mu')$ coincide outside sets that are null under every admissible composition measure.
\end{lem}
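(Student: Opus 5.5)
The first claim follows by direct computation. Under the additive specification $u(s,z,x,\mu)=\bar u(s,z,x)+\alpha(s,x,\mu)$, every pairwise difference satisfies $\Delta_{za}(s,x,\mu)=\bar u(s,z,x)-\bar u(s,a,x)$, which does not depend on $\mu$. Hence the event
\[
\Bigl\{x:\ u(s,z,x,\mu)>\max_{a\in A\setminus\{z\}}u(s,a,x,\mu)\Bigr\}
\]
coincides with the corresponding event defined through $\bar u$. Consequently $\chi_s(z;A,x,\mu)=\chi_s(z;A,x,\mu')$ for every $x$. Integrating against any $\nu$ in Definition~\ref{def:decoupled-choice} gives $\mathsf C_s(z;A\mid\nu,\mu)=\mathsf C_s(z;A\mid\nu,\mu')$. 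Since this holds for all menus, alternatives and measures, Definition~\ref{def:choice-relevant-felicity-feedback} yields behavioral felicity-neutrality.

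For the characterization, I would read ``the orderings coincide outside null sets'' as follows. For every admissible $s,\mu,\mu'$, the set
\[
N_{s}(\mu,\mu'):=\bigl\{x:\ \exists\, z,a\in Z,\ \operatorname{sgn}\Delta_{za}(s,x,\mu)\neq\operatorname{sgn}\Delta_{za}(s,x,\mu')\bigr\}
\]
is $\nu$-null for every admissible $\nu$. This is a finite union over pairs, hence measurable.

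\emph{Sufficiency.} Off $N_s(\mu,\mu')$ the two strict orderings agree. The indicator $\chi_s(z;A,x,\cdot)$ depends only on the signs of $\Delta_{za'}$ for $a'\in A$, so $\chi_s(z;A,x,\mu)=\chi_s(z;A,x,\mu')$ off a $\nu$-null set. The integrals therefore agree.

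\emph{Necessity.} I would argue by contraposition. Suppose that for some admissible $\nu$ we have $\nu(N_s(\mu,\mu'))>0$. Then some pair $(z,a)$ has a positive-$\nu$-measure set $S$ on which the signs differ. By Assumption~\ref{ass:no-ties}, ties have $\nu$-measure zero under both $\mu$ and $\mu'$, so after discarding a null set the strict preference is reversed on $S$. Shrinking $S$ and relabeling if needed, $z\succ a$ under $\mu$ and $a\succ z$ under $\mu'$ on a positive-measure set. Take the binary menu $A=\{z,a\}$. Then
\[
\mathsf C_s(z;A\mid\nu,\mu)-\mathsf C_s(z;A\mid\nu,\mu')=\nu(S_1)-\nu(S_2),
\]
where $S_1$ is the set of $x$ where $z$ wins under $\mu$ but loses under $\mu'$, and $S_2$ is the reverse set.

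The main obstacle is that $\nu(S_1)$ and $\nu(S_2)$ might cancel. The definition quantifies over all $\nu$, so I would replace $\nu$ by its restriction $\nu_1:=\nu(\cdot\cap S_1)/\nu(S_1)$. This needs an assumption that this restriction is an admissible composition measure, which is how I read ``null under every admissible composition measure''; if admissibility is restrictive, I would note that the statement should be read with admissible measures closed under conditioning on positive-measure sets. Under $\nu_1$ the difference equals $1\neq0$, which exhibits choice-relevant felicity feedback and contradicts neutrality. Binary menus suffice because pairwise comparisons determine the ordering, and this completes the equivalence.
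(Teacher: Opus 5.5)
Your proposal is correct and follows essentially the same route as the paper's proof: in the additive case the differences $\Delta_{za}$ are $\mu$-free, sufficiency uses that each $\chi_s$ depends only on the signs of the $\Delta_{za}$, and necessity uses a binary menu $\{z,a\}$ with $\nu$ restricted to the one-sided part of the sign-reversal set so that the two pieces cannot cancel. Your closure-under-conditioning caveat is exactly the hypothesis the paper assumes silently when it writes ``possibly after restricting $\nu$ to an admissible component on which the sign change is one-sided.'' Read literally, however, Definition~\ref{def:choice-relevant-felicity-feedback} lets $\nu$ range over all of $\mathcal P_2(\mathbb R^d)$, and $\nu(\cdot\cap S_1)/\nu(S_1)$ still has a finite second moment, so under that reading the caveat holds automatically in the necessity direction.
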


\noindent Proof  is  in Appendix~\ref{app:common-distributional-term}. Definition \ref{def:choice-relevant-felicity-feedback} isolates contemporaneous feedback. Preference feedback is dynamic, a change in the distributional state changes the
law of future latent preferences and therefore may affect future
choice even when current rankings are unchanged. To formulate this, fix $0\le s<r\le t$ and let
$
m=(m_v)_{s\le v\le r}
\in C([s,r];\mathcal P_2(\mathbb R^d))
$
be an externally specified measure flow. Starting from
$(X_s,Y_s)\sim\lambda$, consider the frozen-flow system
\[
\begin{cases}
	dX_v^m
	=
	b(v,X_v^m,Y_v^m,m_v)\,dv
	+\sigma(v,X_v^m,Y_v^m,m_v)\,dW_v
	+\sigma_0(v,X_v^m,Y_v^m,m_v)\,dB_v,\\[2mm]
	dY_v^m
	=
	h(v,X_v^m,Y_v^m,m_v)\,dv
	+\Sigma_Y(v,Y_v^m,m_v)\,dB_v.
\end{cases}
\]
Under Assumption~1 and the conditions used in
Proposition~\ref{prop:ddu-wellposed}, this system has a weak solution for every admissible $m$. Let $\mathsf P_{s,r}^{m}\lambda$ denote the law of
$(X_r^m,Y_r^m)$ generated by this system. Its $X$-marginal is denoted by
$
\mathsf P_{s,r}^{m,X}\lambda.
$

\begin{definition}
	\label{def:dynamic-distributional-response}
	For $A\in\mathcal K(Z)$ and $z\in A$, define
	$
	\mathsf D_{s,r}(z;A\mid\lambda,m)
	:=
	\mathbb E
	\left[
	\chi_r
	\left(
	z;A,X_r^m,m_r
	\right)
	\right].
	$
	DDU exhibits a dynamic distributional response on $[s,r]$ if there exist $\lambda$, $m$, $m'$, $A$, and $z\in A$ such that
	$
	\mathsf D_{s,r}(z;A\mid\lambda,m)
	\ne
	\mathsf D_{s,r}(z;A\mid\lambda,m').
	$
\end{definition}

\noindent
The kernel
$\mathsf D_{s,r}^{\,\bar\mu}$
decouples the transition law
$m\mapsto\mathsf P_{s,r}^{m}$
from the felicity argument
$\bar\mu$.
For
$\lambda\in\mathcal P_2(\mathbb R^d\times\mathbb R^{d_Y})$
and
$m,\bar\mu\in C([0,t];\mathcal P_2(\mathbb R^d))$,
define
$$
\mathsf D_{s,r}^{\,\bar\mu}(z;A\mid\lambda,m)
:=
\int_{\mathbb R^d\times\mathbb R^{d_Y}}
\chi_r(z;A,x,\bar\mu)\,
\mathsf P_{s,r}^{m}\lambda(dx,dy),
$$
where
$\bar\mu$
is fixed.
Thus
$m\mapsto\mathsf P_{s,r}^{m}$
determines the distribution of
$X_r$,
whereas
$\bar\mu\mapsto\chi_r(\cdot;\cdot,\cdot,\bar\mu)$
is held constant.
Consequently,
$\mathsf D_{s,r}^{\,\bar\mu}(\cdot;\cdot\mid\lambda,m)$
isolates dynamic variation through
$\mathsf P_{s,r}^{m}$,
independently of contemporaneous felicity.

\begin{definition}\label{def:behaviorally-relevant-preference-feedback}
DDU exhibits behaviorally relevant preference feedback on $[s,r]$ if there exist $\lambda$, $m$, $m'$, $\bar\mu$, $A$, and $z\in A$ such that
	$
	\mathsf D_{s,r}^{\,\bar\mu}(z;A\mid\lambda,m)
	\ne
	\mathsf D_{s,r}^{\,\bar\mu}(z;A\mid\lambda,m').
	$
\end{definition}

\begin{rmk}Definition~\ref{def:behaviorally-relevant-preference-feedback} holds felicity fixed at the comparison date. Any difference in the resulting choice probabilities must therefore operate through the effect of the measure flow on the distribution of future latent states.
\end{rmk}

\begin{definition}
	\label{def:behavioral-distributional-feedback}
	Let
	$\mathcal F_s$
	denote the property in
	Definition~\ref{def:choice-relevant-felicity-feedback}
	and
	$\mathcal P_{s,r}$
	the property in
	Definition~\ref{def:behaviorally-relevant-preference-feedback}.
	Then DDU exhibits
	\emph{behavioral distributional feedback}
	iff
	$
	\mathcal F_s
	\vee
	\mathcal P_{s,r},
	$
	and satisfies
	\emph{distributional invariance}
	iff
	$
	\neg\mathcal F_s
	\wedge
	\neg\mathcal P_{s,r}.
	$
\end{definition}

\noindent
Definition~\ref{def:behavioral-distributional-feedback}
depends only on the induced behavioral mappings
$
\mu
\mapsto
\mathsf C
$
and
$
m
\mapsto
\mathsf D,
$
rather than the primitive mappings
$
\mu
\mapsto
(u,b,\sigma,\sigma_0,h,\Sigma_Y).
$
Hence,
$
(u,b,\sigma,\sigma_0,h,\Sigma_Y)
\not\equiv
(\tilde u,\tilde b,\tilde\sigma,\tilde\sigma_0,\\ \tilde h,\tilde\Sigma_Y)
$
may nevertheless satisfy
$
\mathsf C=\tilde{\mathsf C},
\
\mathsf D=\tilde{\mathsf D},
$
so that stochastic-choice data identify behavioral objects but not, in general, primitive coefficients without additional restrictions.

\begin{prop}
	\label{prop:dru-distributional-invariance}
	Suppose the conditions of Proposition~\ref{prop:ddu-nesting} hold. Then, for every
	$s\in[0,t]$,
	finite menu
	$A\subseteq Z$,
	$z\in A$,
	and
	$\nu,\mu,\mu'\in\mathcal P_2(\mathbb R^d)$,
	$
	\mathsf C_s(z;A\mid\nu,\mu)
	=
	\mathsf C_s(z;A\mid\nu,\mu').
	$
	Moreover, for every
	$0\le s<r\le t$,
	admissible initial law
	$\lambda$,
	measure flows
	$m,m'$,
	fixed felicity argument
	$\bar\mu$,
	finite menu
	$A$,
	and
	$z\in A$,
	$
	\mathsf D_{s,r}^{\,\bar\mu}(z;A\mid\lambda,m)
	=
	\mathsf D_{s,r}^{\,\bar\mu}(z;A\mid\lambda,m').
	$
	Hence, every DRU representation satisfies distributional invariance.
\end{prop}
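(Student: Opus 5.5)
The argument is a direct computation from the definitions: under the hypotheses of Proposition~\ref{prop:ddu-nesting}, neither the felicity index nor the coefficients of the frozen-flow system depend on their measure argument. We treat the static and dynamic parts separately and then combine them through Definition~\ref{def:behavioral-distributional-feedback}.

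\emph{Static part.} Since $u(s,z,x,\mu)=\bar u(s,z,x)$ for all arguments, the indicator satisfies
\[
\chi_s(z;A,x,\mu)=\mathbf 1\{\bar u(s,z,x)>\max_{a\in A\setminus\{z\}}\bar u(s,a,x)\}=\chi_s(z;A,x,\mu')
\]
pointwise in $x$, for every $\mu,\mu'$. Integrating against the same $\nu$ in Definition~\ref{def:decoupled-choice} gives $\mathsf C_s(z;A\mid\nu,\mu)=\mathsf C_s(z;A\mid\nu,\mu')$. Equivalently, this is the special case $\alpha\equiv 0$ of Lemma~\ref{lem:common-distributional-term}. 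Hence the property $\mathcal F_s$ fails.

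\emph{Dynamic part.} Fix $s<r$, $\lambda$, $\bar\mu$, $A$, $z$, and flows $m,m'$. Because $(b,\sigma,\sigma_0,h,\Sigma_Y)=(\bar b,\bar\sigma,\bar\sigma_0,\bar h,\bar\Sigma_Y)$ do not depend on the measure argument, the frozen-flow systems driven by $m$ and $m'$ are literally the same SDE with initial law $\lambda$. Under Assumption~1 the coefficients are Lipschitz in $(x,y)$, so this classical SDE enjoys pathwise uniqueness, and Yamada--Watanabe then gives uniqueness in law. Therefore $\mathsf P^{m}_{s,r}\lambda=\mathsf P^{m'}_{s,r}\lambda$.

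This step is the only genuine point of the proof. $\mathsf P^m_{s,r}\lambda$ is defined as the law of \emph{a} weak solution, so we need uniqueness in law of the frozen system for it to be well defined and independent of $m$. Here that uniqueness comes from standard SDE theory, not from the conditional McKean--Vlasov results. Since $\chi_r(\cdot;A,\cdot,\bar\mu)$ is the same bounded measurable function of $x$ in both cases, integrating it against equal measures gives $\mathsf D^{\bar\mu}_{s,r}(z;A\mid\lambda,m)=\mathsf D^{\bar\mu}_{s,r}(z;A\mid\lambda,m')$. Hence $\mathcal P_{s,r}$ fails.

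\emph{Conclusion.} Combining the two parts, $\neg\mathcal F_s\wedge\neg\mathcal P_{s,r}$ holds for all $s<r$. This is distributional invariance in the sense of Definition~\ref{def:behavioral-distributional-feedback}. By Proposition~\ref{prop:ddu-nesting}, a DRU representation is exactly such a $\mu$-independent specification, so every DRU representation satisfies distributional invariance.
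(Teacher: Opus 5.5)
Your proposal is correct and follows the paper's proof essentially step for step. For the $\mathsf C$ part, $\chi_s$ is independent of $\mu$ pointwise and is integrated against a common $\nu$. For the $\mathsf D$ part, the frozen-flow laws under $m$ and $m'$ coincide because the two systems are identical, and they are integrated against the same fixed indicator $\chi_r(\cdot;A,\cdot,\bar\mu)$. The only difference is that you justify uniqueness in law of the frozen system explicitly (Lipschitz coefficients from Assumption~1, pathwise uniqueness, Yamada--Watanabe), whereas the paper simply invokes ``weak uniqueness of the frozen system.'' That is a useful clarification, not a different route.
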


\noindent
Proof is in Appendix~\ref{app:dru-distributional-invariance}. The converse requires behavioral richness:
\(\mathsf C=\mathsf C'\)
and
\(\mathsf D=\mathsf D'\)
need not imply
\((u,b,\sigma,\sigma_0,h,\Sigma_Y)
=
(u',b',\sigma',\sigma_0',h',\Sigma_Y')\)
without additional identification conditions.

\begin{as}\label{ass:choice-separation}
	Let $\mathcal Q\subseteq\mathcal P_2(\mathbb R^d)$ denote the collection of latent-state distributions reachable under admissible initial conditions and measure flows. For any distinct $\nu,\nu'\in\mathcal Q$, there exist a time $s$, a finite menu $A\in\mathcal K(Z)$, an alternative $z\in A$, and an admissible distributional state $\mu$ such that
	$
	\int_{\mathbb R^d}
	\chi_s(z;A,x,\mu)\,\nu(dx)
	\ne
	\int_{\mathbb R^d}
	\chi_s(z;A,x,\mu)\,\nu'(dx).
	$
\end{as}

\begin{rmk}
	Assumption~\ref{ass:choice-separation} asserts injectivity of the map
	$
	\nu
	\mapsto
	\left(
	\int_{\mathbb R^d}
	\chi_s(z;A,x,\mu)\,\nu(dx)
	\right)_{(s,A,z,\mu)},
	$
	restricted to
	$\mathcal Q$.
	Equivalently,
	$\nu\neq\nu'$
	implies
	$\mathsf C_s(\cdot;\cdot\mid\nu,\mu)
	\neq
	\mathsf C_s(\cdot;\cdot\mid\nu',\mu)$
	for some admissible
	$(s,A,z,\mu)$.
	No single
	$(A,\mu)$
	is required to separate
	$\mathcal Q$.
\end{rmk}

\begin{prop}
	\label{prop:detectability-preference-feedback}
	Suppose Assumption~\ref{ass:choice-separation} holds. Fix
	$0\le s<r\le t$ and an admissible initial law $\lambda$. If two admissible measure flows $m$ and $m'$ satisfy
	$
	\mathsf P_{s,r}^{m,X}\lambda
	\ne
	\mathsf P_{s,r}^{m',X}\lambda,
	$
	then DDU exhibits behaviorally relevant preference feedback on
	$[s,r]$.
\end{prop}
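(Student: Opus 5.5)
The argument applies Assumption~\ref{ass:choice-separation} to the two terminal $X$-marginals and then rewrites the resulting difference in the form required by Definition~\ref{def:behaviorally-relevant-preference-feedback}. Set $\nu:=\mathsf P_{s,r}^{m,X}\lambda$ and $\nu':=\mathsf P_{s,r}^{m',X}\lambda$.

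First, $\nu,\nu'\in\mathcal Q$. Both are laws of latent states generated from the admissible initial law $\lambda$ under the admissible flows $m,m'$, and $\mathcal Q$ is by definition the set of latent-state distributions reachable this way. Both also have finite second moments, by the standard moment bound for the frozen-flow system under the Lipschitz and linear-growth conditions of Assumption~1.

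Second, since $\nu\neq\nu'$ by hypothesis, Assumption~\ref{ass:choice-separation} gives a time $s^*$, a menu $A$, an alternative $z\in A$ and an admissible distributional state $\mu^*$ with
\[
\int\chi_{s^*}(z;A,x,\mu^*)\,\nu(dx)\neq\int\chi_{s^*}(z;A,x,\mu^*)\,\nu'(dx).
\]

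Third, the $X$-marginal of $\mathsf P^m_{s,r}\lambda$ is $\nu$ and $\chi_r(z;A,x,\bar\mu)$ depends only on $x$. Integrating out $y$ therefore gives, for any fixed $\bar\mu$,
\[
\mathsf D^{\bar\mu}_{s,r}(z;A\mid\lambda,m)=\int\chi_r(z;A,x,\bar\mu)\,\nu(dx),
\]
and the same identity holds with $m'$ and $\nu'$. Choose $\bar\mu$ as the constant flow equal to $\mu^*$. If $s^*=r$, the inequality from the second step is exactly $\mathsf D^{\bar\mu}_{s,r}(z;A\mid\lambda,m)\neq\mathsf D^{\bar\mu}_{s,r}(z;A\mid\lambda,m')$, which is behaviorally relevant preference feedback on $[s,r]$.

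The main obstacle is that Assumption~\ref{ass:choice-separation} lets the separating time $s^*$ be arbitrary, while $\mathsf D_{s,r}$ evaluates felicity at the terminal date $r$. I would first read the assumption as applying at the relevant date $r$, the reading under which it is used for detection on $[s,r]$, and state this explicitly. If that is not acceptable, the fallback is the following. The separating functional depends on $(s^*,\mu^*)$ only through the indicator $x\mapsto\chi_{s^*}(z;A,x,\mu^*)$. When $u$ is time-homogeneous in the ranking, or when the admissible $\bar\mu$ in Definition~\ref{def:behaviorally-relevant-preference-feedback} ranges over enough states that $\chi_r(\cdot;\cdot,\cdot,\bar\mu)$ reproduces those indicators, the separating statistic can be realized at date $r$. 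Either way, this is the single point where the proof needs care; the remaining steps are direct substitution.
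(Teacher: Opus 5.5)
Your argument is the paper's: take $\nu,\nu'$ to be the two terminal $X$-marginals, invoke Assumption~\ref{ass:choice-separation} to obtain a separating ranking indicator, and identify its integrals against $\nu$ and $\nu'$ with $\mathsf D^{\bar\mu}_{s,r}(z;A\mid\lambda,m)$ and $\mathsf D^{\bar\mu}_{s,r}(z;A\mid\lambda,m')$. The paper's proof silently takes the separating date to be $r$ (it writes $\chi_r$ directly when applying the assumption), which is exactly the reading you make explicit, so the mismatch you flag is an imprecision in how Assumption~\ref{ass:choice-separation} is stated (its date-$r$ form appears later as Assumption~\ref{ass:behavioral-separation}(2)) rather than a gap in your proof; your fallback is not needed, and under the literal reading the claim can in fact fail, e.g.\ if rankings at date $r$ do not depend on $x$.
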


\noindent
Proof is in Appendix~\ref{app:detectability-preference-feedback}.
Proposition~\ref{prop:detectability-preference-feedback}
establishes the implication
$
m\neq m'
\Longrightarrow
\mathsf P_{s,r}^{m}\neq
\mathsf P_{s,r}^{m'}
\Longrightarrow
\mathsf D_{s,r}^{\,\bar\mu}(\cdot;\cdot\mid\lambda,m)
\neq
\mathsf D_{s,r}^{\,\bar\mu}(\cdot;\cdot\mid\lambda,m'),
$
under Assumption~\ref{ass:choice-separation}.
We next impose regularity on the mapping
$
m\mapsto
\mathsf D.
$
For
$\mu\in\mathcal P_2(\mathbb R^d)$,
define
$
m_2(\mu)
:=
\int_{\mathbb R^d}|x|^2\,\mu(dx).
$

\begin{as}
	\label{ass:utility-margin}
	There exist constants $L_u,\kappa>0$ such that
	\begin{enumerate}
		\item for every $s$, $z$, $x$, $x'$, $\mu$, and $\mu'$,
		$
		\left|
		u(s,z,x,\mu)-u(s,z,x',\mu')
		\right|
		\le
		L_u\left(
		|x-x'|+W_2(\mu,\mu')
		\right);
		$
		\item for every admissible $s$, $z\ne a$, $\nu$, $\mu$, and
		$\eta>0$,
		$
		\nu\left(
		\left\{
		x:
		|\Delta_{za}(s,x,\mu)|\le\eta
		\right\}
		\right)
		\le
		\kappa\eta.
		$
	\end{enumerate}
\end{as}
\noindent The second condition bounds the probability mass near indifference. It is a dynamic analogue of the standard margin condition used to control discontinuous choice indicators.

\begin{lem} \label{lem:choice-operator-stability}
Suppose Assumptions~\ref{ass:no-ties} and
\ref{ass:utility-margin} hold. Let $A\in\mathcal K(Z)$ and $z\in A$. For each $\nu,\nu',\mu,\mu'\in\mathcal P_2(\mathbb R^d)$,
	\[
	\left|
	\mathsf C_s(z;A\mid\nu,\mu)
	-
	\mathsf C_s(z;A\mid\nu',\mu')
	\right|
	\le
	K_A
	\left(
	W_2(\nu,\nu')^{1/2}
	+
	W_2(\mu,\mu')^{1/2}
	\right),
	\]
	where $K_A<\infty$ depends only on $|A|$, $L_u$, and $\kappa$.
	Consequently, the equilibrium choice kernel in
	\eqref{eq:equilibrium-choice-kernel} is continuous in the CPD.
\end{lem}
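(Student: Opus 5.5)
The plan is to split the difference with the triangle inequality,
\[
\bigl|\mathsf C_s(z;A\mid\nu,\mu)-\mathsf C_s(z;A\mid\nu',\mu')\bigr|
\le
\bigl|\mathsf C_s(z;A\mid\nu,\mu)-\mathsf C_s(z;A\mid\nu,\mu')\bigr|
+
\bigl|\mathsf C_s(z;A\mid\nu,\mu')-\mathsf C_s(z;A\mid\nu',\mu')\bigr|,
\]
and bound the two terms separately. The first term is a felicity perturbation with the composition measure held fixed. The second is a composition perturbation with the felicity argument held fixed. Both bounds use the same device. The indicator $\chi_s(z;A,x,\mu)$ can change only when some margin $\Delta_{za}(s,x,\cdot)$, $a\in A\setminus\{z\}$, is small. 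The margin condition in Assumption~\ref{ass:utility-margin}(2) controls the mass of such near-indifference sets linearly in the width of the band, and the square root comes from optimizing that width against a Lipschitz perturbation.

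For the felicity term, Assumption~\ref{ass:utility-margin}(1) gives $|\Delta_{za}(s,x,\mu)-\Delta_{za}(s,x,\mu')|\le 2L_uW_2(\mu,\mu')$ for every $x$. If $\chi_s(z;A,x,\mu)\neq\chi_s(z;A,x,\mu')$, then some $a\ne z$ has $\Delta_{za}(s,x,\mu)$ and $\Delta_{za}(s,x,\mu')$ of different sign (ties are excluded $\nu$-a.s. by Assumption~\ref{ass:no-ties}). Hence $|\Delta_{za}(s,x,\mu)|\le 2L_uW_2(\mu,\mu')$. A union bound over the at most $|A|-1$ competitors and the margin condition give the bound $2(|A|-1)\kappa L_uW_2(\mu,\mu')$. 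When $W_2(\mu,\mu')\le1$ this is dominated by a multiple of $W_2(\mu,\mu')^{1/2}$. When $W_2(\mu,\mu')>1$ the trivial bound $1$ suffices. So the first term is at most $\max(1,2(|A|-1)\kappa L_u)\,W_2(\mu,\mu')^{1/2}$.

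For the composition term, let $\pi$ be an optimal coupling of $\nu,\nu'$ and write $\delta=W_2(\nu,\nu')$. Then the second term equals $\bigl|\int(\chi_s(z;A,x,\mu')-\chi_s(z;A,x',\mu'))\,\pi(dx,dx')\bigr|$. This is at most the $\pi$-probability that the indicators differ. Fix a threshold $\eta>0$ and split that event according to whether $|x-x'|>\eta/(2L_u)$ or not.
\begin{itemize}
\item On the first part, Chebyshev under $\pi$ gives probability at most $4L_u^2\delta^2/\eta^2$.
\item On the second part, the indicators can differ only if some margin $|\Delta_{za}(s,x,\mu')|\le\eta$, because the margin moves by at most $2L_u|x-x'|\le\eta$. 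The $\nu$-mass of this set is at most $(|A|-1)\kappa\eta$.
\end{itemize}
Choosing $\eta=\delta^{2/3}$ yields a bound of order $\delta^{2/3}$, and in particular at most a multiple of $\delta^{1/2}$ whenever $\delta\le1$; the case $\delta>1$ is again trivial. A first-moment Markov bound in place of Chebyshev would give exactly $\delta^{1/2}$ with $\eta\propto\delta^{1/2}$, which matches the stated exponent and shows where it comes from. Combining the two pieces gives a constant $K_A$ depending only on $|A|,L_u,\kappa$.

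The main obstacle is the composition step. The margin condition must be applied to $\nu$, the first marginal of the coupling, with felicity argument $\mu'$. This is legitimate only if $(\nu,\mu')$ is an admissible pair in Assumption~\ref{ass:utility-margin}(2). I would handle this by reading the lemma for admissible measures, as the assumption is stated. The last claim, continuity of the equilibrium kernel, follows by setting $\nu=\mu$ and $\nu'=\mu'$ in \eqref{eq:equilibrium-choice-kernel}, which gives Hölder-$1/2$ continuity of $\mu\mapsto\mathsf C_s(z;A\mid\mu,\mu)$ in $W_2$.
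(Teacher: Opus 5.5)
Your proof is correct, but it takes a different route from the paper's.

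\textbf{The paper's route.} The paper does not use a triangle-inequality split. It takes one optimal coupling $(X,X')$ of $\nu,\nu'$ and compares $G_a=\Delta_{za}(s,X,\mu)$ with $G_a'=\Delta_{za}(s,X',\mu')$ directly. The steps are:
\begin{enumerate}
\item the bound $|G_a-G_a'|\le 2L_u\bigl(|X-X'|+W_2(\mu,\mu')\bigr)$;
\item the inclusion $\{\operatorname{sgn}G_a\neq\operatorname{sgn}G_a'\}\subseteq\{|G_a|\le\eta\}\cup\{|G_a-G_a'|>\eta\}$;
\item the margin condition together with Markov's first-moment inequality;
\item optimization of $\kappa\eta+2L_u\bigl(W_2(\nu,\nu')+W_2(\mu,\mu')\bigr)/\eta$ over $\eta$.
\end{enumerate}
This is the variant you mention at the end of your composition step.

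\textbf{What each route buys.} The joint argument gives the $1/2$ exponent for all distances at once. It needs no case split, yields an explicit constant of order $(|A|-1)\sqrt{L_u\kappa}$, and invokes the margin condition only at the single pair $(\nu,\mu)$.

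Your split gives more information about the two channels. The felicity channel is actually Lipschitz in $W_2(\mu,\mu')$, because the perturbation shifts every margin uniformly in $x$. With Chebyshev, the composition channel is H\"older-$2/3$. So the stated exponent $1/2$ comes from the first-moment bound in the joint argument, not from the problem itself. The price is the trivial-bound case split for distances above $1$ and a second use of the margin condition.

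\textbf{The admissibility concern.} The obstacle you flag goes away if you apply the margin condition to the second marginal. On $\{|x-x'|\le\eta/(2L_u)\}$, a sign change of $\Delta_{za}(s,\cdot,\mu')$ between $x$ and $x'$ forces both $|\Delta_{za}(s,x,\mu')|\le\eta$ and $|\Delta_{za}(s,x',\mu')|\le\eta$. You may therefore bound that mass under $\nu'$. Your proof then uses the margin condition only at the diagonal pairs $(\nu,\mu)$ and $(\nu',\mu')$, which are the same pairs the equilibrium corollary needs.

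\textbf{A minor point.} Assumption~\ref{ass:no-ties} is not needed in your felicity step. If one margin is $>0$ and the other $\le 0$, then $|\Delta_{za}(s,x,\mu)|\le|\Delta_{za}(s,x,\mu)-\Delta_{za}(s,x,\mu')|$ holds pointwise.
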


\noindent
Proof is in Appendix~\ref{app:choice-operator-stability}.
Lemma~\ref{lem:choice-operator-stability}
establishes
$
(\nu,\mu)\mapsto
\mathsf C_s(\cdot;\cdot\mid\nu,\mu)
$
is
$W_2^{1/2}$-continuous.
For
$m,m'\in C([0,t];\mathcal P_2(\mathbb R^d))$,
define
$
d_{s,r}(m,m')
:=
\sup_{s\le v\le r}
W_2(m_v,m_v'),
$
which induces the continuity metric for the dynamic choice operator.

\begin{prop}
	\label{prop:dynamic-choice-stability}
	Suppose Assumptions~1, \ref{ass:no-ties}, and
	\ref{ass:utility-margin} hold, together with the strengthened
	stability conditions in the Supplementary Appendix. Then, for every
	$0\le s<r\le t$, finite menu $A$, and $z\in A$, there exists
	$K_{s,r,A}<\infty$ such that
	\[
	\left|
	\mathsf D_{s,r}(z;A\mid\lambda,m)
	-
	\mathsf D_{s,r}(z;A\mid\lambda',m')
	\right|
	\le
	K_{s,r,A}
	\left(
	W_2(\lambda,\lambda')^{1/2}
	+
	d_{s,r}(m,m')^{1/2}
	\right).
	\]
	The same conclusion holds for
	$\mathsf D_{s,r}^{\,\bar\mu}$, uniformly over $\bar\mu$ in subsets of
	$\mathcal P_2(\mathbb R^d)$ with uniformly bounded second moments.
\end{prop}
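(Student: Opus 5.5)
The plan is to reduce the dynamic estimate to the static one in Lemma~\ref{lem:choice-operator-stability}. By Definition~\ref{def:dynamic-distributional-response} and the definition of $\mathsf P^{m,X}_{s,r}$,
\[
\mathsf D_{s,r}(z;A\mid\lambda,m)=\mathsf C_r\bigl(z;A\mid \mathsf P^{m,X}_{s,r}\lambda,\,m_r\bigr),
\qquad
\mathsf D^{\,\bar\mu}_{s,r}(z;A\mid\lambda,m)=\mathsf C_r\bigl(z;A\mid \mathsf P^{m,X}_{s,r}\lambda,\,\bar\mu\bigr).
\]
Lemma~\ref{lem:choice-operator-stability} (with Assumptions~\ref{ass:no-ties} and \ref{ass:utility-margin}) then gives
\[
|\mathsf D_{s,r}(z;A\mid\lambda,m)-\mathsf D_{s,r}(z;A\mid\lambda',m')|
\le K_A\Bigl(W_2\bigl(\mathsf P^{m,X}_{s,r}\lambda,\mathsf P^{m',X}_{s,r}\lambda'\bigr)^{1/2}+W_2(m_r,m_r')^{1/2}\Bigr).
\]
The second term is bounded by $d_{s,r}(m,m')^{1/2}$. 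For $\mathsf D^{\,\bar\mu}_{s,r}$ the $\mu$-argument is identical on both sides, so only the first term appears. Since $K_A$ depends only on $|A|,L_u,\kappa$, the bound is uniform in $\bar\mu$. The proviso on uniformly bounded second moments then only serves to guarantee that the margin condition is applied to admissible pairs.

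The remaining task, and the main obstacle, is a Lipschitz estimate for the frozen-flow transition:
\[
W_2\bigl(\mathsf P^{m,X}_{s,r}\lambda,\mathsf P^{m',X}_{s,r}\lambda'\bigr)\le C_{s,r}\bigl(W_2(\lambda,\lambda')+d_{s,r}(m,m')\bigr).
\]
To prove it, I would take an optimal coupling $(X_s,Y_s),(X_s',Y_s')$ of $\lambda,\lambda'$. I would then drive both frozen-flow systems with the same $(W,B)$ on a common probability space. This is a synchronous coupling, and it requires pathwise (strong) solutions of the frozen system. Because $m$ is deterministic, the frozen system is a classical SDE with coefficients that are Lipschitz in $(x,y)$ by Assumption~1, so strong existence and pathwise uniqueness hold. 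This is presumably what the ``strengthened stability conditions'' of the Supplementary Appendix secure in the general-volatility case, where $\Sigma_Y$ depends on $(y,\mu)$.

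Next I would apply It\^o's formula to $|X_v-X_v'|^2+|Y_v-Y_v'|^2$. Assumption~1 bounds every coefficient difference by $L(|X_v-X'_v|+|Y_v-Y'_v|+W_2(m_v,m'_v))$. Burkholder--Davis--Gundy and Gronwall then give
\[
\mathbb E\sup_{s\le v\le r}\bigl(|X_v-X_v'|^2+|Y_v-Y_v'|^2\bigr)\le C\bigl(W_2(\lambda,\lambda')^2+(r-s)\,d_{s,r}(m,m')^2\bigr).
\]
The coupled pair $(X_r,X_r')$ then bounds $W_2$ of the $X$-marginals.

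Combining, and using $\sqrt{a+b}\le\sqrt a+\sqrt b$, yields the stated bound with $K_{s,r,A}=K_A(1+C_{s,r}^{1/2})$. One point needs checking: Lemma~\ref{lem:choice-operator-stability} requires its measures to be admissible for the margin condition. I would verify that $\mathsf P^{m,X}_{s,r}\lambda$ lies in the reachable class $\mathcal Q$, with second moments controlled by linear growth.
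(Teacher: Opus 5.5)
Your proposal is correct and follows essentially the paper's proof. Both arguments couple the two frozen-flow systems synchronously and derive an $L^2$ Gronwall/BDG bound in terms of $W_2(\lambda,\lambda')$ and $\int_s^r W_2^2(m_v,m_v')\,dv$, then apply the margin-condition argument with optimization over $\eta$ and a union bound over $a\in A\setminus\{z\}$. The only difference is organizational: you invoke Lemma~\ref{lem:choice-operator-stability} as a black box through $W_2$ of the time-$r$ marginals, whereas the paper reruns that lemma's argument directly on the coupled pair $(X_r^m,X_r^{m'})$. Your justification of the coupling, via strong well-posedness of the frozen SDE for deterministic $m$ under Assumption~1, is if anything more explicit than the paper's appeal to the ``strengthened stability conditions.''
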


\noindent
The proof is given in Supplementary Appendix~SB.1.
Proposition~\ref{prop:dynamic-choice-stability}
establishes
$
(\lambda,m)
\longmapsto
\mathsf D_{s,r}(\cdot;\cdot\mid\lambda,m)
$
is
\(
(W_2,d_{s,r})^{1/2}
\)-continuous,
and the same estimate holds for
$
\mathsf D_{s,r}^{\,\bar\mu}
$
uniformly over
$
\{\bar\mu:m_2(\bar\mu)\le R\}
$
for every
$
R<\infty.
$
Hence
$
(\lambda^n,m^n)\to(\lambda,m)
$
implies
$
\mathsf D_{s,r}(\cdot;\cdot\mid\lambda^n,m^n)
\to
\mathsf D_{s,r}(\cdot;\cdot\mid\lambda,m).
$
The following assumption imposes injectivity of the mapping
$
m\mapsto
\mathsf P^m,
$
thereby excluding distinct conditional-law flows with identical transition kernels.

\begin{as}
	\label{ass:dynamic-nonredundancy}
	Suppose
	$(b,\sigma,\sigma_0)(\cdot,\mu)\neq(b,\sigma,\sigma_0)(\cdot,\mu')$
	for some
	$\mu\neq\mu'$
	on the reachable state space.
	Then there exist
	$0\le s<r\le t$,
	an admissible initial law
	$\lambda$,
	and admissible measure flows
	$m,m'$
	such that
	$
	\mathsf P_{s,r}^{m,X}\lambda
	\neq
	\mathsf P_{s,r}^{m',X}\lambda.
	$
\end{as}

\noindent
Assumption~\ref{ass:dynamic-nonredundancy}
requires the mapping
$
m
\mapsto
\mathsf P_{s,r}^{m,X}
$
to be nondegenerate on the reachable class.
Equivalently,
$
(b,\sigma,\sigma_0)
\not\equiv
(\tilde b,\tilde\sigma,\tilde\sigma_0)
$
implies
$
\mathsf P_{s,r}^{m,X}
\neq
\mathsf P_{s,r}^{m',X}
$
for some admissible
$(s,r,\lambda,m,m')$.
For
$(\sigma,\sigma_0)$,
the relevant behavioral object is the induced covariance operator, not a particular matrix factorization.

\begin{prop}
	\label{prop:structural-behavioral-feedback}
	Suppose Assumptions~\ref{ass:no-ties},
	\ref{ass:choice-separation}, and
	\ref{ass:dynamic-nonredundancy} hold.
	Then
	$
	u(\cdot,\mu)\neq u(\cdot,\mu')
	$
	for some
	$\mu\neq\mu'$
	implies choice-relevant felicity feedback.
	Moreover,
	$
	(b,\sigma,\sigma_0)(\cdot,\mu)
	\neq
	(b,\sigma,\sigma_0)(\cdot,\mu')
	$
	for some
	$\mu\neq\mu'$
	on the reachable state space implies behaviorally relevant preference feedback.
	Conversely,
	$
	(u,b,\sigma,\sigma_0)
	=
	(u,b,\sigma,\sigma_0)(s,x,y)
	$
	is independent of
	$\mu$
	if and only if DDU satisfies behavioral distributional invariance.
\end{prop}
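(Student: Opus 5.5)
The plan is to prove the three assertions separately. The first two are forward implications that reduce to results already stated. The converse then combines Proposition~\ref{prop:dru-distributional-invariance} with the contrapositives of the first two.

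\emph{Felicity channel.} Definition~\ref{def:choice-relevant-felicity-feedback} depends only on the ordinal content of $u$, so a purely cardinal change such as the one in Lemma~\ref{lem:common-distributional-term} cannot produce choice-relevant feedback. The hypothesis $u(\cdot,\mu)\neq u(\cdot,\mu')$ therefore has to be read modulo behavioral equivalence. It will mean that for some $s,z,a$ the sign of $\Delta_{za}(s,x,\mu)$ differs from that of $\Delta_{za}(s,x,\mu')$ on a set of $x$ that is not null under every admissible composition measure; this is the normalization already implicit in the remark following Definition~\ref{def:choice-relevant-felicity-feedback}. Under that reading, the ``if and only if'' part of Lemma~\ref{lem:common-distributional-term} gives failure of behavioral felicity-neutrality at once. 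Concretely, take the binary menu $A=\{z,a\}$ and write $S_\pm$ for the two flip regions. Then $\mathsf C_s(z;A\mid\nu,\mu)-\mathsf C_s(z;A\mid\nu,\mu')=\nu(S_-)-\nu(S_+)$. Choosing an admissible $\nu$ that charges one flip region more than the other makes this difference nonzero; Assumption~\ref{ass:no-ties} ensures that boundary sets do not contribute.

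\emph{Preference channel.} Suppose $(b,\sigma,\sigma_0)(\cdot,\mu)\neq(b,\sigma,\sigma_0)(\cdot,\mu')$ on the reachable state space. Assumption~\ref{ass:dynamic-nonredundancy} then supplies $s<r$, an admissible $\lambda$ and admissible flows $m,m'$ with $\mathsf P^{m,X}_{s,r}\lambda\neq\mathsf P^{m',X}_{s,r}\lambda$. Both laws lie in the reachable class $\mathcal Q$. Proposition~\ref{prop:detectability-preference-feedback}, which uses Assumption~\ref{ass:choice-separation}, then yields $\bar\mu,A,z$ with $\mathsf D^{\bar\mu}_{s,r}(z;A\mid\lambda,m)\neq\mathsf D^{\bar\mu}_{s,r}(z;A\mid\lambda,m')$. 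This is exactly Definition~\ref{def:behaviorally-relevant-preference-feedback}.

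\emph{Converse.} The direction ``invariance $\Rightarrow$ $\mu$-independence'' is the contrapositive of the two steps above. Any $\mu$-dependence of $u$ (ordinally) or of $(b,\sigma,\sigma_0)$ produces $\mathcal F_s$ or $\mathcal P_{s,r}$, which contradicts $\neg\mathcal F_s\wedge\neg\mathcal P_{s,r}$. For ``$\mu$-independence $\Rightarrow$ invariance'', first note that $\mathsf C_s(z;A\mid\nu,\mu)$ is trivially constant in $\mu$ when $u$ is free of $\mu$. For $\mathsf D^{\bar\mu}_{s,r}$ the plan is to invoke Proposition~\ref{prop:dru-distributional-invariance}.

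I expect the main obstacle in this last step. Proposition~\ref{prop:dru-distributional-invariance} assumes the full tuple, including $(h,\Sigma_Y)$, is $\mu$-free. The present statement only constrains $(u,b,\sigma,\sigma_0)$, yet $X^m$ depends on $Y^m$ through the arguments of $b,\sigma,\sigma_0$. If $h$ or $\Sigma_Y$ depend on $m$, the $X$-marginal $\mathsf P^{m,X}_{s,r}\lambda$ could still move with $m$.

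My first attempt would be to interpret the converse in the setting of Proposition~\ref{prop:ddu-nesting}, where all coefficients are $\mu$-free, and deduce invariance directly. Only then would I try to weaken this to the stated form. One way is to observe that, under weak uniqueness (Theorem~\ref{thm:ddu-general-volatility}), $\mathsf D^{\bar\mu}$ depends on $m$ only through the law of $X^m_r$. One would then need to show this law is unaffected by $m$-dependence confined to the observation equation, for instance when $(b,\sigma,\sigma_0)$ do not depend on $y$, or via the innovation reduction available when $\Sigma_Y$ is constant as in Proposition~\ref{prop:ddu-wellposed}. If this cannot be done in general, I would state the converse with $(h,\Sigma_Y)$ also $\mu$-free, matching Proposition~\ref{prop:dru-distributional-invariance}.
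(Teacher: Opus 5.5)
Your route for the two forward implications is the paper's. The felicity claim is read off Definition~\ref{def:choice-relevant-felicity-feedback}, and the preference claim is Assumption~\ref{ass:dynamic-nonredundancy} followed by Proposition~\ref{prop:detectability-preference-feedback}.

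Your ordinal reading of the felicity hypothesis is a necessity, not a convenience. Take $u=\bar u+\alpha(s,x,\mu)$ with $\alpha$ independent of $z$ but genuinely $\mu$-dependent. Then $u(\cdot,\mu)\neq u(\cdot,\mu')$, yet $u$ is felicity-neutral by Lemma~\ref{lem:common-distributional-term}. So the first assertion, and the direction ``invariance $\Rightarrow$ $\mu$-independence'' of the converse, are false as literally stated. The paper's ``follows directly from the definition'' passes over this. The paper's proof also argues only ``$\mu$-independence $\Rightarrow$ invariance''. Your contrapositive supplies the other direction, in the only form in which it can hold: ordinal $\mu$-independence of $u$ outside admissible-null sets, and $\mu$-independence of $(b,\sigma,\sigma_0)$ on the reachable state space.

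\textbf{Two repairs to the forward steps.}

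First, a non-null flip set gives an admissible $\nu_0$ with $\nu_0(S_+\cup S_-)>0$, not one with $\nu_0(S_+)\neq\nu_0(S_-)$. The latter is what part~1 of Assumption~\ref{ass:behavioral-separation} would provide, and it is not assumed here. You do not need it, because Definition~\ref{def:choice-relevant-felicity-feedback} ranges over all $\nu\in\mathcal P_2(\mathbb R^d)$. Normalize the restriction of $\nu_0$ to whichever of $S_\pm$ it charges; the two binary-menu kernels then differ by $1$. In the converse you should then contradict the $\mathcal P_2$-wide property $\neg\mathcal F_s\wedge\neg\mathcal P_{s,r}$ of Definition~\ref{def:behavioral-distributional-feedback}, not the admissible-domain version of Definition~\ref{def:behavioral-invariance}.

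Second, Assumption~\ref{ass:choice-separation} supplies a separating date that need not be the comparison date $r$ at which $\mathsf D^{\bar\mu}_{s,r}$ evaluates $\chi_r$. The proof of Proposition~\ref{prop:detectability-preference-feedback} silently takes that date to be $r$. You and the paper both inherit this, so the separation should be required at date $r$.

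\textbf{The converse.} The obstacle you flag in ``$\mu$-independence $\Rightarrow$ invariance'' is genuine. It is exactly the step the paper asserts without argument: ``if $b,\sigma,\sigma_0$ are also independent of the measure argument, then the $X$-transition law is unchanged''.

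Here is a counterexample. Take $d=d_Y=d_W=1$, $b(s,x,y,\mu)=y$, $\sigma\equiv1$, $\sigma_0\equiv0$, $\Sigma_Y\equiv1$, and $h(s,x,y,\mu)=\int z\,\mu(dz)$; all of these are compatible with Assumption~1. For a frozen flow $m$ with means $\bar m_q$, the law of $X^m_r$ is a fixed law with everywhere positive density, translated by
\[
\int_s^r\!\int_s^v\bar m_q\,dq\,dv .
\]
So for a $\mu$-free threshold felicity, $\mathsf D^{\bar\mu}_{s,r}$ moves with $m$, even though $(u,b,\sigma,\sigma_0)$ is $\mu$-free. Assumption~\ref{ass:dynamic-nonredundancy} holds vacuously. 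Assumptions~\ref{ass:no-ties} and~\ref{ass:choice-separation} can be arranged by taking initial laws with densities and letting the threshold sweep $\mathbb R$ as the date varies.

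Because $\Sigma_Y$ is constant in this example, it also shows that the innovation reduction behind Proposition~\ref{prop:ddu-wellposed} cannot rescue the claim. The $m$-dependence sits in the drift of $Y$ (equivalently, in the Girsanov density) and reaches $X$ through $b(\cdot,y)$.

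Your other two repairs are correct:
\begin{itemize}
\item If $(h,\Sigma_Y)$ is also $\mu$-free, Proposition~\ref{prop:dru-distributional-invariance} applies verbatim.
\item If $(b,\sigma,\sigma_0)$ is free of $y$ as well as $\mu$, then $X^m$ solves an autonomous Lipschitz SDE whose law is $m$-free by uniqueness.
\end{itemize}
The converse should be stated under one of these hypotheses, most naturally the first.
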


\noindent Proof is in Appendix~\ref{app:structural-behavioral-feedback}.
Proposition~\ref{prop:structural-behavioral-feedback}
identifies the implication
$
(u,b,\sigma,\sigma_0)
\longmapsto
(\mathsf C,\mathsf D),
$
with
$
u(\cdot,\mu)\sim u(\cdot,\mu')
$
whenever
$
\Delta_{za}(\cdot,\mu)
=
\Delta_{za}(\cdot,\mu')
$
for every
$z,a\in Z$.
For observed histories
$y_{\cdot\wedge s}$
and
$y'_{\cdot\wedge s}$,
write
$
y_{\cdot\wedge s}
\sim_{\mathrm{CPD}}
y'_{\cdot\wedge s}
$
iff
$
\mathcal L(X_s\mid y_{\cdot\wedge s})
=
\mathcal L(X_s\mid y'_{\cdot\wedge s}),
$
and
$
y_{\cdot\wedge s}
\sim_{\mathrm{Comp}}
y'_{\cdot\wedge s}
$
iff the induced distributions of
$X_s$
coincide, irrespective of the associated counterfactual distributional states.

\begin{lem}
	\label{lem:filter-state-sufficiency}
	Suppose system~\eqref{eq:ddu-system} is Markovian. Let
	$y_{\cdot\wedge s}$,
	$y'_{\cdot\wedge s}$
	satisfy
	$
	(Y_s,\mu_s)(y_{\cdot\wedge s})
	=
	(Y_s,\mu_s)(y'_{\cdot\wedge s}).
	$
	Then, for every admissible menu policy
	$\pi$,
	$
	\mathcal L^\pi
	\!\left(
	(Z_r)_{r\in[s,t]}
	\,\middle|\,
	y_{\cdot\wedge s}
	\right)
	=
	\mathcal L^\pi
	\!\left(
	(Z_r)_{r\in[s,t]}
	\,\middle|\,
	y'_{\cdot\wedge s}
	\right).
	$
\end{lem}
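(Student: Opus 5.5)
The plan is to show that, when the system is Markovian, the pair $(Y_s,\mu_s)$ together with the conditional law of $X_s$ determines the whole continuation of $(X,Y,\mu)$ on $[s,t]$. The conditional law of future choices then follows, because choices are functionals of that continuation and of the menu policy.

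\emph{Step 1: the initial state for the continuation.} Under the Markov hypothesis, the conditional law of $(X_s,Y_s)$ given $y_{\cdot\wedge s}$ equals $\mu_s(y_{\cdot\wedge s})\otimes\delta_{Y_s(y_{\cdot\wedge s})}$. This holds because $\mu_s$ is by Definition~\ref{def:conditional-preference-distribution} the conditional law of $X_s$, and $Y_s$ is $\mathcal F_s^Y$-measurable. Lemma~\ref{lem:conditional-sufficiency} lets us replace every conditional expectation of a function of $X_s$ by its integral against $\mu_s$. The hypothesis $(Y_s,\mu_s)(y)=(Y_s,\mu_s)(y')$ therefore gives the same initial law $\lambda:=\mu_s\otimes\delta_{Y_s}$ for the two histories.

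\emph{Step 2: the continuation system depends on the past only through $\lambda$.} On $[s,t]$, the future noise increments $(W_v-W_s,B_v-B_s)_{v\ge s}$ are independent of $\mathcal F_s$. The coefficients depend only on $(v,X_v,Y_v,\mu_v)$. The filter $\mu_v$ for $v\ge s$ can be written as the filter of the shifted system started from initial law $\lambda$: conditioning on $\mathcal F_v^Y=\mathcal F_s^Y\vee\sigma(Y_r:s\le r\le v)$ reduces, by the Markov property and Step 1, to conditioning on the new observations given the prior $\mu_s$. The continuation $(X_v,Y_v,\mu_v)_{v\ge s}$ therefore solves the conditional McKean-Vlasov system \eqref{eq:ddu-system} restarted at time $s$ from $\lambda$. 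By the weak uniqueness part of Proposition~\ref{prop:ddu-wellposed} (or of Theorem~\ref{thm:ddu-general-volatility} in the general-volatility case), its law on $C([s,t];\mathbb R^d)\times C([s,t];\mathbb R^{d_Y})\times C([s,t];\mathcal P_2(\mathbb R^d))$ is uniquely determined by $\lambda$. It is thus the same under both histories.

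\emph{Step 3: from the state path to choices.} The menu policy $\pi$ is admissible, so I take it to be adapted to the analyst-visible state and to act Markovially through $(Y,\mu)$ and exogenous randomization. Choices are then $Z_r=\arg\max_{z\in A_r}u(r,z,X_r,\mu_r)$, which is a.s.\ well defined by Assumption~\ref{ass:no-ties}. The map $(X,Y,\mu)_{[s,t]}\mapsto(Z_r)_{r\in[s,t]}$ is a fixed measurable functional, the same for both histories. Pushing forward the common law from Step 2 gives equality of the conditional laws of $(Z_r)_{r\in[s,t]}$.

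\emph{Main obstacle.} The delicate step is Step 2, specifically the claim that the post-$s$ filter is the filter of the restarted problem with prior $\mu_s$. Because of the feedback, $\mu_v$ depends on the full observation history, and the Markov hypothesis must be used to ensure that this history enters only through $\mu_s$. I would first establish this in the constant-volatility setting of Proposition~\ref{prop:ddu-wellposed}. There I would use a Kallianpur--Striebel representation under the reference measure: the unnormalized filter started at time $s$ from $\mu_s$ solves the Zakai equation driven by the new observations, and normalizing it recovers $\mu_v$. I would then appeal to the well-posedness of the restarted system from the Supplementary Appendix for the general case.
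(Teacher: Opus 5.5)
Your proposal follows the paper's proof essentially step for step. Both arguments use the conditional initial law $\mu_s\otimes\delta_{Y_s}$, the fact that under the Markov specification the continuation martingale problem depends on the past only through $(Y_s,\mu_s)$, weak uniqueness of that continuation problem, and a push-forward through a measurable choice map $\Psi^\pi$ determined by the policy. The step you flag as the main obstacle, that the post-$s$ filter is the filter of the restarted problem with prior $\mu_s$, is simply asserted in the paper. It follows in both the constant- and general-volatility cases from two-step conditioning under a regular conditional probability given $\mathcal F_s^Y$, since the future noise increments are independent of $\mathcal F_s$. The Kallianpur--Striebel/Zakai detour is therefore unnecessary, and well-posedness of the restarted system would not by itself supply this identification in the general case.
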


\noindent
The proof is given in Supplementary Appendix~SB.2.
Lemma~\ref{lem:filter-state-sufficiency}
establishes
$
y_{\cdot\wedge s}
\sim_{\mathrm{CPD}}
y'_{\cdot\wedge s}
\Longrightarrow
\mathcal L^\pi
(\,\cdot\mid y_{\cdot\wedge s})
=
\mathcal L^\pi
(\,\cdot\mid y'_{\cdot\wedge s}),
$
so that
$(Y_s,\mu_s)$
is a sufficient state variable for continuation behavior.
Under DRU,
$
\mu_s
\mapsto
\mathsf C_s
$
through the composition of
$X_s$
only; under DDU,
$
\mu_s
\mapsto
(u,b,\sigma,\sigma_0)
\mapsto
(\mathsf C,\mathsf D).
$
The next subsection characterizes these behavioral mappings.

\subsection{Behavioral Characterization}\label{sec:characterization}

Subsection~\ref{sec:feedback} establishes that distributional feedback is behaviorally relevant only insofar as it changes observable stochastic choice. We now characterize precisely when a DDU
representation is behaviorally distinguishable from a DRU representation. Fix
$0\le s<r\le t$,
$A\in\mathcal K(Z)$,
$z\in A$,
$\lambda\in\mathcal P_2(\mathbb R^d\times\mathbb R^{d_Y})$,
and admissible measure flows
$m,m'\in\mathcal M_{s,r}$.
Define
$\nu_r^{m,\lambda}:=P^{m,X}_{s,r}\lambda$
and
$\nu_r^{m',\lambda}:=P^{m',X}_{s,r}\lambda$.
Since,
$
D_{s,r}(z;A\mid\lambda,m)
=
C_r(z;A\mid\nu_r^{m,\lambda},m_r),
$
we obtain the decomposition
\begin{align}
	D_{s,r}(z;A\mid\lambda,m)
	-
	D_{s,r}(z;A\mid\lambda,m')
	&=
	\Big[
	C_r(z;A\mid\nu_r^{m,\lambda},m_r)
	-
	C_r(z;A\mid\nu_r^{m,\lambda},m_r')
	\Big]
	\nonumber\\
	&\quad+
	\Big[
	C_r(z;A\mid\nu_r^{m,\lambda},m_r')
	-
	C_r(z;A\mid\nu_r^{m',\lambda},m_r')
	\Big].
	\label{eq:behavioral-decomposition}
\end{align}
The first term of \eqref{eq:behavioral-decomposition} measures the behavioral effect of
changing the distributional state while holding the latent-state distribution fixed. The
second measures the effect of changing the latent-state distribution while holding the
felicity argument fixed. For $s\in[0,t]$
and
$\mu,\mu'\in\mathcal P_2(\mathbb R^d)$,
define
$
\Gamma_s(\mu,\mu')
:=
\bigcup_{z\neq a}
\left\{
x\in\mathbb R^d:
\operatorname{sgn}\Delta_{za}(s,x,\mu)
\neq
\operatorname{sgn}\Delta_{za}(s,x,\mu')
\right\},
$ where $\Delta_{za}(s,x,\mu)
:=
u(s,z,x,\mu)-u(s,a,x,\mu)$. Since, $Z$ is finite,
$x\notin\Gamma_s(\mu,\mu')$
if and only if
$\chi_s(z;A,x,\mu)=\chi_s(z;A,x,\mu')$
for every
$A\in\mathcal K(Z)$
and every
$z\in A$.
Consequently,
$\nu(\Gamma_s(\mu,\mu'))=0$
implies
$C_s(z;A\mid\nu,\mu)
=
C_s(z;A\mid\nu,\mu')$
for every admissible
$(A,z)$.

\begin{as}
	\label{ass:behavioral-separation}
		1. For every admissible
		$(s,\nu,\mu,\mu')$
		with
		$\nu(\Gamma_s(\mu,\mu'))>0$,
		there exist
		$A\in\mathcal K(Z)$
		and
		$z\in A$
		such that
		$
		\int_{\mathbb R^d}
		\left(
		\chi_s(z;A,x,\mu)
		-
		\chi_s(z;A,x,\mu')
		\right)
		\nu(dx)
		\neq
		0.
		$\\ 2. For every pair of distinct reachable distributions
		$\nu,\nu'\in\mathcal P_2(\mathbb R^d)$,
		there exist
		$A\in\mathcal K(Z)$,
		$z\in A$,
		and
		$\bar\mu\in\mathcal P_2(\mathbb R^d)$
		so that
		$
		\int_{\mathbb R^d}
		\chi_r(z;A,x,\bar\mu)\nu(dx)
		\neq
		\int_{\mathbb R^d}
		\chi_r(z;A,x,\bar\mu)\nu'(dx).
		$
	\end{as}

\noindent
Assumption~\ref{ass:behavioral-separation}
imposes injectivity of the mappings
$
\mu
\longmapsto
\mathsf C_s(\cdot;\cdot\mid\nu,\mu)
$
and
$
\nu
\longmapsto
\mathsf C_r(\cdot;\cdot\mid\nu,\bar\mu),
$
restricted to the admissible class.
Equivalently,
$
\mu\neq\mu'
$
implies
$
\mathsf C_s(\cdot;\cdot\mid\nu,\mu)
\neq
\mathsf C_s(\cdot;\cdot\mid\nu,\mu'),
$
whenever
$
\nu(\Gamma_s(\mu,\mu'))>0,
$
and
$
\nu\neq\nu'
$
implies
$
\mathsf C_r(\cdot;\cdot\mid\nu,\bar\mu)
\neq
\mathsf C_r(\cdot;\cdot\mid\nu',\bar\mu)
$
for some admissible
$(A,z,\bar\mu)$.
Thus Assumption~\ref{ass:behavioral-separation}
is the behavioral analogue of the injectivity condition in
Assumption~\ref{ass:choice-separation}.

\begin{definition}
	\label{def:behavioral-invariance}
	A DDU representation satisfies
	\emph{behavioral distributional invariance}
	on its reachable domain if $
	C_s(z;A\mid\nu,\mu)
	=
	C_s(z;A\mid\nu,\mu')
	$
	for every admissible
	$(s,A,z,\nu,\mu,\mu')$,
	and
	$
	D^{\bar\mu}_{s,r}(z;A\mid\lambda,m)
	=
	D^{\bar\mu}_{s,r}(z;A\mid\lambda,m')
	$ for every admissible $(s,r,A,z,\lambda,\bar\mu,m,m')$.
	\end{definition}

\noindent
Definition~\ref{def:behavioral-invariance}
imposes invariance of the operators
$
\mu
\longmapsto
C_s(\cdot;\cdot\mid\nu,\mu)
$
and
$
m
\longmapsto
D_{s,r}^{\,\bar\mu}(\cdot;\cdot\mid\lambda,m),
$
with
$
\nu,\bar\mu
$
held fixed.
Equivalently,
$
C_s(\cdot;\cdot\mid\nu,\mu)
=
C_s(\cdot;\cdot\mid\nu,\mu')
$
for all admissible
$(\nu,\mu,\mu')$,
and
$
D_{s,r}^{\,\bar\mu}(\cdot;\cdot\mid\lambda,m)
=
D_{s,r}^{\,\bar\mu}(\cdot;\cdot\mid\lambda,m')
$
for all admissible
$(\lambda,\bar\mu,m,m')$.
The following result characterizes behavioral distributional invariance through these two operator identities.

\begin{lem}
	\label{lem:behavioral-characterization}
	Suppose Assumption~\ref{ass:behavioral-separation}
	holds. Then behavioral distributional invariance is equivalent to
	$
	\nu(\Gamma_s(\mu,\mu'))
	=
	0
	$ for every admissible
	$(s,\nu,\mu,\mu')$, and
	$
	P^{m,X}_{s,r}\lambda
	=
	P^{m',X}_{s,r}\lambda
	$
	for every admissible
	$(s,r,\lambda,m,m')$.
	\end{lem}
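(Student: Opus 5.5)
We prove the two directions separately, treating the contemporaneous operator $C$ and the continuation operator $D^{\bar\mu}$ one at a time. Behavioral distributional invariance in Definition~\ref{def:behavioral-invariance} is the conjunction of an identity for $\mu\mapsto C_s(\cdot;\cdot\mid\nu,\mu)$ and one for $m\mapsto D^{\bar\mu}_{s,r}(\cdot;\cdot\mid\lambda,m)$. The claimed characterization is likewise the conjunction of a condition on $\Gamma_s$ and a condition on $P^{m,X}_{s,r}$. So it suffices to show, for each channel separately, that the operator identity is equivalent to the corresponding structural condition. Part~1 of Assumption~\ref{ass:behavioral-separation} serves the felicity channel, and part~2 serves the preference channel.

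\emph{Felicity channel.} Suppose first that $\nu(\Gamma_s(\mu,\mu'))=0$. As noted just before Assumption~\ref{ass:behavioral-separation}, finiteness of $Z$ gives $\chi_s(z;A,x,\mu)=\chi_s(z;A,x,\mu')$ for every $x\notin\Gamma_s(\mu,\mu')$ and every $(A,z)$. Integrating against $\nu$ then yields $C_s(z;A\mid\nu,\mu)=C_s(z;A\mid\nu,\mu')$; Assumption~\ref{ass:no-ties} is not needed here, since the indicator agreement is pointwise off $\Gamma_s$. Conversely, suppose $\nu(\Gamma_s(\mu,\mu'))>0$ for some admissible $(s,\nu,\mu,\mu')$. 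Part~1 of Assumption~\ref{ass:behavioral-separation} supplies $(A,z)$ with $\int(\chi_s(z;A,x,\mu)-\chi_s(z;A,x,\mu'))\nu(dx)\neq0$, which is exactly $C_s(z;A\mid\nu,\mu)\neq C_s(z;A\mid\nu,\mu')$. This violates invariance.

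\emph{Preference channel.} The key identity is
\[
D^{\bar\mu}_{s,r}(z;A\mid\lambda,m)=\int\chi_r(z;A,x,\bar\mu)\,P^{m,X}_{s,r}\lambda(dx)=C_r(z;A\mid P^{m,X}_{s,r}\lambda,\bar\mu).
\]
It holds because $\chi_r$ depends only on $x$, so integrating over $y$ reduces $P^m_{s,r}\lambda$ to its $X$-marginal. If $P^{m,X}_{s,r}\lambda=P^{m',X}_{s,r}\lambda$, the two kernels coincide for every $\bar\mu$, $A$ and $z$. Conversely, suppose $P^{m,X}_{s,r}\lambda\neq P^{m',X}_{s,r}\lambda$. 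Both measures are reachable and lie in $\mathcal P_2$, because the frozen-flow system has finite second moments under Assumption~1. Part~2 of Assumption~\ref{ass:behavioral-separation}, applied at the terminal date $r$, supplies $(A,z,\bar\mu)$ such that $C_r(z;A\mid P^{m,X}_{s,r}\lambda,\bar\mu)\neq C_r(z;A\mid P^{m',X}_{s,r}\lambda,\bar\mu)$. This is a failure of $D^{\bar\mu}$-invariance for an admissible tuple $(s,r,A,z,\lambda,\bar\mu,m,m')$.

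\emph{Main obstacle.} The delicate step is matching the admissibility qualifiers. In the preference-channel converse, the separating $\bar\mu$ produced by part~2 must count as an admissible felicity argument in Definition~\ref{def:behavioral-invariance}. Also, part~2 is stated at a time $r$ that must coincide with the terminal time of the flow. We will read ``reachable'' and ``admissible'' in both statements as referring to the same class $\mathcal Q$ of Assumption~\ref{ass:choice-separation}, and take part~2 to hold at each $r\in(0,t]$. Granting this alignment, the remainder is the pointwise indicator argument above, and combining the two channels gives the lemma.
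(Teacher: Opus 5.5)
Your proposal is correct and follows essentially the same route as the paper's proof. Sufficiency comes from pointwise agreement of $\chi_s(z;A,\cdot,\mu)$ and $\chi_s(z;A,\cdot,\mu')$ off $\Gamma_s(\mu,\mu')$ together with equality of the $X$-marginals $P^{m,X}_{s,r}\lambda=P^{m',X}_{s,r}\lambda$, and necessity comes by contradiction from parts 1 and 2 of Assumption~\ref{ass:behavioral-separation}. The admissibility alignment you flag (frozen-flow marginals counted as reachable, the separating $\bar\mu$ counted as admissible, part 2 holding at each terminal date $r$) is the same one the paper makes implicitly when it calls $\nu,\nu'$ ``distinct reachable laws''.
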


\noindent Proof is in Appendix~\ref{proofl17}. To state the representation result, call a pair
$\bigl(\bar u,\{\bar P^X_{s,r}\}_{0\leq s<r\leq t}\bigr)$
a \emph{DRU reduction} of the DDU representation on the reachable domain if $\bar u:[0,t]\times Z\times\mathbb R^d\to\mathbb R$ is independent of the
distributional state, $\bar P^X_{s,r}\lambda$ is independent of the conditional-law flow, and the induced stochastic choice probabilities agree with those of the DDU representation at every admissible argument. Define
$
\bar\chi_s(z;A,x)
:=
\mathbf 1
\left\{
\bar u(s,z,x)>
\max_{a\in A\setminus\{z\}}
\bar u(s,a,x)
\right\},
$
such that 
$
C_s(z;A\mid\nu,\mu)
=
\int_{\mathbb R^d}
\bar\chi_s(z;A,x)\,\nu(dx)
$
for every admissible $(s,A,z,\nu,\mu)$, and
$
D_{s,r}^{\bar\mu}(z;A\mid\lambda,m)
=
\int_{\mathbb R^d}
\bar\chi_r(z;A,x)\,
\bar P^X_{s,r}\lambda(dx)
$
for every admissible $(s,r,A,z,\lambda,\bar\mu,m)$.

\begin{theorem} \label{thm:behavioral-characterization}
	Suppose Assumption~\ref{ass:behavioral-separation} holds. The following statements are equivalent.\\
	1. The DDU representation satisfies behavioral distributional invariance on its reachable domain.\\
	2. For every admissible $(s,\nu,\mu,\mu')$,
		$\nu(\Gamma_s(\mu,\mu'))=0$, and, for every admissible
		$(s,r,\lambda,m,m')$,
		$
		P^{m,X}_{s,r}\lambda
		=
		P^{m',X}_{s,r}\lambda.
		$\\
		3. The DDU representation admits a DRU reduction on its reachable
		domain.\\
	Consequently, the DDU representation is behaviorally distinct from every DRU
	reduction on its reachable domain if and only if at least one of the following
	conditions holds
	$
	\nu\bigl(\Gamma_s(\mu,\mu')\bigr)>0
	$
	for some admissible $(s,\nu,\mu,\mu')$, or
	$
	P^{m,X}_{s,r}\lambda
	\neq
	P^{m',X}_{s,r}\lambda
	$
	for some admissible $(s,r,\lambda,m,m')$.
\end{theorem}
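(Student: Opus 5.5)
The plan is to prove the cycle $1\Leftrightarrow 2$, $2\Rightarrow 3$, $3\Rightarrow 1$, and then read off the final ``consequently'' clause as the contrapositive of $1\Leftrightarrow 3$ combined with the form of statement 2.

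\emph{Step 1: $1\Leftrightarrow 2$.} This is exactly Lemma~\ref{lem:behavioral-characterization}, which holds under Assumption~\ref{ass:behavioral-separation}, so we invoke it directly. For completeness, the mechanism is as follows. If $\nu(\Gamma_s(\mu,\mu'))=0$, then $\chi_s(\cdot;A,x,\mu)=\chi_s(\cdot;A,x,\mu')$ off a $\nu$-null set, so the $\mathsf C$-identity holds. Conversely, part~1 of Assumption~\ref{ass:behavioral-separation} turns $\nu(\Gamma_s)>0$ into a violation of invariance. For the dynamic part, $\mathsf D^{\bar\mu}_{s,r}(z;A\mid\lambda,m)=\int\chi_r(z;A,x,\bar\mu)\,P^{m,X}_{s,r}\lambda(dx)$, so equality of $X$-marginals gives invariance, and part~2 of the assumption, applied to $\nu=P^{m,X}_{s,r}\lambda\neq\nu'=P^{m',X}_{s,r}\lambda$, gives the converse.

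\emph{Step 2: $2\Rightarrow 3$.} Here we must construct a DRU reduction explicitly. For each $s$, fix a reference admissible distributional state $\mu^\ast_s$ (say, the equilibrium CPD along a reference path, measurably selected) and set $\bar u(s,z,x):=u(s,z,x,\mu^\ast_s)$, so that $\bar\chi_s(z;A,x)=\chi_s(z;A,x,\mu^\ast_s)$. Since $\nu(\Gamma_s(\mu,\mu^\ast_s))=0$ for every admissible $\nu,\mu$, we obtain
\[
\mathsf C_s(z;A\mid\nu,\mu)=\int\bar\chi_s(z;A,x)\,\nu(dx).
\]
For the transitions, fix a reference admissible flow $m^\ast$ and define $\bar P^X_{s,r}\lambda:=P^{m^\ast,X}_{s,r}\lambda$, which is independent of the conditional-law flow by construction. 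By the second half of statement 2, $P^{m,X}_{s,r}\lambda=\bar P^X_{s,r}\lambda$ for every admissible $m$. Combining this with the $\Gamma_r$-nullity, now applied with $\nu=\bar P^X_{s,r}\lambda$ (which is reachable, hence admissible) to replace $\bar\mu$ by $\mu^\ast_r$, yields
\[
\mathsf D^{\bar\mu}_{s,r}(z;A\mid\lambda,m)=\int\bar\chi_r(z;A,x)\,\bar P^X_{s,r}\lambda(dx).
\]

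\emph{Step 3: $3\Rightarrow 1$.} This direction is immediate. The reduction formulas give $\mathsf C_s(z;A\mid\nu,\mu)$ and $\mathsf D^{\bar\mu}_{s,r}(z;A\mid\lambda,m)$ as expressions containing neither $\mu$ nor $m$, so both invariance identities of Definition~\ref{def:behavioral-invariance} hold. This parallels Proposition~\ref{prop:dru-distributional-invariance}. The final clause is then the negation of statement 3, which by $2\Leftrightarrow 3$ is equivalent to the negation of statement 2, i.e.\ the stated disjunction.

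\emph{Main obstacle.} The delicate point is Step 2, and it has two parts. First, we need a measurable choice of the reference $\mu^\ast_s$ and $m^\ast$ so that $\bar u$ is Borel in $(s,z,x)$. Second, we must check that $\Gamma$-nullity is available for every measure at which $\bar\chi$ is integrated, in particular the transported laws $\bar P^X_{s,r}\lambda$. I would handle both by working on the reachable domain as defined. Reachability of $P^{m,X}_{s,r}\lambda$ holds by definition of $\mathcal Q$. For measurability, I would take $\mu^\ast_s$ to be the CPD of the weak solution from Proposition~\ref{prop:ddu-wellposed} evaluated at a fixed deterministic reference path, or simply a constant admissible measure if the admissible class permits one. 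I would also note that the tie-null sets from Assumption~\ref{ass:no-ties} do not interfere, since all identities are needed only $\nu$-a.e.
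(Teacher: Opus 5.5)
Your proposal is correct and follows essentially the same route as the paper. You get $1\Leftrightarrow 2$ from Lemma~\ref{lem:behavioral-characterization}. For $2\Rightarrow 3$ you freeze felicity at a reference reachable measure and set $\bar P^X_{s,r}\lambda$ to the flow-independent frozen-flow $X$-marginal, using date-$r$ nullity of $\Gamma_r$ under that marginal to remove $\bar\mu$. Then $3\Rightarrow 1$ holds because the reduction formulas contain neither $\mu$ nor $m$, and the final clause follows by complementation.

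The only difference is cosmetic. The paper sets $\bar P^X_{s,r}\lambda:=P^{m,X}_{s,r}\lambda$ for an arbitrary $m\in\mathfrak M_{s,r}(\lambda)$ rather than a single global $m^\ast$, so your reference flow should be allowed to depend on $(s,r,\lambda)$.
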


\noindent Proof is in Appendix~\ref{proofth18}. \noindent
Theorem~\ref{thm:behavioral-characterization}
characterizes DRU reduction through the pair
$
\mu\mapsto(\chi_s(z;A,\cdot,\mu))_{A,z}
$
and
$
m\mapsto P_{s,r}^{m,X}\lambda.
$
Under Assumption~\ref{ass:behavioral-separation},
$
\nu(\Gamma_s(\mu,\mu'))=0
$
for all admissible
$
(s,\nu,\mu,\mu')
$
if and only if the former is behaviorally constant, while
$
P_{s,r}^{m,X}\lambda
=
P_{s,r}^{m',X}\lambda
$
for all admissible
$
(s,r,\lambda,m,m')
$
if and only if the latter is behaviorally constant. Hence
$
(u,b,\sigma,\sigma_0)
$
is behaviorally reducible to DRU precisely when both maps are constant on the reachable domain. For convenience, Define
$
\mathfrak R_{\mathrm{DRU}}
:=
\{\rho:\exists\,\text{DRU representation of }\rho\},
$ and $
\mathfrak R_{\mathrm{DDU}}
:=
\{\rho:\exists\,\text{DDU representation of }\rho\}.
$

\begin{theorem}[Behavioral impossibility of DRU]
	\label{thm:dru-impossibility}
	Suppose Assumption~\ref{ass:behavioral-separation} holds. If
	$
	\rho\in\mathfrak R_{\mathrm{DDU}}
	$
	exhibits behavioral distributional feedback, then
	$
	\rho\notin\mathfrak R_{\mathrm{DRU}}.
	$
	Equivalently,
	$
	\rho\in
	\mathfrak R_{\mathrm{DDU}}
	\setminus
	\mathfrak R_{\mathrm{DRU}}.
	$
\end{theorem}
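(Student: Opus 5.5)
The plan is to argue by contradiction, using Proposition~\ref{prop:dru-distributional-invariance} and Theorem~\ref{thm:behavioral-characterization} to turn ``$\rho$ admits a DRU representation'' into ``the behavioral representation $(\mathsf C,\mathsf D)$ is distributionally invariant.'' Let $\rho\in\mathfrak R_{\mathrm{DDU}}$ be generated by a DDU representation with primitives $(u,b,\sigma,\sigma_0,h,\Sigma_Y)$, and suppose it exhibits behavioral distributional feedback in the sense of Definition~\ref{def:behavioral-distributional-feedback}. Then either $\mathcal F_s$ holds, i.e.\ $\mathsf C_s(z;A\mid\nu,\mu)\neq\mathsf C_s(z;A\mid\nu,\mu')$ for some admissible arguments, or $\mathcal P_{s,r}$ holds, i.e.\ $\mathsf D^{\bar\mu}_{s,r}(z;A\mid\lambda,m)\neq\mathsf D^{\bar\mu}_{s,r}(z;A\mid\lambda,m')$ for some admissible arguments. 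Now assume toward a contradiction that $\rho\in\mathfrak R_{\mathrm{DRU}}$, so that some DRU representation $(\bar u,\bar b,\bar\sigma,\bar\sigma_0,\bar h,\bar\Sigma_Y)$, independent of $\mu$, generates the same $\rho$.

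The first step is to show that the DRU representation, read as a DDU representation with $\mu$-free coefficients as in Proposition~\ref{prop:ddu-nesting}, satisfies distributional invariance. This is immediate from Proposition~\ref{prop:dru-distributional-invariance}: its operators satisfy $\bar{\mathsf C}_s(\cdot;\cdot\mid\nu,\mu)=\int\bar\chi_s\,d\nu$ and $\bar{\mathsf D}^{\bar\mu}_{s,r}(\cdot;\cdot\mid\lambda,m)=\int\bar\chi_r\,d\bar P^X_{s,r}\lambda$, both constant in $\mu$ and $m$. The DRU representation therefore supplies a DRU reduction $\bigl(\bar u,\{\bar P^X_{s,r}\}\bigr)$, provided its induced choice probabilities agree with those of the DDU representation at every admissible argument.

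The second step establishes exactly that agreement, and it is the main obstacle. Observational equivalence of $\rho$ constrains only the equilibrium evaluations $\mathsf C_s(\cdot;\cdot\mid\mu_s,\mu_s)$ from \eqref{eq:equilibrium-choice-kernel} along realized histories, whereas $\mathsf C$ and $\mathsf D^{\bar\mu}$ are counterfactual objects. I would attempt to bridge the gap in three moves.
\begin{enumerate}
\item Invoke the remark after Definition~\ref{def:behavioral-distributional-feedback}, which says these behavioral objects are identified from stochastic choice on the reachable domain.
\item Use Lemma~\ref{lem:filter-state-sufficiency} to reduce continuation behavior to the state $(Y_s,\mu_s)$.
\item Use Assumption~\ref{ass:behavioral-separation}, part~2, to argue that equality of observed choice at each reachable $(s,\mu_s)$ pins down the latent-state distribution. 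Under DRU that distribution evolves by the $m$-free kernel $\bar P^X$, so the DDU transition $P^{m,X}_{s,r}\lambda$ must coincide with $\bar P^X_{s,r}\lambda$ on reachable $(\lambda,m)$. Part~1 then gives $\nu(\Gamma_s(\mu,\mu'))=0$ on reachable arguments.
\end{enumerate}
If this identification cannot be carried out in full generality, I would state explicitly that ``$\rho\in\mathfrak R_{\mathrm{DRU}}$'' is interpreted as existence of a DRU reduction on the reachable domain in the sense preceding Theorem~\ref{thm:behavioral-characterization}, which makes this step definitional.

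The final step combines the pieces. With a DRU reduction in hand, Theorem~\ref{thm:behavioral-characterization}, (3)$\Rightarrow$(1), gives behavioral distributional invariance of the DDU representation: $\mathsf C_s(\cdot\mid\nu,\mu)=\mathsf C_s(\cdot\mid\nu,\mu')$ and $\mathsf D^{\bar\mu}_{s,r}(\cdot\mid\lambda,m)=\mathsf D^{\bar\mu}_{s,r}(\cdot\mid\lambda,m')$ for all admissible arguments. This is $\neg\mathcal F_s\wedge\neg\mathcal P_{s,r}$, contradicting $\mathcal F_s\vee\mathcal P_{s,r}$. Hence $\rho\notin\mathfrak R_{\mathrm{DRU}}$, and since $\rho\in\mathfrak R_{\mathrm{DDU}}$ by hypothesis, $\rho\in\mathfrak R_{\mathrm{DDU}}\setminus\mathfrak R_{\mathrm{DRU}}$.
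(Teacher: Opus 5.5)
With your fallback adopted, this is the paper's proof. In Supplementary Appendix SB.3 the paper takes $\rho=\mathscr T(\theta)$ to be the full stochastic-choice array on the observational domain $\mathfrak O=\mathfrak O^{0}\cup\mathfrak O^{1}$. So a DRU representation $\tilde\theta$ of $\rho$ means $\mathscr T(\tilde\theta)=\mathscr T(\theta)$, i.e.\ $C_s^\theta=\tilde C_s$ and $D_{s,r}^{\theta,\bar\mu}=\tilde D_{s,r}^{\bar\mu}$ at every admissible counterfactual argument $(\nu,\mu)$ and frozen flow $m$. Proposition~\ref{prop:dru-distributional-invariance} applied to $\tilde\theta$ then transfers invariance to $\theta$ directly, contradicting $\mathcal F_s\vee\mathcal P_{s,r}$. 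Your route through a DRU reduction and the step (3)$\Rightarrow$(1) of Theorem~\ref{thm:behavioral-characterization} is the same computation. Like the paper's argument, it does not actually need Assumption~\ref{ass:behavioral-separation}; the paper cites it only for the equivalent restatement via statement 2.

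What you should drop is the three-move bridge in your second step. It would fail, and your fallback is not a convenience but the only reading under which the statement holds.
\begin{itemize}
\item Equilibrium data reveal only $\mathsf C_s(\cdot;\cdot\mid\mu_s,\mu_s)$ along the realized $\mathbb F^Y$-adapted conditional-law flow. They never expose $\mathsf P^{m,X}_{s,r}\lambda$ for externally specified flows $m$ off the fixed point, nor $\mathsf C_s(\cdot;\cdot\mid\nu,\mu')$ with $\mu'\neq\nu$.
\item Lemma~\ref{lem:filter-state-sufficiency} is a sufficiency statement inside a single representation. It says nothing across representations or off the fixed point.
\item Assumption~\ref{ass:behavioral-separation}(2) separates two laws integrated against one and the same test function $\chi_r(z;A,\cdot,\bar\mu)$, with $\bar\mu$ free. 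Matching equilibrium choice instead equates $\int\chi_s(z;A,x,\mu_s)\,\mu_s(dx)$ with $\int\bar\chi_s(z;A,x)\,\bar\nu_s(dx)$. These are different integrands against different laws, so they cannot identify $\mu_s$ with the DRU posterior, let alone transition laws.
\end{itemize}
In fact, under the equilibrium-only reading the conclusion fails. The DRU benchmark of Section~\ref{sec:dru}, whose felicity may be arbitrarily serially dependent, reproduces any DDU choice process by setting $u_s:=u(s,\cdot,X_s,\mu_s)$, which is $\mathcal F_s$-measurable. So state at the outset that $\rho$ denotes the full array $\mathscr T(\theta)$ on $\mathfrak O$; your second step then becomes immediate.
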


\noindent Proof is in Supplementary Appendix SB.3. Theorem~\ref{thm:dru-impossibility} yields
$
\mathfrak R_{\mathrm{DDU}}
\setminus
\mathfrak R_{\mathrm{DRU}}
\neq
\varnothing.
$
Thus behavioral distributional feedback is not a refinement of DRU but a behavioral separator:
$
\mathfrak R_{\mathrm{DRU}}
$
is characterized by behavioral distributional invariance, whereas
$
\mathfrak R_{\mathrm{DDU}}
$
strictly enlarges this class through
$
\mathcal F_s
\vee
\mathcal P_{s,r}.
$

\subsection{Identification}\label{sec:identification}

Fix an observational domain
$
\mathfrak O
:=
\mathfrak O^{0}\cup\mathfrak O^{1},
$
where
$
\mathfrak O^{0}
:=
\big\{
(s,A,z,\nu,\mu):
(s,\nu,\mu)\in\mathfrak R_s,\;
A\in\mathcal K(Z),\;
z\in A
\big\},
$
and
$
\mathfrak O^{1}
:=
\big\{
(s,r,A,z,\lambda,\bar\mu,m):
m\in\mathfrak M_{s,r}(\lambda),\;
A\in\mathcal K(Z),\;
z\in A
\big\}.
$
The observable stochastic-choice array generated by
$\theta:=(u,b,\sigma,\sigma_0)$ is
\[
\mathscr Q_\theta
:=
\left(
\left\{
C_s^\theta(z;A\mid\nu,\mu)
\right\}_{(s,A,z,\nu,\mu)\in\mathfrak O^{0}},
\left\{
D_{s,r}^{\theta,\bar\mu}(z;A\mid\lambda,m)
\right\}_{(s,r,A,z,\lambda,\bar\mu,m)\in\mathfrak O^{1}}
\right).
\]
Equivalently, writing
$
\chi_s^\theta(z;A,x,\mu)
:=
\mathbf 1
\big\{
u(s,z,x,\mu)>
\max_{a\in A\setminus\{z\}}u(s,a,x,\mu)
\big\},
$
and
$
\nu_{s,r}^{\theta;m,\lambda}
:=
P_{s,r}^{\theta;m,X}\lambda,
$
the observation map
$\mathscr T:\Theta\to[0,1]^{\mathfrak O}$ satisfies
\[
\mathscr T(\theta)
=
\left(
\left\{
\int_{\mathbb R^d}
\chi_s^\theta(z;A,x,\mu)\,\nu(dx)
\right\}_{\mathfrak O^{0}},
\left\{
\int_{\mathbb R^d}
\chi_r^\theta(z;A,x,\bar\mu)\,
\nu_{s,r}^{\theta;m,\lambda}(dx)
\right\}_{\mathfrak O^{1}}
\right).
\]

For $\theta,\tilde\theta\in\Theta$, write
$\theta\sim_{\mathfrak O}\tilde\theta$ whenever
$\mathscr T(\theta)=\mathscr T(\tilde\theta)$. Thus
\[
\theta\sim_{\mathfrak O}\tilde\theta
\quad\Longleftrightarrow\quad
\begin{cases}
	C_s^\theta(z;A\mid\nu,\mu)
	=
	C_s^{\tilde\theta}(z;A\mid\nu,\mu),
	&
	(s,A,z,\nu,\mu)\in\mathfrak O^{0},
	\\[2mm]
	D_{s,r}^{\theta,\bar\mu}(z;A\mid\lambda,m)
	=
	D_{s,r}^{\tilde\theta,\bar\mu}(z;A\mid\lambda,m),
	&
	(s,r,A,z,\lambda,\bar\mu,m)\in\mathfrak O^{1}.
\end{cases}
\]
The identified set at $\theta$ is therefore the fiber
$
\mathcal I_{\mathfrak O}(\theta)
:=
\mathscr T^{-1}\!\left(\{\mathscr T(\theta)\}\right)
=
[\theta]_{\sim_{\mathfrak O}}.
$
Identification of a functional
$\psi:\Theta\to\Psi$ requires
$
\theta\sim_{\mathfrak O}\tilde\theta
\ \rightarrow\ 
\psi(\theta)=\psi(\tilde\theta),
$
whereas point identification of $\theta$ requires
$\mathcal I_{\mathfrak O}(\theta)=\{\theta\}$.
Behavioral identification is governed by the image of the observation map
rather than by the primitive parameter
$\theta=(u,b,\sigma,\sigma_0)$.
Accordingly, define the behavioral functional
$
\Phi(\theta)
:=
\bigl(\Phi_u(\theta),\Phi_P(\theta)\bigr),
$
where
$
\Phi_u(\theta)
:=
\left(
\chi_s^\theta(z;A,\cdot,\mu)
\right)_
{\substack{
		(s,A,z,\mu):
		\\
		(s,\mu)\ \mathrm{admissible}
}},
$
and
$
\Phi_P(\theta)
:=
\left(
P_{s,r}^{\theta;m,X}
\right)_
{\substack{
		(s,r,m):
		\\
		(s,r,m)\ \mathrm{admissible}
}}.
$
Therefore,
$
\Phi:
\Theta
\rightarrow
\mathfrak X_u\times\mathfrak X_P,
$
where
$\mathfrak X_u$
denotes the product space of measurable ranking maps and
$\mathfrak X_P$
the product space of reachable transition operators. Since
$
\mathscr T
=
\Lambda\circ\Phi
$
for the evaluation operator
$\Lambda:\mathfrak X_u\times\mathfrak X_P
\rightarrow
[0,1]^{\mathfrak O}$,
identification of $\theta$ factors through the behavioral image
$\Phi(\theta)$.
Consequently,
$
\theta
\sim_{\mathfrak O}
\tilde\theta
\ \rightarrow\ 
\Phi(\theta)
=
\Phi(\tilde\theta)
$
is sufficient for behavioral identification, while
$
\Phi(\theta)
=
\Phi(\tilde\theta)
\not\rightarrow
\theta=\tilde\theta
$
in general because distinct primitives may induce identical ranking maps and
transition operators. For
$\varphi,\tilde\varphi\in\Phi(\Theta)$,
write
$
\varphi\equiv_{\mathfrak O}\tilde\varphi
\ \leftrightarrow\ 
\Lambda(\varphi)=\Lambda(\tilde\varphi),
$
and let
$
[\varphi]_{\equiv_{\mathfrak O}}
:=
\left\{
\tilde\varphi\in\Phi(\Theta):
\tilde\varphi\equiv_{\mathfrak O}\varphi
\right\}.
$
For $\theta\in\Theta$, define the behavioral identified set
$
\mathcal I_{\mathfrak O}^{\Phi}(\theta)
:=
\left\{
\Phi(\tilde\theta):
\tilde\theta\in\mathcal I_{\mathfrak O}(\theta)
\right\}
\subseteq
\Phi(\Theta).
$

\begin{prop}
	\label{prop:identification-fibers}
	For every $\theta\in\Theta$,
	$
	\mathcal I_{\mathfrak O}(\theta)
	=
	\Phi^{-1}
	\!\left(
	[\Phi(\theta)]_{\equiv_{\mathfrak O}}
	\right),$
	and $
	\mathcal I_{\mathfrak O}^{\Phi}(\theta)
	=
	[\Phi(\theta)]_{\equiv_{\mathfrak O}}.
	$
	Hence,
	$
	\Phi
	\text{ is identified}
	\iff
	\left.
	\Lambda
	\right|_{\Phi(\Theta)}
	\text{ is injective}.
	$
	Moreover, a functional
	$\psi:\Theta\to\Psi$
	is identified if and only if
	$
	\psi
	\text{ is constant on }
	\Phi^{-1}
	\!\left(
	[\varphi]_{\equiv_{\mathfrak O}}
	\right)
	$
	for every
	$\varphi\in\Phi(\Theta)$.
\end{prop}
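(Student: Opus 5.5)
The plan is to treat the statement as a purely set-theoretic consequence of the factorization $\mathscr T=\Lambda\circ\Phi$, with $\Lambda$ the evaluation operator. I would first record this factorization explicitly: for $\theta\in\Theta$, each coordinate of $\mathscr T(\theta)$ is either $\int\chi_s^\theta(z;A,x,\mu)\,\nu(dx)$, which depends on $\theta$ only through the component $\chi_s^\theta(z;A,\cdot,\mu)$ of $\Phi_u(\theta)$, or $\int\chi_r^\theta(z;A,x,\bar\mu)\,P^{\theta;m,X}_{s,r}\lambda(dx)$, which depends on $\theta$ only through $\chi_r^\theta(z;A,\cdot,\bar\mu)$ and $P^{\theta;m,X}_{s,r}$, the latter a component of $\Phi_P(\theta)$. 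Hence $\Lambda(\varphi_u,\varphi_P)$ is well defined coordinatewise on $\mathfrak X_u\times\mathfrak X_P$ by the same integrals, and $\mathscr T(\theta)=\Lambda(\Phi(\theta))$ holds identically. The one point that needs care is that the arguments $(s,\mu)$ and $(r,\bar\mu)$ appearing in $\mathfrak O$ must lie in the index sets defining $\Phi_u$ and $\Phi_P$. I expect this bookkeeping of admissibility to be the only real obstacle, and I would resolve it by taking the admissible index sets in $\Phi$ to be exactly those generated by $\mathfrak O^0\cup\mathfrak O^1$.

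Given the factorization, the first identity is a chain of equivalences. Indeed,
\[
\tilde\theta\in\mathcal I_{\mathfrak O}(\theta)\iff\Lambda(\Phi(\tilde\theta))=\Lambda(\Phi(\theta))\iff\Phi(\tilde\theta)\equiv_{\mathfrak O}\Phi(\theta)\iff\tilde\theta\in\Phi^{-1}\bigl([\Phi(\theta)]_{\equiv_{\mathfrak O}}\bigr),
\]
using that $\Phi(\tilde\theta)\in\Phi(\Theta)$ automatically. For the second identity I would apply $\Phi$ to both sides. The inclusion $\Phi(\Phi^{-1}(S))\subseteq S$ is trivial, and equality holds because $S=[\Phi(\theta)]_{\equiv_{\mathfrak O}}\subseteq\Phi(\Theta)$, so every element of $S$ has a preimage.

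For the characterization of identification of $\Phi$, I would read "$\Phi$ is identified" as the general definition applied with $\psi=\Phi$, i.e.\ $\theta\sim_{\mathfrak O}\tilde\theta\Rightarrow\Phi(\theta)=\Phi(\tilde\theta)$. This is equivalent to each class $[\varphi]_{\equiv_{\mathfrak O}}$ being the singleton $\{\varphi\}$ for every $\varphi\in\Phi(\Theta)$, which is precisely injectivity of $\Lambda|_{\Phi(\Theta)}$. In one direction, if $\Lambda(\varphi)=\Lambda(\tilde\varphi)$ with $\varphi=\Phi(\theta)$ and $\tilde\varphi=\Phi(\tilde\theta)$, then $\theta\sim_{\mathfrak O}\tilde\theta$, so identification forces $\varphi=\tilde\varphi$. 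The converse direction is immediate from the factorization.

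Finally, a functional $\psi$ is identified iff it is constant on every fiber $\mathcal I_{\mathfrak O}(\theta)$. By the first identity these fibers are exactly the sets $\Phi^{-1}([\varphi]_{\equiv_{\mathfrak O}})$ with $\varphi=\Phi(\theta)$ ranging over $\Phi(\Theta)$, which gives the last claim.
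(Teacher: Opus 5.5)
Your proposal is correct and follows the paper's proof essentially step for step: the fiber identity from $\mathscr T=\Lambda\circ\Phi$, the identity $\Phi(\Phi^{-1}(B))=B$ for $B\subseteq\Phi(\Theta)$, the reduction of identification of $\Phi$ to injectivity of $\Lambda|_{\Phi(\Theta)}$, and constancy of $\psi$ on the fibers $\Phi^{-1}([\varphi]_{\equiv_{\mathfrak O}})$. The only addition is that you verify the factorization $\mathscr T=\Lambda\circ\Phi$ coordinatewise, which the paper simply asserts; for that it suffices that the index sets of $\Phi_u$ and $\Phi_P$ contain those used by $\mathfrak O^{0}\cup\mathfrak O^{1}$, so you do not need to redefine $\Phi$ so that they coincide exactly.
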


\noindent Proof is in Appendix~\ref{proofprop19}. Proposition~\ref{prop:identification-fibers} reduces identification to
injectivity of
$\left.\Lambda\right|_{\Phi(\Theta)}$.
Writing
$\varphi=(\varphi_u,\varphi_P)$ with
$
\varphi_u
=
\left(
\chi_s(z;A,\cdot,\mu)
\right)_{s,A,z,\mu},
$ and $
\varphi_P
=
\left(
P_{s,r}^{m,X}
\right)_{s,r,m},
$
the corresponding observational equivalence class is
\[
[\varphi]_{\equiv_{\mathfrak O}}
=
\left\{
(\tilde\varphi_u,\tilde\varphi_P)\in\Phi(\Theta):
\begin{array}{l}
	\displaystyle
	\int_{\mathbb R^d}
	\varphi_{u;s,A,z,\mu}(x)\,\nu(dx)
	=
	\int_{\mathbb R^d}
	\tilde\varphi_{u;s,A,z,\mu}(x)\,\nu(dx),
	\\[3mm]
	\displaystyle
	\int_{\mathbb R^d}
	\varphi_{u;r,A,z,\bar\mu}(x)\,
	\varphi_{P;s,r,m}\lambda(dx)
	=
	\int_{\mathbb R^d}
	\tilde\varphi_{u;r,A,z,\bar\mu}(x)\,
	\tilde\varphi_{P;s,r,m}\lambda(dx)
\end{array}
\right\},
\]
where the equalities hold for all indices in
$\mathfrak O^{0}$ and $\mathfrak O^{1}$, respectively. For $(s,\mu)$, let
\[
\mathfrak N_s(\mu)
:=
\left\{
\nu\in\mathcal P_2(\mathbb R^d):
(s,\nu,\mu)\in\mathfrak R_s
\right\},
\qquad
\mathcal N_s(\mu)
:=
\left\{
B\in\mathcal B(\mathbb R^d):
\sup_{\nu\in\mathfrak N_s(\mu)}\nu(B)=0
\right\},
\]
and write $f=_{\mathfrak N_s(\mu)}g$ whenever
$\{f\neq g\}\in\mathcal N_s(\mu)$. For $r\in[0,t]$, define the choice-test
class
\[
\mathscr H_r
:=
\left\{
\chi_r^\theta(z;A,\cdot,\mu):
\theta\in\Theta,\;
A\in\mathcal K(Z),\;
z\in A,\;
(r,\mu)\ \mathrm{admissible}
\right\}
\subseteq
\mathcal B_b(\mathbb R^d),
\]
and the reachable-law class
$
\mathscr P_r
:=
\bigg\{
P_{s,r}^{\theta;m,X}\lambda:
\theta\in\Theta,\;
0\leq s<r,\;
m\in\mathfrak M_{s,r}(\lambda)
\bigg\}.
$

\begin{as}
	\label{ass:observational-separation}
	For every admissible $(s,\mu)$,
	$\mathfrak N_s(\mu)$ separates
	$\mathscr H_s/\!={}_{\mathfrak N_s(\mu)}$.
	Moreover, for every $r\in(0,t]$,
	$\mathscr H_r$
	is measure determining on
	$\mathscr P_r$.
\end{as}

\noindent Assumption~\ref{ass:observational-separation} identifies
$\Phi_u$ through the family
$\mathfrak N_s(\mu)$ and
$\Phi_P$ through the test class $\mathscr H_r$.

\begin{theorem}[Behavioral identification]
	\label{thm:behavioral-identification}
	Suppose	Assumptions~\ref{ass:behavioral-separation}
	and \ref{ass:observational-separation}
	hold. Then
	$
	\mathcal I_{\mathfrak O}^{\Phi}(\theta)
	=
	\{\Phi(\theta)\},
	\
	\forall\,\theta\in\Theta.
	$
	Equivalently,
	$
	\mathscr T(\theta)
	=
	\mathscr T(\tilde\theta)
	\ \leftrightarrow\ 
	\Phi(\theta)
	=
	\Phi(\tilde\theta),
	\ \forall\ 
	\theta,\tilde\theta\in\Theta.
	$
	Consequently,
	$
	\Phi(\Theta)
	=
	\mathscr T(\Theta),
	$
	up to the canonical identification induced by
	$\Lambda|_{\Phi(\Theta)}$.
\end{theorem}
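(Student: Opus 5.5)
By Proposition~\ref{prop:identification-fibers}, the claim $\mathcal I_{\mathfrak O}^{\Phi}(\theta)=\{\Phi(\theta)\}$ is equivalent to injectivity of $\Lambda|_{\Phi(\Theta)}$. The direction $\Phi(\theta)=\Phi(\tilde\theta)\Rightarrow\mathscr T(\theta)=\mathscr T(\tilde\theta)$ is immediate from the factorization $\mathscr T=\Lambda\circ\Phi$. The plan is therefore to take $\theta,\tilde\theta$ with $\mathscr T(\theta)=\mathscr T(\tilde\theta)$ and recover the two components of $\Phi$ one at a time: first $\Phi_u$ from the contemporaneous block $\mathfrak O^0$, then $\Phi_P$ from the continuation block $\mathfrak O^1$. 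The statement $\Phi(\Theta)=\mathscr T(\Theta)$, up to $\Lambda|_{\Phi(\Theta)}$, then follows because $\Lambda|_{\Phi(\Theta)}$ is a bijection onto its image $\mathscr T(\Theta)$.

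\textbf{Step 1: recovering $\Phi_u$.} Fix an admissible $(s,\mu)$, a menu $A$ and $z\in A$. Equality on $\mathfrak O^0$ gives
\[
\int\chi_s^\theta(z;A,x,\mu)\,\nu(dx)=\int\chi_s^{\tilde\theta}(z;A,x,\mu)\,\nu(dx)
\]
for every $\nu\in\mathfrak N_s(\mu)$. Both indicators lie in $\mathscr H_s$. The first half of Assumption~\ref{ass:observational-separation} says that $\mathfrak N_s(\mu)$ separates $\mathscr H_s/\!={}_{\mathfrak N_s(\mu)}$, so the two indicators coincide off a set in $\mathcal N_s(\mu)$. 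Hence $\Phi_u(\theta)=\Phi_u(\tilde\theta)$ as elements of $\mathfrak X_u$, where ranking maps are understood modulo $\mathfrak N_s(\mu)$-null sets. I will state this convention explicitly, since $\Phi_u$ can only be identified up to such null sets. Assumption~\ref{ass:behavioral-separation}(1) is used here only to confirm that no observable component of $\Gamma_s$ is lost.

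\textbf{Step 2: recovering $\Phi_P$.} Fix $(s,r,\lambda,m)$. Equality on $\mathfrak O^1$ gives, for every $A$, every $z\in A$ and every admissible $\bar\mu$,
\[
\int\chi_r^\theta(z;A,x,\bar\mu)\,\nu(dx)=\int\chi_r^{\tilde\theta}(z;A,x,\bar\mu)\,\tilde\nu(dx),
\]
where $\nu=P^{\theta;m,X}_{s,r}\lambda$ and $\tilde\nu=P^{\tilde\theta;m,X}_{s,r}\lambda$. Step 1 lets me replace $\chi^{\tilde\theta}_r$ by $\chi^\theta_r$, which reduces the identity to
\[
\int h\,d\nu=\int h\,d\tilde\nu \quad\text{for all } h\in\mathscr H_r .
\]
Both $\nu$ and $\tilde\nu$ belong to $\mathscr P_r$, and $\mathscr H_r$ is measure determining on $\mathscr P_r$, so $\nu=\tilde\nu$. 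Assumption~\ref{ass:behavioral-separation}(2) supplies the same conclusion in its pairwise form. Since $(\lambda,m)$ was arbitrary, $\Phi_P(\theta)=\Phi_P(\tilde\theta)$.

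\textbf{Main obstacle.} The delicate point is the substitution in Step 2. Step 1 yields $\chi^\theta_r=\chi^{\tilde\theta}_r$ only off sets that are null for the reachable class $\mathfrak N_r(\bar\mu)$. I must check that these exceptional sets are also null for $\tilde\nu=P^{\tilde\theta;m,X}_{s,r}\lambda$, which requires $\tilde\nu\in\mathfrak N_r(\bar\mu)$. My first approach is to argue that the reachable domain $\mathfrak R_r$ is defined to contain all laws $P^{\theta';m,X}_{s,r}\lambda$ generated by admissible flows, and that it is common to all $\theta'\in\Theta$ because $\mathfrak O$ is fixed independently of $\theta$. If that inclusion fails, I will strengthen the convention in $\mathfrak X_u$ so that null sets are taken with respect to $\mathfrak N_r(\bar\mu)\cup\mathscr P_r$, and then rerun Step 1 with that enlarged separating family.
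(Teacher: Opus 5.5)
Your proposal is the paper's proof. It shows injectivity of $\Lambda|_{\Phi(\Theta)}$ via Proposition~\ref{prop:identification-fibers}, recovers $\Phi_u$ from the $\mathfrak O^0$ block by the separation half of Assumption~\ref{ass:observational-separation}, and recovers $\Phi_P$ from the $\mathfrak O^1$ block by its measure-determining half. The null-set substitution you flag is real, and the paper passes over it silently, tacitly adopting your first resolution (all reachable laws lie in the common domain $\mathfrak N_r(\bar\mu)$). Your fallback would not help, since neither $\mathfrak O^0$ nor Assumption~\ref{ass:observational-separation} supplies integrals or separation over laws outside $\mathfrak N_r(\bar\mu)$.

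One slip you share with the paper: after substitution, the $\mathfrak O^1$ equalities hold only for the given $\theta$'s own ranking maps $\chi_r^{\theta}(z;A,\cdot,\bar\mu)$. They do not hold for all of $\mathscr H_r$, which is a union over every $\theta'\in\Theta$, so measure-determination of $\mathscr H_r$ alone does not yield $\nu=\tilde\nu$. What actually closes Step~2 is Assumption~\ref{ass:behavioral-separation}(2) applied to $\theta$, which you cite only as an alternative and which the paper's proof never uses.
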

\noindent Proof is in Appendix~\ref{proofth21}. Theorem~\ref{thm:behavioral-identification} identifies the behavioral representation
$
\Phi=(\Phi_u,\Phi_P),
$
rather than the primitive coefficient tuple
$
(u,b,\sigma,\sigma_0,h,\Sigma_Y).
$
Moreover, stochastic choice data identify precisely those objects that are invariant under observational equivalence. DDU is therefore identified behaviorally rather than parametrically.

\subsection{Special Cases and Comparative Statics}
\label{sec:special-cases}

Theorems~\ref{thm:behavioral-characterization} and~\ref{thm:behavioral-identification} characterize the behavioral image of DDU by the pair
$
(\Phi_u,\Phi_P),
$
where
$
\Phi_u
=
\left(\chi_s(z;A,\cdot,\mu)\right)_{(s,A,z,\mu)}
$
and
$
\Phi_P
=
\left(P^{m,X}_{s,r}\right)_{(s,r,m)}.
$
Consequently, every comparative-static exercise is induced by a perturbation of
$
(\Phi_u,\Phi_P)
$
rather than of the primitive coefficients
$
(u,b,\sigma,\sigma_0).
$
Let
$
\{\theta_\varepsilon:\varepsilon\in E\}\subset\Theta
$
be a family of admissible DDU representations and define
$
\Phi(\theta_\varepsilon)
:=
(\Phi_u^\varepsilon,\Phi_P^\varepsilon).
$
Therefore, observable variation is completely determined by the path
$
\varepsilon
\mapsto
\Phi(\theta_\varepsilon),
$
since
$
\mathscr T
=
\Lambda\circ\Phi
$
and
Theorem~\ref{thm:behavioral-identification} identifies
$
\Phi(\Theta)
$
from
$
\mathscr T(\Theta).
$
Hence, comparative statics reduce to studying perturbations of the behavioral quotient
$
\Phi(\Theta)
$
rather than perturbations of individual representatives
$
\theta\in\Theta.
$
\begin{prop}
	\label{prop:behavioral-stability}
	Let
	$\theta,\tilde\theta\in\Theta$
	satisfy
	$\Phi(\theta)=\Phi(\tilde\theta)$.
	Then, for every admissible
	$(s,r,\lambda,\mu,m,A,z)$,
	$
	C_s^\theta(z;A\mid\nu,\mu)
	=
	C_s^{\tilde\theta}(z;A\mid\nu,\mu),
	$ and $
	D_{s,r}^{\theta,\bar\mu}(z;A\mid\lambda,m)
	=
	D_{s,r}^{\tilde\theta,\bar\mu}(z;A\mid\lambda,m).
	$
	Moreover, every behavioral functional
	$
	\Psi:
	\Phi(\Theta)
	\rightarrow
	\mathbb R
	$
	admits a unique extension to observational equivalence classes, whereas no functional varying within a fiber
	$\Phi^{-1}(\varphi)$
	is identified from stochastic choice.
\end{prop}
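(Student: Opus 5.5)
The plan is to use the factorization $\mathscr T=\Lambda\circ\Phi$ together with Proposition~\ref{prop:identification-fibers}. I would proceed in three steps: the two operator identities, the extension of behavioral functionals, and the non-identification of fiber-varying functionals.

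\emph{Step 1: the operator identities.} Write $\Phi(\theta)=(\Phi_u(\theta),\Phi_P(\theta))$. The hypothesis $\Phi(\theta)=\Phi(\tilde\theta)$ means two things:
\[
\chi^\theta_s(z;A,\cdot,\mu)=\chi^{\tilde\theta}_s(z;A,\cdot,\mu)
\quad\text{for every admissible }(s,A,z,\mu),
\]
and $P^{\theta;m,X}_{s,r}=P^{\tilde\theta;m,X}_{s,r}$ for every admissible $(s,r,m)$. Integrating the first identity against any admissible $\nu$ gives $C^\theta_s(z;A\mid\nu,\mu)=C^{\tilde\theta}_s(z;A\mid\nu,\mu)$. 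For the dynamic kernel, I would use
\[
D^{\theta,\bar\mu}_{s,r}(z;A\mid\lambda,m)=\int\chi^\theta_r(z;A,x,\bar\mu)\,P^{\theta;m,X}_{s,r}\lambda(dx).
\]
Here both the integrand and the measure coincide for $\theta$ and $\tilde\theta$, which yields the second identity. Put differently, this is just $\mathscr T(\theta)=\Lambda(\Phi(\theta))=\Lambda(\Phi(\tilde\theta))=\mathscr T(\tilde\theta)$. A small point needs care. If $\Phi_u$ is read only up to $\mathfrak N_s(\mu)$-null sets, the integrals still agree, because $\nu$ ranges over $\mathfrak N_s(\mu)$ by admissibility. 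For the dynamic term I would also need equality $P^{\theta;m,X}_{s,r}\lambda$-a.e.; I would take $\Phi_u$ as a genuine pointwise family, so this is automatic.

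\emph{Step 2: extension of behavioral functionals.} Given $\Psi:\Phi(\Theta)\to\mathbb R$, I would define $\hat\Psi([\theta]_{\sim_{\mathfrak O}}):=\Psi(\Phi(\theta))$. The point to check is that this is well defined, and this is where Theorem~\ref{thm:behavioral-identification} enters under its Assumptions~\ref{ass:behavioral-separation} and \ref{ass:observational-separation}. If $\theta\sim_{\mathfrak O}\tilde\theta$, then $\Phi(\theta)=\Phi(\tilde\theta)$, so $\hat\Psi$ does not depend on the representative. Uniqueness is immediate: any extension compatible with $\Psi\circ\Phi$ must take this value on each class, because each class has the form $\mathcal I_{\mathfrak O}(\theta)$ and, by Proposition~\ref{prop:identification-fibers}, $\mathcal I^\Phi_{\mathfrak O}(\theta)=[\Phi(\theta)]_{\equiv_{\mathfrak O}}$.

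I expect this step to be the main obstacle. The statement itself does not list those assumptions. Without them, $[\varphi]_{\equiv_{\mathfrak O}}$ may contain several points, and $\Psi$ need not be constant on it. The fallback I would try first is to read ``behavioral functional'' as one that is constant on $\equiv_{\mathfrak O}$-classes, so that Proposition~\ref{prop:identification-fibers} alone suffices.

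\emph{Step 3: fiber-varying functionals are not identified.} Suppose $\psi$ varies within some fiber $\Phi^{-1}(\varphi)$, say $\psi(\theta)\ne\psi(\tilde\theta)$ with $\Phi(\theta)=\Phi(\tilde\theta)=\varphi$. Step 1 gives $\mathscr T(\theta)=\mathscr T(\tilde\theta)$, so $\theta\sim_{\mathfrak O}\tilde\theta$ while $\psi$ separates them. This violates the identification criterion $\theta\sim_{\mathfrak O}\tilde\theta\Rightarrow\psi(\theta)=\psi(\tilde\theta)$. Equivalently, $\psi$ fails to be constant on $\Phi^{-1}([\varphi]_{\equiv_{\mathfrak O}})\supseteq\Phi^{-1}(\varphi)$, as in Proposition~\ref{prop:identification-fibers}.
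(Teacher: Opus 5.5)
Your proposal is correct and follows the paper's proof essentially step for step. Componentwise equality of $\Phi$ gives the $C$ and $D$ identities, equivalently $\mathscr T(\theta)=\mathscr T(\tilde\theta)$. The extension is $\Psi\circ\Phi$ passed to $\Theta/\!\sim_{\mathfrak O}$, with well-definedness supplied by Theorem~\ref{thm:behavioral-identification}. Non-identification of fiber-varying functionals then follows from the same fiber argument.

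You are also right that the extension step silently needs the hypotheses of Theorem~\ref{thm:behavioral-identification}. The paper invokes that theorem in its proof without listing those hypotheses in the statement. Your fallback reading, in which behavioral functionals are constant on $\equiv_{\mathfrak O}$-classes, is a legitimate way to avoid them using Proposition~\ref{prop:identification-fibers} alone.
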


\noindent Proof is in Appendix~\ref{proofprop22}. Proposition \ref{prop:behavioral-stability} identifies the maximal behavioral invariant of the model. Since
$\mathscr T=\Lambda\circ\Phi$
and
$\Lambda|_{\Phi(\Theta)}$
is injective by Theorem~\ref{thm:behavioral-identification}, every observable comparative-static statement factors through
$\Phi(\Theta)$.
Accordingly, any perturbation
$\theta\mapsto\theta_\varepsilon$
with
$\Phi(\theta_\varepsilon)\equiv\Phi(\theta)$
is behaviorally null, whereas every observable perturbation necessarily induces a nontrivial path
$\varepsilon\mapsto\Phi(\theta_\varepsilon)$.
Hence the economically meaningful comparative statics are those of the behavioral image rather than of individual coefficient representations. Proposition~\ref{prop:behavioral-stability} identifies the behavioral quotient: primitive specifications are identified only through
$
\Phi.
$

\begin{prop}
	\label{prop:distribution-independent}
	Suppose
	$
	u(s,z,x,\mu)=\bar u(s,z,x)
	$
	and
	$
	(b,\sigma,\sigma_0,h,\Sigma_Y)
	=
	(\bar b,\bar\sigma,\bar\sigma_0,\bar h,\bar\Sigma_Y)
	$
	are independent of
	$\mu$.
	Then
	$
	\Phi(\theta)
	=
	\Phi_{\mathrm{DRU}}(\bar\theta),
	$ where $
	\bar\theta
	:=
	(\bar u,\bar b,\bar\sigma,\bar\sigma_0,\bar h,\bar\Sigma_Y)
	\in\Theta_{\mathrm{DRU}},
	$
	and, for every admissible
	$(s,r,\lambda,\nu,m,A,z)$,
	$
	C_s^{\mathrm{DDU}}
	=
	C_s^{\mathrm{DRU}}
	$
	and
	$
	D_{s,r}^{\mathrm{DDU}}
	=
	D_{s,r}^{\mathrm{DRU}}.
	$
	Consequently,
	$
	\mathscr T_{\mathrm{DDU}}
	=
	\mathscr T_{\mathrm{DRU}},
	$
	so DDU and DRU are observationally equivalent on the reachable domain.
\end{prop}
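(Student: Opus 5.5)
The plan is to verify that, under the stated $\mu$-independence, each component of the behavioral functional $\Phi(\theta)=(\Phi_u(\theta),\Phi_P(\theta))$ coincides with the corresponding component computed from the DRU primitive $\bar\theta$. The observable consequences then follow from the factorization $\mathscr T=\Lambda\circ\Phi$.

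\emph{Felicity component.} Since $u(s,z,x,\mu)=\bar u(s,z,x)$ for every $\mu$, the indicator $\chi^\theta_s(z;A,x,\mu)$ equals
\[
\mathbf 1\bigl\{\bar u(s,z,x)>\max_{a\in A\setminus\{z\}}\bar u(s,a,x)\bigr\}=\bar\chi_s(z;A,x)
\]
pointwise in $x$, for every admissible $(s,A,z,\mu)$. Hence $\Phi_u(\theta)$ is the constant-in-$\mu$ family $(\bar\chi_s(z;A,\cdot))$, which is exactly $\Phi_{u,\mathrm{DRU}}(\bar\theta)$. Integrating against $\nu$ gives $C^{\mathrm{DDU}}_s(z;A\mid\nu,\mu)=\int\bar\chi_s(z;A,x)\,\nu(dx)=C^{\mathrm{DRU}}_s(z;A\mid\nu)$. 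This is the first part of Proposition~\ref{prop:dru-distributional-invariance}, which I will simply invoke.

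\emph{Transition component.} For any admissible $(s,r,\lambda,m)$, the frozen-flow system driven by $m$ has coefficients $(\bar b,\bar\sigma,\bar\sigma_0,\bar h,\bar\Sigma_Y)(v,X^m_v,Y^m_v)$, which do not involve $m_v$. It is therefore literally the classical SDE for the DRU primitive $\bar\theta$ started at $\lambda$. Under Assumption~1 the coefficients are Lipschitz and of linear growth in $(x,y)$, so strong, and hence weak, uniqueness holds. Consequently $P^{\theta;m}_{s,r}\lambda=\bar P_{s,r}\lambda$ for every admissible $m$, and in particular for the $X$-marginal. This gives $\Phi_P(\theta)=\Phi_{P,\mathrm{DRU}}(\bar\theta)$. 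Combining with the felicity step,
\[
D^{\theta,\bar\mu}_{s,r}(z;A\mid\lambda,m)=\int\bar\chi_r(z;A,x)\,\bar P^X_{s,r}\lambda(dx)=D^{\mathrm{DRU}}_{s,r}(z;A\mid\lambda),
\]
and the same identity holds for $D_{s,r}$ with $\bar\mu=m_r$. Finally, because $\mathscr T=\Lambda\circ\Phi$ and $\Lambda$ is the fixed evaluation operator, $\Phi(\theta)=\Phi_{\mathrm{DRU}}(\bar\theta)$ yields $\mathscr T_{\mathrm{DDU}}(\theta)=\mathscr T_{\mathrm{DRU}}(\bar\theta)$ on the reachable domain.

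\emph{Main obstacle: matching domains.} The point I expect to need care is that the reachable domains (the admissible $\mathfrak R_s$ and $\mathfrak M_{s,r}(\lambda)$) coincide for $\theta$ and $\bar\theta$, so that the two observation arrays are indexed by the same $\mathfrak O$. I would argue this through Proposition~\ref{prop:ddu-nesting}. With $\mu$-free coefficients, the equilibrium system \eqref{eq:ddu-system} decouples into a classical filtering problem. The solution $(X,Y)$ is that of the DRU model, and $\mu_s=\mathcal L(X_s\mid\mathcal F^Y_s)$ is the same posterior process in both. The reachable latent laws and conditional-law flows are therefore generated identically, and any residual index $\mu$ or $\bar\mu$ enters only as a dummy argument on which, by the steps above, nothing depends.
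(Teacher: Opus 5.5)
Your proposal is correct and follows essentially the same route as the paper's proof: the pointwise identity $\chi^\theta_s(z;A,x,\mu)=\bar\chi_s(z;A,x)$ gives the $C$- and $\Phi_u$-components; $\mu$-independence of the frozen coefficients together with uniqueness in law gives $P^{\theta;m,X}_{s,r}\lambda=\bar P^{X}_{s,r}\lambda$, and hence the $D$- and $\Phi_P$-components; and $\mathscr T=\Lambda\circ\Phi$ finishes the argument. The differences are minor: you derive the uniqueness step from pathwise uniqueness under the Lipschitz bounds of Assumption~1, where the paper simply cites weak uniqueness of the frozen system, and you make explicit the matching of reachable domains, which the paper leaves implicit.
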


\noindent Proof is in Appendix~\ref{proofprop23}. Proposition~\ref{prop:distribution-independent} identifies DRU with the subimage
$
\Phi_{\mathrm{DRU}}(\Theta)
\subseteq
\Phi(\Theta).
$
Hence
$
\Phi(\Theta)\setminus\Phi_{\mathrm{DRU}}(\Theta)
$
coincides with the behaviorally nontrivial component of DDU. By
Theorem~\ref{thm:behavioral-identification},
$
\Phi(\Theta)\neq\Phi_{\mathrm{DRU}}(\Theta)
$
if and only if
$
\mathscr T_{\mathrm{DDU}}
\neq
\mathscr T_{\mathrm{DRU}},
$
so distribution dependence is an observable property of the behavioral quotient.
\begin{theorem}
	\label{thm:behavioral-rigidity}
	Let
	$\{\theta_\varepsilon:\varepsilon\in E\}\subset\Theta$
	be an admissible family with $0\in E$. Suppose
	$\theta_0$ admits a DRU reduction
	$\bar\theta_0\in\Theta_{\mathrm{DRU}}$.
	The following are equivalent.
	
	(i)
	$
	\Phi(\theta_\varepsilon)
	=
	\Phi(\theta_0)
	$
	for every
	$\varepsilon\in E$.
	
	(ii)
	$
	\mathscr T(\theta_\varepsilon)
	=
	\mathscr T(\theta_0)
	$
	for every
	$\varepsilon\in E$.
	
	(iii)
	Every
	$
	\theta_\varepsilon
	$
	belongs to the observational fiber
	$
	\Phi^{-1}\!\bigl(\Phi(\theta_0)\bigr).
	$
	
	(iv) There exists a single
	$\bar\theta\in\Theta_{\mathrm{DRU}}$
	such that
	$
	\Phi(\theta_\varepsilon)
	=
	\Phi_{\mathrm{DRU}}(\bar\theta)
	$
	for every
	$\varepsilon\in E$.
	
	Consequently, every nontrivial comparative statics of DDU satisfies
	$
	\Phi(\theta_\varepsilon)
	\neq
	\Phi(\theta_0),
	$
	equivalently,
	$
	\mathscr T(\theta_\varepsilon)
	\neq
	\mathscr T(\theta_0)
	$
	for some
	$\varepsilon\in E$.
\end{theorem}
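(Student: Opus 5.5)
The plan is to prove the cycle (i)$\Rightarrow$(ii)$\Rightarrow$(iii)$\Rightarrow$(iv)$\Rightarrow$(i). Each link uses the factorization $\mathscr T=\Lambda\circ\Phi$, the injectivity of $\Lambda|_{\Phi(\Theta)}$ from Theorem~\ref{thm:behavioral-identification}, and the assumed DRU reduction $\bar\theta_0$ of $\theta_0$. Throughout I read the hypothesis ``$\theta_0$ admits a DRU reduction $\bar\theta_0\in\Theta_{\mathrm{DRU}}$'' as saying that $\Phi(\theta_0)=\Phi_{\mathrm{DRU}}(\bar\theta_0)$. This is the content of statement 3 of Theorem~\ref{thm:behavioral-characterization}, combined with Proposition~\ref{prop:distribution-independent}, which computes the behavioral image of a $\mu$-independent specification.

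\emph{(i)$\Rightarrow$(ii).} If $\Phi(\theta_\varepsilon)=\Phi(\theta_0)$, then applying $\Lambda$ gives $\mathscr T(\theta_\varepsilon)=\Lambda(\Phi(\theta_\varepsilon))=\Lambda(\Phi(\theta_0))=\mathscr T(\theta_0)$. Alternatively, Proposition~\ref{prop:behavioral-stability} gives equality of all $C$ and $D$ entries.

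\emph{(ii)$\Rightarrow$(iii).} If $\mathscr T(\theta_\varepsilon)=\mathscr T(\theta_0)$, then $\theta_\varepsilon\in\mathcal I_{\mathfrak O}(\theta_0)$. By Theorem~\ref{thm:behavioral-identification}, which requires Assumptions~\ref{ass:behavioral-separation} and~\ref{ass:observational-separation} (both taken to be in force), we have $\mathcal I^{\Phi}_{\mathfrak O}(\theta_0)=\{\Phi(\theta_0)\}$. Hence $\Phi(\theta_\varepsilon)=\Phi(\theta_0)$, that is, $\theta_\varepsilon\in\Phi^{-1}(\Phi(\theta_0))$. Proposition~\ref{prop:identification-fibers} gives the same conclusion directly: $\mathcal I_{\mathfrak O}(\theta_0)=\Phi^{-1}([\Phi(\theta_0)]_{\equiv_{\mathfrak O}})$, and the equivalence class is the singleton $\{\Phi(\theta_0)\}$.

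\emph{(iii)$\Rightarrow$(iv).} Membership in the fiber gives $\Phi(\theta_\varepsilon)=\Phi(\theta_0)$ for all $\varepsilon$. Taking $\bar\theta:=\bar\theta_0$ and using $\Phi(\theta_0)=\Phi_{\mathrm{DRU}}(\bar\theta_0)$ yields a single DRU representative that works uniformly in $\varepsilon$.

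\emph{(iv)$\Rightarrow$(i).} Evaluating (iv) at $\varepsilon=0$, which is allowed because $0\in E$, gives $\Phi(\theta_0)=\Phi_{\mathrm{DRU}}(\bar\theta)$. Hence $\Phi(\theta_\varepsilon)=\Phi_{\mathrm{DRU}}(\bar\theta)=\Phi(\theta_0)$ for every $\varepsilon$.

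The concluding statement is the contrapositive of (i)$\Leftrightarrow$(ii). A family is behaviorally trivial exactly when $\Phi(\theta_\varepsilon)\equiv\Phi(\theta_0)$. So a nontrivial comparative static must have $\Phi(\theta_\varepsilon)\ne\Phi(\theta_0)$ for some $\varepsilon$, and by the equivalence this happens if and only if $\mathscr T(\theta_\varepsilon)\ne\mathscr T(\theta_0)$.

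The main obstacle is bookkeeping rather than analysis. The identity $\Phi(\theta_0)=\Phi_{\mathrm{DRU}}(\bar\theta_0)$ must be extracted from the definition of a DRU reduction, which is phrased only through agreement of the $C$ and $D$ operators. Two points need attention here. First, agreement of $C$ and $D$ identifies the ranking maps only up to $\mathfrak N_s(\mu)$-null sets, and identifies the transition operators only through the test class $\mathscr H_r$. I would handle this by working in the quotient under Assumption~\ref{ass:observational-separation}, exactly as in the proof of Theorem~\ref{thm:behavioral-identification}: the equality $\Lambda(\Phi(\theta_0))=\Lambda(\Phi_{\mathrm{DRU}}(\bar\theta_0))$ together with injectivity of $\Lambda|_{\Phi(\Theta)}$ gives the identity, with $\Phi_{\mathrm{DRU}}(\Theta)\subseteq\Phi(\Theta)$ by Proposition~\ref{prop:distribution-independent}. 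Second, step (ii)$\Rightarrow$(iii) silently needs the hypotheses of Theorem~\ref{thm:behavioral-identification}. I would state explicitly that those assumptions are maintained.
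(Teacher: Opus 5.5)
Your proposal is correct and follows essentially the paper's proof. Both rest on the factorization $\mathscr T=\Lambda\circ\Phi$, on injectivity of $\Lambda|_{\Phi(\Theta)}$ from Theorem~\ref{thm:behavioral-identification} for passing from (ii) back to (i), on the identity $\Phi(\theta_0)=\Phi_{\mathrm{DRU}}(\bar\theta_0)$ for (i)$\Rightarrow$(iv), and on evaluation at $\varepsilon=0$ for (iv)$\Rightarrow$(i); the paper simply arranges these as pairwise equivalences instead of a cycle. Your extra bookkeeping makes explicit two points the paper's proof takes for granted. The first is deriving $\Phi(\theta_0)=\Phi_{\mathrm{DRU}}(\bar\theta_0)$ from agreement of the $C$ and $D$ arrays plus injectivity; the second is noting that Assumptions~\ref{ass:behavioral-separation} and~\ref{ass:observational-separation} must be in force even though the statement omits them.
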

\noindent Proof is in Appendix~\ref{proofth24}.
 Theorem~\ref{thm:behavioral-rigidity} implies
$
\mathscr T
=
\Lambda\circ\Phi
$
factors through the quotient
$
\Theta/\!\sim_{\mathfrak O},
$
with fibers
$
\Phi^{-1}(\varphi),
$
$\varphi\in\Phi(\Theta)$,
forming the maximal behaviorally invariant partition of
$\Theta$.
Accordingly,
$
\theta_\varepsilon
\mapsto
\mathscr T(\theta_\varepsilon)
$
is locally constant iff
$
\theta_\varepsilon
\subseteq
\Phi^{-1}(\varphi)
$
for some
$
\varphi\in\Phi(\Theta),
$
whereas
$
\Phi(\theta_\varepsilon)\neq\Phi(\theta_0)
$
implies
$
\mathscr T(\theta_\varepsilon)\neq\mathscr T(\theta_0).
$
Hence
$
\Phi(\Theta)\setminus\Phi_{\mathrm{DRU}}(\Theta)
$
is precisely the identified component of distribution dependence, so every observable comparative statics is generated by motion in
$
\Theta/\!\sim_{\mathfrak O}
$
rather than within an observational fiber. Theorems~\ref{thm:behavioral-identification} and~\ref{thm:behavioral-rigidity} together with Proposition~\ref{prop:behavioral-stability} show that the empirical content of DDU is
$
\Phi=(\Phi_u,\Phi_P),
$
rather than the primitive coefficient tuple.

\section{Conditional McKean-Vlasov Preferences}
\label{sec:cmkv}

Sections~\ref{sec:ddu} and~\ref{sec:feedback} characterize DDU through its behavioral image
$
\Phi=(\Phi_u,\Phi_P)
$
and establish identification from the observable stochastic-choice array
$
\mathscr T=\Lambda\circ\Phi.
$
The underlying source of these behavioral restrictions is the conditional-law feedback
$
\mu_s=\mathcal L(X_s\mid\mathcal F_s^Y),
$
which enters both the instantaneous utility index and the preference dynamics. This section studies the resulting conditional McKean-Vlasov structure. Unlike
classical MVSDEs, the state variable is the pair
$
(X_s,\mu_s),
$
where
$
\mu_s
$
is itself generated endogenously by the observation filtration. Finally, the evolution of preferences is governed by the coupled system
$
(X,\mu),
$
rather than by the latent state
$
X
$
alone. Throughout this section, let
$
\Gamma:
C([0,t];\mathcal P_2(\mathbb R^d))
\rightarrow
C([0,t];\mathcal P_2(\mathbb R^d))
$
denote the conditional-law operator introduced in Supplementary Appendix~SA.2, and write
$
\mu=\Gamma(\mu)
$
for the associated fixed-point relation. 

\subsection{Distribution-Dependent Preference Dynamics}
\label{sec:cmkv-dynamics}

Fix $0\leq s\leq r\leq t$ and an admissible weak solution
$(X,Y,\mu)$ of system~\eqref{eq:ddu-system}. The privately observed state is
$(X_s,Y_s,\mu_s)$, whereas the analyst observes only
$(Y_s,\mu_s)$, with
$\mu_s=\mathcal L(X_s\mid\mathcal F_s^Y)$.
Accordingly, the relevant state augmentation is
$
X_s\mapsto\mathbf X_s:=(X_s,Y_s,\mu_s)
$
for the decision maker and
$
Y_s\mapsto\widehat{\mathbf X}_s:=(Y_s,\mu_s)
$
for the analyst. For every bounded Borel
$\varphi:\mathbb R^d\to\mathbb R$,
$
\langle\mu_s,\varphi\rangle
=
\mathbb E\left[\varphi(X_s)\mid\mathcal F_s^Y\right],
$
so variation in
$
\widehat{\mathbf X}_s
$
simultaneously changes the analyst's conditional composition of latent preferences and,
through
$
(u,b,\sigma,\sigma_0,h,\Sigma_Y)(s,\cdot,\mu_s),
$
the continuation law of
$
\mathbf X
$.
Therefore, the transition
$
\widehat{\mathbf X}_s\mapsto\widehat{\mathbf X}_r
$
contains both Bayesian updating,
$
\mathcal F_s^Y\subseteq\mathcal F_r^Y,
$
and structural feedback,
$
\mu_v\mapsto
(b,\sigma,\sigma_0,h,\Sigma_Y)(v,\cdot,\mu_v),
$
for $s\leq v\leq r$.

\begin{definition}
	\label{def:conditional-preference-flow}
	Given an admissible initial law
	$\lambda\in\mathcal P_2(\mathbb R^d\times\mathbb R^{d_Y})$
	at time $s$, a \emph{conditional preference flow} on $[s,r]$ is an
	$\mathbb F^Y$-adapted process
	$\mu^{s,\lambda}=\left(\mu_v^{s,\lambda}\right)_{s\leq v\leq r}$
	with paths in
	$C([s,r];\mathcal P_2(\mathbb R^d))$ such that
	$
	\mathbb E\left[\sup_{s\leq v\leq r}m_2(\mu_v^{s,\lambda})\right]<\infty
	$
	and, for every bounded Borel $\varphi$,
	$
	\langle\mu_v^{s,\lambda},\varphi\rangle
	=
	\mathbb E\left[\varphi(X_v^{s,\lambda})\mid\mathcal F_v^Y\right],
	$
	$\mathbb P$-a.s. The associated augmented preference states are
	$
	\mathbf X_v^{s,\lambda}
	:=
	(X_v^{s,\lambda},Y_v^{s,\lambda},\mu_v^{s,\lambda})
	$
	and
	$
	\widehat{\mathbf X}_v^{s,\lambda}
	:=
	(Y_v^{s,\lambda},\mu_v^{s,\lambda}).
	$
\end{definition}

\noindent Definition~\ref{def:conditional-preference-flow} induces the state decomposition
$
\mathbf X_v^{s,\lambda}
=
(X_v^{s,\lambda},\widehat{\mathbf X}_v^{s,\lambda}),
$
with
$
\widehat{\mathbf X}_v^{s,\lambda}
=
(Y_v^{s,\lambda},\mu_v^{s,\lambda})
$
and
$
\mu_v^{s,\lambda}
=
\mathcal L(X_v^{s,\lambda}\mid\mathcal F_v^Y).
$
Hence
$
\chi_v
=
\chi_v(\cdot,\widehat{\mathbf X}_v^{s,\lambda}),
$
$
C_v
=
C_v(\cdot\mid\widehat{\mathbf X}_v^{s,\lambda}),
$
and
$
P_{v,q}^{X}
=
P_{v,q}^{X}(\widehat{\mathbf X}_v^{s,\lambda})
$
for
$
s\le v\le q\le r.
$
Accordingly,
$
(Y_v,\mu_v)
=
(\widetilde Y_v,\widetilde\mu_v)
$
implies
$
(\chi_v,C_v,P_{v,q}^{X})
=
(\widetilde\chi_v,\widetilde C_v,\widetilde P_{v,q}^{X}),
$
whereas
$
\mu_v\neq\widetilde\mu_v
$
permits
$
(\chi_v,C_v,P_{v,q}^{X})
\neq
(\widetilde\chi_v,\widetilde C_v,\widetilde P_{v,q}^{X})
$
even when
$
Y_v=\widetilde Y_v.
$
Thus the behavioral state is represented by
$
\widehat{\mathbf X}_v=(Y_v,\mu_v),
$
rather than by the observable coordinate
$
Y_v
$
alone.

\begin{lem}
	\label{lem:cmkv-state}
	Let
	$
	(X,Y,\mu)
	$
	be an admissible DDU solution and let
	$
	\mathbf X_s
	=
	(X_s,Y_s,\mu_s),
	$
	$
	\widehat{\mathbf X}_s
	=
	(Y_s,\mu_s).
	$
	Suppose the coefficients and every admissible continuation policy are
	Markovian in
	$
	(X_v,Y_v,\mu_v),
	$
	and that the associated continuation martingale problem is weakly unique
	for every admissible initial state.
	Then, for every
	$
	s\le r\le t,
	$
	every bounded Borel functional
	$
	F:
	C([s,r];\mathbb R^d\times\mathbb R^{d_Y}\times\mathcal P_2(\mathbb R^d))
	\rightarrow
	\mathbb R,
	$
	and every admissible continuation policy,
	$
	\mathbb E
	\!\left[
	F\!\left(
	(\mathbf X_v)_{v\in[s,r]}
	\right)
	\middle|
	\mathcal F_s^Y
	\right]
	=
	\mathcal G_{s,r}^F
	(\widehat{\mathbf X}_s),
	$
	for some measurable operator
	$
	\mathcal G_{s,r}^F.
	$
	Consequently,
	$
	\widehat{\mathbf X}_s
	$
	is a sufficient state variable for the conditional evolution of every admissible continuation experiment.
\end{lem}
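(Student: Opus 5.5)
The plan is to show that, conditionally on $\mathcal F_s^Y$, the continuation law of $(\mathbf X_v)_{v\in[s,r]}$ is obtained by solving the system from a random initial condition that is completely specified by $\widehat{\mathbf X}_s$.

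\emph{Step 1: the initial datum is determined by $\widehat{\mathbf X}_s$.} By Definition~\ref{def:conditional-preference-distribution} and Lemma~\ref{lem:conditional-sufficiency}, conditionally on $\mathcal F_s^Y$ the variable $X_s$ has law $\mu_s$ and $Y_s$ is $\mathcal F_s^Y$-measurable. Hence, for a regular conditional probability $\mathbb P(\cdot\mid\mathcal F_s^Y)(\omega)$, the conditional law of $(X_s,Y_s)$ is $\mu_s(\omega)\otimes\delta_{Y_s(\omega)}$. This measure is a measurable function of $\widehat{\mathbf X}_s(\omega)$; call it $\lambda(\widehat{\mathbf X}_s)$. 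Measurability of $(y,\mu)\mapsto\mu\otimes\delta_y$ into $\mathcal P_2(\mathbb R^d\times\mathbb R^{d_Y})$ is routine.

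\emph{Step 2: the conditional future is again a solution.} Use the independence of the increments $(W_v-W_s,B_v-B_s)_{v\ge s}$ from $\mathcal F_s\supseteq\mathcal F_s^Y$. Under $\mathbb P(\cdot\mid\mathcal F_s^Y)(\omega)$, the shifted process $(X_v,Y_v)_{v\ge s}$ should then solve the continuation martingale problem for \eqref{eq:ddu-system} on $[s,r]$ with initial law $\lambda(\widehat{\mathbf X}_s(\omega))$. This holds for a.e.\ $\omega$, via the standard Stroock--Varadhan conditioning argument: test the martingale property on a countable determining family of test functions and time pairs, so that a single null set suffices. Two facts are needed here. First, the policy is Markovian in $\mathbf X_v$, so it enters only as a coefficient. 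Second, for $v\ge s$ the flow $\mu_v=\mathcal L(X_v\mid\mathcal F_v^Y)$ can be recomputed under the conditioned measure as $\mathcal L(X_v\mid\mathcal F^Y_{[s,v]}\vee\mathcal F_s^Y)$. Because $\mathcal F_s^Y$ is frozen under the conditioning, this equals the conditional law given the post-$s$ observations started from $\lambda$. Thus the triple $(X,Y,\mu)$ on $[s,r]$ is a conditional McKean--Vlasov solution from $\lambda$ in the sense of Definition~\ref{def:conditional-preference-flow}.

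\emph{Step 3: uniqueness identifies the law.} By the assumed weak uniqueness of the continuation problem, the conditional law of $(\mathbf X_v)_{v\in[s,r]}$ equals some law $\mathbb Q_{s,r}^{\lambda}$ on $C([s,r];\mathbb R^d\times\mathbb R^{d_Y}\times\mathcal P_2(\mathbb R^d))$ that depends only on $\lambda$. Set $\mathcal G_{s,r}^F(\widehat{\mathbf x}):=\int F\,d\mathbb Q_{s,r}^{\lambda(\widehat{\mathbf x})}$; then $\mathbb E[F\mid\mathcal F_s^Y]=\mathcal G_{s,r}^F(\widehat{\mathbf X}_s)$ a.s. Measurability of $\widehat{\mathbf x}\mapsto\mathbb Q^{\lambda(\widehat{\mathbf x})}$ follows from uniqueness by the usual argument: the set of pairs (initial law, solution law) is Borel and is the graph of a function, so the function is Borel (Kuratowski/Souslin). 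Sufficiency is then immediate, since the right-hand side depends on the history only through $\widehat{\mathbf X}_s$.

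\emph{Main obstacle.} The hardest step is Step 2, specifically the identification of $\mu_v$ under conditioning. Conditioning on $\mathcal F_s^Y$ must not change the filter, which requires the post-$s$ observation filtration together with the frozen past to reproduce $\mathcal F_v^Y$. My first approach is to argue through the Zakai/Kushner--Stratonovich equation, whose solution from time $s$ depends only on $\mu_s$, $Y_s$ and the innovation increments. This requires the nondegeneracy of $\Sigma_Y$ from Assumption~1. As a fallback, I would phrase the whole argument on the canonical space, taking the fixed-point operator $\Gamma$ restarted at time $s$ as the definition of the continuation problem.
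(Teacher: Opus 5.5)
Your proposal is correct and follows essentially the same route as the paper: the regular conditional law of $(\mathbf X_v)_{v\in[s,r]}$ given $\mathcal F_s^Y$ is shown to solve the continuation martingale problem started from $(Y_s,\mu_s)$, using the independence of the post-$s$ Brownian increments and the Markovian coefficients and policy. Weak uniqueness then identifies that law with $\mathbf P_{s,(Y_s,\mu_s)}$, and $\mathcal G_{s,r}^F$ is defined as the expectation of $F$ under it. Your version is more careful than the paper's, which simply asserts two points you argue. The first is that the conditioned triple is again a conditional McKean--Vlasov solution; your direct tower-property identification of the filter already settles this, so the Zakai route is unnecessary. The second is that $\hat x\mapsto\mathbf P_{s,\hat x}$ is a Borel kernel, which you justify with the graph/Souslin argument.
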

\noindent Proof is in Appendix~\ref{prooflem25}.
	Lemma~\ref{lem:cmkv-state} identifies the conditional semigroup
	$
	(\mathcal G_{s,r})_{0\le s\le r\le t}
	$
	on the measurable state space
	$
	\mathbb R^{d_Y}\times\mathcal P_2(\mathbb R^d).
	$
	Hence
	$
	\mathcal L
	\!\left(
	(\mathbf X_v)_{v\ge s}
	\mid
	\mathcal F_s^Y
	\right)
	=
	\mathcal L
	\!\left(
	(\mathbf X_v)_{v\ge s}
	\mid
	\widehat{\mathbf X}_s
	\right),
	$
	$
	P_{s,r}^{X}
	=
	P_{s,r}^{X}(\widehat{\mathbf X}_s),
	$
	and
	$
	\Gamma_{s,r}
	=
	\Gamma_{s,r}(\widehat{\mathbf X}_s).
	$
	Thus every continuation object
	$
	(\chi,C,D,\Phi_P,\Gamma)
	$
	factors through
	$
	\widehat{\mathbf X}_s,
	$
	so the infinite-dimensional conditional law
	$
	\mu_s
	$
	is the unique additional state coordinate required beyond
	$
	Y_s
	$
	to close the preference dynamics.

\begin{prop}
	\label{prop:conditional-markov}
	Suppose
	Lemma~\ref{lem:cmkv-state}
	hold.
	Then, for every
	$
	0\le s\le r\le q\le t,
	$
	every bounded Borel functional
	$
	F:
	C([r,q];
	\mathbb R^d\times
	\mathbb R^{d_Y}\times
	\mathcal P_2(\mathbb R^d))
	\rightarrow
	\mathbb R,
	$
	and every admissible continuation policy,
	$
	\mathbb E
	\!\left[
	F((\mathbf X_v)_{v\in[r,q]})
	\middle|
	\mathcal F_r^Y
	\right]
	=
	\mathcal G_{r,q}^F
	(\widehat{\mathbf X}_r),
	$
	where
	$
	(\mathcal G_{r,q})_{0\le r\le q\le t}
	$
	forms a measurable transition family satisfying
	$
	\mathcal G_{s,q}
	=
	\mathcal G_{s,r}
	\mathcal G_{r,q}.
	$
	Consequently,
	$
	(\widehat{\mathbf X}_s)_{0\le s\le t}
	$
	is a time-inhomogeneous Markov process with state space
	$
	\mathbb R^{d_Y}
	\times
	\mathcal P_2(\mathbb R^d).
	$
\end{prop}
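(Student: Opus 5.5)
The plan is to obtain the first identity by applying Lemma~\ref{lem:cmkv-state} on the interval $[r,q]$ in place of $[s,r]$. Its hypotheses (Markovian coefficients and policies, weak uniqueness of the continuation martingale problem from every admissible initial state) are time-homogeneous in form, so they hold at every starting time $r$. This gives
\[
\mathbb E\!\left[F((\mathbf X_v)_{v\in[r,q]})\,\middle|\,\mathcal F_r^Y\right]=\mathcal G^F_{r,q}(\widehat{\mathbf X}_r).
\]
For the transition family I will take $\mathcal G_{r,q}(\hat x,\cdot)$ to be the law, under the unique continuation solution started at time $r$ from the state $\hat x=(y,\mu)$, of $\widehat{\mathbf X}_q$. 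Concretely, the initial law is $\mu\otimes\delta_y$, or more precisely the conditional initial configuration encoded by $(y,\mu)$. Measurability of $\hat x\mapsto\mathcal G_{r,q}(\hat x,B)$ will come from weak uniqueness via the standard Stroock--Varadhan argument. The solution law depends measurably on the initial condition because it is the unique element of a Borel set of measures on $C([r,q];\mathbb R^d\times\mathbb R^{d_Y}\times\mathcal P_2(\mathbb R^d))$ that is cut out by countably many martingale-problem constraints. The selection is therefore a Borel map, by the measurable-graph/unique-selection theorem.

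For the Chapman--Kolmogorov identity I will fix $s\le r\le q$ and a bounded Borel $g$ on $\mathbb R^{d_Y}\times\mathcal P_2(\mathbb R^d)$, and use the tower property with $\mathcal F_s^Y\subseteq\mathcal F_r^Y$:
\[
\mathbb E[g(\widehat{\mathbf X}_q)\mid\mathcal F_s^Y]
=\mathbb E\big[\mathbb E[g(\widehat{\mathbf X}_q)\mid\mathcal F_r^Y]\,\big|\,\mathcal F_s^Y\big]
=\mathbb E[(\mathcal G_{r,q}g)(\widehat{\mathbf X}_r)\mid\mathcal F_s^Y]
=\mathcal G_{s,r}(\mathcal G_{r,q}g)(\widehat{\mathbf X}_s).
\]
The second equality is the first part applied on $[r,q]$. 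The third is Lemma~\ref{lem:cmkv-state} on $[s,r]$ with the functional $F=\mathcal G_{r,q}g\circ\pi_r$, where $\pi_r$ is the evaluation at time $r$ followed by projection onto $(Y,\mu)$; this $F$ is bounded Borel by the measurability step. Applying the lemma directly on $[s,q]$ gives $\mathcal G_{s,q}g(\widehat{\mathbf X}_s)$. Equating the two expressions yields $\mathcal G_{s,q}g=\mathcal G_{s,r}\mathcal G_{r,q}g$ on the support of the law of $\widehat{\mathbf X}_s$.

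To upgrade this to every admissible state, I will run the same argument for the continuation solution started from an arbitrary admissible $\hat x$ at time $s$. This is legitimate because weak uniqueness holds for every admissible initial state. The Markov property of $(\widehat{\mathbf X}_s)$ with respect to $\mathbb F^Y$ then follows from the displayed conditional identity together with $\mathcal F^Y$-adaptedness of $\widehat{\mathbf X}$: $Y$ is adapted by definition and $\mu$ has an $\mathbb F^Y$-progressive version by Theorem~\ref{thm:ddu-general-volatility}.

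The main obstacle is the flow (consistency) property underlying Chapman--Kolmogorov. I need the restart of the solution at time $r$ from the realized state $\widehat{\mathbf X}_r$ to coincide in law with the original solution's conditional continuation. In particular, the conditional law of $X_r$ given $\mathcal F_r^Y$ must be exactly $\mu_r$, so that the restarted conditional McKean--Vlasov system with initial pair $(Y_r,\mu_r)$ satisfies the fixed-point constraint $\mu=\Gamma(\mu)$ from time $r$ onward. I would handle this with a regular conditional probability decomposition: conditionally on $\mathcal F_r^Y$, the shifted process solves the continuation martingale problem from $(Y_r,\mu_r)$ almost surely, which I would verify by checking the martingale property on countably many test functions. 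Weak uniqueness then identifies its law with $\mathcal G_{r,\cdot}(\widehat{\mathbf X}_r,\cdot)$.
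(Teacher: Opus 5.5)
Your proposal is correct and follows the paper's own argument. Lemma~\ref{lem:cmkv-state} on $[r,q]$ gives the conditional identity; the tower property together with Lemma~\ref{lem:cmkv-state} on $[s,r]$ and on $[s,q]$ gives $\mathcal G_{s,q}=\mathcal G_{s,r}\mathcal G_{r,q}$; and $\mathbb F^Y$-adaptedness of $\widehat{\mathbf X}$ then yields the Markov property. Beyond the paper, your restart from deterministic states $\hat x$ upgrades the identity from holding $\mathcal L(\widehat{\mathbf X}_s)$-a.e.\ (not merely ``on the support'') to holding everywhere, which the paper silently assumes, and your measurable-selection argument supplies a kernel measurability the paper only asserts; the restart-consistency ``obstacle'' you flag is already the content of Lemma~\ref{lem:cmkv-state}, which the proposition takes as a hypothesis.
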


\noindent Proof is in Supplementary Appendix SC.1.
Proposition~\ref{prop:conditional-markov}
identifies
$
(\mathcal G_{s,r})_{0\le s\le r\le t}
$
as the nonlinear transition semigroup induced by the conditional-law dynamics.
Hence,
$
\Gamma_{s,q}
=
\Gamma_{r,q}
\circ
\Gamma_{s,r},
$
$
P_{s,q}^{X}
=
P_{r,q}^{X}
P_{s,r}^{X},
$
and
$
\Phi_P
=
(\mathcal G_{s,r})_{0\le s\le r\le t}.
$
Therefore, the endogenous evolution of
$
\mu
$
is dynamically closed on
$
\mathbb R^{d_Y}
\times
\mathcal P_2(\mathbb R^d),
$
whereas the latent coordinate
$
X
$
alone does not generate a Markovian state description.

\begin{theorem}
	\label{thm:cmkv-representation}
	Suppose Proposition~\ref{prop:conditional-markov}
	hold. Then there exists a unique family of measurable operators
	$
	(\Gamma_{s,r})_{0\le s\le r\le t}
	$
	on
	$
	\mathbb R^{d_Y}
	\times
	\mathcal P_2(\mathbb R^d)
	$
	such that, for every admissible initial law
	$
	\lambda
	$
	and every admissible continuation policy,
	$
	\widehat{\mathbf X}_r
	=
	\Gamma_{s,r}
	(\widehat{\mathbf X}_s),
	$
	$
	P_{s,r}^{X}
	=
	P^{X}(\Gamma_{s,r}),
	$
	and
	$
	\Phi_P
	=
	(\Gamma_{s,r})_{0\le s\le r\le t}.
	$
	Moreover,
	$
	\Gamma_{s,q}
	=
	\Gamma_{r,q}
	\circ
	\Gamma_{s,r},
	$
	$
	\Gamma_{s,s}
	=
	I,
	$
	and the DDU dynamics admit the representation
	$
	(X,Y,\mu)
	\leftrightarrow
	(\widehat{\mathbf X},\Gamma),
	$
	which is unique up to indistinguishability.
\end{theorem}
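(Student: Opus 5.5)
The plan is to read the operators $\Gamma_{s,r}$ as the \emph{transition kernels} of the Markov process $\widehat{\mathbf X}$ from Proposition~\ref{prop:conditional-markov}, acting on $\mathbb R^{d_Y}\times\mathcal P_2(\mathbb R^d)$. The identity $\widehat{\mathbf X}_r=\Gamma_{s,r}(\widehat{\mathbf X}_s)$ is then to be understood as equality of conditional laws given $\mathcal F_s^Y$. A literal pathwise identity cannot hold, because $\widehat{\mathbf X}_r$ still depends on the increments of $B$ on $[s,r]$. The theorem is then largely a repackaging of Lemma~\ref{lem:cmkv-state} and Proposition~\ref{prop:conditional-markov}, and I would organize the proof in four steps.

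\textbf{Step 1 (construction).} For $\hat x=(y,\mu)$ and a Borel set $E$, define
\[
\Gamma_{s,r}(\hat x,E):=\mathcal G_{s,r}^{F_E}(\hat x),\qquad F_E(\mathbf X_\cdot):=\mathbf 1_E(\widehat{\mathbf X}_r),
\]
where $\mathcal G$ is the operator from Lemma~\ref{lem:cmkv-state}. I must show this is a probability kernel, i.e.\ countably additive in $E$ and measurable in $\hat x$. I would obtain this from the existence of regular conditional distributions on the Polish space $\mathbb R^{d_Y}\times\mathcal P_2(\mathbb R^d)$, choosing a version on the reachable set.

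\textbf{Step 2 (properties).} The semigroup identity $\Gamma_{s,q}=\Gamma_{r,q}\circ\Gamma_{s,r}$ is the Chapman--Kolmogorov relation $\mathcal G_{s,q}=\mathcal G_{s,r}\mathcal G_{r,q}$ of Proposition~\ref{prop:conditional-markov}. The identity $\Gamma_{s,s}=I$ holds because $F_E(\mathbf X_\cdot)=\mathbf 1_E(\widehat{\mathbf X}_s)$ is $\mathcal F_s^Y$-measurable when $r=s$.

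For the transition operator $P_{s,r}^X$, I would use the tower property together with Lemma~\ref{lem:conditional-sufficiency}:
\[
\mathbb E[\varphi(X_r)\mid\mathcal F_s^Y]
=\mathbb E\big[\langle\mu_r,\varphi\rangle\mid\mathcal F_s^Y\big]
=\int\langle m,\varphi\rangle\,\Gamma_{s,r}(\widehat{\mathbf X}_s,dy\,dm).
\]
Thus $P_{s,r}^X=P^X(\Gamma_{s,r})$, where $P^X$ denotes the map taking a kernel to the mean of its $\mu$-marginal. It follows that $\Phi_P$ is a function of $(\Gamma_{s,r})$.

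\textbf{Step 3 (uniqueness).} Suppose $\Gamma'$ also satisfies the conditional-law identity for every admissible $\lambda$ and policy. Then $\Gamma'_{s,r}(\widehat{\mathbf X}_s,\cdot)=\Gamma_{s,r}(\widehat{\mathbf X}_s,\cdot)$ almost surely, because both are versions of the same regular conditional law. Letting $\lambda$ range over admissible initial laws, including Dirac initial states where weak uniqueness of the continuation martingale problem applies, I would conclude that the two kernels agree at every reachable $\hat x$. This gives uniqueness on the reachable domain.

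\textbf{Step 4 (representation).} For the correspondence $(X,Y,\mu)\leftrightarrow(\widehat{\mathbf X},\Gamma)$, the forward direction is Steps~1--3. For the converse, the initial law of $\widehat{\mathbf X}_0$ together with the consistent family $\Gamma$ determines the law of $\widehat{\mathbf X}$ by Kolmogorov extension. Lemma~\ref{lem:cmkv-state}, applied to path functionals $F$ of the full triple $\mathbf X$, then expresses the law of $(X,Y,\mu)$ through $\mathcal G^F_{0,t}(\widehat{\mathbf X}_0)$. Weak uniqueness (Theorem~\ref{thm:ddu-general-volatility}) then pins down the law of the triple.

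\textbf{Main obstacle.} I expect the hardest point to be the interpretation in Steps~3--4. Uniqueness holds only outside null sets of the reachable laws, so "unique up to indistinguishability" must be read as uniqueness of versions of the kernels and of the solution law, not as pathwise uniqueness. I would state this reduction explicitly at the outset. I would also check that the measurability of $\hat x\mapsto\Gamma_{s,r}(\hat x,\cdot)$ survives the infinite-dimensional coordinate $\mu$, using the Borel structure of $\mathcal P_2$ under $\mathcal W_2$.
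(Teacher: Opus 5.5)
Your proposal is correct and follows essentially the paper's route. The paper also defines $\Gamma_{s,r}$ as the transition kernel $\mathcal L(\widehat{\mathbf X}_r\mid\widehat{\mathbf X}_s=\cdot)$ of the Markov process from Proposition~\ref{prop:conditional-markov}, which agrees with your $\mathcal G^{F_E}_{s,r}$ by Lemma~\ref{lem:cmkv-state}. It then derives the semigroup law from Chapman--Kolmogorov, and obtains $P^X_{s,r}$ by disintegrating through the conditional law of $X_r$ given $\widehat{\mathbf X}_r$; since that law is $\mu_r$, this is exactly your barycenter formula. Uniqueness and the representation come from uniqueness of regular conditional laws and of finite-dimensional distributions.

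Where you differ is in care rather than method. You read $\widehat{\mathbf X}_r=\Gamma_{s,r}(\widehat{\mathbf X}_s)$ and ``unique up to indistinguishability'' explicitly in law, and you track null sets in the uniqueness step. The paper asserts these points without qualification, so your treatment is the more careful of the two.
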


\noindent  Proof is in Supplementary Appendix SC.2.
Theorem~\ref{thm:cmkv-representation} identifies
$
\Gamma
=
(\Gamma_{s,r})_{0\le s\le r\le t}
$
as the nonlinear evolution operator of DDU.
Hence,
$
\Phi_P
=
\Gamma,
$
$
\mathscr T
=
\Lambda(\Phi_u,\Gamma),
$
and
$
(Y_s,\mu_s)
\mapsto
(Y_r,\mu_r)
$
is completely determined by
$
\Gamma_{s,r}.
$
Accordingly,
$
(X,\mu)
$
constitutes a conditional McKean-Vlasov structure,
whereas
$
\Gamma
$
is the corresponding behavioral transition semigroup.
The dependence
$
\Gamma_{s,r}
=
\Gamma_{s,r}(\mu)
$
is the mathematical source of the endogenous distributional feedback characterized in
Sections~\ref{sec:feedback}-\ref{sec:special-cases}.
Theorem~\ref{thm:cmkv-representation} characterizes the DDU dynamics by the nonlinear evolution
$
\Gamma=(\Gamma_{s,r})_{0\le s\le r\le t},
$
so
$
\widehat{\mathbf X}
\rightarrow
\Gamma
\rightarrow
\Phi_P
\rightarrow
\mathscr T.
$
Hence,
$
\Phi_P
$
is no longer an exogenous transition family but the image of the endogenous conditional-law evolution. Consequently,
$
\Gamma
$
cannot be specified independently of
$
\mu,
$
since
$
\mu
=
\mathcal L(X\mid\mathcal F^Y)
$
and
$
\Gamma
=
\Gamma(\mu)
$
are jointly determined through the conditional McKean--Vlasov fixed-point relation. Section~\ref{sec:cmkv-fixedpoint} studies this endogenous fixed-point structure.

\subsection{Endogenous Information and Fixed-Point Structure}
\label{sec:cmkv-fixedpoint}

Section~\ref{sec:cmkv-dynamics} discusses the evolution of the augmented state
$
\widehat{\mathbf X}
=
(Y,\mu)
$
through the nonlinear transition family
$
\Gamma
=
(\Gamma_{s,r})_{0\le s\le r\le t}.
$
The remaining question is whether
$
\Gamma
$
may be prescribed exogenously or must itself be determined jointly with the conditional preference law.
Since
$
\mu_s
=
\mathcal L(X_s\mid\mathcal F_s^Y),
$
the coefficients
$
(u,b,\sigma,\sigma_0,h,\Sigma_Y)
$
generate
$
\mu,
$
while
$
\mu
$
simultaneously determines the coefficients. Consequently,
$
\Gamma
$
is not an exogenous transition family but the image of an endogenous fixed-point on
$
C([0,t];\mathcal P_2(\mathbb R^d)).
$
Throughout this subsection, let
$
\mathfrak M
:=
C([0,t];\mathcal P_2(\mathbb R^d))
$
and let
$
\Gamma:
\mathfrak M
\rightarrow
\mathfrak M
$
denote the conditional-law operator associated with system~\eqref{eq:ddu-system}.
Accordingly, every admissible preference evolution satisfies
$
\mu
=
\Gamma(\mu),
$
whereas arbitrary elements of
$
\mathfrak M
$
need not correspond to admissible conditional preference dynamics.

\begin{definition}
	\label{def:endogenous-fixed-point}
	A conditional preference flow
	$
	\mu\in\mathfrak M
	$
	is an \emph{endogenous fixed point} if, for every
	$
	0\le s\le r\le t,
	$
	$
	\mu_r
	=
	\Gamma_{s,r}(\mu_s),
	$
	equivalently,
	$
	\mu
	=
	\Gamma(\mu).
	$
	The set of all endogenous fixed points is denoted by
	$
	\operatorname{Fix}(\Gamma)
	:=
	\{
	\mu\in\mathfrak M:
	\Gamma(\mu)=\mu
	\}.
	$
	The associated DDU representation is said to be \emph{internally consistent} whenever
	$
	\operatorname{Fix}(\Gamma)\neq\varnothing.
	$
\end{definition}
\noindent
Definition~\ref{def:endogenous-fixed-point} identifies
$
\operatorname{Fix}(\Gamma)
$
as the admissible subset of
$
\mathfrak M,
$
with
$
\Gamma:
\mathfrak M\rightarrow\mathfrak M
$
defined jointly by
$
(u,b,\sigma,\sigma_0,h,\Sigma_Y)
$
and
$
\mu_v
=
\mathcal L(X_v\mid\mathcal F_v^Y).
$
Hence
$
\mu\in\operatorname{Fix}(\Gamma)
$
if and only if the induced continuation kernels satisfy
$
P_{s,r}^{X}
=
P^{X}(\Gamma_{s,r}),
$
$
\Phi_P
=
(\Gamma_{s,r})_{0\le s\le r\le t},
$
and
$
\mathscr T
=
\Lambda(\Phi_u,\Phi_P).
$
Thus admissibility is characterized by the nonlinear constraint
$
\Gamma(\mu)-\mu=0,
$
rather than by the primitive coefficients individually.

\begin{lem}
	\label{lem:fixed-point-characterization}
	For
	$
	\mu\in\mathfrak M,
	$
	$
	\mu\in\operatorname{Fix}(\Gamma)
	$
	if and only if
	$
	\mu_r=\Gamma_{s,r}(\mu_s)
	$
	for every
	$
	0\le s\le r\le t;
	$
	equivalently,
	$
	P_{s,r}^{X}
	=
	P^{X}(\Gamma_{s,r})
	$
	and
	$
	\Phi_P
	=
	(\Gamma_{s,r})_{0\le s\le r\le t}
	$
	for every admissible initial law
	$
	\lambda.
	$
\end{lem}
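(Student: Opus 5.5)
The first equivalence in Lemma~\ref{lem:fixed-point-characterization} will be unwound from Definition~\ref{def:endogenous-fixed-point}, and the second will be obtained from the representation in Theorem~\ref{thm:cmkv-representation} together with the semigroup property of Proposition~\ref{prop:conditional-markov}. Throughout, $\Gamma:\mathfrak M\to\mathfrak M$ is regarded as the operator whose time-$r$ section is produced by the two-parameter family, namely $\Gamma(\mu)_r:=\Gamma_{0,r}(\mu_0)$ (on the $\mu$-coordinate of $\widehat{\mathbf X}$). This is the convention implicit in Definition~\ref{def:endogenous-fixed-point}, which asserts that the two formulations agree.

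\emph{Step 1: the flow formulation.} Suppose $\Gamma(\mu)=\mu$. Then $\mu_r=\Gamma_{0,r}(\mu_0)$ for every $r$. For $s\le r$, the flow identity $\Gamma_{0,r}=\Gamma_{s,r}\circ\Gamma_{0,s}$ from Theorem~\ref{thm:cmkv-representation} gives
\[
\mu_r=\Gamma_{s,r}\bigl(\Gamma_{0,s}(\mu_0)\bigr)=\Gamma_{s,r}(\mu_s).
\]
Conversely, if $\mu_r=\Gamma_{s,r}(\mu_s)$ for all $s\le r$, take $s=0$ to obtain $\Gamma(\mu)_r=\mu_r$. Continuity of paths in $\mathfrak M$ means no measurability or version issues arise beyond pointwise equality in $r$. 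The identity $\Gamma_{s,s}=I$ handles the diagonal case.

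\emph{Step 2: the kernel formulation.} Fix an admissible $\lambda$. By Theorem~\ref{thm:cmkv-representation}, the actual solution satisfies $\widehat{\mathbf X}_r=\Gamma_{s,r}(\widehat{\mathbf X}_s)$ and $P^X_{s,r}=P^X(\Gamma_{s,r})$, where $P^X_{s,r}$ is the frozen-flow kernel $P^{m,X}_{s,r}$ evaluated along $m=\mu$.

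If $\mu\in\operatorname{Fix}(\Gamma)$, the frozen flow and the induced conditional-law flow coincide on $[s,r]$. The continuation law of $X$ driven by $m=\mu$ is then the one generated by $\Gamma_{s,r}$, so $P^X_{s,r}=P^X(\Gamma_{s,r})$. Collecting these over all $(s,r)$ gives $\Phi_P=(\Gamma_{s,r})$, exactly as recorded after Proposition~\ref{prop:conditional-markov}.

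For the converse, assume $P^{\mu,X}_{s,r}\lambda=P^X(\Gamma_{s,r})\lambda$ for all admissible $\lambda$. Conditioning on $\mathcal F^Y_s$ and using Lemma~\ref{lem:cmkv-state} (sufficiency of $\widehat{\mathbf X}_s$), the conditional law of $X_r$ given $\mathcal F^Y_r$ computed under the frozen flow $\mu$ equals the $\mu$-coordinate of $\Gamma_{s,r}(\widehat{\mathbf X}_s)$. By Definition~\ref{def:conditional-preference-flow}, this conditional law is also $\mu_r$, which yields $\mu_r=\Gamma_{s,r}(\mu_s)$ and returns us to Step 1.

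\emph{Main obstacle.} The delicate point is the converse in Step 2. Equality of the $X$-marginal kernels $P^X$ does not by itself pin down the conditional law given $\mathcal F^Y_r$, since $P^X$ averages over the observation path. I would handle this by applying the hypothesis with $\lambda$ ranging over all admissible initial laws, including those concentrated on a fixed $(y,\mu_s)$. The Markov property of Proposition~\ref{prop:conditional-markov} then lifts the identity from unconditional to conditional laws. If that is not enough, I would invoke the weak uniqueness in Theorem~\ref{thm:ddu-general-volatility}: the frozen system with $m=\mu$ and the DDU system would then have identical laws on $C([0,t];\mathbb R^d\times\mathbb R^{d_Y}\times\mathcal P_2)$, so their filters agree and $\mu=\Gamma(\mu)$ follows.
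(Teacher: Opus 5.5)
Your Step~1 and the forward half of Step~2 follow the paper's proof. The first equivalence is an unwinding of Definition~\ref{def:endogenous-fixed-point}: the paper just says $\Gamma$ acts pathwise, and your semigroup step $\Gamma_{0,r}=\Gamma_{s,r}\circ\Gamma_{0,s}$ makes this explicit. The identity $P^X_{s,r}=P^X(\Gamma_{s,r})$ at a fixed point is then read off from Theorem~\ref{thm:cmkv-representation}.

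The gap is the converse in Step~2, and it is circular. The sentence ``by Definition~\ref{def:conditional-preference-flow}, this conditional law is also $\mu_r$'' presupposes that the candidate flow is the filter of the frozen system, $\mu_r=\mathcal L(X^\mu_r\mid\mathcal F^{Y^\mu}_r)$. That is exactly the fixed-point property you are trying to prove. Lemma~\ref{lem:cmkv-state} and Proposition~\ref{prop:conditional-markov} have the same problem, since they are stated for an admissible DDU solution, i.e.\ a triple that already satisfies the consistency constraint. Neither fallback breaks the circle:
\begin{itemize}
\item Weak uniqueness in Theorem~\ref{thm:ddu-general-volatility} compares weak solutions of \eqref{eq:ddu-system}, and $(X^\mu,Y^\mu,\mu)$ is one only if $\mu$ is already a fixed point (Theorem~\ref{thm:endogenous-information}). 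Moreover, your hypothesis equates only one-time $X$-marginals, not path laws of $(X,Y,\mu)$.
\item Letting $\lambda$ range over laws $\eta\otimes\delta_y$ does not help either. $P^{\mu,X}_{s,r}\lambda$ is an average of the $\mathcal F^Y_r$-conditional law over observation paths, and equal averages cannot identify a prescribed random measure $\mu_r$ with the filter.
\end{itemize}

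In fact, with $P^X_{s,r}$ read as the frozen-flow kernel $P^{\mu,X}_{s,r}$, as you do, the converse is false, so no repair of these steps can succeed. Take coefficients independent of the measure argument, as in Proposition~\ref{prop:ddu-nesting}, and let $\Gamma(m)_s=\mathcal L(X^m_s\mid\mathcal F^{Y^m}_s)$ as in Theorem~\ref{thm:endogenous-information}.
\begin{itemize}
\item The frozen system is the same for every $m$, so $P^{m,X}_{s,r}\lambda=P^X(\Gamma_{s,r})\lambda$ for every $m\in\mathfrak M$ and every $\lambda$.
\item Yet $\Gamma(m)$ is one and the same filter for all $m$, so $\operatorname{Fix}(\Gamma)$ is a single flow.
\item Hence any other $m$ (e.g.\ a deterministic flow when the filter is genuinely random) satisfies your hypothesis without being a fixed point.
\end{itemize}
The $X$-marginal kernel carries no information about the $\mu$-coordinate of $\widehat{\mathbf X}=(Y,\mu)$, and that is the missing ingredient.

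The paper's converse does not try to recover $\mu_r$ from $X$-marginals. It works directly with the induced conditional-law flow, treats the hypothesis $\Phi_P=(\Gamma_{s,r})$ as saying that the conditional preference dynamics are generated by $\Gamma$, and gets $\mu_r=\Gamma_{s,r}(\mu_s)$ from the uniqueness clause of Theorem~\ref{thm:cmkv-representation}. Your converse needs an input of that kind: either a hypothesis constraining the $\mu$-coordinate, or a restriction of $\mu$ to the induced conditional-law flow.

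Note also that your Step~1 operator $\Gamma(\mu)_r=\Gamma_{0,r}(\mu_0)$ depends only on $\mu_0$. The fixed-point notion you use in Step~2 is the frozen-filter one, which depends on the whole path. A complete argument has to use one operator throughout.
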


\noindent Proof is in Supplementary Appendix SC.3.
Lemma~\ref{lem:fixed-point-characterization}
identifies
$
\operatorname{Fix}(\Gamma)
=
\{
\mu\in\mathfrak M:
\mu_r=\Gamma_{s,r}(\mu_s),
\;
0\le s\le r\le t
\},
$
or, equivalently,
$
\operatorname{Fix}(\Gamma)
=
\{
\mu\in\mathfrak M:
\Phi_P(\mu)
=
(\Gamma_{s,r})_{0\le s\le r\le t}
\}.
$
Hence,
$
\Gamma
\mapsto
\operatorname{Fix}(\Gamma)
\mapsto
\Phi_P,
$
with
$
P_{s,r}^{X}
=
P^{X}(\Gamma_{s,r}),
$
so that
$\Gamma$
is the endogenous conditional-law operator and
$\Phi_P$
its observable transition representation under
$\Lambda$.

\begin{prop}
	\label{prop:fixed-point-uniqueness}
Suppose the uniqueness conditions of Theorem~\ref{thm:ddu-general-volatility} hold for a fixed primitive tuple and initial law. If
	$
	\mu,\tilde\mu
	\in
	\operatorname{Fix}(\Gamma),
	$
	then
	$
	\Gamma^\mu
	=
	\Gamma^{\tilde\mu},
	$
	$
	\Phi_P^\mu
	=
	\Phi_P^{\tilde\mu},
	$
	and
	$
	\mathscr T^\mu
	=
	\mathscr T^{\tilde\mu}.
	$
	Consequently,
	$
	\operatorname{Fix}(\Gamma)
	$
	is behaviorally singleton.
\end{prop}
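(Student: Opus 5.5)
The plan is to reduce behavioral uniqueness to weak uniqueness in Theorem~\ref{thm:ddu-general-volatility}, and then push that equality of laws through the chain $\mu\mapsto\Gamma\mapsto\Phi_P\mapsto\mathscr T$.

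\emph{From fixed points to solutions.} Let $\mu,\tilde\mu\in\operatorname{Fix}(\Gamma)$ for the fixed primitive tuple $\theta$ and initial law $\lambda_0$. By Definition~\ref{def:endogenous-fixed-point} and Lemma~\ref{lem:fixed-point-characterization}, each fixed point is realized by a weak solution of \eqref{eq:ddu-system}. For $\mu$, freeze the flow at $m=\mu$ and solve the frozen-flow system, which is well posed under Assumption~1. The identity $\mu=\Gamma(\mu)$ then says that the frozen solution satisfies $\mu_s=\mathcal L(X^\mu_s\mid\mathcal F^{Y^\mu}_s)$ for all $s$. Hence $(X^\mu,Y^\mu,\mu)$ is a weak solution of \eqref{eq:ddu-system} in the class of Theorem~\ref{thm:ddu-general-volatility}. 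The same construction gives a weak solution $(X^{\tilde\mu},Y^{\tilde\mu},\tilde\mu)$. Both solutions start from $\lambda_0$.

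\emph{Equality of laws.} By (GU1)--(GU3), both triples induce the same law on $C([0,t];\mathbb R^d)\times C([0,t];\mathbb R^{d_Y})\times C([0,t];\mathcal P_2(\mathbb R^d))$.

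\emph{Equality of $\Gamma$.} By Lemma~\ref{lem:cmkv-state}, $\Gamma^\mu_{s,r}$ is determined by the conditional law of the continuation path given $\widehat{\mathbf X}_s=(Y_s,\mu_s)$. Concretely, for every bounded Borel $F$ the map $\mathcal G^F_{s,r}$ is a version of
\[
\mathbb E\!\left[F\bigl((\mathbf X_v)_{v\in[s,r]}\bigr)\,\middle|\,\widehat{\mathbf X}_s\right].
\]
Such conditional expectations are functionals of the joint law of $(X,Y,\mu)$, which the previous step shows is the same for both solutions. Therefore $\mathcal G^{F,\mu}_{s,r}=\mathcal G^{F,\tilde\mu}_{s,r}$ almost everywhere with respect to the common law of $\widehat{\mathbf X}_s$. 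By the uniqueness clause of Theorem~\ref{thm:cmkv-representation}, this yields $\Gamma^\mu=\Gamma^{\tilde\mu}$. Theorem~\ref{thm:cmkv-representation} also gives $\Phi_P=(\Gamma_{s,r})$ and $P^X_{s,r}=P^X(\Gamma_{s,r})$, so $\Phi_P^\mu=\Phi_P^{\tilde\mu}$.

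\emph{Equality of $\mathscr T$.} $\Phi_u$ depends only on $u$, which is part of the fixed tuple $\theta$, so $\Phi_u^\mu=\Phi_u^{\tilde\mu}$. Since $\mathscr T=\Lambda\circ\Phi$, it follows that $\mathscr T^\mu=\mathscr T^{\tilde\mu}$. Every element of $\operatorname{Fix}(\Gamma)$ thus generates the same behavioral image, which is the sense in which $\operatorname{Fix}(\Gamma)$ is behaviorally singleton.

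\emph{Main obstacle.} I expect the hardest step to be the transfer from equality of laws to equality of $\Gamma$. Conditional kernels are defined only up to null sets of the law of $\widehat{\mathbf X}_s$, whereas $\Gamma_{s,r}$ and $\Phi_P$ are indexed by all admissible initial laws $\lambda$ and flows $m$, not only by the realized ones. I would handle this by restarting at each time $s$: apply weak uniqueness to the continuation problem from any admissible initial state, as Lemma~\ref{lem:cmkv-state} assumes. That makes the kernels agree pointwise on the reachable domain, not merely almost surely. The measurable selection of these kernels would be deferred to the uniqueness clause of Theorem~\ref{thm:cmkv-representation}.
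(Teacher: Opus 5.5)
Your proposal is correct and follows essentially the same route as the paper. Both arguments realize the two fixed points as weak solutions of \eqref{eq:ddu-system} with the common primitives and initial law, use the weak uniqueness of Theorem~\ref{thm:ddu-general-volatility} to equate the laws of $(X,Y,\mu)$, conclude that $\Gamma^\mu_{s,r}$ and $\Gamma^{\tilde\mu}_{s,r}$ agree almost everywhere with respect to the common law of $\widehat{\mathbf X}_s$ (hence $\Phi_P^\mu=\Phi_P^{\tilde\mu}$ by Theorem~\ref{thm:cmkv-representation}), and finish with the common felicity $u$ and $\mathscr T=\Lambda\circ\Phi$.

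The differences are minor. The paper gets the kernel equality by citing canonical compatibility (GU3), whereas you observe directly that regular conditional distributions are determined by joint laws. The paper also settles for agreement on the reachable augmented-state domain. Your pointwise restart argument, like both proofs' appeal to Theorem~\ref{thm:cmkv-representation}, relies on the continuation weak-uniqueness hypothesis of Lemma~\ref{lem:cmkv-state}, which is not among the proposition's stated assumptions.
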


\noindent Proof is in Supplementary Appendix SC.3.
Proposition~\ref{prop:fixed-point-uniqueness}
identifies
$
\operatorname{Fix}(\Gamma)
$
through
$
\Gamma
\mapsto
\Phi_P
\mapsto
\mathscr T,
$
with
$
\mathscr T
=
\Lambda(\Phi_u,\Phi_P).
$
Hence
$
\Gamma^\mu
=
\Gamma^{\tilde\mu}
$
implies
$
\Phi_P^\mu
=
\Phi_P^{\tilde\mu},
$
$
\mathscr T^\mu
=
\mathscr T^{\tilde\mu},
$
and
$
[\mu]_{\equiv_{\mathfrak O}}
=
\{\mu\}.
$
Accordingly,
$
\operatorname{Fix}(\Gamma)
$
parameterizes the behavioral image of the DDU system, whereas variation orthogonal to
$
\operatorname{Fix}(\Gamma)
$
is observationally null.

\begin{theorem}
	\label{thm:endogenous-information}
	Suppose the uniqueness conditions of Theorem~\ref{thm:ddu-general-volatility} and the hypotheses of Theorem~\ref{thm:cmkv-representation} hold. For
	$
	m\in\mathfrak M,
	$
	let
	$
	(X^m,Y^m)
	$
	denote the frozen-flow system,
	$
	\Gamma(m)_s
	:=
	\mathcal L(X_s^m\mid\mathcal F_s^{Y^m}),
	$
	and
	$
	\Phi_P(m)
	:=
	(P_{s,r}^{m,X})_{0\leq s\leq r\leq t}.
	$
	Then
	$
	m\in\operatorname{Fix}(\Gamma)
	$
	if and only if the frozen system closes to an admissible DDU solution
	$
	(X^m,Y^m,m).
	$
	For every such fixed point,
	$
	\Phi_P(m)
	=
	\bigl(P^X(\Gamma_{s,r}^{m})\bigr)_{0\leq s\leq r\leq t}
	$
	and
	$
	\mathscr T(m)
	=
	\Lambda\bigl(\Phi_u(m),\Phi_P(m)\bigr).
	$
	Moreover, the map
	$
	\mathcal B:
	\operatorname{Fix}(\Gamma)\rightarrow\Phi(\Theta),
	$
	$
	\mathcal B(m):=(\Phi_u(m),\Phi_P(m)),
	$
	has singleton image for fixed primitives and initial law. Hence every admissible
	stochastic-choice array is generated by a unique behavioral fixed-point class
	$
	[m]_{\mathcal B}
	:=
	\{\tilde m\in\operatorname{Fix}(\Gamma):
	\mathcal B(\tilde m)=\mathcal B(m)\}.
	$
\end{theorem}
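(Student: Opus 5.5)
The plan is to prove the three assertions of Theorem~\ref{thm:endogenous-information} in order: the fixed-point/closure equivalence, the representation of $\Phi_P(m)$ and $\mathscr T(m)$, and the singleton image of $\mathcal B$. Each is a definitional unwinding combined with Theorem~\ref{thm:ddu-general-volatility}, Theorem~\ref{thm:cmkv-representation}, and Proposition~\ref{prop:fixed-point-uniqueness}.

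\emph{Equivalence.} For the forward direction, take $m\in\operatorname{Fix}(\Gamma)$, so that $m_s=\mathcal L(X^m_s\mid\mathcal F^{Y^m}_s)$ for every $s$. Substituting $m_s$ for $\mu_s$ in the frozen-flow coefficients shows that $(X^m,Y^m)$ solves the first two lines of \eqref{eq:ddu-system}. The third line holds by the fixed-point identity itself. The moment bound and continuity of $m$ in $\mathcal P_2$ come from $m\in\mathfrak M$ together with the frozen-system estimates under Assumption~1. Hence $(X^m,Y^m,m)$ is an admissible weak DDU solution in the class of Theorem~\ref{thm:ddu-general-volatility}. For the converse, suppose $(X^m,Y^m,m)$ is an admissible DDU solution. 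Then by Definition~\ref{def:conditional-preference-distribution} we have $m_s=\mathcal L(X^m_s\mid\mathcal F^{Y^m}_s)=\Gamma(m)_s$, so $m=\Gamma(m)$. One point needs care here: the filtration $\mathcal F^{Y^m}$ depends on $m$ through the $Y$-equation. That is harmless, because $\Gamma$ is defined relative to the frozen system's own observation filtration.

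\emph{Representation.} At a fixed point $m$, the frozen system coincides with the DDU solution. Therefore $P^{m,X}_{s,r}$ is the $X$-marginal of the continuation law of the closed system. Theorem~\ref{thm:cmkv-representation}, which applies by hypothesis, gives $P^X_{s,r}=P^X(\Gamma^m_{s,r})$, and this yields $\Phi_P(m)=(P^X(\Gamma^m_{s,r}))_{s\le r}$. The formula $\mathscr T(m)=\Lambda(\Phi_u(m),\Phi_P(m))$ is then the factorization $\mathscr T=\Lambda\circ\Phi$ from Section~\ref{sec:identification}, evaluated along this flow.

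\emph{Singleton image.} Fix primitives and an initial law, and take $m,\tilde m\in\operatorname{Fix}(\Gamma)$. Step one gives two admissible DDU solutions with the same initial distribution. Weak uniqueness under (GU1)--(GU3) in Theorem~\ref{thm:ddu-general-volatility} makes their laws agree on $C([0,t];\mathbb R^d)\times C([0,t];\mathbb R^{d_Y})\times C([0,t];\mathcal P_2(\mathbb R^d))$. Proposition~\ref{prop:fixed-point-uniqueness} then gives $\Gamma^m=\Gamma^{\tilde m}$ and $\Phi_P^m=\Phi_P^{\tilde m}$.

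Equality of $\Phi_u$ is immediate: $\chi_s(z;A,\cdot,\mu)$ depends only on $u$, which is fixed, and the admissible index sets $(s,\mu)$ are determined by the common reachable laws. Hence $\mathcal B(m)=\mathcal B(\tilde m)$, so the image of $\mathcal B$ is a singleton, $[m]_{\mathcal B}=\operatorname{Fix}(\Gamma)$, and every admissible array arises from this unique class.

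\emph{Main obstacle.} I expect the hard part to be making precise that $\Phi_P$, defined pathwise through frozen flows $m$ that may be random and $\mathbb F^Y$-adapted, depends only on the law of $(X,Y,\mu)$ and not on the particular realization. This is what makes weak uniqueness usable. My first attempt would be to express $P^{m,X}_{s,r}$ via the Markov transition family $\mathcal G_{s,r}$ of Proposition~\ref{prop:conditional-markov} as a measurable function of $\widehat{\mathbf X}_s$. Equality in law then transfers directly to equality of these kernels.
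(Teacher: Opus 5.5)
Your proposal is correct and follows the paper's proof essentially step for step. The closure equivalence comes from unwinding $\Gamma(m)_s=\mathcal L(X^m_s\mid\mathcal F^{Y^m}_s)$; the representation $P^{m,X}_{s,r}=P^X(\Gamma^m_{s,r})$ comes from Theorem~\ref{thm:cmkv-representation} together with $\mathscr T=\Lambda\circ\Phi$; and the singleton image comes from weak uniqueness in Theorem~\ref{thm:ddu-general-volatility} plus the common felicity primitive $u$. The only cosmetic difference is that you cite Proposition~\ref{prop:fixed-point-uniqueness} for $\Gamma^m=\Gamma^{\tilde m}$ and $\Phi_P^m=\Phi_P^{\tilde m}$, whereas the paper re-derives this inline from canonical compatibility (GU3) and Theorem~\ref{thm:cmkv-representation}, which is exactly the mechanism you propose for your ``main obstacle.''
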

\noindent Proof is in Supplementary Appendix SC.3. Theorem~\ref{thm:endogenous-information} identifies
$
\operatorname{Fix}(\Gamma)
$
with the behavioral image
$
\Phi(\Theta),
$
through
$
\mu
\mapsto
\Gamma
\mapsto
(\Phi_u,\Phi_P)
\mapsto
\mathscr T.
$
Hence,
$
\operatorname{Fix}(\Gamma)
$
is simultaneously the equilibrium set of the conditional-law dynamics and the parameter space of observationally admissible DDU. Accordingly, distribution-dependent preferences are characterized by the nonlinear constraint
$
\mu=\Gamma(\mu),
$
rather than by individual primitives
$
(u,b,\sigma,\sigma_0,h,\Sigma_Y).
$

\subsection{Structural Properties of Conditional Preference Laws} \label{sec:cmkv-properties}
Subsection~\ref{sec:cmkv-fixedpoint} characterizes admissible preference dynamics through the endogenous fixed-point relation
$
\mu=\Gamma(\mu).
$
The remaining question concerns the structural properties of the operator
$
\Gamma:
\mathfrak M
\rightarrow
\mathfrak M.
$
Since
$
\Phi_P
=
(\Gamma_{s,r})_{0\le s\le r\le t},
$
every behavioral perturbation is induced by a perturbation of
$
\Gamma,
$
rather than by the primitive coefficients individually.
Accordingly, this subsection studies invariance, stability, and comparative statics of
$
\Gamma
$
on
$
\operatorname{Fix}(\Gamma).
$
\begin{definition} \label{def:behavioral-stability-operator}
	The conditional-law operator
	$
	\Gamma
	$
	is behaviorally stable if, for every
	$
	\mu,\tilde\mu
	\in
	\operatorname{Fix}(\Gamma),
	$
	$
	\Phi(\mu)=\Phi(\tilde\mu)
	$
	implies
	$
	\Gamma^\mu
	=
	\Gamma^{\tilde\mu}.
	$
\end{definition}
\noindent
Definition~\ref{def:behavioral-stability-operator}
identifies
$
\operatorname{Fix}(\Gamma)
$
as the natural domain of the nonlinear operator
$
\Gamma:
\mathfrak M
\rightarrow
\mathfrak M,
$
with
$
\mathfrak M
=
C([0,t];\mathcal P_2(\mathbb R^d)).
$
Hence every admissible evolution satisfies
$
\mu=\Gamma(\mu),
$
$
\Gamma_{s,q}
=
\Gamma_{r,q}\circ\Gamma_{s,r},
$
$
\Phi_P
=
(\Gamma_{s,r})_{0\le s\le r\le t},
$
and
$
\mathscr T
=
\Lambda(\Phi_u,\Phi_P),
$
so the maps
$
\Gamma,
$
$
\Phi,
$
and
$
\mathscr T
$
all factor through
$
\operatorname{Fix}(\Gamma).
$
Accordingly,
$
\Gamma^\mu
=
\Gamma^{\tilde\mu}
$
implies
$
\Phi^\mu
=
\Phi^{\tilde\mu},
$
$
\Phi_P^\mu
=
\Phi_P^{\tilde\mu},
$
$
\mathscr T^\mu
=
\mathscr T^{\tilde\mu},
$
and
$
\mathcal I_{\mathfrak O}(\mu)
=
\mathcal I_{\mathfrak O}(\tilde\mu),
$
whereas
$
\Gamma^\mu
\neq
\Gamma^{\tilde\mu}
$
induces
$
\Phi^\mu
\neq
\Phi^{\tilde\mu},
$
$
\mathscr T^\mu
\neq
\mathscr T^{\tilde\mu},
$
and therefore a distinct behavioral image in
$
\Phi(\Theta).
$
Thus the structural analysis of DDU is equivalently an analysis of the restriction
$
\Gamma|_{\operatorname{Fix}(\Gamma)},
$
since perturbations outside
$
\operatorname{Fix}(\Gamma)
$
violate
$
\mu=\Gamma(\mu)
$
and therefore cannot generate admissible conditional preference dynamics, whereas perturbations within
$
\operatorname{Fix}(\Gamma)
$
are completely characterized by the induced path
$
\Gamma
\mapsto
\Phi_P
\mapsto
\Phi
\mapsto
\mathscr T.
$
\begin{lem}
	\label{lem:fixed-point-invariance}
	Let
	$
	\mu\in\operatorname{Fix}(\Gamma).
	$
	Then, for every
	$
	0\le s\le r\le q\le t,
	$
	$
	\Gamma_{r,q}\circ\Gamma_{s,r}
	=
	\Gamma_{s,q},
	$
	$
	\Gamma_{s,r}(\mu_s)
	=
	\mu_r,
	$
	and
	$
	\Gamma_{r,q}(\mu_r)
	=
	\mu_q.
	$
	Consequently,
	$
	\Gamma_{s,r}(\operatorname{Fix}(\Gamma))
	\subseteq
	\operatorname{Fix}(\Gamma),
	$
	and
	$
	\Phi_P
	=
	(\Gamma_{s,r})_{0\le s\le r\le t}
	$
	is invariant under the induced semigroup action.
\end{lem}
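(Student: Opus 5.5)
The lemma is essentially a repackaging of the flow property already available from Theorem~\ref{thm:cmkv-representation} together with the fixed-point characterization in Lemma~\ref{lem:fixed-point-characterization}, so the plan is to combine these two results rather than return to the stochastic dynamics. Fix $\mu\in\operatorname{Fix}(\Gamma)$ and $0\le s\le r\le q\le t$.

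\emph{Step 1 (cocycle identity).} Theorem~\ref{thm:cmkv-representation} supplies a unique family $(\Gamma_{s,r})$ with $\Gamma_{s,q}=\Gamma_{r,q}\circ\Gamma_{s,r}$ and $\Gamma_{s,s}=I$. I would note that this identity holds for the operators generated by the fixed primitives. The only point to check is that, when $\mu$ is a fixed point, the family attached to $\mu$ is the same family. This follows from the uniqueness clause of Theorem~\ref{thm:cmkv-representation} together with Proposition~\ref{prop:fixed-point-uniqueness}, which gives $\Gamma^\mu=\Gamma^{\tilde\mu}$ for any two fixed points. Underneath this is the Markov property of Proposition~\ref{prop:conditional-markov}, namely the relation $\mathcal G_{s,q}=\mathcal G_{s,r}\mathcal G_{r,q}$, transported to the deterministic evolution $\widehat{\mathbf X}_r=\Gamma_{s,r}(\widehat{\mathbf X}_s)$.

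\emph{Step 2 (orbit identities).} By Lemma~\ref{lem:fixed-point-characterization}, $\mu\in\operatorname{Fix}(\Gamma)$ is equivalent to $\mu_r=\Gamma_{s,r}(\mu_s)$ for all $s\le r$. Applying this once on $[s,r]$ and once on $[r,q]$ gives both displayed identities. Composing them and using Step 1 yields
\[
\Gamma_{s,q}(\mu_s)=\Gamma_{r,q}(\Gamma_{s,r}(\mu_s))=\mu_q,
\]
which serves as a consistency check.

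\emph{Step 3 (invariance of $\operatorname{Fix}(\Gamma)$).} Here $\Gamma_{s,r}$ acts on flows by shifting the starting point. For $\mu\in\operatorname{Fix}(\Gamma)$, I would define the image flow as the restarted orbit $v\mapsto\Gamma_{r,v}(\Gamma_{s,r}(\mu_s))$ for $v\ge r$. By Step 1 this orbit equals $v\mapsto\Gamma_{s,v}(\mu_s)=\mu_v$, so it satisfies the defining relation of Lemma~\ref{lem:fixed-point-characterization} on every subinterval. Continuity in $\mathcal P_2$ is inherited from $\mu\in\mathfrak M$. Hence the image flow lies in $\operatorname{Fix}(\Gamma)$.

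\emph{Step 4 (invariance of $\Phi_P$).} Since $\Phi_P=(\Gamma_{s,r})$ by Theorem~\ref{thm:cmkv-representation}, and the semigroup action only reindexes the family along the same cocycle, $\Phi_P$ is unchanged by it.

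The main obstacle is making Step 3 rigorous. The paper treats $\Gamma_{s,r}$ sometimes as a map on states and sometimes as a map on flows. I would resolve this by fixing the convention that the action on $\operatorname{Fix}(\Gamma)$ is the restarted orbit used above, and then appeal to the weak uniqueness of Theorem~\ref{thm:ddu-general-volatility}, restarted at time $r$ from the law of $(X_r,Y_r)$, to identify this restarted orbit with the original fixed-point flow.
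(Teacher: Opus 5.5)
Your skeleton matches the paper's proof. The orbit identities come from the fixed-point relation; you read $\Gamma_{r,q}(\mu_r)=\mu_q$ directly off Lemma~\ref{lem:fixed-point-characterization}, while the paper derives it from $\Gamma_{s,q}(\mu_s)=\mu_q$ and the cocycle. The cocycle is quoted from Theorem~\ref{thm:cmkv-representation}, invariance is argued on restarted flows, and $\Phi_P=(\Gamma_{s,r})$ inherits the cocycle, which the paper makes explicit as $P^X_{s,q}=P^X_{r,q}P^X_{s,r}$. That argument is complete, but two of your additions should be removed.

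First, in Step~1 you do not need Proposition~\ref{prop:fixed-point-uniqueness}. It also imports the uniqueness conditions (GU1)--(GU3) of Theorem~\ref{thm:ddu-general-volatility}, which the lemma does not assume. The lemma concerns a single fixed point and the single family $(\Gamma_{s,r})$ that defines $\operatorname{Fix}(\Gamma)$. Theorem~\ref{thm:cmkv-representation} supplies the cocycle for that family directly, and no comparison between different fixed points enters.

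Second, and more seriously, the repair in your last paragraph would not work. Restarting at time $r$ from the unconditional law of $(X_r,Y_r)$ throws away $\mathcal F_r^Y$. The restarted filter then begins at $\mathcal L(X_r\mid Y_r)$, not at the random measure $\mu_r=\mathcal L(X_r\mid\mathcal F_r^Y)$, which in general depends on the whole observed path. Because the flow enters the coefficients, the restarted dynamics change as well. Weak uniqueness would therefore determine the law of a different object, not the tail $(\mu_v)_{r\le v\le t}$. The correct restart is the conditional one from $(Y_r,\mu_r)$, as in Lemma~\ref{lem:cmkv-state}, and the cocycle already encodes exactly that. So your Step~3 identity $\Gamma_{r,v}(\Gamma_{s,r}(\mu_s))=\Gamma_{s,v}(\mu_s)=\mu_v$ needs no further input.

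What Step~3 does need is the right target space. The restarted orbit is a flow on $[r,t]$, so it cannot lie in $\operatorname{Fix}(\Gamma)\subseteq C([0,t];\mathcal P_2(\mathbb R^d))$. The paper handles this as follows:
\begin{itemize}
\item it introduces the continuation operator $\Gamma^{[r]}(\nu)_v:=\Gamma_{r,v}(\nu_r)$ on $C([r,t];\mathcal P_2(\mathbb R^d))$;
\item it shows $\mu^{[r]}:=(\mu_v)_{r\le v\le t}\in\operatorname{Fix}(\Gamma^{[r]})$;
\item it reads the inclusion statewise as $\Gamma_{s,r}(\mathfrak F_s)\subseteq\mathfrak F_r$, where $\mathfrak F_s:=\{\mu_s:\mu\in\operatorname{Fix}(\Gamma)\}$.
\end{itemize}
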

\noindent Proof is in Appendix~\ref{prooflem31}.
Lemma~\ref{lem:fixed-point-invariance}
identifies
$
\operatorname{Fix}(\Gamma)
$
as an invariant manifold of the semigroup
$
(\Gamma_{s,r})_{0\le s\le r\le t},
$
with
$
\Gamma_{s,q}
=
\Gamma_{r,q}\circ\Gamma_{s,r},
$
$
\Gamma_{s,r}(\mu_s)
=
\mu_r,
$
and
$
\Gamma_{r,q}(\mu_r)
=
\mu_q.
$
Hence
$
\Gamma_{s,r}
:
\operatorname{Fix}(\Gamma)
\rightarrow
\operatorname{Fix}(\Gamma),
$
$
\Phi_P
=
(\Gamma_{s,r})_{0\le s\le r\le t},
$
and
$
\mathscr T
=
\Lambda(\Phi_u,\Phi_P)
$
are jointly invariant under admissible continuation, so every behavioral perturbation is represented by a perturbation of the restricted semigroup
$
\Gamma|_{\operatorname{Fix}(\Gamma)}.
$
Equivalently,
$
\Phi(\Theta)
=
\Phi(\operatorname{Fix}(\Gamma)),
$
$
\mathcal I_{\mathfrak O}
=
\Phi^{-1}
\!\left(
[\Phi(\Theta)]_{\equiv_{\mathfrak O}}
\right),
$
and
$
\mathscr T
=
\Lambda\circ\Phi
$
remain unchanged along every admissible orbit generated by
$
(\Gamma_{s,r}).
$
Accordingly,
$
\Gamma_{s,r}
$
preserves both the observational quotient
$
\Theta/\!\sim_{\mathfrak O}
$
and the behavioral fibers
$
\Phi^{-1}(\varphi),
$
whereas any structural perturbation satisfying
$
\Gamma_{s,r}(\operatorname{Fix}(\Gamma))
\nsubseteq
\operatorname{Fix}(\Gamma)
$
necessarily destroys the fixed-point relation
$
\mu=\Gamma(\mu),
$
breaks the compatibility of
$
\Phi_P
$
with
$
\Gamma,
$
and therefore cannot correspond to an admissible DDU representation. Thus the subsequent structural analysis reduces to properties of the invariant restriction
$
\Gamma|_{\operatorname{Fix}(\Gamma)},
$
rather than of arbitrary nonlinear operators on
$
\mathfrak M.
$
\begin{prop} \label{prop:structural-stability}
Suppose the uniqueness conditions of
Theorem~\ref{thm:ddu-general-volatility} hold for a fixed primitive tuple
and initial law, and suppose
$
\Phi_u^\mu=\Phi_u^{\tilde\mu}
$
on the reachable domain, and
	$
	\mu,\tilde\mu
	\in
	\operatorname{Fix}(\Gamma).
	$
	If
	$
	\Gamma^\mu
	=
	\Gamma^{\tilde\mu},
	$
	then, for every
	$
	0\le s\le r\le t,
	$
	$
	P_{s,r}^{\mu,X}
	=
	P_{s,r}^{\tilde\mu,X},
	$
	$
	\Phi_P^\mu
	=
	\Phi_P^{\tilde\mu},
	$
	$
	\Phi^\mu
	=
	\Phi^{\tilde\mu},
	$
	and
	$
	\mathscr T^\mu
	=
	\mathscr T^{\tilde\mu}.
	$
	Subsequently,
	$
	\Gamma
	\mapsto
	\Phi
	\mapsto
	\mathscr T
	$
	is constant on every invariant orbit of
	$
	\operatorname{Fix}(\Gamma).
	$
\end{prop}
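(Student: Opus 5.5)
The plan is to argue along the chain $\Gamma\mapsto P^X\mapsto\Phi_P\mapsto\Phi\mapsto\mathscr T$ already set up by Theorem~\ref{thm:cmkv-representation} and Lemma~\ref{lem:fixed-point-characterization}. Fix $\mu,\tilde\mu\in\operatorname{Fix}(\Gamma)$ with $\Gamma^\mu=\Gamma^{\tilde\mu}$.

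\emph{Step 1: transition kernels.} By Lemma~\ref{lem:fixed-point-characterization}, membership in $\operatorname{Fix}(\Gamma)$ gives, for every admissible $\lambda$ and every $0\le s\le r\le t$,
\[
P_{s,r}^{\mu,X}=P^X(\Gamma^\mu_{s,r}), \qquad P_{s,r}^{\tilde\mu,X}=P^X(\Gamma^{\tilde\mu}_{s,r}).
\]
Since $P^X(\cdot)$ is a fixed measurable assignment of the $X$-marginal kernel to the evolution operator, the hypothesis $\Gamma^\mu_{s,r}=\Gamma^{\tilde\mu}_{s,r}$ yields $P_{s,r}^{\mu,X}=P_{s,r}^{\tilde\mu,X}$. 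Taking the family over $(s,r,m)$ then gives $\Phi_P^\mu=\Phi_P^{\tilde\mu}$.

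\emph{Step 2: behavioral image and observables.} Combining Step~1 with the hypothesis $\Phi_u^\mu=\Phi_u^{\tilde\mu}$ gives $\Phi^\mu=(\Phi_u^\mu,\Phi_P^\mu)=\Phi^{\tilde\mu}$. The factorization $\mathscr T=\Lambda\circ\Phi$ from Section~\ref{sec:identification} then yields $\mathscr T^\mu=\mathscr T^{\tilde\mu}$. Alternatively, one can verify this termwise from the equal kernels: the $C$-entries depend only on $\Phi_u$, and the $D^{\bar\mu}$-entries are integrals of $\Phi_u$-indicators against $\Phi_P$-laws, as in Proposition~\ref{prop:behavioral-stability}.

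\emph{Step 3: orbit constancy.} Take an invariant orbit $\{\Gamma_{s,r}(\mu)\}$ in $\operatorname{Fix}(\Gamma)$, which is well defined by Lemma~\ref{lem:fixed-point-invariance}. By the semigroup identity $\Gamma_{s,q}=\Gamma_{r,q}\circ\Gamma_{s,r}$, and because the weak uniqueness conditions of Theorem~\ref{thm:ddu-general-volatility} fix the primitive tuple and initial law, every point on the orbit generates the same evolution family $\Gamma$. Proposition~\ref{prop:fixed-point-uniqueness} confirms that fixed points are behaviorally singleton. Applying Steps~1--2 to any two points of the orbit then shows that $\Gamma\mapsto\Phi\mapsto\mathscr T$ is constant along it.

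\emph{Main obstacle.} The difficult step is Step~1. It requires that $P^X(\cdot)$ be a genuine function of $\Gamma_{s,r}$ alone, uniformly across admissible $\lambda$ and continuation policies, rather than depending on the underlying flow. I would justify this through Theorem~\ref{thm:cmkv-representation}, which asserts uniqueness of the operator family and the representation $P_{s,r}^X=P^X(\Gamma_{s,r})$. The argument must also check that the continuation martingale problem in Lemma~\ref{lem:cmkv-state} is weakly unique from each initial state. This ensures that equal $\Gamma$'s induce equal laws of $X_r$, and not merely equal conditional-law flows.
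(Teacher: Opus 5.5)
Your proposal is correct and follows the paper's proof essentially step for step: the paper also gets $P^{\mu,X}_{s,r}=P^X(\Gamma^\mu_{s,r})=P^X(\Gamma^{\tilde\mu}_{s,r})=P^{\tilde\mu,X}_{s,r}$ directly from Theorem~\ref{thm:cmkv-representation}, so the obstacle you flag in Step~1 is settled there by citation alone, and it then combines the assumed equality of $\Phi_u$ with $\mathscr T=\Lambda\circ\Phi$. The only difference is the orbit step: the paper reapplies the same argument to orbit points $\eta,\tilde\eta$ with $\Gamma^\eta=\Gamma^{\tilde\eta}$ via Lemma~\ref{lem:fixed-point-invariance}, whereas your extra appeal to Proposition~\ref{prop:fixed-point-uniqueness} is valid under the stated uniqueness hypotheses and in fact gives the stronger conclusion that $\Phi$ and $\mathscr T$ are constant on all of $\operatorname{Fix}(\Gamma)$.
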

\noindent Proof is in Appendix~\ref{proofprop32}.
Proposition~\ref{prop:structural-stability}
identifies
$
\Gamma
$
as the maximal structural invariant of the DDU dynamics.
Since
$
P^X
=
P^X(\Gamma),
$
$
\Phi
=
(\Phi_u,\Phi_P),
$
and
$
\mathscr T
=
\Lambda\circ\Phi,
$
every admissible perturbation preserving
$
\Gamma
$
also preserves
$
\Phi
$
and
$
\mathscr T.
$
Hence
$
\Gamma
$
induces the quotient
$
\operatorname{Fix}(\Gamma)/{\sim_\Gamma},
$
$
\mu
\sim_\Gamma
\tilde\mu
\Longleftrightarrow
\Gamma^\mu
=
\Gamma^{\tilde\mu},
$
while
$
\Lambda
$
is constant on every equivalence class.
Accordingly,
behavioral variation is generated only by perturbations satisfying
$
\Gamma^\mu
\neq
\Gamma^{\tilde\mu},
$
or equivalently,
$
\Phi^\mu
\neq
\Phi^{\tilde\mu},
$
$
\mathscr T^\mu
\neq
\mathscr T^{\tilde\mu}.
$

\begin{lem}
	\label{lem:ranking-recovery}
	Suppose the uniqueness and canonical-compatibility conditions of
	Theorem~\ref{thm:ddu-general-volatility} hold. Let
	$
	\mu,\tilde\mu\in\operatorname{Fix}(\Gamma)
	$
	be generated by the same felicity primitive
	$
	u
	$
	and the same initial augmented-state law. If
	$
	\Gamma^\mu=\Gamma^{\tilde\mu}
	$
	on the reachable augmented-state domain, then
	$
	\mathcal L(Y_s^\mu,\mu_s)
	=
	\mathcal L(Y_s^{\tilde\mu},\tilde\mu_s)
	$
	for every
	$
	s\in[0,t],
	$
	and
	$
	\chi_s^\mu(z;A,\cdot)
	=
	\chi_s^{\tilde\mu}(z;A,\cdot)
	$
	outside sets null under every reachable composition measure, for every
	$
	A\in\mathcal K(Z)
	$
	and
	$
	z\in A.
	$
	Consequently,
	$
	\Phi_u^\mu=\Phi_u^{\tilde\mu}
	$
	on the reachable domain.
\end{lem}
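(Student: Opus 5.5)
The argument has two parts. First we show that the two fixed points induce the same law of the augmented state $\widehat{\mathbf X}_s=(Y_s,\mu_s)$. Then we show that the felicity indicators, which depend on the distributional argument only through the same primitive $u$, agree outside sets that are null under every reachable composition measure.

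\textbf{Laws of the augmented state.} By hypothesis, $\mu$ and $\tilde\mu$ start from the same initial augmented-state law, so $\mathcal L(Y_0^\mu,\mu_0)=\mathcal L(Y_0^{\tilde\mu},\tilde\mu_0)$. Theorem~\ref{thm:cmkv-representation} gives $\widehat{\mathbf X}_s=\Gamma_{0,s}(\widehat{\mathbf X}_0)$, and by assumption $\Gamma^\mu_{0,s}=\Gamma^{\tilde\mu}_{0,s}$ on the reachable augmented-state domain. Since $\widehat{\mathbf X}_0$ takes values in that domain almost surely, we push the common initial law forward through the common operator. Where $\Gamma$ is a transition family rather than a deterministic map, we use the Markov property of Proposition~\ref{prop:conditional-markov} with the Chapman--Kolmogorov relation $\Gamma_{s,q}=\Gamma_{r,q}\circ\Gamma_{s,r}$. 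Either way this yields $\mathcal L(Y_s^\mu,\mu_s)=\mathcal L(Y_s^{\tilde\mu},\tilde\mu_s)$ for every $s$.

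The uniqueness conditions of Theorem~\ref{thm:ddu-general-volatility}, together with the canonical-compatibility conditions, ensure that these laws do not depend on which weak solution represents each fixed point. Lemma~\ref{lem:fixed-point-invariance} then guarantees that the orbit stays inside $\operatorname{Fix}(\Gamma)$, so the reachable domains of the two fixed points coincide.

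\textbf{Ranking maps.} At a fixed distributional state, $\chi_s^\mu(z;A,x)=\chi_s(z;A,x,\mu_s)$ and $\chi_s^{\tilde\mu}(z;A,x)=\chi_s(z;A,x,\tilde\mu_s)$ are built from the same $u$. We compare them on the common support of the augmented-state laws obtained above. For an admissible distributional argument $m$ that is reachable under both fixed points, both indicators are literally the same function $x\mapsto\chi_s(z;A,x,m)$, so $\Phi_u$ restricted to admissible $(s,A,z,m)$ agrees.

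The only freedom left is the tie set $\{x:\Delta_{za}(s,x,m)=0\}$, where the strict-inequality indicator could in principle differ under tie-breaking. Assumption~\ref{ass:no-ties} makes this set null for every reachable $\nu$, which gives equality outside sets that are null under every reachable composition measure, that is, modulo $\mathcal N_s(m)$ in the notation of Subsection~\ref{sec:identification}. Collecting over $(s,A,z,m)$ gives $\Phi_u^\mu=\Phi_u^{\tilde\mu}$ on the reachable domain.

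\textbf{Main obstacle.} The delicate step is the first. One must make rigorous that equality of $\Gamma$ on the reachable domain, together with the same initial law, determines the law of $(Y_s,\mu_s)$ and not merely its pathwise image. This requires measurability of $\Gamma_{s,r}$ on $\mathbb R^{d_Y}\times\mathcal P_2(\mathbb R^d)$, and weak uniqueness to rule out two solutions sharing $\Gamma$ but differing in law. We would first try to reduce this to the weak-uniqueness statement in Theorem~\ref{thm:ddu-general-volatility}. Equal generators along the reachable domain give the same frozen-flow martingale problem, and hence the same law on $C([0,t];\mathbb R^d)\times C([0,t];\mathbb R^{d_Y})\times C([0,t];\mathcal P_2(\mathbb R^d))$. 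The marginal statement at each time $s$ follows by projection.
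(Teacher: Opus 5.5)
Your argument is correct at the paper's level of rigor, but the second half takes a different route. For the augmented-state laws you use the same idea as the paper: a common kernel family and a common initial law determine the law. You use it only in one-dimensional form. Pushing the common initial law $\widehat\lambda_0$ forward through $\Gamma_{0,s}$ already gives $\mathcal L(Y_s^\mu,\mu_s)=\mathcal L(Y_s^{\tilde\mu},\tilde\mu_s)$, and Chapman--Kolmogorov is not needed for this. The paper instead writes all finite-dimensional distributions as iterated integrals against $\Gamma^{*}_{t_{i-1},t_i}$. It then uses path continuity to identify the laws of $\widehat{\mathbf X}^\mu$ and $\widehat{\mathbf X}^{\tilde\mu}$ on $C([0,t];\mathbb R^{d_Y}\times\mathcal P_2(\mathbb R^d))$.

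The paper needs that path-space statement for its ranking step. There it invokes canonical compatibility to realize both $(Y^\mu,\mu)$ and $(Y^{\tilde\mu},\tilde\mu)$ as the canonical coordinate process $(\widehat Y,\widehat\mu)$ under the common law. Then $\mu_s=\tilde\mu_s=\widehat\mu_s$ off one null set, and the realized indicators $\mathbf 1\{u(s,z,x,\widehat\mu_s)>\max_{a\in A\setminus\{z\}}u(s,a,x,\widehat\mu_s)\}$ coincide almost surely.

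You instead read $\Phi_u$ as the family $(\chi_s(z;A,\cdot,m))$ indexed by reachable distributional arguments $m$, as in Subsection~\ref{sec:identification}. With a common $u$, each entry is literally the same function. All you need from $\Gamma^\mu=\Gamma^{\tilde\mu}$ is that both fixed points reach the same set of $m$. That follows from equality of the marginal laws, not from Lemma~\ref{lem:fixed-point-invariance}, which you cite for it but which only gives orbit invariance.

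Your route is more elementary and shows that the ranking conclusion is automatic once $u$ is common. The no-ties step is in fact superfluous, since for fixed $m$ the strict indicators agree even on the tie set. However, for the realized maps $\chi_s(z;A,\cdot,\mu_s)$ and $\chi_s(z;A,\cdot,\tilde\mu_s)$ your route yields only equality in law. The paper's canonical realization yields almost-sure equality on a single space. That is what the canonical-compatibility hypothesis buys, and it plays no essential role in your version.

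Separately, drop the fallback in your ``main obstacle'' paragraph. The kernels $\Gamma_{s,r}$ act on $(Y,\mu)$ only, and the lemma assumes a common $u$, not common $(b,\sigma,\sigma_0,h,\Sigma_Y)$. So $\Gamma^\mu=\Gamma^{\tilde\mu}$ does not give equal generators for the frozen $(X,Y)$ martingale problem. If all primitives were common, weak uniqueness in Theorem~\ref{thm:ddu-general-volatility} would give equal laws without the $\Gamma$ hypothesis at all, as in Proposition~\ref{prop:fixed-point-uniqueness}. Your pushforward argument already settles that step.
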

\noindent Proof is in Appendix~\ref{prooflem33}.
Lemma~\ref{lem:ranking-recovery} closes the implication
$
\Gamma^\mu=\Gamma^{\tilde\mu}
\Rightarrow
\Phi_u^\mu=\Phi_u^{\tilde\mu}
$
without imposing felicity invariance separately. Indeed, equality of
$
\Gamma^\mu
$
and
$
\Gamma^{\tilde\mu}
$
together with the common initial law identifies the finite-dimensional laws of
$
\widehat{\mathbf X}^\mu=(Y^\mu,\mu)
$
and
$
\widehat{\mathbf X}^{\tilde\mu}=(Y^{\tilde\mu},\tilde\mu),
$
while the common primitive
$
u
$
identifies the reachable ranking maps
$
\chi_s^\mu(z;A,x)
=
\mathbf 1
\{u(s,z,x,\mu_s)>
\max_{a\in A\setminus\{z\}}u(s,a,x,\mu_s)\}
$
and
$
\chi_s^{\tilde\mu}(z;A,x).
$
Hence
$
\Gamma^\mu=\Gamma^{\tilde\mu}
$
determines both coordinates
$
(\Phi_u^\mu,\Phi_P^\mu),
$
so
$
\Gamma^\mu
\mapsto
\Phi^\mu
\mapsto
\mathscr T^\mu
$
contains no independent felicity component on the reachable fixed-point domain.

\begin{theorem}
	\label{thm:structural-rigidity}
	Suppose the uniqueness conditions of
	Theorem~\ref{thm:ddu-general-volatility},
	Assumptions~\ref{ass:behavioral-separation}
	and~\ref{ass:observational-separation}
	hold. Assume, in addition, that
	$
	\Gamma
	\mapsto
	\bigl(P^X(\Gamma_{s,r})\bigr)_{0\le s\le r\le t}
	$
	is injective on the reachable restriction of
	$
	\operatorname{Fix}(\Gamma),
	$
	and that, for every
	$
	\mu,\tilde\mu\in\operatorname{Fix}(\Gamma)
	$
	generated by a common felicity primitive
	$
	u,
	$
	$
	\Phi_u^\mu
	=
	\Phi_u^{\tilde\mu}
	$
	on the reachable domain.
	Then the following are equivalent:
	$
	\Gamma^\mu
	=
	\Gamma^{\tilde\mu},
	$
	$
	\Phi_P^\mu
	=
	\Phi_P^{\tilde\mu},
	$
	$
	\Phi^\mu
	=
	\Phi^{\tilde\mu},
	$
	and
	$
	\mathscr T^\mu
	=
	\mathscr T^{\tilde\mu}.
	$
	Consequently,
	$
	\operatorname{Fix}(\Gamma)/{\sim_\Gamma}
	\cong
	\Phi_P(\operatorname{Fix}(\Gamma))
	\cong
	\Phi(\operatorname{Fix}(\Gamma))
	\cong
	\mathscr T(\operatorname{Fix}(\Gamma)),
	$
	where
	$
	\mu\sim_\Gamma\tilde\mu
	$
	if and only if
	$
	\Gamma^\mu
	=
	\Gamma^{\tilde\mu}.
	$
	Furthermore, for every admissible family
	$
	\{\mu^\varepsilon:\varepsilon\in E\}
	\subseteq
	\operatorname{Fix}(\Gamma),
	$
	the perturbation is behaviorally nontrivial if and only if
	$
	\Gamma^{\mu^\varepsilon}
	\neq
	\Gamma^{\mu^0},
	$
	equivalently,
	$
	\Phi^{\mu^\varepsilon}
	\neq
	\Phi^{\mu^0},
	$
	equivalently,
	$
	\mathscr T^{\mu^\varepsilon}
	\neq
	\mathscr T^{\mu^0},
	$
	for some
	$
	\varepsilon\in E.
	$
\end{theorem}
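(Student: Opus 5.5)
We prove the equivalences by a cycle of implications
\[
\Gamma^\mu=\Gamma^{\tilde\mu}
\;\Rightarrow\;
\Phi_P^\mu=\Phi_P^{\tilde\mu}
\;\Rightarrow\;
\Phi^\mu=\Phi^{\tilde\mu}
\;\Rightarrow\;
\mathscr T^\mu=\mathscr T^{\tilde\mu}
\;\Rightarrow\;
\Gamma^\mu=\Gamma^{\tilde\mu}.
\]
The quotient isomorphisms and the perturbation statement then follow by passing to equivalence classes.

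For the first implication we use Theorem~\ref{thm:cmkv-representation}. It gives $P_{s,r}^X=P^X(\Gamma_{s,r})$, so equal $\Gamma$ yields equal transition laws, hence $\Phi_P^\mu=\Phi_P^{\tilde\mu}$. This is exactly the content of Proposition~\ref{prop:structural-stability}, once its hypothesis on $\Phi_u$ has been checked.

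For the second implication we need $\Phi_u^\mu=\Phi_u^{\tilde\mu}$. This is the standing assumption of the theorem for fixed points generated by a common felicity primitive $u$; alternatively, Lemma~\ref{lem:ranking-recovery} delivers it from $\Gamma^\mu=\Gamma^{\tilde\mu}$. There is a subtlety here. To obtain the second implication starting only from $\Phi_P$ equality, we first lift $\Phi_P^\mu=\Phi_P^{\tilde\mu}$ back to $\Gamma^\mu=\Gamma^{\tilde\mu}$, using the assumed injectivity of $\Gamma\mapsto(P^X(\Gamma_{s,r}))$ on the reachable part of $\operatorname{Fix}(\Gamma)$. We then invoke the common-$u$ hypothesis (or Lemma~\ref{lem:ranking-recovery}) to get the $\Phi_u$ coordinate.

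The third implication is immediate, since $\mathscr T=\Lambda\circ\Phi$.

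For the closing implication $\mathscr T^\mu=\mathscr T^{\tilde\mu}\Rightarrow\Gamma^\mu=\Gamma^{\tilde\mu}$, we apply Theorem~\ref{thm:behavioral-identification} under Assumptions~\ref{ass:behavioral-separation} and~\ref{ass:observational-separation}. These make $\Lambda|_{\Phi(\Theta)}$ injective, so $\mathscr T$-equality gives $\Phi$-equality, in particular $\Phi_P^\mu=\Phi_P^{\tilde\mu}$. The injectivity hypothesis on $\Gamma\mapsto P^X(\Gamma)$ then gives $\Gamma^\mu=\Gamma^{\tilde\mu}$.

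One must check that objects indexed by fixed points $\mu$ are legitimately treated as elements of $\Phi(\Theta)$ in Theorem~\ref{thm:behavioral-identification}. Theorem~\ref{thm:endogenous-information} supplies this: the map $\mathcal B:\operatorname{Fix}(\Gamma)\to\Phi(\Theta)$ is well defined, and the observation map evaluated at a fixed point is $\Lambda(\Phi_u(m),\Phi_P(m))$.

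Given the four equivalences, the relations $\sim_\Gamma$, $\Phi_P$-equality, $\Phi$-equality and $\mathscr T$-equality define the same partition of $\operatorname{Fix}(\Gamma)$. The induced maps on the quotient $\operatorname{Fix}(\Gamma)/{\sim_\Gamma}$ are therefore well defined and injective, and they are surjective onto their images by construction. This gives the chain of bijections
\[
\operatorname{Fix}(\Gamma)/{\sim_\Gamma}
\cong
\Phi_P(\operatorname{Fix}(\Gamma))
\cong
\Phi(\operatorname{Fix}(\Gamma))
\cong
\mathscr T(\operatorname{Fix}(\Gamma)).
\]
The perturbation statement is the contrapositive applied pointwise in $\varepsilon$. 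A family is behaviorally trivial iff $\mathscr T^{\mu^\varepsilon}=\mathscr T^{\mu^0}$ for all $\varepsilon\in E$, and by the equivalences this holds iff $\Gamma^{\mu^\varepsilon}=\Gamma^{\mu^0}$ for all $\varepsilon\in E$; negating gives the stated characterization. This parallels Theorem~\ref{thm:behavioral-rigidity}.

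The main obstacle is the return step from observables to $\Gamma$. It relies on Theorem~\ref{thm:behavioral-identification}, which is stated on the primitive space $\Theta$, while here comparisons are across fixed points of a single primitive. We must verify that the reachable domains $\mathfrak O^0,\mathfrak O^1$ for $\mu$ and $\tilde\mu$ coincide, so that the same evaluation operator $\Lambda$ applies to both. To settle this we would first use weak uniqueness from Theorem~\ref{thm:ddu-general-volatility}, together with Proposition~\ref{prop:fixed-point-uniqueness}. This yields $\mathcal L(Y^\mu_s,\mu_s)=\mathcal L(Y^{\tilde\mu}_s,\tilde\mu_s)$, so the reachable sets $\mathfrak R_s$ agree. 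In fact Proposition~\ref{prop:fixed-point-uniqueness} already makes $\operatorname{Fix}(\Gamma)$ behaviorally singleton, which would make all four equalities hold trivially. We can therefore use it as a fallback route to every implication.
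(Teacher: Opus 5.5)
Your proposal is correct and follows essentially the same route as the paper. You use Theorem~\ref{thm:cmkv-representation} for $\Gamma\Rightarrow\Phi_P$, the assumed injectivity of $\Gamma\mapsto P^X(\Gamma)$ for the return step, the standing common-$u$ hypothesis for the $\Phi_u$ coordinate, and injectivity of $\Lambda|_{\Phi(\Theta)}$ (Theorem~\ref{thm:behavioral-identification}) for $\mathscr T\Rightarrow\Phi$. You obtain the quotient bijections and the perturbation statement as the paper does; the only difference is that the paper arranges the argument as three pairwise equivalences rather than your single cycle. Your side remark is accurate and is not used in the paper's proof: for fixed primitives and initial law, Proposition~\ref{prop:fixed-point-uniqueness} already makes $\operatorname{Fix}(\Gamma)$ behaviorally singleton, so all four equalities hold trivially.
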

\noindent Proof is in Appendix~\ref{proofth34}.
Theorem~\ref{thm:structural-rigidity} yields the chain of canonical identifications
\[
\operatorname{Fix}(\Gamma)/{\sim_\Gamma}
\cong
\Phi_P(\operatorname{Fix}(\Gamma))
\cong
\Phi(\operatorname{Fix}(\Gamma))
\cong
\mathscr T(\operatorname{Fix}(\Gamma)).
\]
Hence,
$
\Gamma^\mu=\Gamma^{\tilde\mu}
$
if and only if
$
P_{s,r}^{\mu,X}=P_{s,r}^{\tilde\mu,X}
$
for every
$
0\le s\le r\le t,
$
equivalently,
$
\Phi^\mu=\Phi^{\tilde\mu}
$
and
$
\mathscr T^\mu=\mathscr T^{\tilde\mu}.
$
Thus the quotient coordinate
$
[\mu]_{\sim_\Gamma}
$
is simultaneously a structural, behavioral, and observational state. For an admissible perturbation
$
\varepsilon\mapsto\mu^\varepsilon\in\operatorname{Fix}(\Gamma),
$
$
[\mu^\varepsilon]_{\sim_\Gamma}
=
[\mu^0]_{\sim_\Gamma}
$
implies
$
(\Gamma^{\mu^\varepsilon},\Phi^{\mu^\varepsilon},
\mathscr T^{\mu^\varepsilon})
=
(\Gamma^{\mu^0},\Phi^{\mu^0},\mathscr T^{\mu^0}),
$
whereas
$
[\mu^\varepsilon]_{\sim_\Gamma}
\neq
[\mu^0]_{\sim_\Gamma}
$
implies
$
\mathscr T^{\mu^\varepsilon}
\neq
\mathscr T^{\mu^0}.
$
Accordingly, the economically relevant comparative statics are generated by motion in
$
\operatorname{Fix}(\Gamma)/{\sim_\Gamma},
$
not by alternative probabilistic realizations within a common
$
\sim_\Gamma
$
-class. Section~\ref{sec:cmkv-economics} interprets this quotient variation as endogenous learning, preference persistence, and distributional amplification.

\subsection{Economic Implications}
\label{sec:cmkv-economics}

Theorem~\ref{thm:structural-rigidity}
identifies
$
[\mu]_{\sim_\Gamma}
$
as the canonical economic state, with
$
\Gamma
\mapsto
\Phi_P
\mapsto
\Phi
\mapsto
\mathscr T.
$
Hence
$
\mathcal F_s^Y
\mapsto
\mu_s
=
\mathcal L(X_s\mid\mathcal F_s^Y)
$
is the unique endogenous information channel,
$
\mu_s
\mapsto
(\Phi_u,\Phi_P)
$
is the behavioral transmission mechanism, and
$
\mathscr T
=
\Lambda(\Phi)
$
is the observable implication.
Accordingly,
$
(Y_s,\mu_s)
$
replaces the observable signal
$
Y_s
$
as the sufficient economic state,
$
\Gamma_{s,r}
$
propagates
$
(Y_s,\mu_s)
$
to
$
(Y_r,\mu_r),
$
and
$
P_{s,r}^{X}
=
P^{X}(\Gamma_{s,r})
$
determines the evolution of continuation behavior.
Thus
$
\Gamma
$
simultaneously governs
$
\Phi_P,
$
$
\Phi=(\Phi_u,\Phi_P),
$
and
$
\mathscr T,
$
so structural, behavioral, and observational dynamics evolve on the common quotient
$
\operatorname{Fix}(\Gamma)/{\sim_\Gamma}.
$
Consequently,
$
\mu_s=\tilde\mu_s
$
implies
$
\Gamma^\mu=\Gamma^{\tilde\mu},
$
$
\Phi^\mu=\Phi^{\tilde\mu},
$
and
$
\mathscr T^\mu=\mathscr T^{\tilde\mu},
$
whereas
$
\mu_s\neq\tilde\mu_s
$
induces
$
\Gamma^\mu\neq\Gamma^{\tilde\mu},
$
$
\Phi^\mu\neq\Phi^{\tilde\mu},
$
and
$
\mathscr T^\mu\neq\mathscr T^{\tilde\mu}
$
whenever
$
[\mu]_{\sim_\Gamma}
\neq
[\tilde\mu]_{\sim_\Gamma}.
$
Economically, aggregate beliefs affect current rankings through
$
\Phi_u,
$
future opportunities through
$
\Phi_P,
$
and observable stochastic choice through
$
\mathscr T,
$
so information, equilibrium, and behavior are linked by the single structural operator
$
\Gamma,
$
rather than by independent perturbations of the primitive coefficients
$
(u,b,\sigma,\sigma_0,h,\Sigma_Y).
$

Observable comparative statics are therefore induced by perturbations of
$
\Gamma,
$
rather than perturbations of the primitive representation
$
\theta
=
(u,b,\sigma,\sigma_0,h,\Sigma_Y).
$
Indeed,
$
\theta
\mapsto
\Gamma
\mapsto
\Phi
\mapsto
\mathscr T
$
factors through
$
\operatorname{Fix}(\Gamma)/{\sim_\Gamma},
$
so
$
\partial_\varepsilon\Gamma^{\mu^\varepsilon}
=
0
$
implies
$
\partial_\varepsilon\Phi^{\mu^\varepsilon}
=
0
$
and
$
\partial_\varepsilon\mathscr T^{\mu^\varepsilon}
=
0,
$
whereas
$
\partial_\varepsilon\Gamma^{\mu^\varepsilon}
\neq
0
$
necessarily generates
$
\partial_\varepsilon\Phi_P^{\mu^\varepsilon}
\neq
0
$
and therefore a behaviorally nontrivial perturbation.
Consequently,
$
\operatorname{Fix}(\Gamma)
$
is the natural parameter space for structural comparative statics, while
$
\Theta/\!\sim_{\mathfrak O}
$
provides only a representation of the same behavioral object.
Thus economically meaningful policy interventions, information shocks, and distributional changes are identified by their induced motion of
$
[\mu]_{\sim_\Gamma},
$
rather than by variation of individual coefficient specifications. The conditional MVSDE also changes the interpretation of equilibrium. Rather than determining a trajectory
$
(X,Y),
$
the model determines the fixed-point evolution
$
\mu
=
\Gamma(\mu),
$
with
$
\mu_s
=
\mathcal L(X_s\mid\mathcal F_s^Y),
$
$
\Phi_P
=
(\Gamma_{s,r})_{0\le s\le r\le t},
$
and
$
\mathscr T
=
\Lambda(\Phi_u,\Phi_P).
$
Accordingly,
$
(X,Y)
$
constitutes a realization of the economy, whereas
$
(\mu,\Gamma)
$
constitutes its structural equilibrium.
The endogenous feedback
$
\mu_s
\mapsto
\Gamma
\mapsto
\mu_r,
$
$
s\le r,
$
simultaneously determines continuation opportunities,
$
P_{s,r}^{X},
$
behavioral representations,
$
\Phi,
$
and stochastic-choice arrays,
$
\mathscr T,
$
so equilibrium is characterized by the consistency of
$
(\Gamma,\Phi,\mathscr T)
$
rather than by the primitive coefficients alone.
Consequently,
$
\operatorname{Fix}(\Gamma)
$
is the economically relevant equilibrium manifold, while
$
[\mu]_{\sim_\Gamma}
$
provides its minimal identified representation on the observable domain.

The preceding characterization implies that every admissible economy is indexed by
$
(\Gamma,\Phi,\mathscr T)
$
rather than by
$
\theta
=
(u,b,\sigma,\sigma_0,h,\Sigma_Y),
$
with
$
\Gamma
\mapsto
\Phi_P
\mapsto
\Phi
=
(\Phi_u,\Phi_P)
\mapsto
\mathscr T,
$
$
\mathscr T
=
\Lambda\circ\Phi,
$
and
$
\Gamma(\mu)
=
\mu.
$
Hence
$
\mu_s
=
\mathcal L(X_s\mid\mathcal F_s^Y)
$
enters simultaneously as the equilibrium object,
$
\Gamma_{s,r}
$
as the law of endogenous propagation,
$
P_{s,r}^{X}
=
P^{X}(\Gamma_{s,r})
$
as the continuation mechanism,
$
\Phi_u
$
as the contemporaneous ranking map,
$
\Phi_P
$
as the dynamic behavioral map,
$
\Phi
$
as the identified behavioral representation,
$
\mathscr T
$
as the observable stochastic-choice array,
$
\operatorname{Fix}(\Gamma)
$
as the equilibrium manifold,
$
[\mu]_{\sim_\Gamma}
$
as the structural state,
$
\Theta/\!\sim_{\mathfrak O}
$
as the behavioral quotient,
$
\Phi(\Theta)
$
as the identified image,
$
\mathscr T(\Theta)
$
as the observable image, and
$
\operatorname{Fix}(\Gamma)/{\sim_\Gamma}
$
as the corresponding structural quotient.
Accordingly,
information shocks,
belief revisions,
policy interventions,
and distributional disturbances are represented by
$
\Gamma^\varepsilon,
$
induce
$
\Phi^\varepsilon,
$
and are observed only through
$
\mathscr T^\varepsilon,
$
so economically meaningful comparative statics satisfy
$
\partial_\varepsilon\Gamma^\varepsilon
\neq
0,
$
equivalently,
$
\partial_\varepsilon\Phi^\varepsilon
\neq
0,
$
equivalently,
$
\partial_\varepsilon\mathscr T^\varepsilon
\neq
0,
$
whereas
$
\partial_\varepsilon\Gamma^\varepsilon
=
0
$
implies behavioral invariance throughout the endogenous equilibrium manifold.

\section{Discussion}\label{final: discussion}
\subsection{Related Literature}

\noindent
This paper contributes to the literature on stochastic choice and dynamic random utility \citep{apesteguia2017single,cerreia2019deliberately}. Classical random utility models \citep{cattaneo2020random} interpret stochastic choice as the consequence of latent preference heterogeneity under informational asymmetry \citep{mcfadden1974conditional,manski1977structure,gul2006random}. More recently, \citet{frick2019dynamic} developed a decision-theoretic foundation for dynamic random utility with exogenous preference evolution, while \citet{kitamura2018nonparametric} studied nonparametric identification and testing of random utility models from stochastic-choice data. Our paper complements these contributions by allowing latent preferences to evolve endogenously through conditional distributional feedback, thereby extending the behavioral analysis of stochastic choice beyond exogenous preference dynamics.

\noindent
The paper is also related to the literatures on dynamic discrete choice \citep{kreps1998anticipated} and mean-field economics \citep{pramanik2026strategic}. Dynamic discrete choice models primarily emphasize structural identification and estimation of dynamic decision problems under latent heterogeneity \citep{rust1987optimal,aguirregabiria2010dynamic,arcidiacono2011practical}, whereas mean-field models analyze equilibrium interactions generated by distribution-dependent state dynamics \citep{lasry2007mean,CarmonaDelarue2018}. In contrast, our approach develops behavioral foundations for endogenous distribution-dependent preference dynamics through a conditional McKean-Vlasov representation, thereby connecting stochastic choice, endogenous information, and equilibrium preference evolution within a unified continuous-time framework.

\noindent
Finally, our analysis is related to a broader literature on stochastic choice, information, and endogenous preferences \citep{brock2001discrete}. The decision-theoretic foundations of stochastic choice have been developed through random utility, perturbed utility, and information-based models that characterize observed choice under latent heterogeneity and imperfect information \citep{fudenberg2015stochastic,caplin2022rational}. At the same time, recent advances in continuous-time mean-field analysis emphasize conditional distributions, common noise, and probabilistic representations of interacting systems \citep{CarmonaDelarueLacker2016,CarmonaDelarue2018,lacker2020convergence}. Our framework differs from both strands. Rather than introducing distribution dependence through strategic interactions across agents or through information-processing constraints, we model the conditional preference distribution itself as the endogenous state governing preference evolution. This yields a behavioral representation linking stochastic choice, filtering, and conditional McKean-Vlasov dynamics, while remaining complementary to the identification and estimation literature.

\subsection{Conclusion}
\label{sec:conclusion}

This paper develops a continuous-time theory of stochastic choice with endogenous preference evolution. The central departure from DRU is that the conditional preference distribution
$
\mu_s=\mathcal L(X_s\mid\mathcal F_s^Y)
$
is not merely the analyst's posterior over an exogenously evolving latent state. It enters both current felicity and the law of motion of future preferences. Hence, observed behavior affects subsequent choice through two distinct channels: it reveals information about latent preferences and,
through the induced conditional distribution, changes their future evolution. The resulting conditional McKean-Vlasov system provides a joint model of latent preferences, observable information, and endogenous preference dynamics.

In a contemporaneous direction, we separate variation in the composition of latent states from direct distributional effects on rankings. In a dynamic direction, we separate changes in terminal felicity from changes in the transition law of future preferences. These two channels generate the behavioral representation
$
\Phi=(\Phi_u,\Phi_P),
$
where
$
\Phi_u
$
is state-contingent contemporaneous choice, and
$
\Phi_P
$
is continuation behavior. Stochastic choice data identify this
behavioral image up to observational equivalence, rather than the primitive coefficient tuple
$
(u,b,\sigma,\sigma_0,h,\Sigma_Y).
$
The associated rigidity result links equality of observable behavior to equality of the relevant structural transition objects on the reachable domain.

The comparison with DRU is sharp. Distribution-dependent
utility reduces to DRU exactly when both
$
\mu\mapsto(\chi_s(z;A,\cdot,\mu))_{A,z},
$
and
$
m\mapsto P_{s,r}^{m,X}\lambda
$
are behaviorally constant. Conversely, behavioral distributional feedback places the induced stochastic choice array outside
$
\mathfrak R_{\mathrm{DRU}}.
$
Therefore, endogenous preference evolution is not merely a reparameterization of DRU, it generates observable choice behavior that no model with exogenous preference dynamics can reproduce. On the structural side, existence and weak uniqueness establish that these behavioral objects
are generated by a well-posed conditional law fixed point, with
$
\mu
$
serving as the endogenous sufficient state.

Several extensions remain open. One is the construction of nonparametric or semiparametric estimators for
$
\Phi
$
and for identified functionals of the primitive coefficients using panel or continuous-time choice data. A second is the introduction of strategic interaction, where
$
(\mu_s^i)_{i\in I}
$
or a population law of conditional preference distributions evolves jointly across agents, producing stochastic choice models with mean-field interaction. Further directions include welfare analysis, optimal information disclosure, and policy design when interventions affect both current behavior and the future distribution of preferences. Applications
to learning, financial markets, industrial organization, health economics, and heterogeneous-agent macroeconomics may provide settings in which the distinction between informational updating and endogenous preference change is empirically consequential.

\appendix

\section{Appendix}
\addcontentsline{toc}{section}{Appendix}

\subsection{Proof of Lemma~\ref{lem:conditional-sufficiency}}
\label{app:conditional-sufficiency}

\begin{proof}
Fix $s\in[0,t]$. By Definition~\ref{def:conditional-preference-distribution},
$\mu_s=\mathcal{L}(X_s\mid\mathcal{F}_s^Y)$ is a regular conditional	distribution of $X_s$ given $\mathcal{F}_s^Y$. Hence, for every Borel set
	$B\in\mathcal{B}(\mathbb{R}^d)$, we have
	$\mu_s(B)=
	\mathbb{P}(X_s\in B\mid\mathcal{F}_s^Y),
	\ \mathbb{P}\text{-a.s.}$ Equivalently, for every $C\in\mathcal{F}_s^Y$, $
	\mathbb{E}\!\left[
	\mathbf{1}_C\,\mu_s(B)
	\right]
	=
	\mathbb{P}\!\left(C\cap\{X_s\in B\}\right).$ We first establish the claim for non-negative Borel functions. For a simple function
	$\varphi=\sum_{j=1}^m a_j\mathbf{1}_{B_j}$ with $a_j\geq0$,
	linearity yields
	$
	\int_{\mathbb{R}^d}\varphi(x)\mu_s(dx)
	=
	\sum_{j=1}^m a_j\mu_s(B_j)
	=
	\mathbb{E}\!\left[
	\varphi(X_s)\mid\mathcal{F}_s^Y
	\right],
	\ \mathbb{P}\text{-a.s.}$ For an arbitrary non-negative Borel $\varphi$, choose simple functions $\varphi_n\uparrow\varphi$. Conditional monotone convergence and monotone convergence under the kernel $\mu_s$ imply
	$
	\mathbb{E}\!\left[
	\varphi(X_s)\mid\mathcal{F}_s^Y
	\right]
	=
	\int_{\mathbb{R}^d}\varphi(x)\mu_s(dx),
	\ \mathbb{P}\text{-a.s.}$ Finally, for an integrable Borel function $\varphi$, write $\varphi=\varphi^{+}-\varphi^{-}$. Since,
	$\mathbb{E}|\varphi(X_s)|<\infty$, both conditional expectations are finite almost surely. Preceding argument separately to $\varphi^{+}$ and $\varphi^{-}$ yields
	$
	\mathbb{E}\!\left[
	\varphi(X_s)\mid\mathcal{F}_s^Y
	\right]
	=
	\int_{\mathbb{R}^d}\varphi(x)\mu_s(dx),
	\ \mathbb{P}\text{-a.s.}$ Therefore, every integrable contemporaneous functional of the latent
	preference state is evaluated, conditional on observed choice history, through the random measure $\mu_s$. In this precise sense, $\mu_s$ is a measure-valued \emph{sufficient statistic} for the analyst's information about $X_s$.
\end{proof}

\subsection{Proof of Lemma~\ref{lem:common-distributional-term}}
\label{app:common-distributional-term}

\begin{proof}
	Suppose
	$
	u(s,z,x,\mu)=\bar u(s,z,x)+\alpha(s,x,\mu),
	$
	where $\alpha$ does not depend on $z$. For every $z,a\in Z$,
	$
	u(s,z,x,\mu)-u(s,a,x,\mu)
	=
	\bar u(s,z,x)-\bar u(s,a,x).
	$
	Hence, the ranking of alternatives is independent of $\mu$, and therefore,
	$
	\chi_s(z;A,x,\mu)
	=
	\chi_s(z;A,x,\mu')
	$ for every $\mu,\mu'$ outside the relevant tie sets. Integrating with
	respect to any admissible composition measure $\nu$ gives
	$
	\mathsf C_s(z;A\mid\nu,\mu)
	=
	\mathsf C_s(z;A\mid\nu,\mu').
	$
	Therefore, felicity feedback is behaviorally neutral. Suppose the orderings induced by $u(s,\cdot,x,\mu)$ and $u(s,\cdot,x,\mu')$ coincide outside a set that is null under every admissible composition measure. Then their
	unique maximizers coincide almost surely for every finite menu, and the corresponding decoupled choice probabilities are equal. Conversely, suppose the orderings differ on a set $E$ having some probability under an admissible composition measure $\nu$. Then there exist $z,a\in Z$ and a measurable subset $E'\subseteq E$ with
	$\nu(E')>0$ so that
	$
	\Delta_{za}(s,x,\mu)>0
	\quad\text{and}\quad
	\Delta_{za}(s,x,\mu')<0
	$ on $E'$, after possibly interchanging $\mu$ and $\mu'$. For the binary menu $A=\{z,a\}$, the choice indicators differ on $E'$. Under Assumption~\ref{ass:no-ties}, they cannot differ only through ties. Consequently,
	$
	\mathsf C_s(z;A\mid\nu,\mu)
	\ne
	\mathsf C_s(z;A\mid\nu,\mu'),
	$
	possibly after restricting $\nu$ to an admissible component on which the sign change is one-sided. Hence felicity feedback is behaviorally relevant.
\end{proof}

\subsection{Proof of
	Proposition~\ref{prop:dru-distributional-invariance}}
\label{app:dru-distributional-invariance}

\begin{proof}
	Under Proposition~\ref{prop:ddu-nesting},
	$
	u(s,z,x,\mu)=\bar u(s,z,x).
	$ It follows that $\chi_s(z;A,x,\mu)$ is independent of $\mu$. Hence, for every $\nu,\mu,\mu'$, we have
	\[
	\mathsf C_s(z;A\mid\nu,\mu)
	=
	\int_{\mathbb R^d}
	\mathbf 1
	\left\{
	\bar u(s,z,x)>
	\max_{a\in A\setminus\{z\}}\bar u(s,a,x)
	\right\}\nu(dx)
	=
	\mathsf C_s(z;A\mid\nu,\mu').
	\]
	The coefficients governing $X$ are also independent of the measure flow. Therefore the frozen systems corresponding to $m$ and $m'$ have the same coefficients and the same initial law. Weak uniqueness of the frozen system implies
	$
	\mathsf P_{s,r}^{m,X}\lambda
	=
	\mathsf P_{s,r}^{m',X}\lambda.
	$
	Since the comparison-date felicity argument is fixed at $\bar\mu$, integration of the same choice indicator against these identical marginal laws gives
	$
	\mathsf D_{s,r}^{\,\bar\mu}(z;A\mid\lambda,m)
	=
	\mathsf D_{s,r}^{\,\bar\mu}(z;A\mid\lambda,m').
	$
	Therefore, neither choice-relevant felicity feedback nor behaviorally relevant preference feedback is present.
\end{proof}

\subsection{Proof of
	Proposition~\ref{prop:detectability-preference-feedback}}
\label{app:detectability-preference-feedback}

\begin{proof}
	Let
	$
	\nu=\mathsf P_{s,r}^{m,X}\lambda,
	\ \text{and}\
	\nu'=\mathsf P_{s,r}^{m',X}\lambda.
	$ Assumption~\ref{ass:choice-separation} with hypothesis, $\nu\ne\nu'$ provides a finite menu $A$, an alternative $z\in A$, and an admissible $\bar\mu$ such that
	$
	\int_{\mathbb R^d}
	\chi_r(z;A,x,\bar\mu)\,\nu(dx)
	\ne
	\int_{\mathbb R^d}
	\chi_r(z;A,x,\bar\mu)\,\nu'(dx).
	$
	By the definition of the frozen transition laws, the left-hand side yields
	$
	\mathsf D_{s,r}^{\,\bar\mu}(z;A\mid\lambda,m),
	$
	whereas the right-hand side implies
	$
	\mathsf D_{s,r}^{\,\bar\mu}(z;A\mid\lambda,m').
	$
	Thus DDU exhibits behaviorally relevant preference feedback on
	$[s,r]$.
\end{proof}

\subsection{Proof of
	Lemma~\ref{lem:choice-operator-stability}}
\label{app:choice-operator-stability}

\begin{proof}
	Fix $A$, $z\in A$, and let $\pi$ be an optimal coupling of $\nu$ and
	$\nu'$. Let $(X,X')$ have law $\pi$. For $a\in A\setminus\{z\}$,
	write
	$
	G_a=\Delta_{za}(s,X,\mu),
	$ and $
	G_a'=\Delta_{za}(s,X',\mu').$
	By Assumption~\ref{ass:utility-margin},
	$
	|G_a-G_a'|
	\le
	2L_u\left(
	|X-X'|+W_2(\mu,\mu')
	\right).
	$
	The choice indicators can differ only if, for some
	$a\in A\setminus\{z\}$, the sign of the corresponding utility difference changes. For every $\eta>0$,
	$
	\mathbb P(\operatorname{sgn}G_a\ne\operatorname{sgn}G_a')
	\le
	\mathbb P(|G_a|\le\eta)
	+
	\mathbb P(|G_a-G_a'|>\eta).
	$
	The margin condition and Markov's inequality imply
	$
	\mathbb P(\operatorname{sgn}G_a\ne\operatorname{sgn}G_a')
	\le
	\kappa\eta
	+
	\frac{2L_u}{\eta}
	\left(
	\mathbb E|X-X'|+W_2(\mu,\mu')
	\right).
	$
	Since,
	$
	\mathbb E|X-X'|
	\le
	\left(\mathbb E|X-X'|^2\right)^{1/2}
	=
	W_2(\nu,\nu'),
	$
	we obtain
	$
	\mathbb P(\operatorname{sgn}G_a\ne\operatorname{sgn}G_a')
	\le
	\kappa\eta
	+
	\frac{2L_u}{\eta}
	\left(
	W_2(\nu,\nu')+W_2(\mu,\mu')
	\right).
	$
	Optimizing over $\eta$ yields
	$
	\mathbb P(\operatorname{sgn}G_a\ne\operatorname{sgn}G_a')
	\le
	K
	\big(
	W_2(\nu,\nu')^{1/2}
	+
	W_2(\mu,\mu')^{1/2}
	\big)
	$
	for a constant $K$ depending only on $L_u$ and $\kappa$. A union bound
	over $a\in A\setminus\{z\}$ yields the claimed inequality with a
	constant $K_A$ depending additionally on $|A|$.
\end{proof}

\subsection{Proof of
	Proposition~\ref{prop:structural-behavioral-feedback}}
\label{app:structural-behavioral-feedback}

\begin{proof}
	The first assertion follows directly from
	Definition~\ref{def:choice-relevant-felicity-feedback}. For the second assertion, suppose that at least one of $b$, $\sigma$, or $\sigma_0$ is nonconstant in its measure argument on the reachable state space. By
	Assumption~\ref{ass:dynamic-nonredundancy}, there exist
	$s<r$, $\lambda$, $m$, and $m'$ such that
	$
	\mathsf P_{s,r}^{m,X}\lambda
	\ne
	\mathsf P_{s,r}^{m',X}\lambda.
	$
	Proposition~\ref{prop:detectability-preference-feedback} then implies behaviorally relevant preference feedback.
	For the final assertion, if $u$ is independent of its measure argument, then
	$
	\mathsf C_s(z;A\mid\nu,\mu)
	=
	\mathsf C_s(z;A\mid\nu,\mu')
	$
	for every admissible collection. If $b$, $\sigma$, and $\sigma_0$ are also independent of the measure argument, then the $X$-transition law is unchanged by replacing $m$ with $m'$. Holding comparison-date felicity fixed therefore implies
	$
	\mathsf D_{s,r}^{\,\bar\mu}(z;A\mid\lambda,m)
	=
	\mathsf D_{s,r}^{\,\bar\mu}(z;A\mid\lambda,m').
	$
	Hence, DDU satisfies behavioral distributional invariance.
\end{proof}

\subsection{Proof of Lemma~\ref{lem:behavioral-characterization}}\label{proofl17}

\begin{proof}
	Fix the reachable domain. For each admissible
	$(s,\nu,\mu,\mu')$ and each $A\in\mathcal K(Z)$, $z\in A$, define
	$g_{s,A,z}^{\mu,\mu'}(x)
	:=
	\chi_s(z;A,x,\mu)-\chi_s(z;A,x,\mu')$.
	Since $\chi_s$ is $\{0,1\}$-valued,
	$g_{s,A,z}^{\mu,\mu'}$ is bounded and Borel measurable, and
	$
	C_s(z;A\mid\nu,\mu)-C_s(z;A\mid\nu,\mu')
	=
	\int_{\mathbb R^d}
	g_{s,A,z}^{\mu,\mu'}(x)\,\nu(dx).
	$	For every admissible $(s,r,\lambda,\bar\mu,m,m')$, write
	$\nu_r^{m,\lambda}:=P_{s,r}^{m,X}\lambda$ and
	$\nu_r^{m',\lambda}:=P_{s,r}^{m',X}\lambda$. By the definition of
	$D_{s,r}^{\bar\mu}$, we have
	$
	D_{s,r}^{\bar\mu}(z;A\mid\lambda,m)
	-
	D_{s,r}^{\bar\mu}(z;A\mid\lambda,m')
	=
	\int_{\mathbb R^d}
	\chi_r(z;A,x,\bar\mu)
	\bigl(
	\nu_r^{m,\lambda}-\nu_r^{m',\lambda}
	\bigr)(dx).
	$
	The right-hand side is well defined because
	$\chi_r(z;A,\cdot,\bar\mu)$ is bounded and Borel measurable and $\nu_r^{m,\lambda}-\nu_r^{m',\lambda}$ is a finite signed measure. Suppose first that
	$\nu(\Gamma_s(\mu,\mu'))=0$ for every admissible
	$(s,\nu,\mu,\mu')$ and that
	$P_{s,r}^{m,X}\lambda=P_{s,r}^{m',X}\lambda$ for every admissible $(s,r,\lambda,m,m')$. Fix an admissible $(s,\nu,\mu,\mu')$. By the definition of
	$\Gamma_s(\mu,\mu')$, for every
	$x\notin\Gamma_s(\mu,\mu')$ and every
	$A\in\mathcal K(Z)$, $z\in A$,
	$\chi_s(z;A,x,\mu)=\chi_s(z;A,x,\mu')$. Hence
	$g_{s,A,z}^{\mu,\mu'}=0$ on
	$\Gamma_s(\mu,\mu')^c$, and therefore
	\[
	\begin{aligned}
		\left|
		C_s(z;A\mid\nu,\mu)-C_s(z;A\mid\nu,\mu')
		\right|
		&=
		\left|
		\int_{\mathbb R^d}
		g_{s,A,z}^{\mu,\mu'}(x)\,\nu(dx)
		\right| \leq
		\int_{\Gamma_s(\mu,\mu')}
		\left|
		g_{s,A,z}^{\mu,\mu'}(x)
		\right|
		\nu(dx) \\
		&\leq
		\nu\bigl(\Gamma_s(\mu,\mu')\bigr)
		=
		0.
	\end{aligned}
	\]
	Thus $C_s(z;A\mid\nu,\mu)=C_s(z;A\mid\nu,\mu')$
	for every admissible $(s,A,z,\nu,\mu,\mu')$. Next fix an admissible $(s,r,A,z,\lambda,\bar\mu,m,m')$. Since
	$P_{s,r}^{m,X}\lambda=P_{s,r}^{m',X}\lambda$, one has
	$\nu_r^{m,\lambda}=\nu_r^{m',\lambda}$, so
	$
	D_{s,r}^{\bar\mu}(z;A\mid\lambda,m)
	-
	D_{s,r}^{\bar\mu}(z;A\mid\lambda,m')
	=
	0.
	$
	Both requirements in
	Definition~\ref{def:behavioral-invariance} are therefore satisfied, and the DDU representation satisfies behavioral distributional invariance. Conversely, suppose that the DDU representation satisfies behavioral
	distributional invariance. Fix an admissible
	$(s,\nu,\mu,\mu')$. If
	$\nu(\Gamma_s(\mu,\mu'))>0$, then
	Assumption~\ref{ass:behavioral-separation}(i) yields
	$A\in\mathcal K(Z)$ and $z\in A$ such that
	$
	\int_{\mathbb R^d}
	g_{s,A,z}^{\mu,\mu'}(x)\,\nu(dx)
	\neq
	0.
	$
	Therefore,
	$
	C_s(z;A\mid\nu,\mu)
	\neq
	C_s(z;A\mid\nu,\mu'),
	$
	contrary to behavioral distributional invariance. Hence
	$\nu(\Gamma_s(\mu,\mu'))=0$ for every admissible
	$(s,\nu,\mu,\mu')$. It remains to establish equality of the frozen-flow transition laws.
	Fix an admissible $(s,r,\lambda,m,m')$ and suppose that
	$P_{s,r}^{m,X}\lambda\neq P_{s,r}^{m',X}\lambda$. Set
	$\nu:=P_{s,r}^{m,X}\lambda$ and
	$\nu':=P_{s,r}^{m',X}\lambda$. Then $\nu$ and $\nu'$ are distinct reachable laws. By
	Assumption~\ref{ass:behavioral-separation}(ii), there exist
	$A\in\mathcal K(Z)$, $z\in A$, and an admissible
	$\bar\mu\in\mathcal P_2(\mathbb R^d)$ such that
	$
	\int_{\mathbb R^d}
	\chi_r(z;A,x,\bar\mu)\,\nu(dx)
	\neq
	\int_{\mathbb R^d}
	\chi_r(z;A,x,\bar\mu)\,\nu'(dx),
	$
	or,
	$
	D_{s,r}^{\bar\mu}(z;A\mid\lambda,m)
	\neq
	D_{s,r}^{\bar\mu}(z;A\mid\lambda,m'),
	$
	which contradicting behavioral distributional invariance. Therefore, $P_{s,r}^{m,X}\lambda=P_{s,r}^{m',X}\lambda$ for every admissible $(s,r,\lambda,m,m')$.
\end{proof}

\subsection{Proof of Theorem~\ref{thm:behavioral-characterization}}\label{proofth18}

\begin{proof}
	Let
	$
	\mathfrak R_s
	:=
	\left\{
	(\nu,\mu)\in
	\mathcal P_2(\mathbb R^d)\times\mathcal P_2(\mathbb R^d)
	:
	(s,\nu,\mu)\ \text{is admissible}
	\right\},
	$
	and, for $0\leq s<r\leq t$ and
	$\lambda\in\mathcal P_2(\mathbb R^d\times\mathbb R^{d_Y})$, let
	$
	\mathfrak M_{s,r}(\lambda)
	:=
	\left\{
	m\in\mathcal M_{s,r}:
	(s,r,\lambda,m)\ \text{is admissible}
	\right\}.
	$
	For $(s,\nu,\mu)\in\mathfrak R_s$, define the stochastic-choice kernel
	$
	\mathsf C_s^{\nu,\mu}(A,z)
	:=
	\int_{\mathbb R^d}
	\chi_s(z;A,x,\mu)\,\nu(dx),
	$ for all $
	A\in\mathcal K(Z),\ z\in A,
	$
	and, for
	$m\in\mathfrak M_{s,r}(\lambda)$ and
	$\bar\mu\in\mathcal P_2(\mathbb R^d)$, define
	$
	\mathsf D_{s,r}^{\lambda,m,\bar\mu}(A,z)
	:=
	\int_{\mathbb R^d}
	\chi_r(z;A,x,\bar\mu)\\
	P_{s,r}^{m,X}\lambda(dx).
	$
	Therefore, behavioral distributional invariance is equivalent to constancy of
	$
	\mu\mapsto \mathsf C_s^{\nu,\mu}
	\ \text{on}\ 
	\{\mu:(\nu,\mu)\in\mathfrak R_s\}
	$
	for every admissible $(s,\nu)$, together with constancy of
	$
	m\mapsto \mathsf D_{s,r}^{\lambda,m,\bar\mu}
	\ \text{on}\ 
	\mathfrak M_{s,r}(\lambda)$
	for every admissible $(s,r,\lambda,\bar\mu)$. The equivalence between statements \(1\) and \(2\) follows directly from Lemma~\ref{lem:behavioral-characterization}. It remains to prove the equivalence between statement \(2\) and statement \(3\). Suppose that statement \(2\) holds. For each $s\in[0,t]$, choose an
	arbitrary reachable reference measure
	$\mu_s^\circ\in\mathcal P_2(\mathbb R^d)$ and define
	$
	\bar u(s,z,x)
	:=
	u(s,z,x,\mu_s^\circ),
	\ \forall
	(s,z,x)\in[0,t]\times Z\times\mathbb R^d.
	$
	Let
	$
	\bar\chi_s(z;A,x)
	:=
	\mathbf 1
	\left\{
	\bar u(s,z,x)>
	\max_{a\in A\setminus\{z\}}
	\bar u(s,a,x)
	\right\}.
	$
	By construction,
	$
	\bar\chi_s(z;A,x)
	=
	\chi_s(z;A,x,\mu_s^\circ).
	$
	Fix an admissible $(s,\nu,\mu)$. Statement \(2\), applied to
	$(s,\nu,\mu,\mu_s^\circ)$, yields
	$
	\nu\bigl(\Gamma_s(\mu,\mu_s^\circ)\bigr)=0.
	$
	Since,
	$
	x\notin\Gamma_s(\mu,\mu_s^\circ)
	\rightarrow
	\chi_s(z;A,x,\mu)
	=
	\chi_s(z;A,x,\mu_s^\circ)
	$
	for every $A\in\mathcal K(Z)$ and $z\in A$, it follows that
	$
	\chi_s(z;A,\cdot,\mu)
	=
	\bar\chi_s(z;A,\cdot),
	\
	\nu\text{-a.e.}
	$
	Consequently,
	$
	C_s(z;A\mid\nu,\mu)
	=
	\int_{\mathbb R^d}
	\bar\chi_s(z;A,x)\,\nu(dx)
	$
	for every admissible $(s,A,z,\nu,\mu)$. We next construct the measure-independent transition operator. For each
	admissible $(s,r,\lambda)$, define
	$
	\bar P_{s,r}^X\lambda
	:=
	P_{s,r}^{m,X}\lambda
	$
	for any $m\in\mathfrak M_{s,r}(\lambda)$. This definition is independent of
	the representative $m$. Indeed, statement \(2\) implies that, for all
	$m,m'\in\mathfrak M_{s,r}(\lambda)$,
	$
	P_{s,r}^{m,X}\lambda
	=
	P_{s,r}^{m',X}\lambda.
	$
	Hence, $\bar P_{s,r}^X$ is a well-defined map from the reachable initial-law
	domain into $\mathcal P_2(\mathbb R^d)$. For every admissible $(s,r,\lambda,\bar\mu,m)$, we have
	$
	P_{s,r}^{m,X}\lambda
	=
	\bar P_{s,r}^X\lambda.
	$
	Combining this identity with the preceding almost-everywhere ranking
	invariance at date $r$ yields
	\[
	\begin{aligned}
		D_{s,r}^{\bar\mu}(z;A\mid\lambda,m)
		&=
		\int_{\mathbb R^d}
		\chi_r(z;A,x,\bar\mu)\,
		P_{s,r}^{m,X}\lambda(dx)
		=
		\int_{\mathbb R^d}
		\bar\chi_r(z;A,x)\,
		\bar P_{s,r}^X\lambda(dx).
	\end{aligned}
	\]
	Therefore,
	$\bigl(\bar u,\{\bar P_{s,r}^X\}_{0\leq s<r\leq t}\bigr)$
	is independent of the distributional state and reproduces every reachable contemporaneous and dynamic stochastic-choice probability generated by the
	DDU representation. Hence the DDU representation admits a DRU reduction, and statement \(3\) follows. Conversely, suppose statement \(3\) holds. Let
	$\bigl(\bar u,\{\bar P_{s,r}^X\}_{0\leq s<r\leq t}\bigr)$
	be a DRU reduction, and define
	$
	\bar\chi_s(z;A,x)
	:=
	\mathbf 1
	\left\{
	\bar u(s,z,x)>
	\max_{a\in A\setminus\{z\}}
	\bar u(s,a,x)
	\right\}.
	$
	By the definition of a DRU reduction, for every admissible
	$(s,A,z,\nu,\mu)$ yields
	$
	C_s(z;A\mid\nu,\mu)
	=
	\int_{\mathbb R^d}
	\bar\chi_s(z;A,x)\,\nu(dx).
	$
	Therefore, for every admissible $(s,A,z,\nu,\mu,\mu')$,
$C_s(z;A\mid\nu,\mu)=\\
		\int_{\mathbb R^d}
		\bar\chi_s(z;A,x)\,\nu(dx)
		=
		C_s(z;A\mid\nu,\mu').
	$
	Thus the contemporaneous component of behavioral distributional invariance
	holds. Similarly, for every admissible
	$(s,r,A,z,\lambda,\bar\mu,m)$,
	$
	D_{s,r}^{\bar\mu}(z;A\mid\lambda,m)
	=
	\int_{\mathbb R^d}
	\bar\chi_r(z;A,x)\,
	\bar P_{s,r}^X\lambda(dx).
	$ 
	
	The right-hand side contains neither $m$ nor any other conditional-law-flow argument. Hence, for all admissible $m,m'\in\mathfrak M_{s,r}(\lambda)$,
	$
	D_{s,r}^{\bar\mu}(z;A\mid\lambda,m)
	=
	D_{s,r}^{\bar\mu}(z;A\mid\lambda,m').
	$
	The dynamic component of behavioral distributional invariance also holds.
	Statement \(1\) follows, and
	Lemma~\ref{lem:behavioral-characterization} then implies statement \(2\). It remains to establish the final assertion. Let
	$
	\mathfrak I_u
	:=
	\left\{
	(s,\nu,\mu,\mu'):
	\nu\bigl(\Gamma_s(\mu,\mu')\bigr)>0
	\right\},
	$
	and
	$
	\mathfrak I_P
	:=
	\left\{
	(s,r,\lambda,m,m'):
	P_{s,r}^{m,X}\lambda
	\neq
	P_{s,r}^{m',X}\lambda
	\right\},
	$
	where both sets are restricted to admissible tuples. Statement \(2\) is
	equivalent to
	$
	\mathfrak I_u=\varnothing
	\ \text{and}\ 
	\mathfrak I_P=\varnothing.
	$
	By the equivalence of statements \(2\) and \(3\),
	$
	\{\text{a DRU reduction exists}\}
	\ \leftrightarrow\ 
	\mathfrak I_u=\varnothing
	\ \text{and}\
	\mathfrak I_P=\varnothing.
	$
	Taking complements gives
	$
	\{\text{no DRU reduction exists}\}
	\ \leftrightarrow\ 
	\mathfrak I_u\neq\varnothing
	\ \text{or}\
	\mathfrak I_P\neq\varnothing.
	$
	Equivalently, the DDU representation is behaviorally distinct from every DRU
	reduction on its reachable domain if and only if either
	$
	\nu\bigl(\Gamma_s(\mu,\mu')\bigr)>0
	$
	for some admissible $(s,\nu,\mu,\mu')$, or
	$
	P_{s,r}^{m,X}\lambda
	\neq
	P_{s,r}^{m',X}\lambda
	$
	for some admissible $(s,r,\lambda,m,m')$.
\end{proof}

\subsection{Proof of Proposition~\ref{prop:identification-fibers}}\label{proofprop19}

\begin{proof}
	Fix $\theta\in\Theta$. Since
	$\mathscr T=\Lambda\circ\Phi$,
	the definition of
	$\mathcal I_{\mathfrak O}(\theta)$
	 yields
	$
	\mathcal I_{\mathfrak O}(\theta)
	=
	\Phi^{-1}
	\!\left(
	[\Phi(\theta)]_{\equiv_{\mathfrak O}}
	\right).
	$
	Moreover,
	$[\Phi(\theta)]_{\equiv_{\mathfrak O}}
	\subseteq
	\Phi(\Theta)$,
	so the elementary identity
	$\Phi(\Phi^{-1}(B))=B$
	for every
	$B\subseteq\Phi(\Theta)$
	implies
	$
	\mathcal I_{\mathfrak O}^{\Phi}(\theta)
	=
	\Phi\!\left(
	\mathcal I_{\mathfrak O}(\theta)
	\right)
	=
	[\Phi(\theta)]_{\equiv_{\mathfrak O}}.
	$ Indeed, the inclusion
	$\Phi(\Phi^{-1}(B))\subseteq B$
	holds for every $B\subseteq\Phi(\Theta)$, while the reverse inclusion follows
	from
	$B\subseteq\Phi(\Theta)$:
	for each $\varphi\in B$ there exists
	$\tilde\theta\in\Theta$ such that
	$\Phi(\tilde\theta)=\varphi$, whence
	$\tilde\theta\in\Phi^{-1}(B)$ and
	$\varphi\in\Phi(\Phi^{-1}(B))$. Now $\Phi$ is identified from $\mathscr Q_\theta$ precisely when
	$
	\mathscr T(\theta)=\mathscr T(\tilde\theta)
	\ \rightarrow\ 
	\Phi(\theta)=\Phi(\tilde\theta)
	$
	for all $\theta,\tilde\theta\in\Theta$. Using
	$\mathscr T=\Lambda\circ\Phi$, this is equivalent to
	$
	\Lambda(\Phi(\theta))
	=
	\Lambda(\Phi(\tilde\theta))
	\ \rightarrow\
	\Phi(\theta)=\Phi(\tilde\theta).
	$
	Since, $\Phi(\theta),\Phi(\tilde\theta)\in\Phi(\Theta)$, the latter condition is
	equivalent to
	$
	\Lambda(\varphi)=\Lambda(\tilde\varphi)
	\ \rightarrow\ 
	\varphi=\tilde\varphi,
	\forall\ 
	\varphi,\tilde\varphi\in\Phi(\Theta),
	$
	that is, to injectivity of
	$\left.\Lambda\right|_{\Phi(\Theta)}$. Finally, let $\psi:\Theta\to\Psi$. By definition, $\psi$ is identified if and
	only if
	$
	\mathscr T(\theta)=\mathscr T(\tilde\theta)
	\ \rightarrow\ 
	\psi(\theta)=\psi(\tilde\theta)
	$
	for all $\theta,\tilde\theta\in\Theta$. Equivalently,
	$
	\psi(\tilde\theta)=\psi(\theta)
	\
	\forall\,
	\tilde\theta\in\mathcal I_{\mathfrak O}(\theta),
	\
	\forall\,\theta\in\Theta.
	$
	Substituting the fiber identity already established yields
	$
	\psi(\tilde\theta)=\psi(\theta)
	\
	\forall\,
	\tilde\theta\in
	\Phi^{-1}
	\!\left(
	[\Phi(\theta)]_{\equiv_{\mathfrak O}}
	\right),
	\
	\forall\,\theta\in\Theta.
	$
	Equivalently, for every
	$\varphi\in\Phi(\Theta)$,
	$\psi$ is constant on
	$\Phi^{-1}([\varphi]_{\equiv_{\mathfrak O}})$.
\end{proof}

\subsection{Proof of Theorem~\ref{thm:behavioral-identification}}\label{proofth21}

\begin{proof}
	Let
	$\varphi,\tilde\varphi\in\Phi(\Theta)$
	satisfy
	$\Lambda(\varphi)=\Lambda(\tilde\varphi)$,
	with
	$\varphi=(\varphi_u,\varphi_P)$
	and
	$\tilde\varphi=(\tilde\varphi_u,\tilde\varphi_P)$.
	For every admissible $(s,\mu)$ and every
	$\nu\in\mathfrak N_s(\mu)$,
	equality of the $\mathfrak O^0$-coordinates yields identical
	$\nu$-integrals of the corresponding elements of
	$\mathscr H_s$.
	Assumption~\ref{ass:observational-separation} therefore implies
	$\varphi_u=\tilde\varphi_u$
	in
	$\mathscr H_s/\!={}_{\mathfrak N_s(\mu)}$
	for every admissible $(s,\mu)$. Now fix $r\in(0,t]$ and any reachable laws
	$\rho=\varphi_{P;s,r,m}\lambda$ and
	$\tilde\rho=\tilde\varphi_{P;s,r,m}\lambda$.
	Equality of the $\mathfrak O^1$-coordinates, together with
	$\varphi_u=\tilde\varphi_u$, gives
	$\int h\,d\rho=\int h\,d\tilde\rho$
	for every $h\in\mathscr H_r$.
	The measure-determining property in
	Assumption~\ref{ass:observational-separation} implies
	$\rho=\tilde\rho$.
	Hence,
	$\varphi_P=\tilde\varphi_P$,
	so
	$\varphi=\tilde\varphi$. Therefore,
	$\Lambda|_{\Phi(\Theta)}$
	is injective.
	Proposition~\ref{prop:identification-fibers} gives
	$\mathcal I_{\mathfrak O}^{\Phi}(\theta)=\{\Phi(\theta)\}$
	for every $\theta\in\Theta$.
	Since
	$\mathscr T=\Lambda\circ\Phi$,
	one has
	$\mathscr T(\theta)=\mathscr T(\tilde\theta)$
	if and only if
	$\Phi(\theta)=\Phi(\tilde\theta)$.
	Finally,
	$\Lambda|_{\Phi(\Theta)}$
	is surjective onto
	$\mathscr T(\Theta)$
	by construction, and therefore induces the canonical bijection
	$\Phi(\Theta)\cong\mathscr T(\Theta)$.
\end{proof}

\subsection{Proof of Proposition~\ref{prop:behavioral-stability}}\label{proofprop22}

\begin{proof}
Let $\theta,\tilde\theta\in\Theta$ satisfy
$\Phi(\theta)=\Phi(\tilde\theta)$. Defining
$\Phi(\theta):=(\Phi_u(\theta),\Phi_P(\theta))$ and
$\Phi(\tilde\theta):=(\Phi_u(\tilde\theta),\Phi_P(\tilde\theta))$ yields
$\Phi_u(\theta)=\Phi_u(\tilde\theta)$ and
$\Phi_P(\theta)=\Phi_P(\tilde\theta)$. Hence, for each admissible $(s,A,z,\mu)$, $\chi_s^\theta(z;A,\cdot,\mu)
=\chi_s^{\tilde\theta}(z;A,\cdot,\mu)$ on the reachable domain, and, for every admissible $(s,r,\lambda,m)$, $P_{s,r}^{\theta;m,X}\lambda=P_{s,r}^{\tilde\theta;m,X}\lambda$.
Fix an admissible $(s,A,z,\nu,\mu)$. By the definition of the
contemporaneous choice kernel,
$
C_s^\theta(z;A\mid\nu,\mu)
=
\int_{\mathbb R^d}\chi_s^\theta(z;A,x,\mu)\,\nu(dx)
$ and
$
C_s^{\tilde\theta}(z;A\mid\nu,\mu)
=
\int_{\mathbb R^d}\chi_s^{\tilde\theta}(z;A,x,\mu)\,\nu(dx).
$ Equality of the $\Phi_u$-coordinates implies
	$
	C_s^\theta(z;A\mid\nu,\mu)
	=
	C_s^{\tilde\theta}(z;A\mid\nu,\mu).
	$
	Fix next an admissible
	$(s,r,A,z,\lambda,\bar\mu,m)$. Using both component equalities,
	\[
	\begin{aligned}
		D_{s,r}^{\theta,\bar\mu}(z;A\mid\lambda,m)
		&=
		\int_{\mathbb R^d}
		\chi_r^\theta(z;A,x,\bar\mu)\,
		P_{s,r}^{\theta;m,X}\lambda(dx) \\
		&=
		\int_{\mathbb R^d}
		\chi_r^{\tilde\theta}(z;A,x,\bar\mu)\,
		P_{s,r}^{\tilde\theta;m,X}\lambda(dx)
		=
		D_{s,r}^{\tilde\theta,\bar\mu}(z;A\mid\lambda,m).
	\end{aligned}
	\]
	Therefore, $\mathscr T(\theta)=\mathscr T(\tilde\theta)$ whenever $\Phi(\theta)=\Phi(\tilde\theta)$, equivalently
	$\Phi^{-1}(\varphi)\subseteq\mathscr T^{-1}(\Lambda(\varphi))$ for every $\varphi\in\Phi(\Theta)$. Invoking
	Theorem~\ref{thm:behavioral-identification} yields the reverse inclusion on the reachable domain, so
	$\Phi^{-1}(\varphi)=\mathscr T^{-1}(\Lambda(\varphi))$.
	Let $q_{\mathfrak O}:\Theta\to\Theta/\!\sim_{\mathfrak O}$ be the quotient map and define
	$\iota:\Theta/\!\sim_{\mathfrak O}\to\Phi(\Theta)$ by
	$\iota([\theta]_{\mathfrak O}):=\Phi(\theta)$. The preceding fiber
	identity implies that $\iota$ is well defined and injective; its
	surjectivity follows from the definition of $\Phi(\Theta)$. Hence
	$\iota$ is a bijection. For any
	$\Psi:\Phi(\Theta)\to\mathbb R$, the induced functional
	$\overline\Psi:\Theta/\!\sim_{\mathfrak O}\to\mathbb R$ is uniquely
	given by
	$\overline\Psi:=\Psi\circ\iota$. Equivalently,
	$\overline\Psi([\theta]_{\mathfrak O})=\Psi(\Phi(\theta))$.
	Uniqueness follows because every
	$[\theta]_{\mathfrak O}\in\Theta/\!\sim_{\mathfrak O}$ has the unique
	image $\Phi(\theta)$ under $\iota$. Finally, let $F:\Theta\to\mathbb R$ vary within some fiber
	$\Phi^{-1}(\varphi)$; then there exist
	$\theta,\tilde\theta\in\Phi^{-1}(\varphi)$ with
	$F(\theta)\neq F(\tilde\theta)$. Since
	$\mathscr T(\theta)=\Lambda(\varphi)=\mathscr T(\tilde\theta)$,
	observational equivalence does not imply equality of $F$. Hence $F$
	is not identified from stochastic choice.
\end{proof}

\subsection{Proof of Proposition~\ref{prop:distribution-independent}}\label{proofprop23}

\begin{proof}
    Let
     $
	\theta
	:=
	(u,b,\sigma,\sigma_0,h,\Sigma_Y)
	$
    and
	$
	\bar\theta
	:=
	(\bar u,\bar b,\bar\sigma,\bar\sigma_0,\bar h,\bar\Sigma_Y).
	$
    By hypothesis,
	$
	u(s,z,x,\mu)=\bar u(s,z,x)
	$
    and
	$
	(b,\sigma,\sigma_0,h,\Sigma_Y)
	=
	(\bar b,\bar\sigma,\bar\sigma_0,\bar h,\bar\Sigma_Y)
	$
    for every admissible
	$
	(s,z,x,y,\mu).
	$
	Hence,
	$
	\chi_s^\theta(z;A,x,\mu)
	=
	\bar\chi_s(z;A,x)
	$
	for every admissible
	$
	(s,A,z,x,\mu),
	$
	where
	$
	\bar\chi_s(z;A,x)
	:=
	\mathbf 1
	\{\bar u(s,z,x)>
	\max_{a\in A\setminus\{z\}}\bar u(s,a,x)\}.
	$
	Therefore, for every admissible
	$
	(s,A,z,\nu,\mu),
	$ we have
	$
	C_s^\theta(z;A\mid\nu,\mu)
	=
	\int_{\mathbb R^d}\bar\chi_s(z;A,x)\nu(dx)
	=
	C_s^{\bar\theta}(z;A\mid\nu).
	$
	Fix
	$
	0\le s<r\le t,
	$
	an admissible
	$
	\lambda\in\mathcal P_2(\mathbb R^d\times\mathbb R^{d_Y}),
	$
	and
	$
	m,m'\in\mathcal M_{s,r}(\lambda).
	$
	The frozen systems 
	$
	m
	$
	and
	$
	m'
	$
	have coefficient tuple
	$
	(\bar b,\bar\sigma,\bar\sigma_0,\bar h,\bar\Sigma_Y),
	$
	independent of the flow argument. Thus their martingale problems coincide. Weak uniqueness of the frozen system yields
	$
	P_{s,r}^{\theta;m,X}\lambda
	=
	P_{s,r}^{\theta;m',X}\lambda
	=
	\bar P_{s,r}^{X}\lambda,
	$
	where
	$
	\bar P_{s,r}^{X}
	$
	is the transition operator induced by
	$
	(\bar b,\bar\sigma,\bar\sigma_0,\bar h,\bar\Sigma_Y).
	$
	For every admissible
	$
	(s,r,A,z,\lambda,\bar\mu,m),
	$
	the preceding identities imply
	$
	D_{s,r}^{\theta,\bar\mu}(z;A\mid\lambda,m)
	=
	\int_{\mathbb R^d}
	\bar\chi_r(z;A,x)\bar P_{s,r}^{X}\lambda(dx)
	=
	D_{s,r}^{\bar\theta}(z;A\mid\lambda).
	$
	Accordingly,
	$
	\Phi_u(\theta)=\Phi_u^{\mathrm{DRU}}(\bar\theta)
	$
	and
	$
	\Phi_P(\theta)=\Phi_P^{\mathrm{DRU}}(\bar\theta),
	$
	so
	$
	\Phi(\theta)=\Phi_{\mathrm{DRU}}(\bar\theta).
	$
	Since
	$
	\mathscr T=\Lambda\circ\Phi,
	$ we have
	$
	\mathscr T_{\mathrm{DDU}}(\theta)
	=
	\Lambda(\Phi(\theta))
	=
	\Lambda(\Phi_{\mathrm{DRU}}(\bar\theta))
	=
	\mathscr T_{\mathrm{DRU}}(\bar\theta).
	$
	Thus the distribution-independent DDU representation and its DRU reduction are observationally equivalent on the reachable domain.
\end{proof}

\subsection{Proof of Theorem~\ref{thm:behavioral-rigidity}}\label{proofth24}

\begin{proof}
	Let
	$
	\mathcal F_0
	:=
	\Phi^{-1}\!\bigl(\Phi(\theta_0)\bigr).
	$
	By definition,
	$
	\theta_\varepsilon\in\mathcal F_0
	$
	if and only if
	$
	\Phi(\theta_\varepsilon)=\Phi(\theta_0),
	$
	so
	$(i)\Longleftrightarrow(iii)$.
	Since,
	$
	\mathscr T=\Lambda\circ\Phi
	$
	and
	$
	\Lambda|_{\Phi(\Theta)}
	$
	is injective by
	Theorem~\ref{thm:behavioral-identification},
	$
	\mathscr T(\theta_\varepsilon)
	=
	\mathscr T(\theta_0)
	$
	if and only if
	$
	\Phi(\theta_\varepsilon)
	=
	\Phi(\theta_0).
	$
	Hence,
	$(i)\Longleftrightarrow(ii)$.
	Let
	$
	\bar\theta_0\in\Theta_{\mathrm{DRU}}
	$
	be a DRU reduction of
	$\theta_0$, so
	$
	\Phi(\theta_0)
	=
	\Phi_{\mathrm{DRU}}(\bar\theta_0).
	$
	Under (i),
	$
	\Phi(\theta_\varepsilon)
	=
	\Phi(\theta_0)
	=
	\Phi_{\mathrm{DRU}}(\bar\theta_0)
	$
	for every
	$\varepsilon\in E$.
	Thus (iv) holds with
	$
	\bar\theta=\bar\theta_0,
	$
	and therefore
	$(i)\Longrightarrow(iv)$.
	Conversely, suppose (iv) holds. Then there exists
	$
	\bar\theta\in\Theta_{\mathrm{DRU}}
	$
	such that
	$
	\Phi(\theta_\varepsilon)
	=
	\Phi_{\mathrm{DRU}}(\bar\theta)
	$
	for every
	$\varepsilon\in E$.
	Since
	$
	0\in E,
	$
	$
	\Phi(\theta_0)
	=
	\Phi_{\mathrm{DRU}}(\bar\theta),
	$
	and therefore
	$
	\Phi(\theta_\varepsilon)
	=
	\Phi(\theta_0)
	$
	for each
	$\varepsilon\in E$.
	Hence,
	$(iv)\Longrightarrow(i)$.
	The four statements are therefore equivalent. Finally, a comparative-static family is behaviorally nontrivial precisely when it is not contained in
	$
	\mathcal F_0.
	$
	Equivalently, there exists
	$
	\varepsilon\in E
	$
	such that
	$
	\Phi(\theta_\varepsilon)\neq\Phi(\theta_0).
	$
	Injectivity of
	$
	\Lambda|_{\Phi(\Theta)}
	$
	then gives
	$
	\mathscr T(\theta_\varepsilon)
	\neq
	\mathscr T(\theta_0),
	$
	and the converse follows from
	$
	\mathscr T=\Lambda\circ\Phi.
	$
\end{proof}

\subsection{Proof of Lemma~\ref{lem:cmkv-state}}\label{prooflem25}

\begin{proof}
	Fix
	$
	0\le s\le r\le t
	$
	and define
	$
	\mathsf E
	:=
	\mathbb R^d
	\times
	\mathbb R^{d_Y}
	\times
	\mathcal P_2(\mathbb R^d),
	$
	and
	$
	\widehat{\mathsf E}
	:=
	\mathbb R^{d_Y}
	\times
	\mathcal P_2(\mathbb R^d).
	$
	For
	$
	\hat x=(y,\eta)\in\widehat{\mathsf E},
	$
	let
	$
	\mathbf P_{s,\hat x}
	$
	denote the unique solution of the continuation martingale problem associated with System \eqref{eq:ddu-system}. Weak uniqueness implies
	$
	\mathbf P_{s,\hat x}
	$
	is uniquely determined by
	$
	\hat x.
	$
	For each bounded Borel
	$
	F:
	C([s,r];\mathsf E)
	\rightarrow
	\mathbb R,
	$
	define
	$
	\mathcal G_{s,r}^F(\hat x)
	:=
	\mathbf E_{s,\hat x}
	\!\left[
	F((\mathbf X_v)_{v\in[s,r]})
	\right].
	$
	Since,
	$
	\hat x
	\mapsto
	\mathbf P_{s,\hat x}
	$
	is a Borel stochastic kernel,
	$
	\mathcal G_{s,r}^F
	$
	is bounded and Borel on
	$
	\widehat{\mathsf E}.
	$
	Let
	$
	\mathbf Q_s(\omega,\cdot)
	:=
	\mathcal L
	\left(
	(\mathbf X_v)_{v\in[s,r]}
	\mid
	\mathcal F_s^Y
	\right)(\omega).
	$
Since
	$
	\mu_s
	=
	\mathcal L(X_s\mid\mathcal F_s^Y),
	$
	$
	\mathbf Q_s(\omega,\cdot)
	$
	has initial condition
	$
	(Y_s(\omega),\mu_s(\omega)).
	$
	Moreover,
	$
	(W_v-W_s,B_v-B_s)_{v\ge s}
	$
	is independent of
	$
	\mathcal F_s^Y,
	$
	the continuation policy is
	$
	\sigma(\mathbf X_v)
	$
	-measurable,
	$
	(b,\sigma,\sigma_0,h,\Sigma_Y)
	=
	(b,\sigma,\sigma_0,h,\Sigma_Y)(v,\mathbf X_v),
	$
	and therefore
	$
	\mathbf Q_s(\omega,\cdot)
	$
	solves the continuation martingale problem initialized at
	$
	(Y_s(\omega),\mu_s(\omega)).
	$
	Weak uniqueness yields
	$
	\mathbf Q_s(\omega,\cdot)
	=
	\mathbf P_{s,(Y_s(\omega),\mu_s(\omega))}
	$
	for
	$
	\mathbb P
	$
	-a.e.
	$
	\omega.
	$
Therefore,
	$
	\mathbb E
	\!\left[
	F((\mathbf X_v)_{v\in[s,r]})
	\mid
	\mathcal F_s^Y
	\right]
	=
	\int
	F(\xi)\,
	\mathbf P_{s,(Y_s,\mu_s)}(d\xi)
	=
	\mathcal G_{s,r}^F(Y_s,\mu_s)
	=
	\mathcal G_{s,r}^F(\widehat{\mathbf X}_s),
	$
	$\mathbb P$-a.s. Since the identity holds for every bounded Borel
	$
	F,
	$ we have
	$
	\mathcal L
	\left(
	(\mathbf X_v)_{v\in[s,r]}
	\mid
	\mathcal F_s^Y
	\right)
	=
	\mathcal L
	\left(
	(\mathbf X_v)_{v\in[s,r]}
	\mid
	\widehat{\mathbf X}_s
	\right),
	$
	or equivalently,
	$
	\sigma(\widehat{\mathbf X}_s)
	$
	is sufficient for the continuation experiment. Therefore
	$
	\widehat{\mathbf X}_s
	$
	is the minimal state variable governing all admissible continuation laws.
\end{proof}

\subsection{Proof of Lemma~\ref{lem:fixed-point-invariance}}\label{prooflem31}

\begin{proof}
	Fix
	$
	\mu\in\operatorname{Fix}(\Gamma)
	$
	and
	$
	0\le s\le r\le q\le t.
	$
	By
	Definition~\ref{def:endogenous-fixed-point},
	$
	\Gamma(\mu)=\mu,
	$
	hence
	$
	\Gamma_{s,v}(\mu_s)=\mu_v
	$
	for every
	$
	v\in[s,t].
	$
	In particular,
	$
	\Gamma_{s,r}(\mu_s)=\mu_r
	$
	and
	$
	\Gamma_{s,q}(\mu_s)=\mu_q.
	$
	The semigroup identity from Theorem~\ref{thm:cmkv-representation} yields
	$
	\Gamma_{s,q}
	=
	\Gamma_{r,q}\circ\Gamma_{s,r},
	$
	so
	$
	\Gamma_{r,q}(\mu_r)
	=
	\Gamma_{r,q}(\Gamma_{s,r}(\mu_s))
	=
	\Gamma_{s,q}(\mu_s)
	=
	\mu_q.
	$
	Therefore,
	$
	\Gamma_{r,q}\circ\Gamma_{s,r}
	=
	\Gamma_{s,q},
	$
	$
	\Gamma_{s,r}(\mu_s)=\mu_r,
	$
	and
	$
	\Gamma_{r,q}(\mu_r)=\mu_q.
	$
	For each
	$
	r\in[s,t],
	$
	define the continuation flow
	$
	\mu^{[r]}
	:=
	(\mu_v)_{r\le v\le t}
	$
	and the continuation operator
	$
	\Gamma^{[r]}
	:
	C([r,t];\mathcal P_2(\mathbb R^d))
	\rightarrow
	C([r,t];\mathcal P_2(\mathbb R^d))
	$
	by
	$
	\Gamma^{[r]}(\nu)_v
	:=
	\Gamma_{r,v}(\nu_r),
	$
	$
	r\le v\le t.
	$
	Since
	$
	\Gamma_{r,v}(\mu_r)=\mu_v
	$
	for every
	$
	v\in[r,t],
	$
	one has
	$
	\Gamma^{[r]}(\mu^{[r]})
	=
	\mu^{[r]}.
	$
	Hence,
	$
	\mu^{[r]}
	\in
	\operatorname{Fix}(\Gamma^{[r]}).
	$
	Equivalently, every orbit
	$
	(\Gamma_{s,v}(\mu_s))_{v\in[s,t]}
	$
	generated from a fixed point remains in the fixed-point family, so the semigroup action preserves the admissible conditional-law manifold. Let
	$
	\mathfrak F_s
	:=
	\{\mu_s:\mu\in\operatorname{Fix}(\Gamma)\}
	\subseteq
	\mathcal P_2(\mathbb R^d).
	$
	For
	$
	\eta\in\mathfrak F_s,
	$
	choose
	$
	\mu\in\operatorname{Fix}(\Gamma)
	$
	with
	$
	\mu_s=\eta.
	$
	Then
	$
	\Gamma_{s,r}(\eta)
	=
	\Gamma_{s,r}(\mu_s)
	=
	\mu_r
	\in
	\mathfrak F_r.
	$
	Therefore,
	$
	\Gamma_{s,r}(\mathfrak F_s)
	\subseteq
	\mathfrak F_r.
	$
	This is the precise statewise meaning of
	$
	\Gamma_{s,r}(\operatorname{Fix}(\Gamma))
	\subseteq
	\operatorname{Fix}(\Gamma).
	$
	By Theorem~\ref{thm:cmkv-representation},
	$
	P_{s,r}^{X}
	=
	P^{X}(\Gamma_{s,r})
	$
	and
	$
	\Phi_P
	=
	(P_{s,r}^{X})_{0\le s\le r\le t}.
	$
	Hence
	$
	P_{s,q}^{X}
	=
	P^{X}(\Gamma_{s,q})
	=
	P^{X}(\Gamma_{r,q}\circ\Gamma_{s,r})
	=
	P_{r,q}^{X}P_{s,r}^{X},
	$
	and, for every
	$
	0\le s\le r\le q\le t,
	$
	$
	\Phi_{P;s,q}
	=
	\Phi_{P;r,q}\circ\Phi_{P;s,r}.
	$
	Thus the induced action of
	$
	(\Gamma_{s,r})_{0\le s\le r\le t}
	$
	preserves both the fixed-point orbits
	$
	\mu^{[r]}
	$
	and the behavioral transition family
	$
	\Phi_P.
	$
\end{proof}

\subsection{Proof of Proposition~\ref{prop:structural-stability}}\label{proofprop32}

\begin{proof}
	Fix
	$
	\mu,\tilde\mu\in\operatorname{Fix}(\Gamma)
	$
	and suppose
	$
	\Gamma^\mu=\Gamma^{\tilde\mu}
	$
	on the reachable augmented-state domain. For
	$
	0\le s\le r\le t,
	$
	Theorem~\ref{thm:cmkv-representation} gives
	$
	P_{s,r}^{\mu,X}
	=
	P^X(\Gamma_{s,r}^{\mu})
	$
	and
	$
	P_{s,r}^{\tilde\mu,X}
	=
	P^X(\Gamma_{s,r}^{\tilde\mu}).
	$
	Hence,
	$
	\Gamma_{s,r}^{\mu}
	=
	\Gamma_{s,r}^{\tilde\mu}
	$
	implies
	$
	P_{s,r}^{\mu,X}
	=
	P_{s,r}^{\tilde\mu,X}.
	$
	Therefore,
	$
	\Phi_P^\mu
	=
	(P_{s,r}^{\mu,X})_{0\le s\le r\le t}
	=
	(P_{s,r}^{\tilde\mu,X})_{0\le s\le r\le t}
	=
	\Phi_P^{\tilde\mu}.
	$
	Under the common-felicity restriction,
	$
	\chi_s^\mu(z;A,x)
	=
	\chi_s^{\tilde\mu}(z;A,x)
	$
	for every reachable
	$
	(s,A,z,x),
	$
	and consequently
	$
	\Phi_u^\mu=\Phi_u^{\tilde\mu}.
	$
	Combining the two component identities yields
	$
	\Phi^\mu
	=
	(\Phi_u^\mu,\Phi_P^\mu)
	=
	(\Phi_u^{\tilde\mu},\Phi_P^{\tilde\mu})
	=
	\Phi^{\tilde\mu}.
	$
	Since,
	$
	\mathscr T=\Lambda\circ\Phi,
	$
	$
	\mathscr T^\mu
	=
	\Lambda(\Phi^\mu)
	=
	\Lambda(\Phi^{\tilde\mu})
	=
	\mathscr T^{\tilde\mu}.
	$
	Equivalently,
	$
	\Gamma^\mu=\Gamma^{\tilde\mu}
	$
	implies
	$
	(P^{\mu,X},\Phi_P^\mu,\Phi^\mu,\mathscr T^\mu)
	=
	(P^{\tilde\mu,X},\Phi_P^{\tilde\mu},
	\Phi^{\tilde\mu},\mathscr T^{\tilde\mu})
	$
	on the reachable domain. Let
	$
	\mathcal O_\Gamma(\mu)
	:=
	\{
	\Gamma_{s,r}^{\mu}(\mu_s):
	0\le s\le r\le t
	\}
	$
	denote the invariant orbit generated by
	$
	\mu.
	$
	Lemma~\ref{lem:fixed-point-invariance} yields
	$
	\Gamma_{s,r}^{\mu}(\mu_s)=\mu_r,
	$
	and
	$
	\Gamma_{r,q}^{\mu}\circ\Gamma_{s,r}^{\mu}
	=
	\Gamma_{s,q}^{\mu}.
	$
	Therefore, for any
	$
	\eta,\tilde\eta\in\mathcal O_\Gamma(\mu)
	$
	with
	$
	\Gamma^\eta=\Gamma^{\tilde\eta},
	$
	the preceding argument yields
	$
	\Phi^\eta=\Phi^{\tilde\eta}
	$
	and
	$
	\mathscr T^\eta=\mathscr T^{\tilde\eta}.
	$
	Hence the maps
	$
	\Gamma\mapsto\Phi\mapsto\mathscr T
	$
	are constant on every invariant
	$
	\Gamma
	$
	-orbit contained in
	$
	\operatorname{Fix}(\Gamma).
	$
\end{proof}

\subsection{Proof of Lemma~\ref{lem:ranking-recovery}}
\label{prooflem33}

\begin{proof}
	Let
	$
	\widehat{\mathsf E}
	:=
	\mathbb R^{d_Y}
	\times
	\mathcal P_2(\mathbb R^d)
	$
	and
	$
	\widehat{\Omega}
	:=
	C([0,t];\widehat{\mathsf E}).
	$
	Write
	$
	\widehat{\mathbf X}^{\mu}
	=
	(Y^\mu,\mu)
	$
	and
	$
	\widehat{\mathbf X}^{\tilde\mu}
	=
	(Y^{\tilde\mu},\tilde\mu),
	$
	and let
	$
	\widehat\lambda_0
	$
	be their common initial law. For each
	$
	0\le s\le r\le t,
	$
	suppose
	$
	\Gamma_{s,r}^{\mu}
	=
	\Gamma_{s,r}^{\tilde\mu}
	$
	on the reachable augmented-state domain; denote the common kernel by
	$
	\Gamma_{s,r}^{*}.
	$
	For
	$
	0=t_0<t_1<\cdots<t_n\le t
	$
	and
	$
	B_i\in\mathcal B(\widehat{\mathsf E}),
	$
	the Markov representation in Theorem~\ref{thm:cmkv-representation} yields
	\[
	\begin{aligned}
		&\mathbb P
		\left(
		\widehat{\mathbf X}^{\mu}_{t_0}\in B_0,
		\ldots,
		\widehat{\mathbf X}^{\mu}_{t_n}\in B_n
		\right)                                                 
		=
		\int_{B_0}
		\widehat\lambda_0(d\hat x_0)
		\int_{B_1}
		\Gamma_{t_0,t_1}^{*}(\hat x_0,d\hat x_1)
		\cdots
		\int_{B_n}
		\Gamma_{t_{n-1},t_n}^{*}(\hat x_{n-1},d\hat x_n),
	\end{aligned}
	\]
	and the same identity holds with
	$
	\widehat{\mathbf X}^{\mu}
	$
	replaced by
	$
	\widehat{\mathbf X}^{\tilde\mu}.
	$
	Thus
	$
	\mathcal L(
	\widehat{\mathbf X}^{\mu}_{t_0},
	\ldots,
	\widehat{\mathbf X}^{\mu}_{t_n})
	=
	\mathcal L(
	\widehat{\mathbf X}^{\tilde\mu}_{t_0},
	\ldots,
	\widehat{\mathbf X}^{\tilde\mu}_{t_n})
	$
	for every finite time grid. Since both processes have continuous paths in
	$
	\widehat{\mathsf E},
	$
	their finite-dimensional distributions determine their laws on
	$
	\widehat{\Omega},
	$
	so
	$
	\mathcal L(\widehat{\mathbf X}^{\mu})
	=
	\mathcal L(\widehat{\mathbf X}^{\tilde\mu}).
	$
	In particular,
	$
	\mathcal L(Y_s^\mu,\mu_s)
	=
	\mathcal L(Y_s^{\tilde\mu},\tilde\mu_s)
	$
	for every
	$
	s\in[0,t].
	$
	Let
	$
	\widehat{\mathbb P}
	$
	be the common law on
	$
	\widehat{\Omega}
	$
	and let
	$
	(\widehat Y,\widehat\mu)
	$
	be the canonical coordinate process. By canonical compatibility, both
	$
	(Y^\mu,\mu)
	$
	and
	$
	(Y^{\tilde\mu},\tilde\mu)
	$
	admit versions represented by
	$
	(\widehat Y,\widehat\mu)
	$
	under
	$
	\widehat{\mathbb P}.
	$
	Thus, on the canonical realization,
	$
	Y_s^\mu
	=
	Y_s^{\tilde\mu}
	=
	\widehat Y_s
	$
	and
	$
	\mu_s
	=
	\tilde\mu_s
	=
	\widehat\mu_s
	$
	for every
	$
	s\in[0,t],
	$
	outside one
	$
	\widehat{\mathbb P}
	$
	-null set. Fix
	$
	s\in[0,t],
	$
	$
	A\in\mathcal K(Z),
	$
	$
	z\in A,
	$
	and
	$
	x\in\mathbb R^d.
	$
	Because the felicity primitive
	$
	u
	$
	is common,
	$
	\chi_s^\mu(z;A,x)
	$
	and
	$
	\chi_s^{\tilde\mu}(z;A,x)
	$
	are represented canonically by
	$
	\mathbf 1
	\{
	u(s,z,x,\widehat\mu_s)
	>
	\max_{a\in A\setminus\{z\}}
	u(s,a,x,\widehat\mu_s)
	\}.
	$
	Therefore
	$
	\chi_s^\mu(z;A,x)
	=
	\chi_s^{\tilde\mu}(z;A,x)
	$
	for every
	$
	x\in\mathbb R^d,
	$
	$
	\widehat{\mathbb P}
	$
	-a.s. Under the no-ties restriction, the same conclusion holds for the induced ranking classes without dependence on the selected maximizer version. Let
	$
	\nu
	$
	be any reachable composition measure at time
	$
	s
	$
	and define
	$
	N_{s,A,z}
	:=
	\{
	x\in\mathbb R^d:
	\chi_s^\mu(z;A,x)
	\neq
	\chi_s^{\tilde\mu}(z;A,x)
	\}.
	$
	On the canonical realization,
	$
	N_{s,A,z}
	=
	\varnothing
	$
	outside a
	$
	\widehat{\mathbb P}
	$
	-null set; hence
	$
	\nu(N_{s,A,z})=0
	$
	for every reachable
	$
	\nu.
	$
	Equivalently,
	$
	\chi_s^\mu(z;A,\cdot)
	=
	_{\mathfrak N_s}
	\chi_s^{\tilde\mu}(z;A,\cdot)
	$
	for every admissible
	$
	(s,A,z).
	$
	Subsequently,
	$
	\Phi_u^\mu
	=
	\bigl(
	\chi_s^\mu(z;A,\cdot)
	\bigr)_{s,A,z}
	=
	\bigl(
	\chi_s^{\tilde\mu}(z;A,\cdot)
	\bigr)_{s,A,z}
	=
	\Phi_u^{\tilde\mu}
	$
	on the reachable domain.
\end{proof}

\subsection{Proof of Theorem~\ref{thm:structural-rigidity}}\label{proofth34}

\begin{proof}
	Define
	$
	\mathfrak F
	:=
	\operatorname{Fix}(\Gamma)
	$
	and let
	$
	\mathfrak R
	\subseteq
	[0,t]^2
	\times
	\mathbb R^{d_Y}
	\times
	\mathcal P_2(\mathbb R^d)
	$
	be the reachable augmented-state domain. For
	$
	\mu\in\mathfrak F,
	$
	write
	$
	\Gamma^\mu
	:=
	(\Gamma_{s,r}^\mu)_{0\le s\le r\le t},
	$
	$
	\mathcal P^\mu
	:=
	(P_{s,r}^{\mu,X})_{0\le s\le r\le t},
	$
	$
	\Phi^\mu
	:=
	(\Phi_u^\mu,\Phi_P^\mu),
	$
	and
	$
	\mathscr T^\mu
	:=
	\Lambda(\Phi^\mu).
	$
	All kernel equalities are understood on
	$
	\mathfrak R,
	$
	up to the canonical null sets generated by the reachable initial laws. Fix
	$
	\mu,\tilde\mu\in\mathfrak F
	$
	so that
	$
	\Phi_u^\mu=\Phi_u^{\tilde\mu}.
	$
	Since
	$
	\mu,\tilde\mu\in\operatorname{Fix}(\Gamma),
	$
	$
	\Gamma(\mu)=\mu
	$
	and
	$
	\Gamma(\tilde\mu)=\tilde\mu.
	$
	Hence the associated frozen systems close to admissible DDU solutions
	$
	(X^\mu,Y^\mu,\mu)
	$
	and
	$
	(X^{\tilde\mu},Y^{\tilde\mu},\tilde\mu).
	$
	For every
	$
	0\le s\le r\le t,
	$
	Theorem~\ref{thm:cmkv-representation} gives
	$
	P_{s,r}^{\mu,X}
	=
	P^X(\Gamma_{s,r}^\mu)
	$
	and
	$
	P_{s,r}^{\tilde\mu,X}
	=
	P^X(\Gamma_{s,r}^{\tilde\mu}).
	$
	Therefore,
	$
	\Gamma^\mu=\Gamma^{\tilde\mu}
	$
	implies
	$
	P^X(\Gamma_{s,r}^\mu)
	=
	P^X(\Gamma_{s,r}^{\tilde\mu})
	$
	for every
	$
	(s,r),
	$
	and hence
	$
	P_{s,r}^{\mu,X}
	=
	P_{s,r}^{\tilde\mu,X}.
	$
	Since
	$
	\Phi_P^\mu
	=
	(P_{s,r}^{\mu,X})_{0\le s\le r\le t}
	$
	and
	$
	\Phi_P^{\tilde\mu}
	=
	(P_{s,r}^{\tilde\mu,X})_{0\le s\le r\le t},
	$
	we obtain
	$
	\Gamma^\mu=\Gamma^{\tilde\mu}
	\Longrightarrow
	\Phi_P^\mu=\Phi_P^{\tilde\mu}.
	$
	For the converse, consider
	$
	\Phi_P^\mu=\Phi_P^{\tilde\mu}.
	$
	Then
	$
	P_{s,r}^{\mu,X}
	=
	P_{s,r}^{\tilde\mu,X}
	$
	for every
	$
	0\le s\le r\le t,
	$
	so
	$
	P^X(\Gamma_{s,r}^\mu)
	=
	P^X(\Gamma_{s,r}^{\tilde\mu})
	$
	for every
	$
	(s,r).
	$
	By injectivity of
	$
	\Gamma
	\mapsto
	\bigl(P^X(\Gamma_{s,r})\bigr)_{0\le s\le r\le t}
	$
	on the reachable restriction of
	$
	\mathfrak F,
	$
	$
	\Gamma_{s,r}^\mu
	=
	\Gamma_{s,r}^{\tilde\mu}
	$
	for every
	$
	(s,r)
	$
	on
	$
	\mathfrak R.
	$
	Thus
	$
	\Gamma^\mu=\Gamma^{\tilde\mu},
	$
	and consequently
	$
	\Gamma^\mu=\Gamma^{\tilde\mu}
	\quad\Longleftrightarrow\quad
	\Phi_P^\mu=\Phi_P^{\tilde\mu}.
	$
	
	We next compare
	$
	\Phi_P
	$
	and
	$
	\Phi.
	$
	By hypothesis,
	$
	\Phi_u^\mu=\Phi_u^{\tilde\mu}.
	$
	Therefore,
	$
	\Phi_P^\mu=\Phi_P^{\tilde\mu}
	$
	implies
	$
	(\Phi_u^\mu,\Phi_P^\mu)
	=
	(\Phi_u^{\tilde\mu},\Phi_P^{\tilde\mu}),
	$
	that is,
	$
	\Phi^\mu=\Phi^{\tilde\mu}.
	$
	Conversely,
	$
	\Phi^\mu=\Phi^{\tilde\mu}
	$
	implies equality of both coordinates and hence
	$
	\Phi_P^\mu=\Phi_P^{\tilde\mu}.
	$
	Therefore,
	$
	\Phi_P^\mu=\Phi_P^{\tilde\mu}
	\quad\Longleftrightarrow\quad
	\Phi^\mu=\Phi^{\tilde\mu}.
	$
	To make the preceding implication explicit at the level of choice kernels, write
	$
	\Phi_u^\mu
	=
	(\chi_s^\mu(z;A,\cdot))_{s,A,z}
	$
	and
	$
	\Phi_u^{\tilde\mu}
	=
	(\chi_s^{\tilde\mu}(z;A,\cdot))_{s,A,z}.
	$
	The common-ranking restriction yields
	$
	\chi_s^\mu(z;A,x)
	=
	\chi_s^{\tilde\mu}(z;A,x)
	$
	for every reachable
	$
	(s,A,z,x),
	$
	outside sets null under every admissible composition measure. Hence, for every admissible
	$
	(s,A,z,\nu),
	$
	$
	C_s^\mu(z;A\mid\nu)
	=
	\int_{\mathbb R^d}
	\chi_s^\mu(z;A,x)\nu(dx)
	=
	\int_{\mathbb R^d}
	\chi_s^{\tilde\mu}(z;A,x)\nu(dx)
	=
	C_s^{\tilde\mu}(z;A\mid\nu).
	$
	Similarly, equality
	$
	P_{s,r}^{\mu,X}
	=
	P_{s,r}^{\tilde\mu,X}
	$
	implies, for every admissible
	$
	(s,r,A,z,\lambda,\bar\mu),
	$
	$
	D_{s,r}^{\mu,\bar\mu}(z;A\mid\lambda)
	=
	\int_{\mathbb R^d}
	\chi_r(z;A,x,\bar\mu)
	P_{s,r}^{\mu,X}\lambda(dx)
	=
	\int_{\mathbb R^d}
	\chi_r(z;A,x,\bar\mu)
	P_{s,r}^{\tilde\mu,X}\lambda(dx)
	=
	D_{s,r}^{\tilde\mu,\bar\mu}(z;A\mid\lambda).
	$
	Thus equality of
	$
	(\Phi_u,\Phi_P)
	$
	is equivalent to equality of every contemporaneous and continuation coordinate of the observable stochastic-choice array. Since,
	$
	\mathscr T=\Lambda\circ\Phi,
	$
	$
	\Phi^\mu=\Phi^{\tilde\mu}
	$
	implies
	$
	\mathscr T^\mu
	=
	\Lambda(\Phi^\mu)
	=
	\Lambda(\Phi^{\tilde\mu})
	=
	\mathscr T^{\tilde\mu}.
	$
	Conversely, suppose
	$
	\mathscr T^\mu=\mathscr T^{\tilde\mu}.
	$
	Then
	$
	\Lambda(\Phi^\mu)=\Lambda(\Phi^{\tilde\mu}).
	$
	Assumption~\ref{ass:observational-separation} implies that
	$
	\Lambda|_{\Phi(\Theta)}
	$
	is injective, while
	Theorem~\ref{thm:behavioral-identification}
	identifies the behavioral image from the observable array. Hence,
	$
	\Phi^\mu=\Phi^{\tilde\mu}.
	$
	Therefore,
	$
	\Phi^\mu=\Phi^{\tilde\mu}
	\quad\Longleftrightarrow\quad
	\mathscr T^\mu=\mathscr T^{\tilde\mu}.
	$
	Combining the three equivalences gives
	$
	\Gamma^\mu=\Gamma^{\tilde\mu}
	$
	if and only if
	$
	\Phi_P^\mu=\Phi_P^{\tilde\mu},
	$
	if and only if
	$
	\Phi^\mu=\Phi^{\tilde\mu},
	$
	if and only if
	$
	\mathscr T^\mu=\mathscr T^{\tilde\mu}.
	$
	Assumption~\ref{ass:behavioral-separation} ensures that these equalities hold on the entire reachable behavioral domain rather than only after quotienting by undetectable ranking changes or transition-law perturbations.
	
	We next establish the quotient identifications. Define
	$
	\mu\sim_\Gamma\tilde\mu
	$
	whenever
	$
	\Gamma^\mu=\Gamma^{\tilde\mu}.
	$
	Reflexivity and symmetry are immediate, and transitivity follows from equality of the transition families. Thus
	$
	\sim_\Gamma
	$
	is an equivalence relation on
	$
	\mathfrak F.
	$
	Let
	$
	\pi_\Gamma:
	\mathfrak F
	\rightarrow
	\mathfrak F/{\sim_\Gamma}
	$
	be the quotient map,
	$
	\pi_\Gamma(\mu)=[\mu]_{\sim_\Gamma}.
	$
	Define
	$
	\mathcal J_P:
	\mathfrak F/{\sim_\Gamma}
	\rightarrow
	\Phi_P(\mathfrak F)
	$
	by
	$
	\mathcal J_P([\mu]_{\sim_\Gamma})
	:=
	\Phi_P^\mu.
	$
	If
	$
	[\mu]_{\sim_\Gamma}
	=
	[\tilde\mu]_{\sim_\Gamma},
	$
	then
	$
	\Gamma^\mu=\Gamma^{\tilde\mu},
	$
	and therefore
	$
	\Phi_P^\mu=\Phi_P^{\tilde\mu};
	$
	hence
	$
	\mathcal J_P
	$
	is well defined. If
	$
	\mathcal J_P([\mu]_{\sim_\Gamma})
	=
	\mathcal J_P([\tilde\mu]_{\sim_\Gamma}),
	$
	then
	$
	\Phi_P^\mu=\Phi_P^{\tilde\mu},
	$
	so injectivity of
	$
	\Gamma\mapsto P^X(\Gamma)
	$
	gives
	$
	\Gamma^\mu=\Gamma^{\tilde\mu}
	$
	and therefore
	$
	[\mu]_{\sim_\Gamma}
	=
	[\tilde\mu]_{\sim_\Gamma}.
	$
	Thus
	$
	\mathcal J_P
	$
	is injective. For every
	$
	\varphi_P\in\Phi_P(\mathfrak F),
	$
	there exists
	$
	\mu\in\mathfrak F
	$
	with
	$
	\varphi_P=\Phi_P^\mu,
	$
	so
	$
	\varphi_P
	=
	\mathcal J_P([\mu]_{\sim_\Gamma}).
	$
	Thus
	$
	\mathcal J_P
	$
	is surjective and
	$
	\mathfrak F/{\sim_\Gamma}
	\cong
	\Phi_P(\mathfrak F).
	$
	Now define
	$
	\mathcal J_\Phi:
	\Phi_P(\mathfrak F)
	\rightarrow
	\Phi(\mathfrak F)
	$
	by
	$
	\mathcal J_\Phi(\Phi_P^\mu)
	:=
	(\Phi_u^\mu,\Phi_P^\mu).
	$
	To verify well-definedness, let
	$
	\Phi_P^\mu=\Phi_P^{\tilde\mu}.
	$
	Then
	$
	\Gamma^\mu=\Gamma^{\tilde\mu},
	$
	and the maintained ranking restriction gives
	$
	\Phi_u^\mu=\Phi_u^{\tilde\mu}.
	$
	Hence
	$
	(\Phi_u^\mu,\Phi_P^\mu)
	=
	(\Phi_u^{\tilde\mu},\Phi_P^{\tilde\mu}).
	$
	Injectivity follows because equality of the images under
	$
	\mathcal J_\Phi
	$
	implies equality of their second coordinates. Surjectivity follows from the definition of
	$
	\Phi(\mathfrak F).
	$
	Therefore
	$
	\Phi_P(\mathfrak F)
	\cong
	\Phi(\mathfrak F).
	$
	
	Finally, define
	$
	\mathcal J_{\mathscr T}:
	\Phi(\mathfrak F)
	\rightarrow
	\mathscr T(\mathfrak F)
	$
	by
	$
	\mathcal J_{\mathscr T}(\varphi)
	:=
	\Lambda(\varphi).
	$
	The factorization
	$
	\mathscr T=\Lambda\circ\Phi
	$
	implies surjectivity. If
	$
	\mathcal J_{\mathscr T}(\varphi)
	=
	\mathcal J_{\mathscr T}(\tilde\varphi),
	$
	then
	$
	\Lambda(\varphi)=\Lambda(\tilde\varphi).
	$
	Assumption~\ref{ass:observational-separation} implies
	$
	\varphi=\tilde\varphi,
	$
	so
	$
	\mathcal J_{\mathscr T}
	$
	is injective. Hence
	$
	\Phi(\mathfrak F)
	\cong
	\mathscr T(\mathfrak F).
	$
	Composing
	$
	\mathcal J_P,
	$
	$
	\mathcal J_\Phi,
	$
	and
	$
	\mathcal J_{\mathscr T}
	$
	gives
	$
	\operatorname{Fix}(\Gamma)/{\sim_\Gamma}
	\cong
	\Phi_P(\operatorname{Fix}(\Gamma))
	\cong
	\Phi(\operatorname{Fix}(\Gamma))
	\cong
	\mathscr T(\operatorname{Fix}(\Gamma)).
	$ We conclude with the perturbation statement. Let
	$
	\{\mu^\varepsilon:\varepsilon\in E\}
	\subseteq
	\mathfrak F
	$
	with
	$
	0\in E.
	$
	Define behavioral triviality by
	$
	\mathscr T^{\mu^\varepsilon}
	=
	\mathscr T^{\mu^0}
	$
	for every
	$
	\varepsilon\in E.
	$
	By the equivalences already established,
	$
	\mathscr T^{\mu^\varepsilon}
	=
	\mathscr T^{\mu^0}
	$
	if and only if
	$
	\Phi^{\mu^\varepsilon}
	=
	\Phi^{\mu^0},
	$
	if and only if
	$
	\Phi_P^{\mu^\varepsilon}
	=
	\Phi_P^{\mu^0},
	$
	if and only if
	$
	\Gamma^{\mu^\varepsilon}
	=
	\Gamma^{\mu^0}.
	$
	Equivalently,
	$
	[\mu^\varepsilon]_{\sim_\Gamma}
	=
	[\mu^0]_{\sim_\Gamma}
	$
	for every
	$
	\varepsilon\in E.
	$
	Negating the preceding statement yields behavioral nontriviality if and only if there exists
	$
	\varepsilon\in E
	$
	such that
	$
	[\mu^\varepsilon]_{\sim_\Gamma}
	\neq
	[\mu^0]_{\sim_\Gamma},
	$
	equivalently,
	$
	\Gamma^{\mu^\varepsilon}
	\neq
	\Gamma^{\mu^0},
	$
	equivalently,
	$
	\Phi_P^{\mu^\varepsilon}
	\neq
	\Phi_P^{\mu^0},
	$
	equivalently,
	$
	\Phi^{\mu^\varepsilon}
	\neq
	\Phi^{\mu^0},
	$
	equivalently,
	$
	\mathscr T^{\mu^\varepsilon}
	\neq
	\mathscr T^{\mu^0}.
	$
	Thus observable comparative statics are precisely motions across
	$
	\sim_\Gamma
	$
	-classes in
	$
	\operatorname{Fix}(\Gamma),
	$
	whereas perturbations contained in a single class leave
	$
	(\Gamma,\Phi_P,\Phi,\mathscr T)
	$
	unchanged.
\end{proof}

%%%%%%%%%%%%%%%%%%%%%%%%%%%%%%%%%%%%%%%%%%%%%%%%%%%%%
\clearpage

\section*{Supplementary Appendix}

\setcounter{section}{0}
\renewcommand{\thesection}{SA.\arabic{section}}
\renewcommand{\thesubsection}{SA.\arabic{section}.\arabic{subsection}}

\section{State- and Law-Dependent Observation Volatility}
\label{sa:general-observation-volatility}

This section proves Theorem~3. Throughout,
$
\Sigma_Y=\Sigma_Y(s,y,\mu),
$
$
a_Y(s,y,\mu)
:=
\Sigma_Y(s,y,\mu)
\Sigma_Y\\ (s,y,\mu)^\dagger,
$
and
$
a_Y:
[0,t]\times\mathbb R^{d_Y}\times
\mathcal P_2(\mathbb R^d)
\rightarrow
\mathbb S_{+}^{d_Y}.
$
Unlike Proposition~2,
$
\Sigma_Y
$
depends jointly on
$(s,y,\mu)$,
so
$
\nabla_y\Sigma_Y\not\equiv0
$
and
$
\partial_\mu\Sigma_Y\not\equiv0.
$
Consequently,
$
Y\mapsto\widehat\Sigma_Y^{-1}Y
$
is, in general, unavailable, and the reference-measure construction used under constant observation volatility does not apply. Instead, let
$
\mu_s
=
\mathcal L(X_s\mid\mathcal F_s^Y),
$
and formulate the conditional-law martingale problem for
$
(X,Y,\mu)
$
on
$
C([0,t];
\mathbb R^d
\times
\mathbb R^{d_Y}
\times
\mathcal P_2(\mathbb R^d)).
$
Weak existence is obtained through the conditional-law fixed-point operator acting on
$
C([0,t];\mathcal P_2(\mathbb R^d)),
$
while weak uniqueness follows from the corresponding martingale problem under the strengthened conditions stated below. Throughout,
$
\|\cdot\|
$
denotes the Hilbert--Schmidt norm,
$
(\cdot)^\dagger
$
the transpose, and
$
m_2(\mu)
=
\int_{\mathbb R^d}|x|^2\,\mu(dx)
$
the second-moment functional.

\subsection{Conditions for weak existence}

\medskip
\noindent\textbf{Condition (GV1): Joint measurability and nonanticipativity.}
The coefficient tuple
$
(b,\sigma,\sigma_0,h,\Sigma_Y):
[0,t]
\times
\mathbb R^d
\times
\mathbb R^{d_Y}
\times
\mathcal P_2(\mathbb R^d)
\rightarrow
\mathbb R^d
\times
\mathbb R^{d\times d_W}
\times
\mathbb R^{d\times d_Y}
\times
\mathbb R^{d_Y}
\times
\mathbb R^{d_Y\times d_Y}
$
is jointly Borel measurable.
In the path-dependent construction,
$
(b,\sigma,\sigma_0,h,\Sigma_Y)
=
(b,\sigma,\sigma_0,h,\\ \Sigma_Y)
(s,X_{\cdot\wedge s},Y_{\cdot\wedge s},\mu_{\cdot\wedge s}),
$
where
$
X_{\cdot\wedge s}
:=
(X_r)_{0\le r\le s},
\;
Y_{\cdot\wedge s}
:=
(Y_r)_{0\le r\le s},
\;
\mu_{\cdot\wedge s}
:=
(\mu_r)_{0\le r\le s}.
$
Equivalently, the coefficient tuple is nonanticipative, i.e.,
$
(X_{\cdot\wedge s},
Y_{\cdot\wedge s},
\mu_{\cdot\wedge s})
=
(X'_{\cdot\wedge s},
Y'_{\cdot\wedge s},
\mu'_{\cdot\wedge s})
$
implies
$
(b,\sigma,\sigma_0,h,\Sigma_Y)
(s,X,Y,\mu)
=
(b,\sigma,\sigma_0,h,\Sigma_Y)
(s,X',Y',\mu').
$

\medskip
\noindent\textbf{Condition (GV2): Global Wasserstein-Lipschitz continuity.}
There exists $L>0$ such that
\begin{align}
	&|b(s,x,y,\mu)-b(s,x',y',\nu)|
	+\|\sigma(s,x,y,\mu)-\sigma(s,x',y',\nu)\| \notag\\
	&\quad
	+\|\sigma_0(s,x,y,\mu)-\sigma_0(s,x',y',\nu)\|
	+|h(s,x,y,\mu)-h(s,x',y',\nu)| \notag\\
	&\quad
	+\|\Sigma_Y(s,y,\mu)-\Sigma_Y(s,y',\nu)\|
	\leq
	L\bigl(
	|x-x'|+|y-y'|+\mathcal{W}_2(\mu,\nu)
	\bigr).
	\label{eq:gv-lipschitz}
\end{align}

\medskip
\noindent\textbf{Condition (GV3): Linear growth.}
There exists $C>0$ such that
\begin{align}
	&|b(s,x,y,\mu)|^2
	+\|\sigma(s,x,y,\mu)\|^2
	+\|\sigma_0(s,x,y,\mu)\|^2 \notag\\
	&\qquad
	+|h(s,x,y,\mu)|^2
	+\|\Sigma_Y(s,y,\mu)\|^2
	\leq
	C\bigl(
	1+|x|^2+|y|^2+m_2(\mu)
	\bigr),
	\label{eq:gv-growth}
\end{align}
where $m_2(\mu):=\int_{\mathbb{R}^d}|z|^2\mu(dz)$.

\medskip
\noindent\textbf{Condition (GV4): Uniform observation ellipticity.}
Let
$
a_Y
:=
\Sigma_Y\Sigma_Y^\dagger.
$
There exist constants
$
0<\underline a\le\overline a<\infty
$
such that
$
\underline a I_{d_Y}
\le
a_Y(s,y,\mu)
\le
\overline a I_{d_Y},
$
equivalently,
$
\underline a|\xi|^2
\le
\xi^\dagger
a_Y(s,y,\mu)
\xi
\le
\overline a|\xi|^2,
$
for every
$
(s,y,\mu,\xi)
\in
[0,t]
\times
\mathbb R^{d_Y}
\times
\mathcal P_2(\mathbb R^d)
\times
\mathbb R^{d_Y}.
$
Moreover, there exists a jointly measurable mapping
$
\Sigma_Y^{+}:
[0,t]
\times
\mathbb R^{d_Y}
\times
\mathcal P_2(\mathbb R^d)
\rightarrow
\mathbb R^{d_Y\times d_Y}
$
such that
$
\Sigma_Y\Sigma_Y^{+}
=
I_{d_Y},
$
and\\  
$
\sup_{(s,y,\mu)}
\|
\Sigma_Y^{+}(s,y,\mu)
\|
<
\infty.
$
Hence,
$
\mathcal F^Y
$
is generated by a uniformly nondegenerate observation process, so every component of the latent preference state remains continuously identifiable from observed signals, excluding degenerate information structures in which distributional feedback becomes observationally indistinguishable.

\medskip
\noindent\textbf{Condition (GV5): Conditional-law compactness.}
For
$
R>0,
$
define
$$
\mathfrak M_R
:=
\left\{
\nu\in
C_{\mathbb F^Y}
([0,t];\mathcal P_2(\mathbb R^d))
:
\mathbb E
\!\left[
\sup_{0\le s\le t}
m_2(\nu_s)
\right]
\le R
\right\},
$$
where
$
C_{\mathbb F^Y}
([0,t];\mathcal P_2(\mathbb R^d))
$
denotes the class of
$\mathbb F^Y$-adapted continuous
$\mathcal P_2(\mathbb R^d)$-valued processes.
For sufficiently large
$
R,
$
the frozen-law solution operator
$
\Gamma:
\mathfrak M_R
\rightarrow
\mathfrak M_R
$
is well defined,
$
\Gamma(\mathfrak M_R)
\Subset
C([0,t];\mathcal P_2(\mathbb R^d)),
$
and
$
\sup_{\nu\in\Gamma(\mathfrak M_R)}
\mathbb E
\!\left[
\sup_{0\le s\le t}
m_2(\nu_s)
\right]
<
\infty.
$
Equivalently,
$
\Gamma(\mathfrak M_R)
$
is relatively compact and uniformly integrable with respect to
$
m_2.
$
Hence, the endogenous conditional preference distributions remain in a compact admissible class, ensuring that equilibrium belief dynamics generated by observed behavior do not escape the economically relevant state space.

\medskip
\noindent\textbf{Condition (GV6): Stability of conditional laws.}
Let
$
\nu^n,\nu
\in
\mathfrak M_R
$
satisfy
$
\sup_{0\le s\le t}
\mathcal W_2(\nu_s^n,\\ \nu_s)
\rightarrow0
$
in probability.
For each
$
\nu^n
$
and
$
\nu,
$
let
$
(X^n,Y^n)
$
and
$
(X,Y)
$
be weak solutions of the corresponding frozen-law martingale problems, constructed on a common probability space whenever an admissible coupling exists.
Then
\begin{equation}
	\sup_{0\leq s\leq t}
	\mathcal{W}_2
	\left(
	\mathcal{L}(X_s^n\mid\mathcal{F}_s^{Y^n}),
	\mathcal{L}(X_s\mid\mathcal{F}_s^Y)
	\right)
	\longrightarrow0
	\label{eq:gv-filter-stability}
\end{equation}
in probability.
Equivalently,
$
\nu^n\to\nu
$
in
$
C([0,t];\mathcal P_2(\mathbb R^d))
$
implies convergence of the associated conditional-law filters under the frozen-law solution operator.

\begin{rmk}
	Condition \textnormal{(GV6)} replaces the constant-volatility stability used in Proposition~2 by continuity of the conditional-law map under
	$
	\Sigma_Y=\Sigma_Y(s,y,\mu).
	$
	Let
	$
	\Gamma:\mathfrak M_R
	\rightarrow
	C([0,t];\mathcal P_2(\mathbb R^d))
	$
	denote the frozen-law solution operator. Then
	$
	\nu^n\rightarrow\nu
	$
	in
	$
	\mathfrak M_R
	$
	implies
	$
	\Gamma(\nu^n)\rightarrow\Gamma(\nu),
	$
	where convergence is characterized by \eqref{eq:gv-filter-stability}. Thus the signal process
	$
	Y,
	$
	the observation filtration
	$
	\mathbb F^Y,
	$
	and the conditional laws
	$
	(\mathcal L(X_s\mid\mathcal F_s^Y))_{0\le s\le t}
	$
	vary jointly under perturbations of the candidate measure flow. In particular,
	$
	\Sigma_Y=\Sigma_Y(s,y,\mu)
	$
	couples the observation quadratic variation
	$
	a_Y=\Sigma_Y\Sigma_Y^\dagger
	$
	with the endogenous filter, so continuity of the conditional-law operator becomes the additional ingredient required for the fixed-point construction.
\end{rmk}

\subsection{Strengthened conditions for uniqueness}

\medskip
\noindent\textbf{Condition (GU1): Frozen martingale-problem uniqueness.}
For every
$
\nu\in\mathfrak M_R,
$
let
$
\mathrm{MP}(\nu)
$
denote the martingale problem obtained from system~(3)
by replacing
$
\mu
$
with
$
\nu.
$
Then
$
\mathrm{MP}(\nu)
$
is well posed; equivalently, for every admissible initial law
$
\lambda_0,
$
there exists at most one weak solution in law
$
(X^\nu,Y^\nu)
$
satisfying
$
(X^\nu_0,Y^\nu_0)\sim\lambda_0.
$

\medskip
\noindent\textbf{Condition (GU2): Quantitative filter stability.}
There exists $C>0$ such that, for any two admissible flows
$\nu$ and $\eta$,
\begin{equation}
	\mathbb{E}
	\left[
	\sup_{0\leq r\leq s}
	\mathcal{W}_2^2
	\left(
	\Gamma(\nu)_r,\Gamma(\eta)_r
	\right)
	\right]
	\leq
	C
	\int_0^s
	\mathbb{E}
	\left[
	\sup_{0\leq q\leq v}
	\mathcal{W}_2^2(\nu_q,\eta_q)
	\right]dv,
	\label{eq:gv-contraction}
\end{equation}
where
$
\Gamma(\nu)_s
:=
\mathcal{L}
(X_s^\nu\mid\mathcal{F}_s^{Y^\nu}).
$

\medskip
\noindent\textbf{Condition (GU3): Canonical compatibility.}
Let
$
\Omega^Y
:=
C([0,t];\mathbb R^{d_Y}),
$
$
Y(\omega)=\omega,
$
and
$
\mathbb F^Y
=
(\mathcal F_s^Y)_{0\le s\le t}
$
be the completed canonical filtration generated by
$
Y.
$
There exists a jointly Borel measurable kernel
$
\Pi:
[0,t]
\times
\Omega^Y
\rightarrow
\mathcal P_2(\mathbb R^d)
$
such that
$
\Pi(s,Y)
=
\mathcal L(X_s\mid\mathcal F_s^Y),
\qquad
0\le s\le t,
$
$\mathbb P$-a.s.
Moreover, if
$
(X^1,Y^1,\mu^1)
$
and
$
(X^2,Y^2,\mu^2)
$
satisfy
$
\mathcal L(X^1,Y^1)
=
\mathcal L(X^2,Y^2),
$
then
$
\mathcal L(Y^1)
=
\mathcal L(Y^2),
$
$
\Pi(s,Y^1)
=
\mu_s^1,
$
$
\Pi(s,Y^2)
=
\mu_s^2,
$
and
$
\mu^1
=
\mu^2
$
up to indistinguishability.

\subsection*{Proof of Proposition 2 in the main paper}

\begin{proof}
	Define
	$
	\widehat a_Y
	:=
	\widehat\Sigma_Y\widehat\Sigma_Y^\dagger,
	$
	and
	$
	\bar h(s,x,y,\nu)
	:=
	\widehat\Sigma_Y^{-1}h(s,x,y,\nu).
	$
	For an admissible
	$
	\mathbb F^Y
	$
	-adapted flow
	$
	\nu=(\nu_s)_{0\le s\le t},
	$
	consider the frozen system obtained by replacing
	$
	\mu_s
	$
	with
	$
	\nu_s
	$
	in system~(3). Define
	$
	\bar Y_s
	:=
	\widehat\Sigma_Y^{-1}Y_s.
	$
	Then
	$
	d\bar Y_s
	=
	\bar h(s,X_s,Y_s,\nu_s)\,ds+dB_s.
	$
	Let
	$
	\mathbb Q^\nu
	$
	be the reference measure under which
	$
	\bar Y
	$
	is a
	$
	d_Y
	$
	-dimensional Brownian motion, and set
	\[
	L_s^\nu
	:=
	\exp\!\left\{
	\int_0^s
	\bar h(v,X_v^\nu,Y_v^\nu,\nu_v)^\dagger
	\,d\bar Y_v
	-
	\frac12
	\int_0^s
	|\bar h(v,X_v^\nu,Y_v^\nu,\nu_v)|^2\,dv
	\right\}.
	\]
	The integrability conditions in Supplementary Appendix~SA.1, together with localization and uniform integrability of
	$
	(L_{\tau_n}^\nu)_{n\ge1},
	$
	imply
	$
	\mathbb E^{\mathbb Q^\nu}[L_t^\nu]=1.
	$
	Hence,
	$
	d\mathbb P^\nu
	=
	L_t^\nu d\mathbb Q^\nu
	$
	defines a probability measure satisfying
	$
	dB_s^\nu
	=
	d\bar Y_s
	-
	\bar h(s,X_s^\nu,Y_s^\nu,\nu_s)\,ds,
	$
	where
	$
	B^\nu
	$
	is a
	$
	\mathbb P^\nu
	$
	-Brownian motion. Under
	$
	\mathbb Q^\nu,
	$
	the frozen state equation is
	\[
	dX_s^\nu
	=
	\bigl[
	b-\sigma_0\bar h
	\bigr](s,X_s^\nu,Y_s^\nu,\nu_s)\,ds
	+
	\sigma(s,X_s^\nu,Y_s^\nu,\nu_s)\,dW_s
	+
	\sigma_0(s,X_s^\nu,Y_s^\nu,\nu_s)\,d\bar Y_s,
	\]
	with
	$
	Y_s^\nu
	=
	Y_0+\widehat\Sigma_Y\bar Y_s.
	$
	Assumption~1 and the conditional-law regularity conditions imply that
	$
	b-\sigma_0\bar h,
	$
	$
	\sigma,
	$
	and
	$
	\sigma_0
	$
	are measurable, Lipschitz in
	$
	(x,y,\nu)
	$
	with respect to
	$
	|\cdot|+|\cdot|+\mathcal W_2,
	$
	and of admissible growth after localization. Therefore the frozen equation admits a weak solution
	$
	(X^\nu,Y^\nu)
	$
	and, for a constant
	$
	C_t<\infty
	$
	independent of
	$
	\nu\in\mathfrak M_R,
	$
	$
	\mathbb E^{\mathbb P^\nu}
	\!\left[
	\sup_{0\le s\le t}|X_s^\nu|^2
	+
	\sup_{0\le s\le t}|Y_s^\nu|^2
	\right]
	\le
	C_t
	\left(
	1+
	\int_0^t
	\mathbb E^{\mathbb P^\nu}[m_2(\nu_s)]\,ds
	\right).
	$
	Define the conditional-law operator
	$
	\Gamma
	$
	by
	$
	\Gamma(\nu)_s
	:=
	\mathcal L_{\mathbb P^\nu}
	(X_s^\nu\mid\mathcal F_s^{Y^\nu}),
	$
	$
	0\le s\le t.
	$
	For every
	$
	\varphi\in C_b(\mathbb R^d),
	$
	$
	\langle\Gamma(\nu)_s,\varphi\rangle
	=
	\mathbb E^{\mathbb P^\nu}
	[\varphi(X_s^\nu)\mid\mathcal F_s^{Y^\nu}],
	$
	and Bayes' formula yields
	$
	\langle\Gamma(\nu)_s,\varphi\rangle
	=
	\frac{
		\mathbb E^{\mathbb Q^\nu}
		[L_s^\nu\varphi(X_s^\nu)\mid\mathcal F_s^{Y^\nu}]
	}{
		\mathbb E^{\mathbb Q^\nu}
		[L_s^\nu\mid\mathcal F_s^{Y^\nu}]
	},
	$
	on the set where the denominator is positive. Since
	$
	L_s^\nu>0
	$
	and
	$
	\mathbb E^{\mathbb Q^\nu}[L_s^\nu\mid\mathcal F_s^{Y^\nu}]>0
	$
	almost surely, this specifies an
	$
	\mathbb F^{Y^\nu}
	$-adapted
	$
	\mathcal P_2(\mathbb R^d)
	$-valued process. The preceding moment estimate and conditional Jensen inequality give
	$
	m_2(\Gamma(\nu)_s)
	=
	\mathbb E^{\mathbb P^\nu}
	[|X_s^\nu|^2\mid\mathcal F_s^{Y^\nu}]
	$
	and therefore
	$
	\mathbb E^{\mathbb P^\nu}
	\!\left[
	\sup_{0\le s\le t}
	m_2(\Gamma(\nu)_s)
	\right]\\
	\le
	C_t
	\left(
	1+
	\mathbb E^{\mathbb P^\nu}
	\!\left[
	\sup_{0\le s\le t}m_2(\nu_s)
	\right]
	\right).
	$
	For sufficiently large
	$
	R,
	$
	the conditional-law compactness condition therefore gives
	$
	\Gamma(\mathfrak M_R)\subseteq\mathfrak M_R.
	$
	Moreover,
	$
	\Gamma(\mathfrak M_R)
	$
	is tight in
	$
	C([0,t];\mathcal P_2(\mathbb R^d))
	$
	and uniformly integrable with respect to
	$
	m_2.
	$
	Let
	$
	\nu^n,\nu\in\mathfrak M_R
	$
	satisfy
	$
	\sup_{0\le s\le t}\mathcal W_2(\nu_s^n,\nu_s)\to0
	$
	in probability. Stability of the frozen equations, the likelihood processes
	$
	L^{\nu^n}\to L^\nu,
	$
	and the conditional laws implies
	$
	\sup_{0\le s\le t}
	\mathcal W_2
	\bigl(
	\Gamma(\nu^n)_s,
	\Gamma(\nu)_s
	\bigr)
	\longrightarrow0
	$
	in probability. Hence
	$
	\Gamma:\mathfrak M_R\to\mathfrak M_R
	$
	is continuous and has relatively compact image. The fixed-point condition in Supplementary Appendix~SA.1 therefore yields
	$
	\mu\in\mathfrak M_R
	$
	such that
	$
	\Gamma(\mu)=\mu.
	$
	
	Let
	$
	(X,Y):=(X^\mu,Y^\mu)
	$
	under
	$
	\mathbb P^\mu.
	$
	By construction,
	$
	\mu_s
	=
	\Gamma(\mu)_s
	=
	\mathcal L_{\mathbb P^\mu}
	(X_s\mid\mathcal F_s^Y),
	$
	for every
	$
	0\le s\le t,
	$
	and
	$
	(X,Y,\mu)
	$
	satisfies system~(3) with
	$
	(X_0,Y_0)\sim\lambda_0.
	$
	Thus
	$
	(X,Y,\mu)
	$
	is a weak DDU solution. For uniqueness, let
	$
	(X^i,Y^i,\mu^i),
	$
	$
	i\in\{1,2\},
	$
	be two weak solutions with initial law
	$
	\lambda_0.
	$
	Then
	$
	\mu^i=\Gamma(\mu^i)
	$
	and the strengthened filter-stability condition gives, with
	$
	\Delta_s
	:=
	\mathbb E
	[
	\sup_{0\le r\le s}
	\mathcal W_2^2(\mu_r^1,\mu_r^2)
	],
	$
	$
	\Delta_s
	=
	\mathbb E
	\!\left[
	\sup_{0\le r\le s}
	\mathcal W_2^2
	\bigl(
	\Gamma(\mu^1)_r,
	\Gamma(\mu^2)_r
	\bigr)
	\right]
	\le
	C
	\int_0^s\Delta_v\,dv.
	$
	Gronwall's lemma yields
	$
	\Delta_s=0
	$
	for every
	$
	s\in[0,t],
	$
	and hence
	$
	\mu^1=\mu^2
	$
	up to indistinguishability. Frozen weak uniqueness then implies
	$
	\mathcal L(X^1,Y^1)
	=
	\mathcal L(X^2,Y^2).
	$
	Canonical compatibility of the conditional-law versions gives
	$
	\mathcal L(X^1,Y^1,\mu^1)
	=
	\mathcal L(X^2,Y^2,\mu^2).
	$
	Therefore the law of
	$
	(X,Y,\mu),
	$
	and in particular the law of
	$
	\mu,
	$
	is unique in the stated solution class.
\end{proof}

\subsection*{Proof of Proposition 4 in the main paper}

\begin{proof}
	Let
	$
	\theta
	:=
	(u,b,\sigma,\sigma_0,h,\Sigma_Y)
	$
	and
	$
	\bar\theta
	:=
	(\bar u,\bar b,\bar\sigma,\bar\sigma_0,\bar h,\bar\Sigma_Y).
	$
	Under the stated restrictions,
	$
	u(s,z,x,\mu)=\bar u(s,z,x)
	$
	and
	$
	(b,\sigma,\sigma_0,h,\Sigma_Y)(s,x,y,\mu)
	=
	(\bar b,\bar\sigma,\bar\sigma_0,\bar h,\bar\Sigma_Y)(s,x,y)
	$
	for every admissible
	$
	(s,z,x,y,\mu).
	$
	Hence the DDU system becomes
	$
	dX_s
	=
	\bar b(s,X_s,Y_s)\,ds
	+
	\bar\sigma(s,X_s,Y_s)\,dW_s
	+
	\bar\sigma_0(s,X_s,Y_s)\,dB_s
	$
	and
	$
	dY_s
	=
	\bar h(s,X_s,Y_s)\,ds
	+
	\bar\Sigma_Y(s,Y_s)\,dB_s,
	$
	while
	$
	\mu_s=\mathcal L(X_s\mid\mathcal F_s^Y)
	$
	is generated by
	$
	(X,Y)
	$
	but does not enter either equation. For each
	$
	A\subseteq Z
	$
	and
	$
	s\in[0,t],
	$
	$
	M_s^{\mathrm{DDU}}(A)
	=
	\arg\max_{z\in A}
	u(s,z,X_s,\mu_s)
	=
	\arg\max_{z\in A}
	\bar u(s,z,X_s)
	=:
	M_s^{\mathrm{DRU}}(A).
	$
	Consequently,
	$
	\rho_s^{\mathrm{DDU}}(z;A)
	=
	\mathbb P
	\!\left(
	z\in M_s^{\mathrm{DDU}}(A)
	\mid
	\mathcal F_s^Y
	\right)
	=
	\mathbb P
	\!\left(
	z\in M_s^{\mathrm{DRU}}(A)
	\mid
	\mathcal F_s^Y
	\right)
	=
	\rho_s^{\mathrm{DRU}}(z;A).
	$
	Likewise, for every admissible initial law
	$
	\lambda
	$
	and
	$
	0\le s\le r\le t,
	$
	the transition law generated by
	$
	\theta
	$
	satisfies
	$
	P_{s,r}^{\theta}
	=
	P_{s,r}^{\bar\theta},
	$
	because the martingale problem for
	$
	(X,Y)
	$
	depends only on
	$
	(\bar b,\bar\sigma,\bar\sigma_0,\bar h,\bar\Sigma_Y).
	$
	Thus
	$
	\mu
	$
	is a posterior process,
	$
	\mu_s=\mathcal L(X_s\mid\mathcal F_s^Y),
	$
	but not a state argument of
	$
	\bar u
	$
	or
	$
	(\bar b,\bar\sigma,\bar\sigma_0,\bar h,\bar\Sigma_Y).
	$
	Therefore
	$
	\theta
	$
	and
	$
	\bar\theta
	$
	generate identical state dynamics and stochastic choice, and the DDU specification reduces to continuous-time DRU.
\end{proof}

\subsection{Well-posedness theorem and auxiliary results}

We now prove the general-volatility result stated as Theorem 3 in the main paper. Set
$
\mathcal{X}
:=
C([0,t];\mathbb{R}^d),$ and $
\mathcal{Y}
:=
C([0,t];\mathbb{R}^{d_Y}),
$ both endowed with the uniform topology. Define
$
\mathcal{M}
:=
C([0,t];\mathcal{P}_2(\mathbb{R}^d))
$
with metric
$
d_{\mathcal{M}}(m,n)
:=
\sup_{0\leq s\leq t}\mathcal{W}_2(m_s,n_s).
$
The canonical state space is
$
\Omega^\ast
:=
\mathcal{X}\times\mathcal{Y}\times\mathcal{M},
$
with canonical coordinates $(X,Y,\mu)$ and canonical filtration
$\mathbb{F}^\ast=(\mathcal{F}_s^\ast)_{0\leq s\leq t}$.

\begin{definition}
	A weak solution of system (3) with initial law
	$\lambda_0\in\mathcal{P}_2(\mathbb{R}^d\times\mathbb{R}^{d_Y})$ consists of a complete filtered probability space
	$(\Omega,\mathcal{F},\mathbb{F},\mathbb{P})$, independent Brownian motions $W$ and $B$ of dimensions $d_W$ and $d_Y$, respectively, continuous $\mathbb{F}$-adapted processes $X$ and $Y$, and a $\mathbb{F}^Y$-adapted continuous
	$\mathcal{P}_2(\mathbb{R}^d)$-valued process $\mu$ such that
	$(X_0,Y_0)\sim\lambda_0$, the two stochastic integral equations in (3) hold, and
	$
	\langle\mu_s,\varphi\rangle
	=
	\mathbb{E}^{\mathbb{P}}
	\left[
	\varphi(X_s)\mid\mathcal{F}_s^Y
	\right],
	\ \mathbb{P}\text{-a.s.}
	$
	for every bounded Borel $\varphi:\mathbb{R}^d\to\mathbb{R}$ and every
	$s\in[0,t]$.
\end{definition}

\begin{theorem} Assume Assumption 1 in the main paper and Conditions \textnormal{(GV1)}-\textnormal{(GV6)} above. Let
	$\lambda_0\in
	\mathcal{P}_{2+\varepsilon}
	(\mathbb{R}^d\times\mathbb{R}^{d_Y})$
	for some $\varepsilon>0$. Then system (3) admits a weak solution
	$(X,Y,\mu)$ satisfying
	$
	\mathbb{E}
	\left[
	\sup_{0\leq s\leq t}|X_s|^2
	+
	\sup_{0\leq s\leq t}|Y_s|^2
	+
	\sup_{0\leq s\leq t}m_2(\mu_s)
	\right]
	<\infty.
	$
	The conditional-law admits an
	$\mathbb{F}^Y$-progressively measurable version with continuous paths in $\mathcal{P}_2(\mathbb{R}^d)$. moreover, if Conditions \textnormal{(GU1)}-\textnormal{(GU3)} hold, then
	weak uniqueness holds. Any two weak solutions with the same initial law induce the same probability law on
	$
	\mathcal{X}\times\mathcal{Y}\times\mathcal{M}.
	$
\end{theorem}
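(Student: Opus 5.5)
The plan is a Schauder-type fixed point on the space of candidate conditional-law flows, followed by a Gronwall argument for uniqueness. In this setting the change of measure to a Brownian reference is not available.

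\emph{Step 1: the frozen system and its moment bounds.} Fix $R$ large and a candidate flow $\nu\in\mathfrak M_R$. Replace $\mu$ by $\nu$ in system (3) to obtain the frozen system. By (GV1)--(GV3), its coefficients are nonanticipative, Lipschitz in $(x,y)$, and of linear growth with the random input $m_2(\nu_s)$. Standard strong existence for Lipschitz SDEs with random progressively measurable coefficients then gives a solution $(X^\nu,Y^\nu)$. Burkholder--Davis--Gundy and Gronwall give
\[
\mathbb E\Bigl[\sup_{s\le t}|X^\nu_s|^2+\sup_{s\le t}|Y^\nu_s|^2\Bigr]\le C_t\Bigl(1+\mathbb E|X_0|^2+\mathbb E|Y_0|^2+\mathbb E\bigl[\sup_{s\le t}m_2(\nu_s)\bigr]\Bigr).
\]
The $(2+\varepsilon)$-moment of $\lambda_0$ is propagated in the same way. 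Via de la Vallée-Poussin, this gives uniform integrability of $|X^\nu_s|^2$.

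\emph{Step 2: the conditional-law operator.} Set $\Gamma(\nu)_s:=\mathcal L(X^\nu_s\mid\mathcal F^{Y^\nu}_s)$. Conditional Jensen gives $m_2(\Gamma(\nu)_s)=\mathbb E[|X^\nu_s|^2\mid\mathcal F^{Y^\nu}_s]$. Doob's maximal inequality, applied to the martingale $\mathbb E[\sup_v|X^\nu_v|^2\mid\mathcal F^{Y^\nu}_s]$, then bounds $\mathbb E[\sup_s m_2(\Gamma(\nu)_s)]$; Doob is where I use the extra $\varepsilon$ of integrability, since the $L^1$ version fails. A continuous $\mathbb F^Y$-progressive version of $\Gamma(\nu)$ is obtained from (GV4): uniform ellipticity and the bounded right inverse $\Sigma_Y^+$ make $\mathbb F^{Y^\nu}$ a Brownian-type filtration. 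By martingale representation, $\langle\Gamma(\nu)_s,\varphi\rangle$ then has a continuous version for a countable convergence-determining family of $\varphi$. Combined with the uniform integrability from Step 1, this yields $\mathcal W_2$-continuity. Invoking (GV5), $\Gamma$ maps $\mathfrak M_R$ into itself with relatively compact image. By (GV6), $\Gamma$ is continuous for the topology of convergence in probability of $\sup_s\mathcal W_2$.

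\emph{Step 3: fixed point and identification.} Apply Schauder--Tychonoff on the closed convex hull of $\Gamma(\mathfrak M_R)$, working with laws on $\Omega^\ast$ to get a locally convex setting. This produces $\mu=\Gamma(\mu)$. Then $(X^\mu,Y^\mu,\mu)$ satisfies the weak-solution definition above, and the moment bound follows from Steps 1--2.

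\emph{Main obstacle.} I expect the main difficulty here: $\mathfrak M_R$ consists of adapted processes whose filtration itself depends on $\nu$, so convexity and the fixed-point topology are delicate. My first attempt would be to pass to laws of $(Y,\nu)$ on the canonical space, treating flows as $\Pi(s,Y)$ via the kernel in (GU3). A fallback is Picard iteration when (GU2) holds.

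\emph{Uniqueness.} Given two weak solutions with the same initial law, each $\mu^i$ satisfies $\mu^i=\Gamma(\mu^i)$, where $\Gamma$ is well defined in law by (GU1). Write $\Delta_s:=\mathbb E[\sup_{r\le s}\mathcal W_2^2(\mu^1_r,\mu^2_r)]$. Then (GU2) gives $\Delta_s\le C\int_0^s\Delta_v\,dv$, so Gronwall yields $\Delta\equiv0$. Frozen uniqueness (GU1) then gives $\mathcal L(X^1,Y^1)=\mathcal L(X^2,Y^2)$. Finally, (GU3) expresses $\mu^i=\Pi(\cdot,Y^i)$ through a common measurable functional, which lifts this to equality of the joint laws on $\mathcal X\times\mathcal Y\times\mathcal M$.
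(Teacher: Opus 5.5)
Your proposal is essentially the paper's proof. The shared steps are: the frozen SDE with BDG/Gronwall moment bounds; the identity $m_2(\Gamma(\nu)_s)=\mathbb E[|X^\nu_s|^2\mid\mathcal F^{Y^\nu}_s]$ combined with Doob's $L^{1+\varepsilon/2}$ maximal inequality; (GV5)--(GV6) for relative compactness and continuity; Schauder--Tychonoff on $\overline{\operatorname{co}}(\Gamma(\mathfrak M_R))$; and uniqueness via (GU2) and Gronwall, followed by (GU1) and the kernel $\Pi$ of (GU3). There are three local differences.

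First, you get the continuous version of $\Gamma(\nu)$ from (GV4) and innovation-based martingale representation, whereas the paper reads it off (GV5).

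Second, you lift $\mu^1=\mu^2$ to equality of joint laws directly, as the paper does for Proposition~\ref{prop:ddu-wellposed}. In the proof of this theorem the paper instead disintegrates over $\mu$ and applies (GU1) fiberwise.

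Third, the paper handles the fixed-point-domain subtlety you flag only through (GV5) and the convex-hull remark. So you do not need your (GU3)- or (GU2)-based workarounds, which would bring uniqueness hypotheses into the existence part.
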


The proof is divided into six steps. We first solve the system with a frozen conditional-law flow, derive estimates uniform over the frozen flow, establish tightness, prove continuity of the conditional-law operator, apply a fixed-point theorem, and finally prove uniqueness.

\subsection{The frozen-law martingale problem}

Fix an admissible
$\mathbb{F}^Y$-adapted process
$\nu=(\nu_s)_{0\leq s\leq t}$ with values in
$\mathcal{P}_2(\mathbb{R}^d)$. The frozen system is
\begin{equation}
	\left\{
	\begin{aligned}
		dX_s^\nu
		&=
		b(s,X_s^\nu,Y_s^\nu,\nu_s)\,ds
		+
		\sigma(s,X_s^\nu,Y_s^\nu,\nu_s)\,dW_s\\
		&\quad
		+
		\sigma_0(s,X_s^\nu,Y_s^\nu,\nu_s)\,dB_s,\\
		dY_s^\nu
		&=
		h(s,X_s^\nu,Y_s^\nu,\nu_s)\,ds
		+
		\Sigma_Y(s,Y_s^\nu,\nu_s)\,dB_s.
	\end{aligned}
	\right.
	\tag{SA.7}
\end{equation}

For $f\in C_c^2(\mathbb{R}^{d+d_Y})$, write
$z=(x,y)$ and define
\[
\mathcal{L}_s^\nu f(z)
:=
\beta(s,z,\nu_s)\cdot Df(z)
+
\frac12
\operatorname{Tr}
\left[
A(s,z,\nu_s)D^2f(z)
\right],
\]
where
$
\beta(s,z,\mu)
:=
\begin{pmatrix}
	b(s,x,y,\mu)\\
	h(s,x,y,\mu)
\end{pmatrix},
$
and
\begin{equation}
	A(s,z,\mu)
	:=
	\begin{pmatrix}
		\sigma\sigma^\dagger+\sigma_0\sigma_0^\dagger
		&
		\sigma_0\Sigma_Y^\dagger\\
		\Sigma_Y\sigma_0^\dagger
		&
		\Sigma_Y\Sigma_Y^\dagger
	\end{pmatrix}(s,x,y,\mu).
	\tag{SA.8}
\end{equation}
The off-diagonal blocks in (SA.8) encode the common innovation $B$. A probability measure $P^\nu$ on
$C([0,t];\mathbb{R}^{d+d_Y})$ solves the frozen martingale problem if
the initial coordinate has law $\lambda_0$ and, for every
$f\in C_c^2(\mathbb{R}^{d+d_Y})$, the expression
$
f(X_s^\nu,Y_s^\nu)
-
f(X_0^\nu,Y_0^\nu)
-
\int_0^s
\mathcal{L}_v^\nu
f(X_v^\nu,Y_v^\nu)\,dv
$
is a $P^\nu$-martingale.

\begin{lem}
	Let
	$
	\nu\in\mathfrak M_R,
	$
	and let
	$
	\mathrm{MP}(\nu;\lambda_0)
	$
	denote the frozen martingale problem associated with system~(SA.7) and initial law
	$
	\lambda_0.
	$
	Under Conditions
	\textnormal{(GV1)}-\textnormal{(GV4)},
	$
	\mathrm{MP}(\nu;\lambda_0)
	$
	admits a weak solution for every admissible
	$
	\lambda_0\in\mathcal P_2(\mathbb R^d\times\mathbb R^{d_Y}).
	$
	If, in addition, Condition
	\textnormal{(GU1)}
	holds, then
	$
	\mathrm{MP}(\nu;\lambda_0)
	$
	is well posed; equivalently, its weak solution is unique in law.
\end{lem}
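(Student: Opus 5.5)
The plan is to prove existence by constructing a strong solution of the frozen system (SA.7) on a fixed probability space carrying $(W,B)$ and the given $\mathbb F^Y$-adapted flow $\nu$. Uniqueness will then follow directly from (GU1).

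The first task is to reduce the frozen coefficients to random, progressively measurable coefficients in $(x,y)$ alone. Since $\nu\in\mathfrak M_R$ is adapted and continuous in $\mathcal P_2(\mathbb R^d)$, define
$$
\tilde b(s,\omega,x,y):=b(s,x,y,\nu_s(\omega)),
$$
and similarly $\tilde\sigma,\tilde\sigma_0,\tilde h,\tilde\Sigma_Y$. Condition (GV1) makes these coefficients progressively measurable, and it also covers the nonanticipative path-dependent case. Condition (GV2), read with $\mu=\nu$ fixed, makes them globally Lipschitz in $(x,y)$ with constant $L$, uniformly in $(s,\omega)$. Condition (GV3) gives the growth bound
$$
|\tilde b|^2+\cdots\le C\bigl(1+|x|^2+|y|^2+m_2(\nu_s)\bigr).
$$

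The one subtle point is that $\nu$ is adapted to $\mathbb F^Y$, and $Y$ is the unknown. To avoid circularity, I would take $\nu$ as a functional $\nu_s=\bar\nu(s,Y_{\cdot\wedge s})$ of the observation path on canonical space, which is the natural reading of (GV1). The coefficients then become Lipschitz path functionals, provided $\bar\nu$ is Lipschitz in the path. If it is not, I would instead regard $\nu$ as an exogenous process adapted to the filtration of $(W,B)$; this is the form in which the fixed-point argument uses the frozen system. I expect this measurability and circularity issue to be the main obstacle, and I would try the exogenous-process interpretation first, since it matches the construction of $\Gamma$.

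With the reduction in place, I would apply the standard Itô existence theorem for SDEs with random Lipschitz coefficients, using a Picard iteration in the space $\mathbb E[\sup_{s}|\cdot|^2]$. The initial condition $(X_0,Y_0)\sim\lambda_0$ is taken independent of $(W,B)$. Burkholder–Davis–Gundy plus Gronwall's inequality gives a contraction, together with the estimate
$$
\mathbb E\Bigl[\sup_s|X_s^\nu|^2+\sup_s|Y_s^\nu|^2\Bigr]\le C_t\bigl(1+\mathbb E|(X_0,Y_0)|^2+\mathbb E\sup_s m_2(\nu_s)\bigr)<\infty,
$$
where the right-hand side is finite because $\nu\in\mathfrak M_R$.

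Itô's formula applied to $f\in C_c^2$ then shows that
$$
f(X_s^\nu,Y_s^\nu)-f(X_0^\nu,Y_0^\nu)-\int_0^s\mathcal L_v^\nu f\,dv
$$
is a martingale. The generator $\mathcal L_v^\nu$ has the diffusion matrix $A$ of (SA.8), because the covariation of $(\sigma\,dW+\sigma_0\,dB,\ \Sigma_Y\,dB)$ is exactly $A$. The law of $(X^\nu,Y^\nu)$ therefore solves $\mathrm{MP}(\nu;\lambda_0)$.

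Condition (GV4) is not needed for existence, but it guarantees that $\mathcal F^Y$ recovers $B$ via $\Sigma_Y^+$. This lets us check that $\nu$ remains adapted to the observation filtration generated by the solution, so the solution is admissible for the later filtering step.

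For the second claim, (GU1) asserts exactly that $\mathrm{MP}(\nu)$ has at most one solution in law for each admissible $\lambda_0$. Combined with the existence just shown, this gives well-posedness. Equivalently, by Yamada–Watanabe, pathwise uniqueness from the Lipschitz bound already implies uniqueness in law in the exogenous-$\nu$ setting, so (GU1) is only needed in the canonical, path-functional setting.
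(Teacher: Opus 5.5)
Your proof follows the same scheme as the paper's. You freeze $\nu$, get a strong solution by Picard iteration under global Lipschitz bounds, use It\^o's formula to show that $f(Z_s)-f(Z_0)-\int_0^s\mathcal L_v^\nu f(Z_v)\,dv$ is a martingale with diffusion matrix (SA.8), and take uniqueness in law from (GU1). The one real difference is the noise.

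\begin{itemize}
\item The paper rewrites the system as $dZ=\beta^\nu\,ds+\Xi^\nu\,d\widetilde W$ for an \emph{arbitrary} measurable square root $\Xi^\nu(\Xi^\nu)^\dagger=A^\nu$. It then asserts global Lipschitz and linear-growth bounds for $A^\nu$. Those bounds do not follow from (GV2)--(GV3), because $A^\nu$ is quadratic in $(\sigma,\sigma_0,\Sigma_Y)$. Picard iteration also needs $\Xi^\nu$ itself to be Lipschitz, which an arbitrary measurable choice is not.
\item You keep the original driving pair $(W,B)$. This amounts to using the specific factor $\begin{pmatrix}\sigma&\sigma_0\\0&\Sigma_Y\end{pmatrix}$ of $A$, whose entries are Lipschitz directly by (GV2). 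So your Picard step is justified, and you get a strong solution of (SA.7) itself rather than of an auxiliary SDE that matches it only in law.
\end{itemize}

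You are also more explicit than the paper about the circularity in requiring $\nu$ to be $\mathbb F^Y$-adapted. The paper's proof tacitly uses the same exogenous reading you settle on.

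You are right that (GV4) plays no role in existence; the paper states its consequence but never uses it. However, the use you propose for it is wrong: (GV4) does not let $\mathcal F^Y$ recover $B$. Inverting $\Sigma_Y$ gives $dB_s=\Sigma_Y^{+}\bigl(dY_s-h(s,X_s,Y_s,\nu_s)\,ds\bigr)$, and this drift involves the unobserved $X_s$. Already for $dX=dW$, $dY=X\,ds+dB$, the observation filtration does not contain $\sigma(B)$. So a $\nu$ adapted to the filtration of $(W,B)$ need not be $\mathbb F^{Y^\nu}$-adapted. Nothing in the lemma depends on this, but the justification you give is false and should be dropped.

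Relatedly, under the exogenous reading the generator $\mathcal L_v^\nu$ depends on $\omega$ through $\nu_v$. The martingale property you verify therefore holds on the original filtered space, or for the joint law of $(X^\nu,Y^\nu,\nu)$. It does not hold for the law of $(X^\nu,Y^\nu)$ alone on $C([0,t];\mathbb R^{d+d_Y})$. This looseness is inherited from the paper's definition of $\mathrm{MP}(\nu;\lambda_0)$.
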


\begin{proof}
	Fix
	$
	\nu\in\mathfrak M_R
	$
	and an admissible initial law
	$
	\lambda_0.
	$
	Let
	$
	Z^\nu:=(X^\nu,Y^\nu),
	$
	$
	\beta^\nu(s,z)
	:=
	\beta(s,z,\nu_s),
	$
	and
	$
	A^\nu(s,z)
	:=
	A(s,z,\nu_s),
	$
	where
	$
	z=(x,y)\in
	\mathbb R^d\times\mathbb R^{d_Y}.
	$
	By Condition
	\textnormal{(GV1)},
	$
	(\beta^\nu,A^\nu)
	$
	is jointly Borel measurable and progressively measurable. By
	Conditions
	\textnormal{(GV2)}
	and
	\textnormal{(GV3)},
	$
	\beta^\nu
	$
	and
	$
	A^\nu
	$
	satisfy the global Lipschitz and linear-growth bounds
	$
	|\beta^\nu(s,z)-\beta^\nu(s,z')|
	+
	\|A^\nu(s,z)-A^\nu(s,z')\|
	\le
	L|z-z'|,
	$
	and
	$
	|\beta^\nu(s,z)|^2
	+
	\|A^\nu(s,z)\|^2
	\le
	C(1+|z|^2),
	$
	uniformly over
	$
	\nu\in\mathfrak M_R.
	$
	Moreover,
	Condition
	\textnormal{(GV4)}
	implies
	$
	\underline a I_{d_Y}
	\le
	a_Y^\nu(s,y)
	\le
	\overline a I_{d_Y},
	$
	where
	$
	a_Y^\nu
	=
	\Sigma_Y(\cdot,\nu)\\ \Sigma_Y(\cdot,\nu)^\dagger.
	$
	Choose any measurable square root
	$
	\Xi^\nu
	$
	of
	$
	A^\nu,
	$
	$
	\Xi^\nu(\Xi^\nu)^\dagger=A^\nu.
	$
	On a filtered probability space supporting independent Brownian motions
	$
	W
	$
	and
	$
	B,
	$
	consider the frozen system
	$
	dZ_s^\nu
	=
	\beta^\nu(s,Z_s^\nu)\,ds
	+
	\Xi^\nu(s,Z_s^\nu)\,d\widetilde W_s,
	\
	Z_0^\nu\sim\lambda_0,
	$
	where
	$
	\widetilde W
	$
	is the corresponding
	$
	(d_W+d_Y)
	$
	-dimensional Brownian motion.
	The global Lipschitz estimate yields a unique strong solution by the Picard iteration,
	$
	Z^{\nu,(0)},
	Z^{\nu,(1)},
	\ldots,
	$
	satisfying
	$
	\sup_{0\le s\le t}
	\mathbb E\left[
	|Z_s^{\nu,(k+1)}
	-
	Z_s^{\nu,(k)}|^2\right]
	\rightarrow0,
	$
	so
	$
	Z^{\nu,(k)}
	\rightarrow
	Z^\nu
	$
	in
	$
	\mathcal S^2.
	$
	Let
	$
	\mathcal L^\nu
	$
	be the generator
	$
	\mathcal L^\nu
	=
	\beta^\nu\!\cdot\nabla
	+
	\frac12
	\operatorname{Tr}
	(A^\nu\nabla^2).
	$
	Then, for every
	$
	f\in C_c^2(\mathbb R^{d+d_Y}),
	$
	$
	M_t^f
	:=
	f(Z_t^\nu)
	-
	f(Z_0^\nu)
	-
	\int_0^t
	\mathcal L^\nu f(Z_s^\nu)\,ds
	$
	is a martingale. Hence
	$
	Z^\nu
	$
	solves the frozen martingale problem associated with system~(SA.7). Since
	$
	\Xi^\nu(\Xi^\nu)^\dagger=A^\nu,
	$
	the quadratic covariation of
	$
	Z^\nu
	$
	is precisely the covariance matrix
	specified in (SA.8). Therefore, the SDE and martingale formulations are equivalent. Finally, if Condition
	\textnormal{(GU1)}
	holds, then
	$
	\mathrm{MP}(\nu;\lambda_0)
	$
	is well posed. Hence every weak solution of the frozen system induces the same probability law on
	$
	C([0,t];
	\mathbb R^d\times
	\mathbb R^{d_Y}),
	$
	establishing uniqueness in law.
\end{proof}

\subsection{Uniform moment and increment estimates}

\begin{lem}[Uniform moment bound]
	Let $p=2+\varepsilon$. There exists a finite constant
	$C_{p,t}$, depending only on $p$, $t$, the constants in
	\textnormal{(GV2)}-\textnormal{(GV4)}, and the $p$th moment of
	$\lambda_0$, such that every frozen solution satisfies
	\begin{align}
		&\mathbb{E}
		\left[
		\sup_{0\leq s\leq t}|X_s^\nu|^p
		+
		\sup_{0\leq s\leq t}|Y_s^\nu|^p
		\right] \leq
		C_{p,t}
		\left(
		1+
		\mathbb{E}|X_0|^p+
		\mathbb{E}|Y_0|^p+
		\mathbb{E}\sup_{0\leq s\leq t}m_p(\nu_s)
		\right).
		\tag{SA.9}
	\end{align}
	In particular, for $p=2$,
	\[
	\mathbb{E}
	\left[
	\sup_{0\leq s\leq t}|X_s^\nu|^2
	+
	\sup_{0\leq s\leq t}|Y_s^\nu|^2
	\right]
	\leq
	C_t
	\left(
	1+
	\mathbb{E}|X_0|^2+
	\mathbb{E}|Y_0|^2+
	\mathbb{E}\sup_{0\leq s\leq t}m_2(\nu_s)
	\right).
	\tag{SA.10}
	\]
\end{lem}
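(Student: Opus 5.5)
The bound (SA.9) will come from a standard moment estimate for SDEs with linear-growth coefficients, applied to the frozen system (SA.7) for a fixed flow $\nu\in\mathfrak M_R$, with constants kept independent of $\nu$. The plan has four steps: localize, bound each integral term by Burkholder--Davis--Gundy (BDG) and Hölder, close the estimate with Gronwall, and remove the localization.

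\emph{Localization.} Let $Z^\nu=(X^\nu,Y^\nu)$ and define the stopping times $\tau_n:=\inf\{s:|Z^\nu_s|\ge n\}\wedge t$. Put $\phi_n(s):=\mathbb E[\sup_{r\le s\wedge\tau_n}|Z^\nu_r|^p]$, which is finite by construction. For $r\le s\wedge\tau_n$, write $Z^\nu_r$ as the initial value plus a drift integral plus two stochastic integrals, one against $W$ and one against $B$, and use $|a_1+\dots+a_4|^p\le 4^{p-1}\sum_i|a_i|^p$.

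\emph{Estimating the terms.} For the drift term, Hölder's inequality in time gives
\[
\mathbb E\sup_{r\le s\wedge\tau_n}\Bigl|\int_0^r\beta\,dv\Bigr|^p\le t^{p-1}\int_0^s\mathbb E\bigl[|\beta(v,Z_{v\wedge\tau_n},\nu_v)|^p\bigr]dv .
\]
For the stochastic integrals, BDG gives a bound by $C_p\,\mathbb E(\int_0^{s\wedge\tau_n}\|\Xi\|^2dv)^{p/2}$, where $\Xi\in\{\sigma,\sigma_0,\Sigma_Y\}$. Since $p/2>1$, Hölder in time bounds this by $C_p t^{p/2-1}\int_0^s\mathbb E\|\Xi\|^p dv$.

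Raising the linear growth condition (GV3) to the power $p/2$ gives
\[
|\beta|^p+\|\Xi\|^p\le C(1+|x|^p+|y|^p+m_2(\nu_v)^{p/2}).
\]
Jensen's inequality gives $m_2(\nu)^{p/2}\le m_p(\nu)$, and hence $\mathbb E\, m_2(\nu_v)^{p/2}\le\mathbb E\sup_s m_p(\nu_s)$. The right-hand side then becomes
\[
C\Bigl(1+\mathbb E|Z_0|^p+\mathbb E\sup_s m_p(\nu_s)\Bigr)+C\int_0^s\phi_n(v)\,dv .
\]
The constants depend only on $p$, $t$, $C$ and $L$; the ellipticity bound (GV4) only serves to keep $\|\Sigma_Y\|$ bounded consistently with the growth assumption.

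\emph{Gronwall and passage to the limit.} Gronwall's lemma gives $\phi_n(t)\le C_{p,t}(1+\mathbb E|X_0|^p+\mathbb E|Y_0|^p+\mathbb E\sup m_p(\nu_s))$, uniformly in $n$. Since solutions are continuous, $\tau_n\uparrow t$, and monotone convergence yields (SA.9). The case $p=2$, giving (SA.10), is the same computation: for $p=2$, BDG directly produces $\mathbb E\int\|\Xi\|^2dv$, so no Hölder step in time is needed, and $m_2$ appears directly.

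\emph{Main obstacle.} The delicate step is the measure term. The statement is written with $m_p(\nu_s)$, but (GV3) controls the coefficients only through $m_2(\nu)$. I therefore need the Jensen comparison $m_2^{p/2}\le m_p$, and the right-hand side of (SA.9) must be read as possibly infinite unless $\nu$ has $p$-th moments. This moment is finite for flows arising as conditional laws of solutions started from $\lambda_0\in\mathcal P_{2+\varepsilon}$. A secondary point is that $\nu$ is random and only $\mathbb F^Y$-adapted. This is harmless: the growth bound is pathwise, and the expectations can be taken after the supremum over time.
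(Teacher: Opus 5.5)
Your proposal is correct and follows the same route as the paper's proof. Both decompose $Z^\nu=(X^\nu,Y^\nu)$, bound the drift with H\"older, bound the stochastic integrals with BDG plus H\"older in time, apply the linear growth condition (GV3), and close with Gronwall. Two steps the paper leaves implicit are made explicit in your version: the localization via $\tau_n$, which guarantees Gronwall is applied to a finite function, and the Jensen comparison $m_2(\nu)^{p/2}\le m_p(\nu)$, which is how $m_p$ enters the right-hand side; so your write-up is, if anything, more careful.
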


\begin{proof}
	Set $Z_s^\nu=(X_s^\nu,Y_s^\nu)$. By (GV3),
	\[
	|\beta(s,Z_s^\nu,\nu_s)|^2
	+
	\|A^{1/2}(s,Z_s^\nu,\nu_s)\|^2
	\leq
	C\bigl(
	1+|Z_s^\nu|^2+m_2(\nu_s)
	\bigr).
	\]
	For $p\geq2$, the Burkholder-Davis-Gundy (BDG) inequality, H\"older's inequality, and the elementary bound
	$(a_1+\cdots+a_k)^p\leq k^{p-1}\sum_i a_i^p$ imply
	\begin{align*}
		\mathbb{E}
		\left[\sup_{0\leq r\leq s}|Z_r^\nu|^p\right]
		&\leq
		C_p\mathbb{E}|Z_0|^p
		+
		C_p\mathbb{E}
		\left(
		\int_0^s
		|\beta(v,Z_v^\nu,\nu_v)|\,dv
		\right)^p\\
		&\quad+
		C_p\mathbb{E}
		\left(
		\int_0^s
		\|A^{1/2}(v,Z_v^\nu,\nu_v)\|^2\,dv
		\right)^{p/2}\\
		&\leq
		C_{p,t}
		\left(
		1+\mathbb{E}|Z_0|^p
		+
		\int_0^s
		\mathbb{E}
		\sup_{0\leq q\leq v}|Z_q^\nu|^p\,dv
		+
		\mathbb{E}\sup_{0\leq v\leq s}m_p(\nu_v)
		\right).
	\end{align*}
	Gronwall's inequality gives (SA.9). Equation (SA.10) follows by taking
	$p=2$.
\end{proof}

\begin{lem}
	For every $0\leq r\leq s\leq t$,
	\begin{align}
		\mathbb{E}
		\left[
		|X_s^\nu-X_r^\nu|^2
		+
		|Y_s^\nu-Y_r^\nu|^2
		\right]
		\leq
		C|s-r|
		\left(
		1+
		\mathbb{E}\sup_{0\leq v\leq t}|Z_v^\nu|^2
		+
		\mathbb{E}\sup_{0\leq v\leq t}m_2(\nu_v)
		\right).
		\tag{SA.11}
	\end{align}
	If the initial law and candidate flows have uniformly bounded
	$(2+\varepsilon)$th moments, then for
	$q:=2+\varepsilon$,
	\[
	\mathbb{E}|Z_s^\nu-Z_r^\nu|^q
	\leq
	C_q|s-r|^{q/2}.
	\tag{SA.12}
	\]
\end{lem}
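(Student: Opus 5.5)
The plan is to work directly from the integral form of the frozen system (SA.7). Write $Z^\nu=(X^\nu,Y^\nu)$, $\beta=(b,h)$ and $\Xi=A^{1/2}$ as in the proof of the frozen martingale-problem lemma. For $0\le r\le s\le t$,
\[
Z_s^\nu-Z_r^\nu=\int_r^s\beta(v,Z_v^\nu,\nu_v)\,dv+\int_r^s\Xi(v,Z_v^\nu,\nu_v)\,d\widetilde W_v .
\]
Using $(a+b)^2\le 2a^2+2b^2$, I will bound the drift and martingale parts separately.

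For the drift term, Cauchy--Schwarz in time gives
\[
\Bigl|\int_r^s\beta\,dv\Bigr|^2\le (s-r)\int_r^s|\beta|^2\,dv .
\]
For the stochastic integral, the It\^o isometry gives
\[
\mathbb E\Bigl|\int_r^s\Xi\,d\widetilde W\Bigr|^2=\mathbb E\int_r^s\|\Xi\|^2\,dv .
\]
Linear growth (GV3) then gives $|\beta|^2+\|\Xi\|^2\le C(1+|Z_v^\nu|^2+m_2(\nu_v))$. I bound the integrand by $\sup_v|Z^\nu_v|^2$ and $\sup_v m_2(\nu_v)$, so each integral is at most $(s-r)\,C(1+\mathbb E\sup|Z^\nu|^2+\mathbb E\sup m_2(\nu))$. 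The factor $(s-r)^2$ coming from the drift is at most $t(s-r)$, which yields (SA.11). The right-hand side is finite uniformly over $\nu\in\mathfrak M_R$ by (SA.10).

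For (SA.12) with $q=2+\varepsilon$, I use H\"older in time on the drift:
\[
\Bigl|\int_r^s\beta\Bigr|^q\le (s-r)^{q-1}\int_r^s|\beta|^q .
\]
For the martingale part I apply BDG and then H\"older:
\[
\mathbb E\Bigl|\int_r^s\Xi\,d\widetilde W\Bigr|^q\le C_q\,\mathbb E\Bigl(\int_r^s\|\Xi\|^2\Bigr)^{q/2}\le C_q(s-r)^{q/2-1}\int_r^s\mathbb E\|\Xi\|^q .
\]
Growth gives $\|\Xi\|^q+|\beta|^q\le C(1+|Z_v^\nu|^q+m_2(\nu_v)^{q/2})$, and $m_2(\nu_v)^{q/2}\le m_q(\nu_v)$ by Jensen. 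Its expectation is bounded uniformly in $v$ and $\nu$ by (SA.9) together with the assumed uniform $q$th moments of $\lambda_0$ and of the candidate flows. This gives the bound $C_q\bigl((s-r)^q+(s-r)^{q/2}\bigr)\le C_q(s-r)^{q/2}$, using $s-r\le t$.

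The only delicate point is the growth of $\|\Xi\|=\|A^{1/2}\|$. Since $\|A^{1/2}\|^2=\operatorname{Tr}A=\|\sigma\|^2+\|\sigma_0\|^2+\|\Sigma_Y\|^2$, (GV3) applies directly. The same identity shows that the bound does not depend on which measurable square root of $A$ is chosen.
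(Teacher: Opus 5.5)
Your proposal is correct and follows essentially the same route as the paper: Cauchy--Schwarz/Jensen in time on the drift, the It\^o isometry (resp.\ BDG plus H\"older in time) on the stochastic integral, then the linear-growth condition (GV3) and the uniform moment bounds (SA.9)/(SA.10). Your version is slightly more careful than the paper's in three places: you absorb the drift term using $s-r\le t$ (the paper's claim $|s-r|^q\le|s-r|^{q/2}$ tacitly requires $|s-r|\le 1$), and you make explicit both $m_2(\nu_v)^{q/2}\le m_q(\nu_v)$ and the identity $\|\Xi\|^2=\operatorname{Tr}A$, which makes the choice of square root irrelevant.
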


\begin{proof}
	Fix
	$
	0\le r\le s\le t.
	$
	From system~(SA.7),
	\[
	\begin{aligned}
		X_s^\nu-X_r^\nu
		&=
		\int_r^s
		b(v,Z_v^\nu,\nu_v)\,dv
		+
		\int_r^s
		\sigma(v,Z_v^\nu,\nu_v)\,dW_v
		+
		\int_r^s
		\sigma_0(v,Z_v^\nu,\nu_v)\,dB_v,\\
		Y_s^\nu-Y_r^\nu
		&=
		\int_r^s
		h(v,Z_v^\nu,\nu_v)\,dv
		+
		\int_r^s
		\Sigma_Y(v,Y_v^\nu,\nu_v)\,dB_v.
	\end{aligned}
	\]
	Invoking
	$
	|a+b+c|^2
	\le
	3(|a|^2+|b|^2+|c|^2),
	$
	Jensen's inequality for the drift integrals, and the It\^o isometry for the stochastic integrals,
	\[
	\begin{aligned}
		\mathbb E|X_s^\nu-X_r^\nu|^2
		\le\;&
		3|s-r|
		\int_r^s
		\mathbb E
		|b(v,Z_v^\nu,\nu_v)|^2\,dv
		\\
		&
		+
		3
		\int_r^s
		\mathbb E
		\|\sigma(v,Z_v^\nu,\nu_v)\|^2\,dv
		+
		3
		\int_r^s
		\mathbb E
		\|\sigma_0(v,Z_v^\nu,\nu_v)\|^2\,dv,
		\\
		\mathbb E|Y_s^\nu-Y_r^\nu|^2
		\le\;&
		2|s-r|
		\int_r^s
		\mathbb E
		|h(v,Z_v^\nu,\nu_v)|^2\,dv
		\\
		&
		+
		2
		\int_r^s
		\mathbb E
		\|\Sigma_Y(v,Y_v^\nu,\nu_v)\|^2\,dv .
	\end{aligned}
	\]
	Summing the preceding inequalities and applying Condition
	\textnormal{(GV3)},
	\[
	\begin{aligned}
		&
		\mathbb E
		\Bigl[
		|X_s^\nu-X_r^\nu|^2
		+
		|Y_s^\nu-Y_r^\nu|^2
		\Bigr]
		\\
		&\qquad
		\le
		C
		\int_r^s
		\mathbb E
		\Bigl[
		1
		+
		|Z_v^\nu|^2
		+
		m_2(\nu_v)
		\Bigr]\,dv
		\\
		&\qquad
		\le
		C|s-r|
		\left(
		1
		+
		\mathbb E
		\sup_{0\le v\le t}
		|Z_v^\nu|^2
		+
		\mathbb E
		\sup_{0\le v\le t}
		m_2(\nu_v)
		\right),
	\end{aligned}
	\]
	which is precisely (SA.11). Suppose,
	$
	\lambda_0
	$
	and
	$
	\nu
	$
	have uniformly bounded
	$
	(2+\varepsilon)
	$
	th moments, and define
	$
	q:=2+\varepsilon.
	$
	Applying the Burkholder--Davis--Gundy inequality to the stochastic integrals yields
	\[
	\begin{aligned}
		\mathbb E
		|Z_s^\nu-Z_r^\nu|^q
		\le\;&
		C_q
		\mathbb E
		\left|
		\int_r^s
		\beta(v,Z_v^\nu,\nu_v)\,dv
		\right|^q
		+
		C_q
		\mathbb E
		\left(
		\int_r^s
		\|\Xi(v,Z_v^\nu,\nu_v)\|^2\,dv
		\right)^{q/2},
	\end{aligned}
	\]
	where
	$
	\Xi\Xi^\dagger=A
	$
	is any measurable square root of the covariance matrix. Using Jensen's inequality for the drift term,
	the linear-growth estimate in Condition
	\textnormal{(GV3)},
	and the uniform
	$
	L^q
	$
	bound established in (SA.9),
	\[
	\begin{aligned}
		\mathbb E
		|Z_s^\nu-Z_r^\nu|^q
		&\le
		C_q
		|s-r|^{q-1}
		\int_r^s
		\mathbb E
		\left(
		1
		+
		|Z_v^\nu|^q
		+
		m_q(\nu_v)
		\right)\,dv
		\\
		&
		\quad
		+
		C_q
		|s-r|^{q/2-1}
		\int_r^s
		\mathbb E
		\left(
		1
		+
		|Z_v^\nu|^q
		+
		m_q(\nu_v)
		\right)\,dv
		\le
		C_q
		|s-r|^{q/2},
	\end{aligned}
	\]
	where
	$
	m_q(\nu)
	:=
	\int_{\mathbb R^d}
	|x|^q\,\nu(dx).
	$
	Since
	$
	q>2,
	$
	$
	|s-r|^q
	\le
	|s-r|^{q/2},
	$
	so the drift contribution is absorbed into the diffusion estimate. This completes the proof.
\end{proof}

\subsection{The conditional-law operator}

For
$
\nu\in\mathfrak M_R,
$
define
\[
\Gamma(\nu)_s
:=
\mathcal L
(X_s^\nu\mid\mathcal F_s^{Y^\nu}),
\qquad
0\le s\le t.
\tag{SA.13}
\]
Then
$
\Gamma:
\mathfrak M_R
\rightarrow
C_{\mathbb F^{Y}}
([0,t];\mathcal P_2(\mathbb R^d)).
$
Since
$
\mathbb R^d
$
and
$
\mathcal Y
$
are Polish,
$
\Gamma(\nu)
$
admits a jointly Borel measurable version. Under
Condition~\textnormal{(GU3)},
there exists a measurable kernel
$
\Pi:
[0,t]\times\mathcal Y
\rightarrow
\mathcal P_2(\mathbb R^d)
$
such that
$
\Gamma(\nu)_s
=
\Pi(s,Y^\nu_{\cdot\wedge s}),
$
$\mathbb P$-a.s., i.e.,
$
\Gamma(\nu)
$
is canonically identified with the conditional-law kernel generated by the stopped observation path
$
Y^\nu_{\cdot\wedge s}.
$

\begin{lem}
	For every frozen solution,
	\[
	m_2(\Gamma(\nu)_s)
	=
	\mathbb{E}
	\left[
	|X_s^\nu|^2\mid\mathcal{F}_s^{Y^\nu}
	\right]
	\quad\text{a.s.}
	\tag{SA.14}
	\]
	Consequently,
	\begin{align}
		\mathbb E
		\left[\sup_{0\le s\le t}
		m_2(\Gamma(\nu)_s)\right]
		\le
		C_{\varepsilon}
		\left(
		\mathbb E
		\left[\sup_{0\le s\le t}
		|X_s^\nu|^{2+\varepsilon}\right]
		\right)^{\frac{2}{2+\varepsilon}}.
		\tag{SA.15}
	\end{align}
\end{lem}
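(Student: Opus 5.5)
The identity (SA.14) follows from Lemma~\ref{lem:conditional-sufficiency}, applied to the frozen solution with $\varphi(x)=|x|^2$. The Lemma asks for $\mathbb E|X_s^\nu|^2<\infty$, which (SA.10) supplies. So I will first note that $\Gamma(\nu)_s=\mathcal L(X_s^\nu\mid\mathcal F_s^{Y^\nu})$ is a regular conditional distribution. It is $\mathcal P_2$-valued a.s., since its second moment equals $\mathbb E[|X_s^\nu|^2\mid\mathcal F_s^{Y^\nu}]<\infty$ a.s.

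To be careful about integrability, I will apply monotone convergence to the truncations $\varphi_n=|x|^2\wedge n$. These are bounded Borel functions, so the defining property of the conditional law gives the identity for each $n$. Letting $n\to\infty$, conditional monotone convergence and monotone convergence under the kernel give $m_2(\Gamma(\nu)_s)=\mathbb E[|X_s^\nu|^2\mid\mathcal F_s^{Y^\nu}]$ a.s. for each fixed $s$.

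For (SA.15) I will bound the conditional expectation pointwise by a martingale. Set $\xi:=\sup_{0\le v\le t}|X_v^\nu|^2$. Then for each $s$,
\[
m_2(\Gamma(\nu)_s)\le N_s:=\mathbb E[\xi\mid\mathcal F_s^{Y^\nu}]\quad\text{a.s.}
\]
The process $(N_s)$ is a martingale with respect to the complete filtration $\mathbb F^{Y^\nu}$ and admits a càdlàg version. By (SA.9) with $p=2+\varepsilon$, $\xi\in L^{q}$ with $q:=(2+\varepsilon)/2>1$. Doob's $L^q$ maximal inequality then gives
\[
\mathbb E\Bigl[\sup_s N_s\Bigr]\le\Bigl\|\sup_s N_s\Bigr\|_{q}\le \frac{q}{q-1}\|\xi\|_{q}
=\frac{2+\varepsilon}{\varepsilon}\Bigl(\mathbb E\sup_s|X_s^\nu|^{2+\varepsilon}\Bigr)^{2/(2+\varepsilon)}.
\]
This is (SA.15) with $C_\varepsilon=(2+\varepsilon)/\varepsilon$.

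The main obstacle is passing from "for each $s$, a.s." to a bound on $\sup_s m_2(\Gamma(\nu)_s)$, since the null sets depend on $s$. I will handle this with the continuous-path version of $\Gamma(\nu)$ in $\mathcal P_2(\mathbb R^d)$, which the setting of (SA.13) and Theorem~\ref{thm:ddu-general-volatility} provides. Along that version $s\mapsto m_2(\Gamma(\nu)_s)$ is continuous, because $\mathcal W_2$-convergence implies convergence of second moments. The inequality $m_2(\Gamma(\nu)_s)\le N_s$ holds simultaneously for all rational $s$ outside one null set. Taking right limits along rationals, continuity on the left-hand side and the càdlàg property of $N$ extend it to all $s$. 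The supremum over $[0,t]$ equals the supremum over rationals, so the Doob bound applies.

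If continuity were unavailable, I would use instead the optional projection of $|X^\nu|^2$ onto $\mathbb F^{Y^\nu}$, which is an indistinguishable version dominated by $N$, and reach the same estimate.
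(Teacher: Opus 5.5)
Your proposal is correct and follows the paper's proof essentially step for step. For (SA.14) you use the truncation $\varphi_n=|x|^2\wedge n$ with conditional monotone convergence, and for (SA.15) you dominate $m_2(\Gamma(\nu)_s)$ by the martingale $\mathbb E[\sup_v|X_v^\nu|^2\mid\mathcal F_s^{Y^\nu}]$ and apply Doob's $L^{q}$ inequality with $q=1+\varepsilon/2$, which gives $C_\varepsilon=(2+\varepsilon)/\varepsilon$. Your handling of the $s$-dependent null sets goes beyond the paper, which passes over this point. One small correction there: a c\`adl\`ag version of $N$ requires a right-continuous filtration, not merely a complete one. You do not need it, however. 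Continuity of $s\mapsto m_2(\Gamma(\nu)_s)$ already lets you take the supremum over the countable set $(\mathbb Q\cap[0,t])\cup\{t\}$, and Doob's inequality holds there without any path regularity of $N$.
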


\begin{proof}
	Fix
	$
	\nu\in\mathfrak M_R
	$
	and
	$
	s\in[0,t].
	$
	For
	$
	n\in\mathbb N,
	$
	define
	$
	\varphi_n:\mathbb R^d\rightarrow\mathbb R_+
	$
	by
	$
	\varphi_n(x):=|x|^2\wedge n.
	$
	Then
	$
	\varphi_n\in\mathcal B_b(\mathbb R^d),
	$
	$
	0\le\varphi_n\uparrow|\cdot|^2,
	$
	and, by the defining property of the regular conditional distribution
	$
	\Gamma(\nu)_s
	=
	\mathcal L(X_s^\nu\mid\mathcal F_s^{Y^\nu}),
	$
	\[
	\left\langle
	\Gamma(\nu)_s,\varphi_n
	\right\rangle
	=
	\int_{\mathbb R^d}
	\varphi_n(x)\,
	\Gamma(\nu)_s(dx)
	=
	\mathbb E
	\left[
	\varphi_n(X_s^\nu)
	\mid
	\mathcal F_s^{Y^\nu}
	\right]
	\quad
	\mathbb P\text{-a.s.}
	\]
	
	Since
	$
	\varphi_n(x)\uparrow|x|^2
	$
	for every
	$
	x\in\mathbb R^d,
	$
	monotone convergence on the kernel side yields
	$
	\lim_{n\to\infty}
	\left\langle
	\Gamma(\nu)_s,\varphi_n
	\right\rangle
	=
	\int_{\mathbb R^d}
	|x|^2\,
	\Gamma(\nu)_s(dx)
	=
	m_2(\Gamma(\nu)_s).
	$
	Likewise, conditional monotone convergence yields
	$
	\lim_{n\to\infty}
	\mathbb E
	\left[
	\varphi_n(X_s^\nu)
	\mid
	\mathcal F_s^{Y^\nu}
	\right]
	=
	\mathbb E
	\left[
	|X_s^\nu|^2
	\mid
	\mathcal F_s^{Y^\nu}
	\right],
	\
	\mathbb P\text{-a.s.}
	$
	Hence,
	$
	m_2(\Gamma(\nu)_s)
	=
	\mathbb E
	\left[
	|X_s^\nu|^2
	\mid
	\mathcal F_s^{Y^\nu}
	\right],
	\
	\mathbb P\text{-a.s.},
	$
	which implies \textnormal{(SA.14)}. We next determine
	$
	\sup_{0\le s\le t}m_2(\Gamma(\nu)_s).
	$
	Define
	$
	S^\nu
	:=
	\sup_{0\le v\le t}
	|X_v^\nu|^2,
	$ and $
	M_s^\nu
	:=
	\mathbb E
	\left[
	S^\nu
	\mid
	\mathcal F_s^{Y^\nu}
	\right],
	$ for all $
	0\le s\le t.
	$
	Since,
	$
	|X_s^\nu|^2\le S^\nu
	$
	for every
	$
	s\in[0,t],
	$
	the monotonicity of conditional expectation and
	\textnormal{(SA.14)}
	imply
	\[
	0
	\le
	m_2(\Gamma(\nu)_s)
	=
	\mathbb E
	\left[
	|X_s^\nu|^2
	\mid
	\mathcal F_s^{Y^\nu}
	\right]
	\le
	\mathbb E
	\left[
	S^\nu
	\mid
	\mathcal F_s^{Y^\nu}
	\right]
	=
	M_s^\nu.
	\]
	Therefore,
	$
	\sup_{0\le s\le t}
	m_2(\Gamma(\nu)_s)
	\le
	\sup_{0\le s\le t}
	M_s^\nu
	\quad
	\mathbb P\text{-a.s.}
	$ The process
	$
	M^\nu=(M_s^\nu)_{0\le s\le t}
	$
	is a nonnegative
	$
	\mathbb F^{Y^\nu}
	$
	-martingale. Let
	$
	p:=1+\varepsilon/2>1.
	$
	By the
	$
	(2+\varepsilon)
	$
	-moment estimate,
	\[
	\mathbb E
	\left[
	(S^\nu)^p
	\right]
	=
	\mathbb E
	\left[
	\sup_{0\le v\le t}
	|X_v^\nu|^{2p}
	\right]
	=
	\mathbb E
	\left[
	\sup_{0\le v\le t}
	|X_v^\nu|^{2+\varepsilon}
	\right]
	<
	\infty.
	\]
	Thus
	$
	M^\nu
	$
	is bounded in
	$
	L^p.
	$
	By Doob's
	$
	L^p
	$
	maximal inequality,
	$
	\left\|
	\sup_{0\le s\le t}
	M_s^\nu
	\right\|_{L^p}
	\le
	\frac{p}{p-1}\\
	\|M_t^\nu\|_{L^p}.
	$
	Conditional Jensen's inequality yields
	$
	|M_t^\nu|^p
	=
	\left|
	\mathbb E
	\left[
	S^\nu
	\mid
	\mathcal F_t^{Y^\nu}
	\right]
	\right|^p
	\le
	\mathbb E
	\left[
	(S^\nu)^p
	\mid
	\mathcal F_t^{Y^\nu}
	\right],
	$
	and therefore,
	$
	\mathbb E|M_t^\nu|^p
	\le
	\mathbb E(S^\nu)^p.
	$
	Combining the preceding bounds,
	$
	\left\|
	\sup_{0\le s\le t}
	M_s^\nu
	\right\|_{L^p}
	\le
	\frac{p}{p-1}\\
	\left(
	\mathbb E(S^\nu)^p
	\right)^{1/p}.
	$
	Since,
	$
	\|U\|_{L^1}\le\|U\|_{L^p}
	$
	for every nonnegative
	$
	U\in L^p
	$
	on a probability space,
	\[
	\begin{aligned}
		\mathbb E
		\left[
		\sup_{0\le s\le t}
		m_2(\Gamma(\nu)_s)
		\right]
		&\le
		\mathbb E
		\left[
		\sup_{0\le s\le t}
		M_s^\nu
		\right]
		\le
		\left\|
		\sup_{0\le s\le t}
		M_s^\nu
		\right\|_{L^p}
		\le
		\frac{p}{p-1}
		\left(
		\mathbb E(S^\nu)^p
		\right)^{1/p}.
	\end{aligned}
	\]
	Using
	$
	p=1+\varepsilon/2
	$
	and
	$
	(S^\nu)^p
	=
	\sup_{0\le v\le t}|X_v^\nu|^{2+\varepsilon},
	$
	we obtain
	\[
	\mathbb E
	\left[
	\sup_{0\le s\le t}
	m_2(\Gamma(\nu)_s)
	\right]
	\le
	C_\varepsilon
	\left(
	\mathbb E
	\sup_{0\le v\le t}
	|X_v^\nu|^{2+\varepsilon}
	\right)^{\frac{2}{2+\varepsilon}},
	\]
	where
	$
	C_\varepsilon
	=
	p/(p-1)
	=
	(2+\varepsilon)/\varepsilon.
	$
	This proves \textnormal{(SA.15)}. More generally, for every
	$
	q>1
	$
	so that
	$
	\mathbb E(S^\nu)^q<\infty,
	$
	the same argument gives
	$
	\left\|
	\sup_{0\le s\le t}
	m_2(\Gamma(\nu)_s)
	\right\|_{L^q}
	\le
	\frac{q}{q-1}
	\left\|
	S^\nu
	\right\|_{L^q}.
	$
	Subsequently, any family of frozen solutions satisfying
	$
	\sup_{\nu\in\mathfrak M_R}
	\mathbb E
	\sup_{0\le s\le t}
	|X_s^\nu|^{2+\varepsilon}
	<\infty
	$
	also satisfies
	$
	\sup_{\nu\in\mathfrak M_R}
	\mathbb E
	\sup_{0\le s\le t}
	m_2(\Gamma(\nu)_s)
	<\infty,
	$
	which is the moment bound required in the invariant-set and compactness arguments.
\end{proof}

\begin{lem}
	Let $\nu^n,\nu\in\mathfrak{M}_R$ and suppose
	$
	\sup_{0\leq s\leq t}
	\mathcal{W}_2(\nu_s^n,\nu_s)
	\rightarrow0
	$
	in probability. Then
	\[
	\sup_{0\leq s\leq t}
	\mathcal{W}_2
	\left(
	\Gamma(\nu^n)_s,\Gamma(\nu)_s
	\right)
	\longrightarrow0
	\]
	in probability.
\end{lem}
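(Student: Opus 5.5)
The plan is to reduce the statement to the structural hypotheses already in force, chiefly Condition \textnormal{(GV6)}, and to check that each of its hypotheses is met by the frozen solutions constructed above. First, by the frozen-law lemma (existence under \textnormal{(GV1)}--\textnormal{(GV4)}), for each $\nu^n$ and for $\nu$ there is a weak solution $(X^n,Y^n)$, respectively $(X,Y)$, of \textnormal{(SA.7)}. Since all coefficients are Lipschitz uniformly in the measure argument, I would realise all of them strongly on one filtered space carrying the same $(W,B)$, the same initial variable $(X_0,Y_0)\sim\lambda_0$ and the flows $\nu^n,\nu$. This gives exactly the admissible coupling required in \textnormal{(GV6)}. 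The moment lemma \textnormal{(SA.9)}, together with \textnormal{(SA.15)}, then shows that all the conditional laws $\Gamma(\nu^n)_s$ and $\Gamma(\nu)_s$ have $\sup_s m_2$ bounded in $L^1$ uniformly in $n$. In particular they remain in the class on which \textnormal{(GV6)} is formulated.

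Second, I would make the pathwise convergence explicit. Introduce the stopping times $\tau_K:=\inf\{s:\ m_2(\nu_s)+m_2(\nu^n_s)\ge K\}\wedge t$ and apply \textnormal{(GV2)}, BDG and Gronwall to $Z^n-Z$ on $[0,\tau_K]$. This yields
\[
\mathbb E\Big[\sup_{s\le\tau_K}|Z^n_s-Z_s|^2\wedge 1\Big]\le C_K\,\mathbb E\Big[\sup_{s\le t}\mathcal W_2^2(\nu^n_s,\nu_s)\wedge 1\Big]\longrightarrow 0
\]
by bounded convergence. Letting $K\to\infty$, using $\mathbb P(\tau_K<t)\to0$ uniformly in $n$ by Markov's inequality, gives $\sup_s|Z^n_s-Z_s|\to0$ in probability. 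Consequently $Y^n\to Y$ uniformly in probability and the joint laws of $(X^n,Y^n)$ converge.

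Third, and this is the main obstacle, the convergence must pass to the conditional laws. The difficulty is that they are taken with respect to the $n$-dependent filtrations $\mathbb F^{Y^n}$, so convergence of $(X^n,Y^n)$ does not by itself control $\mathcal L(X^n_s\mid\mathcal F^{Y^n}_s)$. Under constant observation volatility I would use the Kallianpur--Striebel representation from the proof of Proposition~\ref{prop:ddu-wellposed}. There $\langle\Gamma(\nu^n)_s,\varphi\rangle$ is a ratio of reference-measure expectations of $L^{\nu^n}_s\varphi(X^n_s)$, the likelihoods converge in $L^1$, and weak convergence of the conditional laws follows uniformly in $s$ by the modulus of continuity from \textnormal{(SA.12)}. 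The uniform $(2+\varepsilon)$-moment bound \textnormal{(SA.15)} then upgrades weak convergence to $\mathcal W_2$ convergence by uniform integrability of $|x|^2$. For general state- and law-dependent $\Sigma_Y$ no such change of measure is available, and this filter continuity is precisely what Condition \textnormal{(GV6)} postulates. In that case I would conclude by invoking \textnormal{(GV6)} directly on the coupling built in the first step, which yields $\sup_s\mathcal W_2(\Gamma(\nu^n)_s,\Gamma(\nu)_s)\to0$ in probability.
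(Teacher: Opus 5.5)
Your proposal follows the paper's proof. You couple the frozen solutions on one space with common $(W,B)$ and initial data, obtain $\sup_s|Z^n_s-Z_s|\to0$ in probability from \textnormal{(GV2)}, BDG and Gronwall, and then pass to the conditional laws under the $n$-dependent filtrations by invoking Condition \textnormal{(GV6)}, exactly as the paper does in the general-volatility setting. Your stopping-time localization at $\tau_K$ is a slightly cleaner route to the pathwise step than the paper's appeal to uniform integrability of $\sup_v\mathcal W_2^2(\nu^n_v,\nu_v)$, which does not follow from the $L^1$ bound defining $\mathfrak M_R$ alone; the Kallianpur--Striebel digression is not needed.
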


\begin{proof}
	Let
	$
	\nu^n,\nu\in\mathfrak M_R
	$
	satisfy
	$
	\sup_{0\le s\le t}\mathcal W_2(\nu_s^n,\nu_s)\to0
	$
	in probability, and let
	$
	Z^{\nu^n}:=(X^{\nu^n},\\ Y^{\nu^n}),
	$
	and
	$
	Z^\nu:=(X^\nu,Y^\nu)
	$
	denote the corresponding frozen solutions. On a common realization carrying the same initial condition and driving Brownian motions
	$
	(W,B),
	$
	set
	$
	\Delta X_s^n:=X_s^{\nu^n}-X_s^\nu,
	$
	$
	\Delta Y_s^n:=Y_s^{\nu^n}-Y_s^\nu,
	$
	and
	$
	\Delta Z_s^n:=(\Delta X_s^n,\Delta Y_s^n).
	$
	By the frozen equations,
	$
	\Delta X_s^n
	=
	\int_0^s\Delta b_v^n\,dv
	+
	\int_0^s\Delta\sigma_v^n\,dW_v
	+
	\int_0^s\Delta\sigma_{0,v}^n\,dB_v
	$
	and
	$
	\Delta Y_s^n
	=
	\int_0^s\Delta h_v^n\,dv
	+
	\int_0^s\Delta\Sigma_{Y,v}^n\,dB_v,
	$
	where, for example,
	$
	\Delta b_v^n
	:=
	b(v,X_v^{\nu^n},Y_v^{\nu^n},\nu_v^n)
	-
	b(v,X_v^\nu,Y_v^\nu,\nu_v),
	$
	with analogous notation for
	$
	\Delta\sigma_v^n,$ $
	\Delta\sigma_{0,v}^n,$ $
	\Delta h_v^n,
	$
	and
	$
	\Delta\Sigma_{Y,v}^n.
	$ Condition \textnormal{(GV2)} yields
	$
	|\Delta b_v^n|
	+
	\|\Delta\sigma_v^n\|
	+
	\|\Delta\sigma_{0,v}^n\|
	+
	|\Delta h_v^n|
	+
	\|\Delta\Sigma_{Y,v}^n\|
	\le
	L\bigl(
	|\Delta X_v^n|
	+
	|\Delta Y_v^n|
	+
	\mathcal W_2(\nu_v^n,\nu_v)
	\bigr).
	$
	Hence, by Jensen's inequality, the BDG inequality, and
	$
	(a+b+c)^2\le3(a^2+b^2+c^2),
	$
	for every
	$
	s\in[0,t],
	$
	\begin{align}
		&\mathbb E\left[
		\sup_{0\le r\le s}
		\left(
		|\Delta X_r^n|^2
		+
		|\Delta Y_r^n|^2
		\right)\right]
		\notag\\
		&\qquad\le
		C
		\int_0^s
		\mathbb E\left[
		\sup_{0\le q\le v}
		\left(
		|\Delta X_q^n|^2
		+
		|\Delta Y_q^n|^2
		\right)\right]\,dv
		+
		C
		\int_0^s
		\mathbb E\left[
		\mathcal W_2^2(\nu_v^n,\nu_v)
		\right]\,dv.
		\tag{SA.16}
	\end{align}
	Let
	$
	\Delta_n(s)
	:=
	\mathbb E[
	\sup_{0\le r\le s}|\Delta Z_r^n|^2
	].
	$
	Then
	$
	\Delta_n(s)
	\le
	C\int_0^s\Delta_n(v)\,dv
	+
	C\int_0^s
	\mathbb E[\mathcal W_2^2(\nu_v^n,\nu_v)]\,dv,
	$
	and Gronwall's inequality yields
	$
	\Delta_n(t)
	\le
	C_t
	\int_0^t
	\mathbb E[
	\mathcal W_2^2(\nu_v^n,\nu_v)
	]\,dv.
	$
	By
	$
	\nu^n,\nu\in\mathfrak M_R
	$
	and the uniform second-moment bounds, the family
	$
	\{\sup_{0\le v\le t}\mathcal W_2^2(\nu_v^n,\nu_v)\}_{n\ge1}
	$
	is uniformly integrable. Therefore,
	$
	\sup_{0\le v\le t}\mathcal W_2(\nu_v^n,\nu_v)\to0
	$
	in probability implies
	$
	\int_0^t
	\mathbb E[
	\mathcal W_2^2(\nu_v^n,\nu_v)
	]\,dv\to0,
	$
	after passage to a subsequence if necessary, and thus
	$
	\sup_{0\le s\le t}
	|Z_s^{\nu^n}-Z_s^\nu|
	\to0
	$
	in probability. It remains to pass from
	$
	(Z^{\nu^n},Z^\nu)
	$
	to the conditional-law processes
	$
	\Gamma(\nu^n)
	$
	and
	$
	\Gamma(\nu).
	$
	Set
	$
	\Gamma(\nu^n)_s
	=
	\mathcal L(X_s^{\nu^n}\mid\mathcal F_s^{Y^{\nu^n}})
	$
	and
	$
	\Gamma(\nu)_s
	=
	\mathcal L(X_s^\nu\mid\mathcal F_s^{Y^\nu}).
	$
	The difficulty is that
	$
	\mathcal F_s^{Y^{\nu^n}}
	$
	depends on
	$
	n
	$
	through both
	$
	Y^{\nu^n}
	$
	and
	$
	a_Y(v,Y_v^{\nu^n},\nu_v^n)
	=
	\Sigma_Y\Sigma_Y^\dagger
	(v,Y_v^{\nu^n},\nu_v^n).
	$
	Condition \textnormal{(GV4)} gives
	$
	\underline a I_{d_Y}
	\le
	a_Y(v,Y_v^{\nu^n},\nu_v^n)
	\le
	\overline a I_{d_Y},
	$
	uniformly in
	$
	n,
	$
	while the moment estimates and
	\textnormal{(SA.9)}
	yield tightness and uniform integrability of
	$
	\{(X^{\nu^n},Y^{\nu^n})\}_{n\ge1}.
	$
	Condition \textnormal{(GV6)} then applies to the jointly convergent frozen systems and gives
	$
	\sup_{0\le s\le t}
	\mathcal W_2
	\left(
	\mathcal L(X_s^{\nu^n}\mid\mathcal F_s^{Y^{\nu^n}}),
	\mathcal L(X_s^\nu\mid\mathcal F_s^{Y^\nu})
	\right)
	\rightarrow0
	$
	in probability. Equivalently,
	$
	\sup_{0\le s\le t}
	\mathcal W_2(
	\Gamma(\nu^n)_s,
	\Gamma(\nu)_s
	)
	\to0
	$
	in probability, which proves the claim.
\end{proof}

\begin{rmk}
	The preceding lemma is not implied by stability of the frozen SDE alone.
	Under
	$
	\Sigma_Y=\Sigma_Y(s,y,\nu),
	$
	$
	\nu\mapsto
	(Y^\nu,\mathbb F^{Y^\nu},\Gamma(\nu))
	$
	is fully coupled through
	$
	a_Y=\Sigma_Y\Sigma_Y^\dagger.
	$
	Hence
	$
	Z^{\nu^n}\Rightarrow Z^\nu
	$
	does not by itself imply
	$
	\Gamma(\nu^n)\Rightarrow\Gamma(\nu).
	$
	Condition~\textnormal{(GV6)}
	precisely requires continuity of the conditional-law operator
	$
	\Gamma:\mathfrak M_R\rightarrow
	C([0,t];\mathcal P_2(\mathbb R^d))
	$
	with respect to the topology induced by
	$
	\sup_{0\le s\le t}\mathcal W_2.
	$
	Relative to the constant-volatility case, this is the only additional ingredient needed for the fixed-point argument.
\end{rmk}

\subsection{Tightness and compactness}

\begin{lem}
	Let $\{\nu^n\}\subset\mathfrak{M}_R$. Then the laws of
	$\{(X^{\nu^n},Y^{\nu^n})\}$ are tight on
	$\mathcal{X}\times\mathcal{Y}$. Moreover, under (GV5), the laws of
	$\{\Gamma(\nu^n)\}$ are tight on $\mathcal{M}$.
\end{lem}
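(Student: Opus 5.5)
The plan has two parts. First, prove tightness of the pair $(X^{\nu^n},Y^{\nu^n})$ on $\mathcal X\times\mathcal Y$ by the Kolmogorov--Chentsov criterion. Second, obtain tightness of $\Gamma(\nu^n)$ on $\mathcal M$ by combining (GV5) with a moment bound on the conditional laws.

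\textbf{Step 1: the pair $(X^{\nu^n},Y^{\nu^n})$.}
\begin{itemize}
\item Tightness of the initial values is immediate: every $Z_0^{\nu^n}=(X_0^{\nu^n},Y_0^{\nu^n})$ has the fixed law $\lambda_0\in\mathcal P_{2+\varepsilon}$.
\item For the increments, I would invoke (SA.12) with $q=2+\varepsilon$: $\mathbb E|Z_s^{\nu^n}-Z_r^{\nu^n}|^q\le C_q|s-r|^{q/2}$.
\item The constant $C_q$ must be uniform in $n$. This follows from the uniform moment bound (SA.9), provided $\sup_n\mathbb E\sup_s m_q(\nu^n_s)<\infty$.
\item Since $q/2=1+\varepsilon/2>1$, the Kolmogorov--Chentsov tightness criterion on $C([0,t];\mathbb R^{d+d_Y})$ applies. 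It gives tightness of the joint laws, and hence of the laws on the product $\mathcal X\times\mathcal Y$.
\item If only $\mathfrak M_R$-type second-moment control of the flows is available, I would fall back on Aldous' criterion. The inputs would be (SA.11) together with the uniform $L^2$ bound (SA.10), applied at stopping times; the BDG argument is identical.
\end{itemize}

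\textbf{Step 2: the conditional laws $\Gamma(\nu^n)$.}
\begin{itemize}
\item By (GV5), for $R$ large the image $\Gamma(\mathfrak M_R)$ is relatively compact in $\mathcal M=C([0,t];\mathcal P_2(\mathbb R^d))$ and uniformly $m_2$-integrable. I read this pathwise-in-law, as the paper does in the Proposition~2 proof.
\item Prohorov's theorem then yields tightness of the laws of $\Gamma(\nu^n)$ on the Polish space $\mathcal M$. Relative compactness in the topology of $d_{\mathcal M}$ means: for each $\eta$ there is a compact $K_\eta\subset\mathcal M$ with $\inf_n\mathbb P(\Gamma(\nu^n)\in K_\eta)\ge1-\eta$.
\item To make this concrete, I would build $K_\eta$ from two ingredients.
\item The first ingredient is a uniform moment bound. (SA.15) combined with (SA.9) gives $\sup_n\mathbb E\sup_s m_{2}(\Gamma(\nu^n)_s)<\infty$. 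Applying the same Doob argument to $|X|^{2+\varepsilon}$ gives the higher $(2+\varepsilon)$-moment version, which is what yields $\mathcal W_2$-relative compactness of the time marginals via uniform integrability of $|x|^2$.
\item The second ingredient is a modulus of continuity in $\mathcal W_2$ for $s\mapsto\Gamma(\nu^n)_s$, which is exactly what (GV5) supplies.
\item Arzelà--Ascoli in $C([0,t];\mathcal P_2)$ then identifies $K_\eta$ as compact.
\end{itemize}

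\textbf{Main obstacle.} The difficult part is the equicontinuity of $s\mapsto\Gamma(\nu^n)_s$. The filters are taken with respect to the varying filtrations $\mathbb F^{Y^{\nu^n}}$, so they are not controlled by the increments of $X$ alone. A natural coupling bound, $\mathcal W_2^2(\Gamma_s,\Gamma_r)\le\mathbb E[|X_s-X_r|^2\mid\cdot]$, fails because the conditioning $\sigma$-fields differ at times $s$ and $r$. This is why the argument relies on (GV5) rather than attempting a direct proof from (SA.11). Everything else is routine.
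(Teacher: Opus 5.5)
Your proposal is correct and is essentially the paper's own proof. For the pair $(X^{\nu^n},Y^{\nu^n})$ it applies Kolmogorov--Chentsov using (SA.9) and (SA.12); for $\Gamma(\nu^n)$ it reads (GV5) as relative compactness in law, applies Prokhorov on the Polish space $\mathcal M$, and uses (SA.15) as the supporting moment bound. You also correctly note that (SA.12) is uniform in $n$ only when $\sup_n\mathbb E\sup_s m_{2+\varepsilon}(\nu^n_s)<\infty$, which membership in $\mathfrak M_R$ does not give; the paper's proof passes over this, and your Aldous fallback via (SA.10)--(SA.11) settles it (one small slip in your optional concretization: the Doob $L^p$ argument applied to $|X|^{2+\varepsilon}$ would need moments above $2+\varepsilon$, so use $|X|^{2+\varepsilon'}$ with $\varepsilon'<\varepsilon$, or the weak-type maximal inequality, which already gives the boundedness in probability that tightness needs).
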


\begin{proof}
	Let
	$
	Z^{\nu^n}
	:=
	(X^{\nu^n},Y^{\nu^n}).
	$
	By (SA.9),
	$
	\sup_n
	\mathbb E
	\sup_{0\le s\le t}
	|Z_s^{\nu^n}|^{2+\varepsilon}
	<\infty.
	$
	Together with the increment estimate (SA.12),
	$
	\sup_n
	\mathbb E
	|Z_s^{\nu^n}-Z_r^{\nu^n}|^{2+\varepsilon}
	\le
	C|s-r|^{1+\varepsilon/2},
	$
	Kolmogorov-Chentsov yields a modification with uniformly H\"older-continuous paths of every order
	$
	\alpha<
	\varepsilon/(4+2\varepsilon).
	$
	Hence,
	$
	\{\mathcal L(Z^{\nu^n})\}
	$
	is tight on
	$
	\mathcal X\times\mathcal Y.
	$
	Next,
	$
	\Gamma(\nu^n)
	\in
	\mathcal M
	=
	C([0,t];\mathcal P_2(\mathbb R^d)).
	$
	By (SA.15),
	$
	\sup_n
	\mathbb E
	\sup_{0\le s\le t}
	m_2(\Gamma(\nu^n)_s)
	<
	\infty,
	$
	while Condition~\textnormal{(GV5)} gives
	$
	\Gamma(\mathfrak M_R)
	\Subset
	C([0,t];\mathcal P_2(\mathbb R^d))
	$
	together with uniform integrability of
	$
	m_2.
	$
	Equivalently,
	$
	\{\Gamma(\nu^n)\}
	$
	is relatively compact in law on
	$
	\mathcal M.
	$
	Prokhorov's theorem therefore yields
	$
	\{\mathcal L(\Gamma(\nu^n))\}
	$
	tight on
	$
	\mathcal M,
	$
	completing the proof.
\end{proof}

\begin{rmk}
	Define
	$
	\mathfrak K_R
	:=
	\overline{\operatorname{co}}
	\bigl(
	\Gamma(\mathfrak M_R)
	\bigr),
	$
	where the closure is taken in
	$
	(\mathcal M,d_{\mathcal M})
	$
	with convergence in probability. By Condition~\textnormal{(GV5)},
	$
	\varnothing\neq
	\Gamma(\mathfrak M_R)
	\subseteq
	\mathfrak K_R
	\Subset
	\mathcal M,
	$
	$
	\mathfrak K_R
	$
	is convex and compact, and
	$
	\Gamma(\mathfrak K_R)
	\subseteq
	\Gamma(\mathfrak M_R)
	\subseteq
	\mathfrak K_R.
	$
	Hence
	$
	\Gamma:
	\mathfrak K_R
	\rightarrow
	\mathfrak K_R
	$
	is a compact self-map.
\end{rmk}

\subsection{Existence by a fixed point}

\begin{proof}[Proof of existence in Theorem 3]
	Let
	$
	\mathfrak K_R
	=
	\overline{\operatorname{co}}
	(\Gamma(\mathfrak M_R))
	\subset\mathcal M.
	$
	By the preceding remark,
	$
	\varnothing\neq
	\mathfrak K_R,
	$
	$
	\mathfrak K_R
	$
	is convex and compact, and
	$
	\Gamma(\mathfrak K_R)
	\subseteq
	\mathfrak K_R.
	$
	For every
	$
	\nu\in\mathfrak K_R,
	$
	the frozen martingale problem
	$
	\mathrm{MP}(\nu)
	$
	admits a weak solution by the frozen existence lemma. If
	$
	\mathrm{MP}(\nu)
	$
	is not unique, fix a measurable selector on the corresponding compact set of solution laws; under Condition~\textnormal{(GU1)},
	$
	\mathrm{MP}(\nu)
	$
	is well posed, so no selection is required. By (SA.10),
	$
	\sup_{\nu\in\mathfrak K_R}
	\mathbb E
	\sup_{0\le s\le t}
	|Z_s^\nu|^{2+\varepsilon}
	<\infty,
	$
	while (SA.14) gives
	$
	m_2(\Gamma(\nu)_s)
	=
	\mathbb E
	[
	|X_s^\nu|^2
	\mid
	\mathcal F_s^{Y^\nu}
	],
	$
	whence
	$
	\Gamma(\mathfrak K_R)
	\subseteq
	\mathfrak K_R.
	$
	The tightness lemma yields
	$
	\Gamma(\mathfrak K_R)
	\Subset
	\mathcal M,
	$
	and the continuity lemma together with Condition~\textnormal{(GV6)} implies
	$
	\nu^n\rightarrow\nu
	\Longrightarrow
	\Gamma(\nu^n)\rightarrow\Gamma(\nu)
	$
	in
	$
	(\mathcal M,d_{\mathcal M}).
	$
	Hence
	$
	\Gamma:
	\mathfrak K_R
	\rightarrow
	\mathfrak K_R
	$
	is a continuous compact self-map.  Schauder-Tychonoff theorem yields
	$
	\mu\in\mathfrak K_R
	$
	such that
	$
	\Gamma(\mu)=\mu.
	$
	Let
	$
	(X^\mu,Y^\mu)
	$
	solve
	$
	\mathrm{MP}(\mu).
	$
	Then
	\[
	\mu_s
	=
	\mathcal L
	(X_s^\mu
	\mid
	\mathcal F_s^{Y^\mu}),
	\qquad \forall\ 
	0\le s\le t,
	\tag{SA.17}
	\]
	and substitution of
	$
	\nu=\mu
	$
	into the frozen system (SA.7) recovers system~(3). Thus
	$
	(X^\mu,Y^\mu,\mu)
	$
	is a weak solution of the original conditional MVSDE. Finally,
	$
	\mu\in\mathfrak K_R
	$
	and (SA.10)-(SA.14) imply
	$
	\mathbb E
	[
	\sup_{0\le s\le t}
	|X_s^\mu|^2
	+
	\sup_{0\le s\le t}
	|Y_s^\mu|^2
	+
	\sup_{0\le s\le t}
	m_2(\mu_s)
	]
	<\infty.
	$
	Moreover,
	Condition~\textnormal{(GU3)}
	provides a jointly measurable version of
	$
	\mu,
	$
	and Condition~\textnormal{(GV5)}
	upgrades this version to an
	$
	\mathbb F^{Y^\mu}
	$
	-progressively measurable process with continuous
	$
	\mathcal P_2(\mathbb R^d)
	$
	-valued paths. Hence all conclusions of Theorem~3 follow.
\end{proof}

\subsection{Uniqueness}

\begin{lem}[Uniqueness of the fixed point]
	Assume (GU2). If $\mu^1$ and $\mu^2$ are two fixed points of $\Gamma$
	in the stated solution class, then they are indistinguishable.
\end{lem}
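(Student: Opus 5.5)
The plan is a Gronwall argument built on (GU2). Let $\mu^1,\mu^2$ be two fixed points of $\Gamma$ in the stated class, so $\mu^i\in\mathfrak M_R$ for some common $R$ and $\Gamma(\mu^i)=\mu^i$. Define
\[
\Delta(s):=\mathbb E\Bigl[\sup_{0\le r\le s}\mathcal W_2^2(\mu^1_r,\mu^2_r)\Bigr],\qquad 0\le s\le t .
\]
First check that $\Delta$ is finite and Borel measurable on $[0,t]$. Finiteness follows from $\mathcal W_2^2(\mu,\nu)\le 2m_2(\mu)+2m_2(\nu)$ together with the $\mathfrak M_R$ moment bound (equivalently (SA.15)). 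Measurability follows because $\Delta$ is nondecreasing.

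Next, using the fixed-point identities, write $\Delta(s)=\mathbb E[\sup_{r\le s}\mathcal W_2^2(\Gamma(\mu^1)_r,\Gamma(\mu^2)_r)]$. Apply \eqref{eq:gv-contraction} with $\nu=\mu^1$ and $\eta=\mu^2$ to obtain
\[
\Delta(s)\le C\int_0^s\Delta(v)\,dv .
\]
Since $\Delta$ is bounded by $\Delta(t)<\infty$, Gronwall's lemma gives $\Delta\equiv0$ on $[0,t]$. In particular $\sup_{0\le r\le t}\mathcal W_2(\mu^1_r,\mu^2_r)=0$ almost surely, so $\mu^1_r=\mu^2_r$ for all $r$ simultaneously outside one null set; that is, the two processes are indistinguishable. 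Path continuity is not even needed here, because the supremum is already inside the expectation.

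The main obstacle is a subtlety in the meaning of the comparison. The fixed points $\mu^1,\mu^2$ may a priori be adapted to different observation filtrations $\mathbb F^{Y^{\mu^1}}$ and $\mathbb F^{Y^{\mu^2}}$, and possibly live on different spaces. The expectation in $\Delta$ requires a common probability space. I would handle this the same way the continuity lemma does: realize both frozen systems on one space carrying the same initial condition and the same $(W,B)$. I would then invoke (GU3) (canonical compatibility), so that each $\mu^i_s=\Pi(s,Y^{\mu^i})$ is a measurable functional of the observation path. This makes $\mathcal W_2(\mu^1_r,\mu^2_r)$ a well-defined random variable to which (GU2) applies. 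With the coupling fixed, the remaining Gronwall step is routine.
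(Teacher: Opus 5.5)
Your argument is correct and is essentially the paper's proof: you set $\Delta(s)=\mathbb E[\sup_{r\le s}\mathcal W_2^2(\mu^1_r,\mu^2_r)]$, rewrite it via $\Gamma(\mu^i)=\mu^i$, apply (GU2) and conclude by Gronwall. Your explicit bound $\Delta(t)\le 4R$ from $\mathcal W_2^2(\mu,\nu)\le 2m_2(\mu)+2m_2(\nu)$ is a small improvement over the paper, which only asserts local integrability and an unjustified $\Delta(0)=0$ that you rightly do not need. The coupling detour through (GU3), however, is unnecessary and outside the lemma's hypotheses: (GU2) as stated already presupposes that the two flows live on a common probability space, and re-coupling fresh copies would only give equality in law, not indistinguishability of the given $\mu^1,\mu^2$.
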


\begin{proof}
	Let
	$
	\mu^1,\mu^2\in\operatorname{Fix}(\Gamma),
	$
	so
	$
	\Gamma(\mu^i)=\mu^i,
	$
	$i=1,2$.
	Define
	$
	\Delta(s)
	:=
	\mathbb E
	\!\left[
	\sup_{0\le r\le s}
	\mathcal W_2^2(\mu_r^1,\mu_r^2)
	\right]
	$ for all $0\leq s\leq t$.
	Then
	$
	\Delta(0)=0,
	$
	and, by the fixed-point identity,
	\[
	\Delta(s)
	=
	\mathbb E
	\!\left[
	\sup_{0\le r\le s}
	\mathcal W_2^2
	\bigl(
	\Gamma(\mu^1)_r,
	\Gamma(\mu^2)_r
	\bigr)
	\right].
	\]
	Condition~\textnormal{(GU2)} therefore yields
	$
	\Delta(s)
	\le
	C
	\int_0^s
	\Delta(v)\,dv,
	\
	0\le s\le t.
	$
	Since,
	$
	\Delta
	$
	is nonnegative and locally integrable, Gronwall's inequality yields
	$
	\Delta(s)\equiv0
	$
	on
	$
	[0,t].
	$
	Hence,
	\[
	\mathbb E
	\!\left[
	\sup_{0\le r\le t}
	\mathcal W_2^2(\mu_r^1,\mu_r^2)
	\right]
	=0,
	\]
	so
	$
	\sup_{0\le r\le t}
	\mathcal W_2(\mu_r^1,\mu_r^2)=0
	$
	$\mathbb P$-a.s. Therefore
	$
	\mu_r^1=\mu_r^2
	$
	for every
	$
	r\in[0,t]
	$
	outside a common null set, i.e.,
	$
	\mu^1
	$
	and
	$
	\mu^2
	$
	are indistinguishable.
\end{proof}

\begin{proof}[Proof of uniqueness in Theorem 3]
	Let
	$
	(\Omega^i,\mathcal F^i,\mathbb F^i,\mathbb P^i;
	X^i,Y^i,W^i,B^i,\mu^i),
	$
	$i\in\{1,2\}$,
	be weak solutions of system~(3) with common initial law
	$
	\lambda_0.
	$
	For each
	$
	i,
	$
	$
	\mu_s^i
	=
	\mathcal L^{\mathbb P^i}
	(X_s^i\mid\mathcal F_s^{Y^i}),
	$
	$
	0\le s\le t,
	$
	and therefore
	$
	\Gamma(\mu^i)=\mu^i.
	$
	Let
	$
	\Omega^\ast
	:=
	\mathcal X\times\mathcal Y\times\mathcal M
	$
	with canonical coordinates
	$
	(\mathbf X,\mathbf Y,\mathbf m).
	$
	By Condition~\textnormal{(GU3)}, there exists a jointly measurable kernel
	$
	\Pi:[0,t]\times\mathcal Y\to\mathcal P_2(\mathbb R^d)
	$
	such that, under either solution,
	$
	\mu_s^i
	=
	\Pi(s,Y^i_{\cdot\wedge s}),
	$
	$\mathbb P^i$-a.s.
	Hence, both conditional-law processes admit canonical versions on
	$
	\mathcal Y,
	$
	and
	$
	\mathcal L^{\mathbb P^i}(\mu^i)
	$
	is determined by
	$
	\mathcal L^{\mathbb P^i}(Y^i).
	$
	Since,
	$
	\mu^1,\mu^2\in\operatorname{Fix}(\Gamma),
	$
	Condition~\textnormal{(GU2)}
	and the fixed-point uniqueness lemma imply
	$
	\mathcal L^{\mathbb P^1}(\mu^1)
	=
	\mathcal L^{\mathbb P^2}(\mu^2)
	=:
	\mathfrak q
	$
	on
	$
	\mathcal M.
	$
	Disintegrate
	$
	\mathcal L^{\mathbb P^i}(X^i,Y^i,\mu^i)
	$
	with respect to
	$
	\mu^i:
	$
	$
	\mathcal L^{\mathbb P^i}(X^i,Y^i,\mu^i)
	=
	\int_{\mathcal M}
	Q_m^i(\,\cdot\,)\,
	\mathfrak q(dm),
	$
	where
	$
	Q_m^i
	$
	is a regular conditional law of
	$
	(X^i,Y^i)
	$
	given
	$
	\mu^i=m.
	$
	For
	$
	\mathfrak q
	$-a.e.
	$
	m\in\mathcal M,
	$
	both
	$
	Q_m^1
	$
	and
	$
	Q_m^2
	$
	solve the frozen martingale problem
	$
	\mathrm{MP}(m;\lambda_0).
	$
	Indeed, for every
	$
	f\in C_c^2(\mathbb R^{d+d_Y}),
	$
	$
	f(X_s^i,Y_s^i)
	-
	f(X_0^i,Y_0^i)
	-
	\int_0^s
	\mathcal L_v^{m}
	f(X_v^i,Y_v^i)\,dv
	$
	is a
	$
	Q_m^i
	$-martingale, where
	$
	\mathcal L_v^{m}
	$
	is the generator obtained from system~(3) by replacing
	$
	\mu_v
	$
	with
	$
	m_v.
	$
	Moreover,
	$
	Q_m^i\circ(X_0,Y_0)^{-1}
	=
	\lambda_0.
	$
	Condition~\textnormal{(GU1)}
	therefore gives
	$
	Q_m^1
	=
	Q_m^2
	$
	for
	$
	\mathfrak q
	$-a.e.
	$
	m.
	$
	Consequently,
	$
	\mathcal L^{\mathbb P^1}(X^1,Y^1,\mu^1)
	=
	\int_{\mathcal M}
	Q_m^1\,\mathfrak q(dm)
	=
	\int_{\mathcal M}
	Q_m^2\,\mathfrak q(dm)
	=
	\mathcal L^{\mathbb P^2}(X^2,Y^2,\mu^2)
	$
	on
	$
	\mathcal X\times\mathcal Y\times\mathcal M.
	$
	Thus any two weak solutions with initial law
	$
	\lambda_0
	$
	induce the same law of
	$
	(X,Y,\mu),
	$
	which proves weak uniqueness.
\end{proof}

\begin{rmk}
	Theorem~3 establishes
	$
	\mathcal L(X,Y,\mu)
	$
	as a unique element of
	$
	\mathcal P
	(\mathcal X\times\mathcal Y\times\mathcal M)
	$
	for a fixed coefficient tuple
	$
	(u,b,\sigma,\sigma_0,h,\Sigma_Y).
	$
	Equivalently,
	$
	(u,b,\sigma,\sigma_0,h,\Sigma_Y)
	\mapsto
	\mathcal L(X,Y,\mu)
	$
	is single-valued on the stated solution class. No converse statement is asserted
	$
	\mathcal L(X,Y,\mu)\\
	\not\!\!\rightarrow
	(u,b,\sigma,\sigma_0,h,\Sigma_Y).
	$
	Hence, weak uniqueness concerns well-posedness of system~(3), not econometric identification of the structural primitives from stochastic-choice observables.
\end{rmk}

\subsection{Relation to Proposition 2}

Proposition~2 corresponds to the special case
$
\Sigma_Y(s,y,\mu)\equiv\widehat\Sigma_Y,
$
$
\widehat\Sigma_Y\in\mathbb R^{d_Y\times d_Y}
$
deterministic and invertible. Then
$
\bar Y
:=
\widehat\Sigma_Y^{-1}Y
$
has quadratic variation
$
\langle\bar Y\rangle_s=sI_{d_Y},
$
so the observation equation is reducible, via Girsanov, to a reference probability measure with Brownian observation process. Consequently,
$
\nu\mapsto\Gamma(\nu)
$
reduces to the conditional MVSDE \citep{buckdahn2023general}. Under Theorem~3,
$
\Sigma_Y=\Sigma_Y(s,Y_s,\mu_s),
$
$
a_Y
=
\Sigma_Y\Sigma_Y^\dagger,
$
and
$
\langle Y\rangle_s
=
\int_0^s
a_Y(v,Y_v,\mu_v)\,dv.
$
Hence
$
\langle Y\rangle
$
depends on the endogenous conditional-law flow,
$
Y\mapsto\mathbb F^Y\mapsto\mu\mapsto a_Y,
$
so no drift-preserving change of measure can normalize the observation volatility. Accordingly,
$
\Gamma:
\mathfrak M_R
\rightarrow
\mathfrak M_R
$
is constructed directly through the joint martingale problem, with existence obtained by Schauder-Tychonoff and continuity supplied by Condition~\textnormal{(GV6)}.

\begin{rmk}
	Existence is equivalent to
	$
	\operatorname{Fix}(\Gamma)\neq\varnothing;
	$
	equivalently, there exists
	$
	(X,Y,\mu)
	$
	such that
	$
	\Gamma(\mu)=\mu,
	$
	i.e.,
	$
	\mu_s
	=
	\mathcal L(X_s\mid\mathcal F_s^Y),
	$
	$
	0\le s\le t,
	$
	as in (SA.17). Thus
	$
	(u,b,\sigma,\sigma_0,h,\Sigma_Y)
	\mapsto
	\mathcal L(X,Y,\mu)
	$
	is well defined. Weak uniqueness gives
	$
	|\operatorname{Fix}(\Gamma)|=1
	$
	up to equality in law, or equivalently,
	$
	\mathcal L(X,Y,\mu)
	$
	is uniquely determined by the coefficient tuple. No injectivity of the inverse map
	$
	\mathcal L(X,Y,\mu)
	\mapsto
	(u,b,\sigma,\sigma_0,h,\Sigma_Y,u)
	$
	is asserted; structural identification from stochastic-choice observables is a separate econometric problem.
\end{rmk}

\section{Behavioral Stability Results}
\label{sa:behavioral-stability}

\subsection{Dynamic choice stability}
\label{sa:dynamic-choice-stability}

\begin{proof}[Proof of Proposition~12]
	Let $(X^m,Y^m)$ and $(X^{m'},Y^{m'})$ denote frozen-flow solutions
	with initial laws $\lambda$ and $\lambda'$. Under the strengthened
	stability conditions, they may be coupled so that
	\[
	\mathbb E
	\left[
	\sup_{s\le v\le r}
	\left(
	|X_v^m-X_v^{m'}|^2
	+
	|Y_v^m-Y_v^{m'}|^2
	\right)
	\right]
	\le
	C_{s,r}
	\left(
	W_2^2(\lambda,\lambda')
	+
	\int_s^r W_2^2(m_v,m_v')\,dv
	\right).
	\]
	In particular,
	$
	\mathbb E|X_r^m-X_r^{m'}|
	\le
	C_{s,r}
	\big(
	W_2(\lambda,\lambda')
	+
	d_{s,r}(m,m')
	\big).
	$
	For $a\in A\setminus\{z\}$, define
	$
	G_a
	:=
	\Delta_{za}(r,X_r^m,m_r),
	$ and $
	G_a'
	=
	\Delta_{za}(r,X_r^{m'},m_r').
	$
	Assumption~10 implies
	$
	|G_a-G_a'|
	\le
	2L_u
	\big(
	|X_r^m-X_r^{m'}|
	+
	W_2(m_r,m_r')
	\big).
	$
	As in the proof of Lemma~11, for every $\eta>0$,
	$
	\mathbb P
	\left(
	\operatorname{sgn}G_a
	\ne
	\operatorname{sgn}G_a'
	\right)
	\le
	\kappa\eta
	+
	\frac{2L_u}{\eta}
	\big(
	\mathbb E|X_r^m-X_r^{m'}|
	+
	W_2(m_r,m_r')
	\big).
	$
	Substitution of the frozen-flow stability estimate and optimization
	over $\eta$ yield
	$
	\mathbb P
	\left(
	\operatorname{sgn}G_a
	\ne
	\operatorname{sgn}G_a'
	\right)
	\le
	K_{s,r}
	\big(
	W_2(\lambda,\lambda')^{1/2}
	+
	d_{s,r}(m,m')^{1/2}
	\big).
	$
	A union bound over $a\in A\setminus\{z\}$ establishes the result for
	$\mathsf D_{s,r}$. For $\mathsf D_{s,r}^{\,\bar\mu}$, the measure argument in
	comparison-date felicity is the same under both systems. The same
	argument therefore applies without the term
	$W_2(m_r,m_r')$ arising directly from felicity. Uniformity over
	$\bar\mu$ follows from the uniform constants in Assumption~10.
\end{proof}

\subsection{Filter-state sufficiency}
\label{sa:filter-state-sufficiency}

\begin{proof}[Proof of Lemma~15]
	Fix $s\in[0,t]$
	and an admissible continuation menu policy
	$\pi$. Let
	$
	\mathsf E
	:=
	\mathbb R^d
	\times
	\mathbb R^{d_Y}
	\times
	\mathcal P_2(\mathbb R^d)
	$
	and write
	$
	\widehat{\mathbf X}_s
	:=
	(Y_s,\mu_s).
	$
	For
	$
	(y,\mu)\in
	\mathbb R^{d_Y}
	\times
	\mathcal P_2(\mathbb R^d),
	$
	let
	$
	\mathbf Q_{s,t}^{\pi}(y,\mu;\cdot)
	$
	denote the law on
	$
	C([s,t];\mathsf E)
	$
	of a weak solution
	$
	(X_r,Y_r,\mu_r)_{r\in[s,t]}
	$
	to the continuation system with
	$
	Y_s=y,
	\
	\mathcal L(X_s\mid\mathcal F_s^Y)=\mu,
	$
	and coefficient tuple
	$
	(b,\sigma,\sigma_0,h,\Sigma_Y).
	$
	By the Markovian specification, for
	$
	r\in[s,t],
	$
	the local characteristics of
	$
	(X_r,Y_r)
	$
	are functions only of
	$
	(r,X_r,Y_r,\mu_r),
	$
	namely
	$
	b(r,X_r,Y_r,\mu_r),
	$ $
	\sigma(r,X_r,Y_r,\mu_r),
	$ $
	\sigma_0(r,X_r,Y_r,\mu_r),
	$ $
	h(r,X_r,Y_r,\mu_r),$
	$
	\Sigma_Y(r,Y_r,\mu_r).
	$
	Hence, the continuation martingale problem depends on the pre-$s$ history only through
	$
	(Y_s,\mu_s).
	$
	
	Let
	$
	y_{\cdot\wedge s}
	$
	and
	$
	y'_{\cdot\wedge s}
	$
	be admissible histories satisfying
	$
	(Y_s,\mu_s)(y_{\cdot\wedge s})
	=
	(Y_s,\mu_s)(y'_{\cdot\wedge s})
	=
	(y,\mu).
	$
	Then
	$
	Y_s(y_{\cdot\wedge s})
	=
	Y_s(y'_{\cdot\wedge s})
	=
	y
	$
	and
	$
	\mathcal L(X_s\mid y_{\cdot\wedge s})
	=
	\mu
	=
	\mathcal L(X_s\mid y'_{\cdot\wedge s}).
	$
	Thus the two continuation problems have the same conditional initial law
	$
	\mu(dx)\delta_y(dy)
	$
	on
	$
	\mathbb R^d\times\mathbb R^{d_Y}
	$
	and the same coefficient tuple
	$
	(b,\sigma,\sigma_0,h,\Sigma_Y).
	$
	Let
	$
	\mathbf P^1
	$
	and
	$
	\mathbf P^2
	$
	be the corresponding conditional continuation laws of
	$
	(X,Y,\mu)
	$
	given
	$
	y_{\cdot\wedge s}
	$
	and
	$
	y'_{\cdot\wedge s},
	$
	respectively.
	For
	$
	f\in C_c^2(\mathbb R^{d+d_Y}),
	$
	write
	$
	z=(x,y)
	$
	and define
	$
	\mathcal L_r^\mu f(z)
	:=
	\beta(r,z,\mu_r)\cdot Df(z)
	+
	\frac12
	\operatorname{Tr}
	\!\left(
	A(r,z,\mu_r)D^2f(z)
	\right),
	$
	where
	$
	\beta(r,z,\mu)
	:=
	\begin{pmatrix}
		b(r,x,y,\mu)\\
		h(r,x,y,\mu)
	\end{pmatrix}
	$
	and
	$
	A(r,z,\mu)
	:=
	\begin{pmatrix}
		\sigma\sigma^\dagger+\sigma_0\sigma_0^\dagger
		&
		\sigma_0\Sigma_Y^\dagger\\
		\Sigma_Y\sigma_0^\dagger
		&
		\Sigma_Y\Sigma_Y^\dagger
	\end{pmatrix}
	(r,x,y,\mu).
	$
	Under either
	$
	\mathbf P^i,
	$
	$
	M_r^f
	:=
	f(X_r,Y_r)
	-
	f(X_s,Y_s)
	-
	\int_s^r
	\mathcal L_v^\mu f(X_v,Y_v)\,dv,
	\qquad
	r\in[s,t],
	$
	is a martingale, and
	$
	\mu_r
	=
	\mathcal L(X_r\mid\mathcal F_r^Y).
	$
	Moreover,
	$
	\mathbf P^1\circ(X_s,Y_s)^{-1}
	=
	\mu(dx)\delta_y(dy)
	=
	\mathbf P^2\circ(X_s,Y_s)^{-1}.
	$
	Weak uniqueness of the continuation martingale problem therefore gives
	$
	\mathbf P^1
	=
	\mathbf P^2
	=
	\mathbf Q_{s,t}^{\pi}(y,\mu;\cdot).
	$
	Equivalently, for every bounded Borel functional
	$
	F:
	C([s,t];\mathsf E)
	\rightarrow
	\mathbb R,
	$
	$$
	\mathbb E
	\!\left[
	F\!\left(
	(X_r,Y_r,\mu_r)_{r\in[s,t]}
	\right)
	\middle|
	y_{\cdot\wedge s}
	\right]
	=
	\int
	F(\omega)\,
	\mathbf Q_{s,t}^{\pi}(y,\mu;d\omega)
	$$
	and
	$
	\mathbb E
	\!\left[
	F\!\left(
	(X_r,Y_r,\mu_r)_{r\in[s,t]}
	\right)
	\middle|
	y'_{\cdot\wedge s}
	\right]
	=
	\int
	F(\omega)\,
	\mathbf Q_{s,t}^{\pi}(y,\mu;d\omega).
	$
	Hence
	$
	\mathcal L
	\!\left(
	(X_r,Y_r,\mu_r)_{r\in[s,t]}
	\middle|
	y_{\cdot\wedge s}
	\right)
	=
	\mathcal L
	\!\left(
	(X_r,Y_r,\mu_r)_{r\in[s,t]}
	\middle|
	y'_{\cdot\wedge s}
	\right).
	$
	Now let
	$
	\mathsf Z_{s,t}
	$
	denote the measurable path space of future choices on
	$
	[s,t].
	$
	Under the fixed admissible menu policy
	$
	\pi,
	$
	there exists a measurable map
	$
	\Psi^\pi:
	C([s,t];\mathsf E)
	\rightarrow
	\mathsf Z_{s,t}
	$
	such that
	$
	(Z_r)_{r\in[s,t]}
	=
	\Psi^\pi
	\!\left(
	(X_r,Y_r,\mu_r)_{r\in[s,t]}
	\right)
	$
	almost surely.
	For every Borel set
	$
	B\subseteq\mathsf Z_{s,t},
	$
	$
	\mathbb P^\pi
	\!\left(
	(Z_r)_{r\in[s,t]}\in B
	\middle|
	y_{\cdot\wedge s}
	\right)
	=
	\mathbf Q_{s,t}^{\pi}
	\!\left(
	(y,\mu),
	(\Psi^\pi)^{-1}(B)
	\right),
	$
	while
	$
	\mathbb P^\pi
	\!\left(
	(Z_r)_{r\in[s,t]}\in B
	\middle|
	y'_{\cdot\wedge s}
	\right)
	=
	\mathbf Q_{s,t}^{\pi}
	\!\left(
	(y,\mu),
	(\Psi^\pi)^{-1}(B)
	\right).
	$
	Therefore,
	$$
	\mathcal L^\pi
	\!\left(
	(Z_r)_{r\in[s,t]}
	\middle|
	y_{\cdot\wedge s}
	\right)
	=
	\mathbf Q_{s,t}^{\pi}(y,\mu;\cdot)
	\circ
	(\Psi^\pi)^{-1}
	=
	\mathcal L^\pi
	\!\left(
	(Z_r)_{r\in[s,t]}
	\middle|
	y'_{\cdot\wedge s}
	\right).
	$$
	Thus
	$
	(Y_s,\mu_s)
	$
	is sufficient for the conditional law of every continuation choice experiment generated by
	$
	\pi.
	$
\end{proof}

\subsection{Behavioral impossibility of DRU}

\begin{proof}[Proof of Theorem 19]
	Let
	$
	\rho\in\mathfrak R_{\mathrm{DDU}}
	$
	be generated by a DDU representation
	$
	\theta:=(u,b,\sigma,\sigma_0,\\ h,\Sigma_Y),
	$
	and write
	$
	\mathcal T(\theta)=\rho
	$
	for its stochastic-choice array on the reachable domain. Suppose, toward a contradiction, that
	$
	\rho\in\mathfrak R_{\mathrm{DRU}}.
	$
	Then there exists a DRU representation
	$
	\widetilde\theta
	:=
	(\widetilde u,\widetilde b,\widetilde\sigma,
	\widetilde\sigma_0,\widetilde h,\widetilde\Sigma_Y)
	$
	such that
	$
	\mathcal T(\widetilde\theta)
	=
	\mathcal T(\theta)
	=
	\rho.
	$
	Since,
	$
	\widetilde\theta
	$
	is DRU,
	$
	(\widetilde u,\widetilde b,\widetilde\sigma,
	\widetilde\sigma_0,\widetilde h,\widetilde\Sigma_Y)
	$
	is independent of the conditional-law argument. Proposition~7 therefore gives, for every admissible
	$
	(s,A,z,\nu,\mu,\mu'),
	$
	$
	\widetilde C_s(z;A\mid\nu,\mu)
	=
	\widetilde C_s(z;A\mid\nu,\mu'),
	$
	and, for every admissible
	$
	(s,r,A,z,\lambda,\bar\mu,m,m'),
	$
	$
	\widetilde D_{s,r}^{\,\bar\mu}
	(z;A\mid\lambda,m)
	=
	\widetilde D_{s,r}^{\,\bar\mu}
	(z;A\mid\lambda,m').
	$
	Because
	$
	\mathcal T(\theta)=\mathcal T(\widetilde\theta),
	$
	the corresponding observable kernels satisfy
	$
	C_s^\theta
	=
	\widetilde C_s
	$
	and
	$
	D_{s,r}^{\theta,\bar\mu}
	=
	\widetilde D_{s,r}^{\,\bar\mu}
	$
	on the observational domain. Hence
	$
	C_s^\theta(z;A\mid\nu,\mu)
	=
	C_s^\theta(z;A\mid\nu,\mu')
	$
	and
	$
	D_{s,r}^{\theta,\bar\mu}
	(z;A\mid\lambda,m)
	=
	D_{s,r}^{\theta,\bar\mu}
	(z;A\mid\lambda,m')
	$
	for all admissible arguments. Thus
	$
	\theta
	$
	satisfies behavioral distributional invariance. Equivalently, Assumption~16 and Theorem~18 imply
	$
	\nu(\Gamma_s(\mu,\mu'))=0
	$
	for every admissible
	$
	(s,\nu,\mu,\mu'),
	$
	and
	$
	P_{s,r}^{m,X}\lambda
	=
	P_{s,r}^{m',X}\lambda
	$
	for every admissible
	$
	(s,r,\lambda,m,m').
	$
	Therefore neither
	$
	\mu\mapsto
	(\chi_s(z;A,\cdot,\mu))_{A,z}
	$
	nor
	$
	m\mapsto P_{s,r}^{m,X}\lambda
	$
	has a behaviorally nonconstant image on the reachable domain. In the notation of behavioral feedback,
	$
	\neg\mathcal F_s
	\wedge
	\neg\mathcal P_{s,r}
	$
	holds for every admissible
	$
	s<r,
	$
	contradicting the hypothesis that
	$
	\rho
	$
	exhibits behavioral distributional feedback, i.e.,
	$
	\mathcal F_s\vee\mathcal P_{s,r}
	$
	for some admissible
	$
	(s,r).
	$
	Hence, the assumption
	$
	\rho\in\mathfrak R_{\mathrm{DRU}}
	$
	is false, and
	$
	\rho\in
	\mathfrak R_{\mathrm{DDU}}
	\setminus
	\mathfrak R_{\mathrm{DRU}}.
	$
\end{proof}

\section{Conditional McKean-Vlasov Preferences}

\subsection{Conditional Markov Structure}

\begin{proof}[Proof of Proposition 26]
	Fix
	$
	0\le s\le r\le q\le t.
	$
	For every
	$
	(y,\eta)
	\in
	\mathbb R^{d_Y}
	\times
	\mathcal P_2(\mathbb R^d),
	$
	let
	$
	\mathbf P_{r,(y,\eta)}
	$
	denote the unique continuation law associated with the conditional
	McKean--Vlasov system of Theorem~3. Define
	$
	\mathcal G_{r,q}^F(y,\eta)
	:=
	\int
	F(\xi)\,
	\mathbf P_{r,(y,\eta)}(d\xi),
	$
	for every bounded Borel functional
	$
	F:
	C([r,q];
	\mathbb R^d
	\times
	\mathbb R^{d_Y}
	\times
	\mathcal P_2(\mathbb R^d))
	\rightarrow
	\mathbb R.
	$
	Since
	$
	(y,\eta)
	\mapsto
	\mathbf P_{r,(y,\eta)}
	$
	is a measurable stochastic kernel,
	$
	\mathcal G_{r,q}^F
	$
	is bounded and Borel measurable.
	By Lemma 25,
	$
	\mathbb E
	\!\left[
	F((\mathbf X_v)_{v\in[r,q]})
	\middle|
	\mathcal F_r^Y
	\right]
	=
	\mathcal G_{r,q}^F
	(\widehat{\mathbf X}_r),
	$
	$\mathbb P$-a.s. Hence, for every bounded Borel
	$
	H:
	\mathbb R^{d_Y}
	\times
	\mathcal P_2(\mathbb R^d)
	\rightarrow
	\mathbb R,
	$
	the tower property gives
	$
	\mathbb E
	\!\left[
	H(\widehat{\mathbf X}_r)
	F((\mathbf X_v)_{v\in[r,q]})
	\middle|
	\mathcal F_s^Y
	\right]
	=
	\mathbb E
	\!\left[
	H(\widehat{\mathbf X}_r)
	\mathcal G_{r,q}^F(\widehat{\mathbf X}_r)
	\middle|
	\mathcal F_s^Y
	\right].
	$
	Applying Lemma 25 on the interval
	$
	[s,r]
	$
	yields
	$
	\mathbb E
	\!\left[
	H(\widehat{\mathbf X}_r)
	\mathcal G_{r,q}^F(\widehat{\mathbf X}_r)
	\middle|
	\mathcal F_s^Y
	\right]
	=
	\mathcal G_{s,r}^{\,H\mathcal G_{r,q}^F}
	(\widehat{\mathbf X}_s).
	$
	Taking
	$
	H\equiv1
	$
	yields
	$
	\mathbb E
	\big[
	F((\mathbf X_v)_{v\in[r,q]})
	|
	\mathcal F_s^Y
	\big]
	=
	\mathcal G_{s,r}
	\mathcal G_{r,q}^F
	(\widehat{\mathbf X}_s).
	$
	On the other hand,
	$
	F((\mathbf X_v)_{v\in[r,q]})
	$
	is a bounded Borel functional of the continuation path on
	$
	[s,q],
	$
	so Lemma 25 also yields
	$
	\mathbb E
	\!\left[
	F((\mathbf X_v)_{v\in[r,q]})
	\middle|
	\mathcal F_s^Y
	\right]
	=
	\mathcal G_{s,q}^F
	(\widehat{\mathbf X}_s).
	$
	Therefore,
	$
	\mathcal G_{s,q}^F
	=
	\mathcal G_{s,r}
	\mathcal G_{r,q}^F
	$
	for every bounded Borel
	$
	F,
	$
	hence
	$
	\mathcal G_{s,q}
	=
	\mathcal G_{s,r}
	\mathcal G_{r,q}.
	$
	Finally,
	$
	\mathbb E
	[
	F((\mathbf X_v)_{v\in[r,q]})
	\mid
	\mathcal F_r^Y
	]
	=
	\mathcal G_{r,q}^F
	(\widehat{\mathbf X}_r)
	$
	depends on the past only through
	$
	\widehat{\mathbf X}_r.
	$
	Equivalently,
	$
	\mathcal L
	\!\left(
	(\mathbf X_v)_{v\in[r,q]}
	\mid
	\mathcal F_r^Y
	\right)
	=
	\mathcal L
	\!\left(
	(\mathbf X_v)_{v\in[r,q]}
	\mid
	\widehat{\mathbf X}_r
	\right),
	$
	so
	$
	(\widehat{\mathbf X}_v)_{0\le v\le t}
	$
	is a time-inhomogeneous Markov process with transition family
	$
	(\mathcal G_{r,q})_{0\le r\le q\le t}.
	$
\end{proof}

\subsection{Conditional McKean-Vlasov representation}

\begin{proof}[Proof of Theorem 27]
	Fix
	$
	0\le s\le r\le t.
	$
	By Proposition 26, for every bounded Borel functional
	$
	F:
	C([r,t];
	\mathbb R^d
	\times
	\mathbb R^{d_Y}
	\times
	\mathcal P_2(\mathbb R^d))
	\rightarrow
	\mathbb R,
	$
	there exists a measurable operator
	$
	\mathcal G_{r,t}^F
	$
	satisfying
	$
	\mathbb E
	\!\left[
	F((\mathbf X_v)_{v\in[r,t]})
	\middle|
	\mathcal F_r^Y
	\right]
	=
	\mathcal G_{r,t}^F
	(\widehat{\mathbf X}_r),
	$
	together with the Chapman-Kolmogorov identity
	$
	\mathcal G_{s,t}
	=
	\mathcal G_{s,r}
	\mathcal G_{r,t}.
	$
	For
	$
	(y,\eta)
	\in
	\mathbb R^{d_Y}
	\times
	\mathcal P_2(\mathbb R^d),
	$
	define the measurable kernel
	$
	\Gamma_{s,r}(y,\eta)
	:=
	\mathcal L
	(\widehat{\mathbf X}_r
	\mid
	\widehat{\mathbf X}_s=(y,\eta)).
	$
	Since
	$
	\widehat{\mathbf X}_r
	=
	(Y_r,\mu_r),
	$
	$
	\Gamma_{s,r}
	$
	is a stochastic transition operator on
	$
	\mathbb R^{d_Y}
	\times
	\mathcal P_2(\mathbb R^d).
	$
	Measurability follows from the existence of jointly measurable regular conditional distributions established in Theorem~3. Let
	$
	\lambda
	$
	be an admissible initial law.
	Then
	$
	\mathcal L(\widehat{\mathbf X}_r)
	=
	\Gamma_{s,r\#}
	\mathcal L(\widehat{\mathbf X}_s),
	$
	where
	$
	\Gamma_{s,r\#}
	$
	denotes the pushforward induced by
	$
	\Gamma_{s,r}.
	$
	Equivalently,
	$
	\widehat{\mathbf X}_r
	=
	\Gamma_{s,r}
	(\widehat{\mathbf X}_s)
	$
	in distribution. The latent-state transition kernel is therefore obtained by disintegration,
	$
	P_{s,r}^{X}(x,\cdot)
	=
	\int
	P_{r}^{X}
	(x,\cdot\mid y,\eta)
	\,
	\Gamma_{s,r}
	(dy,d\eta),
	$
	where
	$
	P_r^{X}
	(\cdot\mid y,\eta)
	$
	is the regular conditional law of
	$
	X_r
	$
	given
	$
	\widehat{\mathbf X}_r=(y,\eta).
	$
	Hence
	$
	P_{s,r}^{X}
	=
	P^{X}(\Gamma_{s,r}),
	$
	so the behavioral transition family satisfies
	$
	\Phi_P
	=
	(P_{s,r}^{X})_{0\le s\le r\le t}
	=
	(\Gamma_{s,r})_{0\le s\le r\le t}.
	$
	For
	$
	0\le s\le r\le q\le t,
	$
	Proposition 26 yields
	$
	\mathcal G_{s,q}
	=
	\mathcal G_{s,r}
	\mathcal G_{r,q}.
	$
	Since
	$
	\Gamma_{s,r}
	$
	is the stochastic kernel representing
	$
	\mathcal G_{s,r},
	$
	uniqueness of regular conditional distributions implies
	$
	\Gamma_{s,q}
	=
	\Gamma_{r,q}
	\circ
	\Gamma_{s,r}.
	$
	Taking
	$
	r=s
	$
	gives
	$
	\Gamma_{s,s}
	=
	I.
	$
	Suppose
	$
	\widetilde\Gamma
	=
	(\widetilde\Gamma_{s,r})
	$
	is another measurable family satisfying the stated properties.
	Then, for every bounded Borel
	$
	f:
	\mathbb R^{d_Y}
	\times
	\mathcal P_2(\mathbb R^d)
	\rightarrow
	\mathbb R,
	$
	$
	\int
	f
	\,d\Gamma_{s,r}(y,\eta)
	=
	\mathbb E
	[
	f(\widehat{\mathbf X}_r)
	\mid
	\widehat{\mathbf X}_s=(y,\eta)
	]
	=
	\int
	f
	\,d\widetilde\Gamma_{s,r}(y,\eta).
	$
	Since bounded Borel functions separate probability measures on Polish spaces,
	$
	\Gamma_{s,r}
	=
	\widetilde\Gamma_{s,r}
	$
	for every
	$
	(s,r).
	$
	Finally,
	$
	\mu_r
	=
	\mathcal L(X_r\mid\mathcal F_r^Y)
	$
	and
	$
	\widehat{\mathbf X}_r
	=
	(Y_r,\mu_r)
	$
	determine the conditional law of every continuation experiment by Proposition 26, whereas
	$
	\Gamma
	$
	determines the evolution of
	$
	\widehat{\mathbf X}.
	$
	Conversely,
	$
	(X,Y,\mu)
	$
	induces
	$
	\Gamma
	$
	through its regular conditional transition kernels.
	Hence the two representations generate identical finite-dimensional distributions and therefore coincide up to indistinguishability.
\end{proof}

\subsection{Endogenous Information and Fixed-Point}

\begin{proof}[Proof of Lemma 28]
	Fix
	$
	\mu\in\mathfrak M.
	$
	By Definition 12,
	$
	\mu\in\operatorname{Fix}(\Gamma)
	$
	if and only if
	$
	\Gamma(\mu)=\mu.
	$
	Since,
	$
	\Gamma=(\Gamma_{s,r})_{0\le s\le r\le t}
	$
	acts pathwise on
	$
	\mathfrak M
	=
	C([0,t];\mathcal P_2(\mathbb R^d)),
	$
	the latter identity is equivalent to
	$
	\Gamma_{s,r}(\mu_s)=\mu_r
	$
	for every
	$
	0\le s\le r\le t.
	$
	Hence
	$
	\mu\in\operatorname{Fix}(\Gamma)
	$
	if and only if
	$
	\mu_r=\Gamma_{s,r}(\mu_s)
	$
	for all
	$
	(s,r).
	$
	Assume
	$
	\mu_r=\Gamma_{s,r}(\mu_s)
	$
	for each
	$
	0\le s\le r\le t.
	$
	By Theorem 27,
	$
	\Gamma_{s,r}
	$
	uniquely determines the conditional evolution of the augmented state
	$
	\widehat{\mathbf X}
	=
	(Y,\mu),
	$
	and therefore uniquely determines the conditional transition law of the latent state. Consequently,
	$
	P_{s,r}^{X}
	=
	P^{X}(\Gamma_{s,r})
	$
	for every admissible initial law
	$
	\lambda.
	$
	Since
	$
	\Phi_P
	$
	is, by definition, the family of latent-state transition operators,
	$
	\Phi_P
	=
	(P_{s,r}^{X})_{0\le s\le r\le t}
	=
	(\Gamma_{s,r})_{0\le s\le r\le t}.
	$
	Conversely, suppose
	$
	P_{s,r}^{X}
	=
	P^{X}(\Gamma_{s,r})
	$
	and
	$
	\Phi_P
	=
	(\Gamma_{s,r})_{0\le s\le r\le t}.
	$
	Theorem 27 identifies
	$
	\Gamma
	$
	as the unique measurable transition family generating the conditional preference dynamics. Therefore the induced conditional-law flow satisfies
	$
	\mu_r
	=
	\Gamma_{s,r}(\mu_s)
	$
	for every
	$
	0\le s\le r\le t,
	$
	which is equivalent to
	$
	\Gamma(\mu)=\mu.
	$
	Hence
	$
	\mu\in\operatorname{Fix}(\Gamma).
	$
	The three constructions determine the same admissible conditional preference flows and are equivalent.
\end{proof}	

\begin{proof}[Proof of Proposition 29]
	Fix the primitive tuple
	$
	\vartheta:=(u,b,\sigma,\sigma_0,h,\Sigma_Y)
	$
	and an admissible initial law
	$
	\lambda_0.
	$
	Let
	$
	\mu,\tilde\mu\in\operatorname{Fix}(\Gamma).
	$
	Then
	$
	\Gamma(\mu)=\mu
	$
	and
	$
	\Gamma(\tilde\mu)=\tilde\mu,
	$
	so the corresponding systems
	$
	(X^\mu,Y^\mu,\mu)
	$
	and
	$
	(X^{\tilde\mu},Y^{\tilde\mu},\tilde\mu)
	$
	are weak solutions of system~(3) with common primitives
	$
	\vartheta
	$
	and common initial law
	$
	\lambda_0.
	$
	By the uniqueness of Theorem 3,
	$
	\mathcal L(X^\mu,Y^\mu,\mu)
	=
	\mathcal L(X^{\tilde\mu},Y^{\tilde\mu},\tilde\mu)
	$
	on
	$
	C([0,t];\mathbb R^d)
	\times
	C([0,t];\mathbb R^{d_Y})
	\times
	C([0,t];\mathcal P_2(\mathbb R^d)).
	$
	Canonical compatibility of the conditional-law versions therefore gives
	$
	\mathcal L(Y^\mu,\mu)
	=
	\mathcal L(Y^{\tilde\mu},\tilde\mu)
	$
	and, for every
	$
	0\le s\le r\le t,
	$
	$
	\Gamma_{s,r}^{\mu}(\hat x,\cdot)
	=
	\Gamma_{s,r}^{\tilde\mu}(\hat x,\cdot)
	$
	for
	$
	\mathcal L(\widehat{\mathbf X}_s)
	$
	-a.e.
	$
	\hat x\in
	\mathbb R^{d_Y}\times\mathcal P_2(\mathbb R^d).
	$
	Thus
	$
	\Gamma^\mu=\Gamma^{\tilde\mu}
	$
	on the reachable augmented-state domain. Theorem 27 yields
	$
	P_{s,r}^{\mu,X}
	=
	P^X(\Gamma_{s,r}^{\mu})
	$
	and
	$
	P_{s,r}^{\tilde\mu,X}
	=
	P^X(\Gamma_{s,r}^{\tilde\mu}).
	$
	Hence
	$
	P_{s,r}^{\mu,X}
	=
	P_{s,r}^{\tilde\mu,X}
	$
	for every admissible
	$
	(s,r,\lambda),
	$
	and therefore
	$
	\Phi_P^\mu
	=
	\Phi_P^{\tilde\mu}.
	$
	Since, the primitive felicity index
	$
	u
	$
	is common to both systems,
	$
	\Phi_u^\mu=\Phi_u^{\tilde\mu}
	$
	on the common reachable domain. Consequently,
	$
	\Phi^\mu
	=
	(\Phi_u^\mu,\Phi_P^\mu)
	=
	(\Phi_u^{\tilde\mu},\Phi_P^{\tilde\mu})
	=
	\Phi^{\tilde\mu}.
	$
	Using
	$
	\mathscr T=\Lambda\circ\Phi,
	$
	we obtain
	$
	\mathscr T^\mu
	=
	\Lambda(\Phi^\mu)
	=
	\Lambda(\Phi^{\tilde\mu})
	=
	\mathscr T^{\tilde\mu}.
	$
	Therefore, every
	$
	\mu,\tilde\mu\in\operatorname{Fix}(\Gamma)
	$
	belong to the same behavioral equivalence class, or equivalently,
	$
	\Phi(\operatorname{Fix}(\Gamma))
	$
	and
	$
	\mathscr T(\operatorname{Fix}(\Gamma))
	$
	are singletons.
\end{proof}

\begin{proof}[Proof of Theorem 30]
	Fix
	$
	m\in\mathfrak M
	$
	and let
	$
	(X^m,Y^m)
	$
	solve the frozen system obtained from system~(3) by replacing the conditional-law argument with
	$
	m.
	$
	By definition,
	$
	\Gamma(m)_s
	=
	\mathcal L(X_s^m\mid\mathcal F_s^{Y^m}),
	$
	$
	0\leq s\leq t.
	$
	Hence,
	$
	m\in\operatorname{Fix}(\Gamma)
	$
	is equivalent to
	$
	m_s
	=
	\mathcal L(X_s^m\mid\mathcal F_s^{Y^m})
	,\ \forall\
	s\in[0,t].
	$
	Substitution of
	$
	m=\Gamma(m)
	$
	into the frozen equations yields
	$
	(b,\sigma,\sigma_0,h,\Sigma_Y)
	(s,X_s^m,Y_s^m,m_s)
	=
	(b,\sigma,\sigma_0,h,\Sigma_Y)
	(s,X_s^m,Y_s^m,
	\mathcal L(X_s^m\mid\mathcal F_s^{Y^m})),
	$
	so
	$
	(X^m,Y^m,m)
	$
	satisfies system~(3) and is an admissible DDU solution. Conversely, let
	$
	(X,Y,m)
	$
	be an admissible DDU solution generated by the frozen-flow coefficients indexed by
	$
	m.
	$
	The DDU consistency restriction gives
	$
	m_s
	=
	\mathcal L(X_s\mid\mathcal F_s^Y)
	=
	\Gamma(m)_s
	,\ \forall\ 
	s\in[0,t],
	$
	and therefore
	$
	m=\Gamma(m).
	$
	Thus
	$
	m\in\operatorname{Fix}(\Gamma)
	$
	if and only if the frozen system closes to an admissible DDU solution. Fix
	$
	m\in\operatorname{Fix}(\Gamma).
	$
	By Theorem 27, the augmented state
	$
	\widehat{\mathbf X}^{m}
	=
	(Y^m,m)
	$
	admits the conditional transition family
	$
	(\Gamma_{s,r}^{m})_{0\leq s\leq r\leq t},
	$
	and the latent-state marginal transition operator satisfies
	$
	P_{s,r}^{m,X}
	=
	P^X(\Gamma_{s,r}^{m}).
	$
	Consequently,
	$
	\Phi_P(m)
	=
	(P_{s,r}^{m,X})_{0\leq s\leq r\leq t}
	=
	\bigl(P^X(\Gamma_{s,r}^{m})\bigr)_{0\leq s\leq r\leq t}.
	$
	For every admissible
	$
	(s,A,z,\nu,m_s),
	$
	the contemporaneous component is
	$
	C_s^m(z;A\mid\nu,m_s)
	=
	\int_{\mathbb R^d}
	\chi_s^m(z;A,x,m_s)\,\nu(dx),
	$
	while, for every admissible
	$
	(s,r,A,z,\lambda,\bar\mu,m),
	$
	the continuation component is
	$
	D_{s,r}^{m,\bar\mu}(z;A\mid\lambda)
	=
	\int_{\mathbb R^d}
	\chi_r^m(z;A,x,\bar\mu)\,\\
	P^X(\Gamma_{s,r}^{m})\lambda(dx).
	$
	Therefor,e the full observable array factors as
	$
	\mathscr T(m)
	=
	\Lambda(\Phi_u(m),\Phi_P(m)).
	$
	Let
	$
	m,\tilde m\in\operatorname{Fix}(\Gamma)
	$
	be generated by the same primitive tuple
	$
	\vartheta
	=
	(u,b,\sigma,\sigma_0,h,\Sigma_Y)
	$
	and the same initial law
	$
	\lambda_0.
	$
	Then
	$
	(X^m,Y^m,m)
	$
	and
	$
	(X^{\tilde m},Y^{\tilde m},\tilde m)
	$
	are weak solutions of system~(3) with common
	$
	(\vartheta,\lambda_0).
	$
	The uniqueness conclusion of Theorem 3 yields
	$
	\mathcal L(X^m,Y^m,m)
	=
	\mathcal L(X^{\tilde m},Y^{\tilde m},\tilde m).
	$
	Canonical compatibility of the conditional-law versions and Theorem 27 imply
	$
	\Gamma_{s,r}^{m}
	=
	\Gamma_{s,r}^{\tilde m}
	$
	on the reachable augmented-state domain and, consequently,
	$
	P_{s,r}^{m,X}
	=
	P_{s,r}^{\tilde m,X}
	$
	for every
	$
	0\leq s\leq r\leq t.
	$
	Since, the felicity primitive
	$
	u
	$
	is common,
	$
	\Phi_u(m)=\Phi_u(\tilde m),
	$
	while equality of the latent transition families gives
	$
	\Phi_P(m)=\Phi_P(\tilde m).
	$
	Hence
	$
	\mathcal B(m)
	=
	(\Phi_u(m),\Phi_P(m))
	=
	(\Phi_u(\tilde m),\Phi_P(\tilde m))
	=
	\mathcal B(\tilde m).
	$
	Thus
	$
	\mathcal B(\operatorname{Fix}(\Gamma))
	$
	is a singleton. Finally,
	$
	\mathscr T(m)
	=
	\Lambda(\mathcal B(m))
	$
	for every
	$
	m\in\operatorname{Fix}(\Gamma).
	$
	Therefore all fixed points generated by fixed primitives and initial law belong to the same class
	$
	[m]_{\mathcal B},
	$
	and this class determines a unique admissible stochastic-choice array.
\end{proof}

\bibliographystyle{apalike}
\bibliography{bib}
\end{document}